\documentclass{amsart}
\pdfoutput=1
\usepackage[utf8]{inputenc}
\usepackage[T1]{fontenc}
\usepackage{lmodern}
\usepackage{amssymb,mathtools,stmaryrd,mathrsfs}
\usepackage{tikz-cd}
\usepackage[pdftitle={Consistent symmetry breaking and topological
phases},
pdfauthor={Yuezhao Li, Guo Chuan Thiang},
pdfsubject={Mathematical Physics}
]{hyperref}
\usepackage[noabbrev,capitalize]{cleveref}
\usepackage{slashed}
\usepackage{MnSymbol}
\usepackage{bbm}
\usepackage{diagbox}
\usepackage{subcaption}
\usepackage[version=4]{mhchem}

\usepackage{enumitem}
\setlist[enumerate,1]{label=\textup{(\arabic*)}}
\setlist[enumerate,2]{label=\textup{(\alph*)}}

\usepackage[lite,alphabetic,initials]{amsrefs}

\numberwithin{figure}{section}
\numberwithin{equation}{section}
\numberwithin{table}{section}
\theoremstyle{plain}
\newtheorem{theorem}[equation]{Theorem}
\newtheorem{lemma}[equation]{Lemma}
\newtheorem{proposition}[equation]{Proposition}
\newtheorem{corollary}[equation]{Corollary}
\theoremstyle{definition}
\newtheorem{definition}[equation]{Definition}
\newtheorem*{notation}{Notation}
\theoremstyle{remark}
\newtheorem{remark}[equation]{Remark}
\newtheorem{example}[equation]{Example}

\newtheoremstyle{nonumber} 
  {3pt} 
  {3pt} 
  {\upshape} 
  {} 
  {\bfseries} 
  {.} 
  {.5em} 
  {\thmname{#1}} 
\theoremstyle{nonumber}
\newtheorem{basicsetup}{Basic Setup}
\newcommand{\refbasicsetup}{\hyperref[basic_setup]{Basic Setup}}

\newcommand{\N}{\mathbb{N}}
\newcommand{\Z}{\mathbb{Z}}
\newcommand{\Q}{\mathbb{Q}}
\newcommand{\R}{\mathbb{R}}
\newcommand{\C}{\mathbb{C}}
\newcommand{\F}{\mathbb{F}}
\newcommand{\T}{\mathbb{T}}
\newcommand{\Mat}{\mathbb{M}}
\newcommand{\Cpt}{\mathbb{K}}
\newcommand{\Bdd}{\mathbb{B}}

\newcommand{\I}{\mathbb{I}} 
\newcommand{\HH}{\mathbb{H}} 
\newcommand{\K}{\relax\ifmmode\operatorname{K}\else\textup{K}\fi}
\newcommand{\KO}{\relax\ifmmode\operatorname{KO}\else\textup{KO}\fi}
\newcommand{\KR}{\relax\ifmmode\operatorname{KR}\else\textup{KR}\fi}
\newcommand{\KK}{\relax\ifmmode\operatorname{KK}\else\textup{KK}\fi}
\newcommand{\sK}{\operatorname{sK}}
\newcommand{\wK}{\operatorname{wK}}
\newcommand{\sKO}{\operatorname{sKO}}

\newcommand{\st}{\relax\ifmmode{}^*\else{}\textup{*}\fi}
\newcommand{\Cst}{\relax\ifmmode\mathrm{C}^*\else\textup{C*}\fi}
\newcommand{\Cont}{\mathrm{C}}
\newcommand{\Co}{\mathrm{C}_0}

\newcommand{\Cstl}{\Cst_\lambda} 
\newcommand{\Cstr}{\Cst_\rho} 
\newcommand{\Stab}{\operatorname{St}} 
\newcommand{\ind}{\operatorname{ind}} 
\newcommand{\supp}{\operatorname{supp}} 
\newcommand{\GL}{\mathrm{GL}}
\newcommand{\Isom}{\mathrm{Isom}}
\newcommand{\Ad}{\operatorname{Ad}} 
\newcommand{\id}{\operatorname{id}} 
\newcommand{\ev}{\operatorname{ev}} 
\newcommand{\Ball}{\mathrm{B}} 
\newcommand{\catCst}{\mathfrak{C^*alg}}
\newcommand{\catAb}{\mathfrak{Ab}}
\makeatletter
\newcommand{\colim@}[2]{%
  \vtop{\m@th\ialign{##\cr
    \hfil$#1\operator@font colim$\hfil\cr
    \noalign{\nointerlineskip\kern1.5\ex@}#2\cr
    \noalign{\nointerlineskip\kern-\ex@}\cr}}%
}
\newcommand{\colim}{%
  \mathop{\mathpalette\colim@{\rightarrowfill@\textstyle}}\nmlimits@
}
\makeatother
\newcommand{\defeq}{\mathrel{\vcentcolon=}}
\newcommand{\blank}{-}
\renewcommand{\mathbb}{\mathbbm} 
\DeclarePairedDelimiter{\set}{\lbrace}{\rbrace} 
\DeclarePairedDelimiter{\abs}{\lvert}{\rvert} 
\DeclarePairedDelimiter{\ket}{\lvert}{\rangle} 
\DeclarePairedDelimiter{\bra}{\langle}{\rvert} 
\DeclarePairedDelimiterX{\braketvert}[3]{\langle}{\rangle}{#1\,\delimsize\vert\,\mathopen{}#2\mathopen{}\,\delimsize\vert\,#3} 

\begin{document}
\title{Consistent~symmetry~breaking and topological~phases}
\author{Yuezhao Li}
\address{Max-Planck-Institut f\"ur Mathematik \\
Bonn \\
Germany}
\email{liy@mpim-bonn.mpg.de}
\author{Guo Chuan Thiang}
\address{Beijing International Center
for Mathematical Research \\ 
Peking University \\
China}
\email{guochuanthiang@bicmr.pku.edu.cn }

\subjclass{81R40; 81R15; 81V70; 19K56; 46L80}

\keywords{topological phases; consistent symmetry breaking; equivariant Roe algebras; K-theory; coarse index theory}

\begin{abstract}
Operators invariant under a symmetry group are also invariant under any
finite-index subgroup. The equivariant indices of such operators must be
consistent under symmetry breaking. We use this principle to establish a
canonical weak/strong dichotomy for equivariant indices, building on an idea
originating in the theory of topological insulators in solid-state physics.
We also study the relationship to coarse-geometric, or macroscopic,
indices. 
\end{abstract}

\maketitle
\setcounter{tocdepth}{2}
\tableofcontents

\section{Introduction}
The theoretical prediction
\cites{Fu-Kane:Topological_insulators_with_inversion_symmetry,
Fu-Kane-Mele:TI_in_3D, Moore-Balents:Topological_invariants_TRS} and
experimental discovery
\cite{Hsieh_etal:Topological_Dirac_insulator_QSH} of
so-called $\Z/2$-topological insulators in three dimensions are some of the
biggest recent breakthroughs in solid-state physics, see \cite{Hasan-Moore:3D_TI} for a review. On the theoretical
side, there is a dichotomy of $\Z/2$-valued invariants, with three of them
being ``weak'' and one of them being ``strong''. The strong one is
well-established experimentally, whereas the weak ones are generally
considered to be non-robust.

In this paper, we will provide a general theory of weak and strong indices,
applicable whenever
$X$ is a proper metric space equipped with a proper, cocompact action
of a discrete group \( \Gamma \). 
This setup is commonly encountered in geometry and index theory.
For example, if $\slashed{D}$ is a $\Gamma$-invariant, self-adjoint
Dirac-type operator on a Riemannian manifold $X$,
its \( \Gamma \)-equivariant index is defined as a certain K-theory class
\begin{equation}\label{eq:Dirac.index}
\mathrm{ind}_{\Gamma}(\slashed{D})\in \K_*(\Cst(X)^\Gamma),
\end{equation}
where $\Cst(X)^\Gamma$ is the \( \Gamma \)-equivariant Roe algebra, 
generated by
$\Gamma$-invariant, locally compact operators with finite propagation (see
\cref{dfn:Roe_algebra} and \cite{Roe:Index_theory_coarse_geometry}).

The \emph{abelian} case occurs in solid-state physics. Here, $X=\R^d$ and
$\Gamma\simeq\Z^d$ is an abelian lattice of translation symmetries of
a crystalline material. If $\mathscr{H}=-\Delta+V$ is a Schr\"{o}dinger operator with
$\Gamma$-invariant potential function $V$, and $E\in\R$ lies in a spectral
gap, then under mild assumptions, the spectral projection
$P_{\leq E}\defeq \chi_{(-\infty,E]}(\mathscr{H})$
lies in $\Cst(X)^\Gamma$, thus it defines a
K-theory class
\begin{equation}\label{eq:Gamma.equiv.invariant}
[P_{\leq E}]_\Gamma\in \K_0(\Cst(X)^{\Gamma}).
\end{equation}
The abstract Bloch transform 
(see \cref{sec:equiv_Roe_alg_transform}) maps
$P_{\leq E}$ to a family of compact projections, continuously
parametrized by the Pontryagin dual $\widehat{\Gamma}\simeq\T^d$.
This \( d \)-torus \( \widehat{\Gamma} \) is called the Brillouin
torus in solid-state physics.
The range of such a family of projections is a finite-rank vector
bundle over the Brillouin torus, hence it determines a topological
$\K$-theory class,
\begin{equation}\label{eq:topological.K.class}
[P_{\leq E}]_\Gamma\in \K^0(\T^d)\simeq \Z^{2^{d-1}}\quad (d\geq
1).
\end{equation}
When the class \eqref{eq:topological.K.class} is non-trivial, one usually
says that the insulating state described by $P_{\leq E}$ is a
\emph{topological phase}. Physicists use a dichotomy for
\eqref{eq:topological.K.class} (thus also for
\eqref{eq:Gamma.equiv.invariant}) --- classes which come from a
``stacking'' procedure are considered ``weak'', otherwise they are
``strong''. In the special case of a fixed
$G\simeq \Z^2$, this kind of dichotomy was studied in
\cites{Fu-Kane-Mele:TI_in_3D, Moore-Balents:Topological_invariants_TRS,
Ewert-Meyer:Coarse_geometry,
Prodan-SBaldes:Complex_topological_insulators}. However, this formulation depends on implicit choices and does not have an obvious generalization for non-abelian $\Gamma$.

We will introduce the general weak/strong dichotomy in a \emph{canonical} way, based on the principle of \emph{consistent} symmetry
breaking. 
Consider the collection \( \mathfrak{S} \)
of all finite-index subgroups
of $\Gamma$. For any \( G,H\in\mathfrak{S} \) with \( H\subseteq
G \), every $G$-invariant operator is automatically $H$-invariant. So there
is a canonical forgetful map
\[ 
\iota_{H\backslash G}\colon \Cst(X)^G\to\Cst(X)^H
\]
between their equivariant Roe algebras. Moreover, the canonical forgetful
maps for all such pairs of subgroups
are consistently related, in the sense that if \( K\subseteq H\subseteq G
\) is a nested sequence of (finite-index) subgroups of \( \Gamma \), then
\[ 
  \iota_{K\backslash G}=\iota_{K\backslash H}\circ\iota_{H\backslash
  G}.
\] 
We refer to these \( \iota_{H\backslash G} \), as well as the
induced K-theory maps, as \emph{symmetry breaking maps}. 

In general, both the (finite-index)
subgroup structure of $\Gamma$ and the maps $\K_*(\iota_{H\backslash G})$ are very complicated. Nevertheless, the full ``consistent symmetry
breaking'' diagram possesses a colimit \Cst-algebra \(
\Cst(X)^\mathfrak{S} \), which we call the \emph{symmetry breaking Roe
algebra}. For many \( \Gamma \), the K-theory of the symmetry breaking Roe
algebra contains \emph{divisible} summands; see
\cref{tab:K-theory_untwisted} for the abelian case 
(\( \Gamma\simeq\Z^d \)). 
Consideration of this colimit \Cst-algebra and its K-theory 
leads, at each finite-index subgroup $G\subseteq\Gamma$, to a
canonical short exact sequence of abelian groups (see \cref{dfn:weak_strong}):
\begin{equation}\label{eq:informal.weak.strong}
\begin{tikzcd}[column sep=small]
0\arrow[r] & {\rm wK}_*(\Cst(X)^G) \arrow[r] & \K_*(\Cst(X)^G)
\arrow[r] & {\rm sK}_*(\Cst(X)^G) \arrow[r] & 0.
\end{tikzcd}
\end{equation}
We call ${\rm wK}_*(\Cst(X)^G)$ the ``weak subgroup'' and ${\rm
sK}_*(\Cst(X)^G)$ the ``strong quotient''. 

Intuitively, an index class in $\K_*(\Cst(X)^\Gamma)$ becomes more robust after passing to a finite-index subgroup \(
G\subseteq \Gamma \) and viewing it as an index class in \( \K_*(\Cst(X)^G) \), because it remains
well-defined and stable under a larger class of $G$-invariant
perturbations which are not necessarily $\Gamma$-invariant.
Geometrically, the fundamental domain becomes enlarged, but remains bounded
as $G\subseteq \Gamma$ has finite index, see \cref{fig:coarse-graining}.
The passage from $\Gamma$ to $G\subseteq\Gamma$ is thus a ``coarse
graining'' procedure, and the colimit of this procedure is what survives,
informally speaking, in the ``macroscopic limit''.

\begin{figure}[ht]
\centering
\begin{tikzpicture}[
    scale=0.75,
    fine/.style={fill=black},
    coarse/.style={fill=black},
    edge/.style={densely dotted},
    block/.style={fill=gray!15,draw=gray!50,line width=0.3pt}
]


\filldraw[block] (-2.5,-2.5) rectangle (-1.5,-1.5);

\foreach \x in {-2,-1,0,1,2,3}{
            \draw[edge] (\x,-2.5)--(\x,3.5);
}
\foreach \y in {-2,-1,0,1,2,3}{
            \draw[edge] (-2.5,\y)--(3.5,\y);
}

\foreach \x in {-2,-1,0,1,2,3}{
    \foreach \y in {-2,-1,0,1,2,3}{
        \fill[fine] (\x,\y) circle (2.5pt);
    }
}

\node[below] at (0.5,-2.5) {$\mathbb Z^2$};


\draw[->,thick] (4,0)--(6.6,0)
    node[midway,above,font=\small]{coarse-graining};
\draw[->,thick] (0,4)--(2,6)
    node[midway,left,font=\small]{coarse-graining};
\draw[->,thick] (0,-4)--(2,-6)
    node[midway,left,font=\small]{coarse-graining};
\draw[->,thick] (9,6)--(11,4)
    node[midway,right,font=\small]{coarse-graining};
\draw[->,thick] (9,-6)--(11,-4)
    node[midway,right,font=\small]{coarse-graining};


\begin{scope}[xshift=5cm,yshift=6.5cm]

\filldraw[block] (-2.5,-2.5) rectangle (-0.5,-0.5);

\foreach \x in {-2,-1,0,1,2,3}{
            \draw[edge] (\x,-2.5)--(\x,3.5);
}
\foreach \y in {-2,-1,0,1,2,3}{
            \draw[edge] (-2.5,\y)--(3.5,\y);
}

\foreach \x in {-2,0,2}{
    \foreach \y in {-2,0,2}{
        \fill[coarse] (\x,\y) circle (2.5pt);
    }
}

\foreach \x in {-2,-1,0,1,2,3}{
    \foreach \y in {-2,-1,0,1,2,3}{
        \fill[fine] (\x,\y) circle (1.5pt);
    }
}

\node[below] at (0.5,-2.5) {$2\mathbb Z^2$};
\end{scope}


\begin{scope}[xshift=5cm,yshift=-6.5cm]

\filldraw[block] (-2.5,-2.5) rectangle (-0.5,0.5);

\foreach \x in {-2,-1,0,1,2,3}{
            \draw[edge] (\x,-2.5)--(\x,3.5);
}
\foreach \y in {-2,-1,0,1,2,3}{
            \draw[edge] (-2.5,\y)--(3.5,\y);
}

\foreach \x in {-2,0,2}{
    \foreach \y in {-2,1}{
        \fill[coarse] (\x,\y) circle (2.5pt);
    }
}

\foreach \x in {-2,-1,0,1,2,3}{
    \foreach \y in {-2,-1,0,1,2,3}{
        \fill[fine] (\x,\y) circle (1.5pt);
    }
}

\node[below] at (0.5,-2.5) {$2\mathbb Z\times 3\Z$};
\end{scope}


\begin{scope}[xshift=10cm]

\filldraw[block] (-2.5,-2.5) rectangle (3.5,3.5);

\foreach \x in {-2,-1,0,1,2,3}{
            \draw[edge] (\x,-2.5)--(\x,3.5);
}
\foreach \y in {-2,-1,0,1,2,3}{
            \draw[edge] (-2.5,\y)--(3.5,\y);
}

\fill[coarse] (-2,-2) circle (2.5pt);

\foreach \x in {-2,-1,0,1,2,3}{
    \foreach \y in {-2,-1,0,1,2,3}{
        \fill[fine] (\x,\y) circle (1.5pt);
    }
}

\node[below] at (0.5,-2.5) {$6\mathbb Z^2$};
\end{scope}

\end{tikzpicture}
\caption{In the left diagram, the large black circles indicate an orbit for
the standard $\Z^2$-action on $\R^2$, while the shaded $1\times 1$ square
indicates a fundamental domain. Symmetry breaking to $2\Z^2$ is depicted by
the top diagram ---  the shaded $2\times 2$ square indicates a
``coarse-grained'' fundamental domain containing a single point of the
$2\Z^2$-orbit (large black circle). Points of the $\Z^2$-orbit (small black
circles) are no longer individually distinguished, but are lumped together
in fours. We may also break $\Z^2$ to $2\Z\times 3\Z$ (bottom) or $6\Z^2$
(right); the latter breaking can be done in two stages in several different
ways (upper arrows or lower arrows).}
\label{fig:coarse-graining}
\end{figure}

Physically measurable properties should not depend on artificial choices
made in the theoretical model used to predict them. For example, when
using indices in $\K_*(\Cst(X)^G)$ to study properties of a physical
system, different choices of (presentations of) $G$ should give
consistent predictions. Specifically, as will be explained in
\cref{sec:weak_strong_phases}, consistency with symmetry breaking
means that any $G$-dependent prediction must factor through the
above-mentioned colimit.
For the
dichotomy \eqref{eq:informal.weak.strong}, weak classes 
will be defined as those that
become arbitrarily divisible in the colimit, whereas the strong ones remain
``quantized'' in the colimit. 

\emph{Quantized} (i.e., non-divisible) \emph{macroscopic}
phenomena can depend only on the (colimit of) \emph{strong} quotients. The
motivating examples from experimental physics are the Hall response of a 2D
insulating material, and the time-reversal invariant
$\Z/2$-topological insulators. In these examples, one encounters some novelties, not usually considered in the mathematical literature. In
the first case, one is dealing with \emph{magnetic} Schr\"{o}dinger
operators which are only invariant under a \emph{projective} representation
of $\Gamma$. In the second case, the relevant topological invariants are
\emph{real} $\K$-theory classes. Our formalism applies in
such twisted or real settings as well.

We will provide several explicit computations to illustrate our
theory. For the abelian lattices $\Gamma\simeq\Z^d$, we work out the
symmetry breaking diagrams for both real and complex $\K$-theory, as well
as the twisted complex case. We also study the non-abelian Heisenberg group
lattice in the complex case, as an example of how the coarse Baum--Connes
conjecture may be exploited in such computations. Here, for finite index subgroups $G\subseteq \Gamma$, we use the
comparison maps
\begin{equation}\label{eq:comparison.to.coarse}
\K_*(\Cst(X)^G)\to \K_*(\Cst(X))
\end{equation}
which forget the $G$-equivariance altogether, where the range \( \Cst(X) \)
is the non-equivariant Roe algebra of \( X \). In
\cite{Ewert-Meyer:Coarse_geometry}, coarse indices in $\K_*(\Cst(X))$
were studied for the case of a fixed abelian lattice, and used to formulate
a notion of ``strong'' indices. In our general setting, we do not assume an
identification between the concepts of ``strong'' and ``coarse'' indices.

In the abelian case
$\Gamma\simeq\Z^d$, we show that the comparison maps
\eqref{eq:comparison.to.coarse}, descended to the strong quotients, are isomorphisms, so our formalism is consistent
with \cite{Ewert-Meyer:Coarse_geometry}. The same holds for $\Gamma$ the
non-abelian integer Heisenberg group (\cref{sec:Heisenberg}).
However, we give an example of a $\Gamma$ for which the comparison map does
not surject onto $\K_*(\Cst(X))$  (see
\cref{sec:crystallographic.generalization}), thus the concepts of
``strong'' and ``coarse'' do not coincide in general. Nevertheless, we show
that for a large class of groups, the comparison map from the
symmetry breaking \emph{colimit} of strong quotients to $\K_*(\Cst(X))$ is
surjective (\cref{cor:existence.strong.colimit}).

\subsection*{Acknowledgements}
Part of the work was started when YL was a PhD student at Leiden University supported by 
NWO project 613.009.142
``\href{https://www.nwo.nl/projecten/613009142}{Noncommutative index theory
of discrete dynamical systems}''. YL thanks Leiden University and Max Planck Institute for Mathematics for their financial support, and Beijing International Center for Mathematical Research for their hospitality. GCT thanks M.~Ludewig for helpful discussions, as well as A.~Ferreri and Y.~Chen for working out a related concept of ``band folding'' after Bloch--Floquet transform. YL thanks Th.~Schick for various illuminating suggestions, as well as C.~Bourne and B.~Mesland for their
proofreading and valuable comments.

\section{Equivariant Roe algebras and abstract Bloch transform}
\label{sec:equiv_Roe_alg_transform}

In this section, we recall the construction of \( \Cst(X,\mathcal{H})^G \), 
the equivariant Roe algebra defined by an \( X \)-\( G \)-module \(
\mathcal{H} \).  A standard reference is
\cite{Willett-Yu:Higher_index_theory}. We allow for a few modifications:
the $G$-action can be non-free, and the \( X \)-\( G \)-module \(
\mathcal{H} \) can be real or twisted by a 2-cocycle.  We also relate
these equivariant Roe algebras to the (twisted) group $\Cst$-algebras of
$G$.

\begin{notation}
Let \(\mathcal{H}\) be a Hilbert space.
We write $\Bdd(\mathcal{H})$
(respectively $\Cpt(\mathcal{H})$) for the $\Cst$-algebra of bounded
operators (respectively compact operators) on
$\mathcal{H}$, \(\mathrm{U}(\mathcal{H})\) for the group of unitary operators on \(\mathcal{H}\), 
and $1_\mathcal{H}$ for the identity operator on $\mathcal{H}$. 
We also write \(\mathrm{U}(1)\) for the group of unitary complex numbers.
\end{notation}

\subsection{Twisted regular representations and group
\Cst-algebras}
A projective unitary representation of a discrete countable group $G$ on a complex Hilbert space
$\mathcal{H}$ comprises the following data:
\begin{itemize}
\item a \( \mathrm{U}(1) \)-valued group \( 2 \)-cocycle \( \sigma
\); that is, a map \( \sigma\colon G\times G\to\mathrm{U}(1) \)
satisfying
\begin{equation}\label{eq:2_cocycle_condition}
\begin{aligned}
\sigma(g_1,g_2)\sigma(g_1g_2,g_3)&=\sigma(g_1,g_2g_3)\sigma(g_2,g_3),\\
\sigma(g,1)&=\sigma(1,g)=1,
\end{aligned}
\end{equation}
for all \( g,g_1,g_2,g_3\in G \).
\item a collection of elements \(
U_g\in\mathrm{U}(\mathcal{H}) \) for all \( g\in G \), satisfying
\[ 
  U_gU_h=\sigma(g,h)U_{gh},\qquad g,h\in G.
\]
\end{itemize}
Note that
\begin{equation}\label{eq:twisted.inverse.formula}
U_g^{-1}=U_g^*=\overline{\sigma(g^{-1},g)}\cdot U_{g^{-1}}=\overline{\sigma(g,g^{-1})}\cdot U_{g^{-1}}.
\end{equation}

We recall how to construct projective unitary
representations of \( G \) on \( \ell^2(G) \), called \( \sigma \)-twisted
regular representations, as follows.
For \( g\in G \), let
\[ 
    \delta_g\colon G\to\C,\quad \delta_g(g')=\begin{cases}
    1 & g=g'; \\
    0 & g\neq g',
    \end{cases}  
\]
so \( \{\delta_g\}_{g\in G} \) is a basis for \( \ell^2(G) \). 
The \emph{left/right $\sigma$-twisted regular representations} of $G$ 
on $\ell^2(G)$ are defined by the unitary operators
\begin{equation}\label{eq:twisted_regular_rep}
\begin{aligned}
\lambda_G^\sigma(g)\colon \delta_{g^\prime}&\mapsto \sigma(g,g^\prime)\delta_{gg^\prime},\\
\rho_G^\sigma(g)\colon \delta_{g^\prime}&\mapsto \sigma(g^\prime g^{-1},g)\delta_{g^\prime g^{-1}},\qquad g,g^\prime\in G.
\end{aligned}
\end{equation}

We write \( \Cstl(G,\sigma) \) or \( \Cstr(G,\sigma) \) for the \emph{reduced group
\Cst-algebras}, generated inside $\Bdd(\ell^2(G))$ by the left or right $\sigma$-regular
representation of \( G \), respectively. If $\sigma= 1$, we just write $\Cstl(G)$ and $\Cstr(G)$.

\begin{definition}\label{dfn:cohomologous_cocycles}
Two 2-cocycles $\sigma,\sigma^\prime\colon G\times G\to\mathrm{U}(1)$ are
\emph{cohomologous} if there exists a map $\omega\colon G\to \mathrm{U}(1)$ such that 
\begin{equation}\label{eq:cohomologous_cocycles}
\sigma^\prime(g,h)=\sigma(g,h)\cdot\frac{\omega(g)\omega(h)}{\omega(gh)},\qquad g, h\in G.
\end{equation}
\end{definition}
When working with projective representations, we allow a modification of the $U_g$ by some ($g$-dependent) phases. For example, we say that a unitary operator $V$ \emph{intertwines} the projective representations $\{U_g\}_{g\in G}$ and $\{U^\prime_g\}_{g\in G}$, if
\begin{equation}\label{eq:projective_intertwine}
VU_gV^*=\omega(g)U^\prime_g,\qquad g\in G
\end{equation}
for some phases $\omega(g)$. In this case, the respective 2-cocycles $\sigma$ and $\sigma^\prime$ are cohomologous.

\subsection{$X$-$G$-modules}\label{sec:XG.module}
Let \( \Isom(\R^d) \) be the isometry group of the \( d
\)-dimensional Euclidean space, i.e.~the group generated by translations,
rotations and reflections of \( \R^d \). Then it acts on \( L^2(\R^d) \)
via unitaries,
\begin{equation}\label{eq:pullback.rep} 
U_g\xi(x)=\xi(g^{-1}\cdot
x),\qquad g\in \Isom(\R^d),\, x\in X,\, \xi\in L^2(\R^d).  
\end{equation}
The Hilbert space \( L^2(\R^d) \) is the space of wavefunctions in solid-state physics.
The $\Cst$-algebra $\Co(\R^d)$ acts on \( L^2(\R^d) \) by pointwise multiplication,
which is compatible with the unitary representation of \(
\Isom(\R^d) \) given in \eqref{eq:pullback.rep} above.

The free Hamiltonian on \( \R^d \), i.e.~the Laplacian
\[
  -\Delta=-\sum_{i=1}^d\frac{\partial^2}{\partial x_i^2}
\]
is invariant under the action of \( \Isom(\R^d) \).
In the presence of a periodic potential {\( V \)} (e.g.~from an atomic crystal), the symmetry group is reduced to a crystallographic group,
i.e.~a discrete cocompact subgroup of $\Isom(\R^d)$.
Disregarding point group symmetries, the maximal abelian subgroup
of such a crystallographic group is a 
rank-$d$ lattice $\Gamma$ in the translation
subgroup $(\R^d,+)$.
Then one uses the isomorphisms
\[ 
  L^2(\R^d)\simeq \ell^2(\Gamma)\otimes L^2(\R^d/\Gamma)\simeq
  L^2(\widehat{\Gamma})\otimes L^2(\R^d/\Gamma), 
\] 
(see
\cref{prop:V_D_intertwine}, \cref{rem:compose.with.Fourier}) to analyze the $\Gamma$-invariant Hamiltonian \( -\Delta+V \).  Under mild assumptions, the
spectral projections for gapped $\Gamma$-invariant Hamiltonians lie in a
($\Gamma$-equivariant) Roe algebra
\cite{Ewert-Meyer:Coarse_geometry}*{Proposition 1}, and their $\K$-theory
classes serve as invariants distinguishing distinct ``topological insulator
phases''. The situation is abstracted and generalized below. 

\medskip

Let \( X \) be a locally compact Hausdorff space, and \( G \) act on \( X
\) by homeomorphisms, i.e.~there is a group homomorphism \( \alpha\colon
G\to\operatorname{Homeo}(X) \).
The action \( \alpha \) pulls back
to an action on \( \Co(X) \), which we still denote by \( \alpha \): 
\[ 
  \alpha_g(f)(x)\defeq f(\alpha^{-1}_g(x)),\quad f\in\Co(X),\,x\in X.
\]

\begin{definition}[\cite{Willett-Yu:Higher_index_theory}*{Definition 4.5.1
and 4.5.2}]\label{dfn:XG_module}
Let \( X \) be a locally compact Hausdorff space
and \( \alpha\colon G\to\operatorname{Homeo}(X) \) 
be a proper\footnote{This means that for
any compact subset $B\subseteq X$, the set $\{g\in G\,:\,gB\cap B\neq
\emptyset\}$ is finite.} action of \( G \) on \( X \).
\begin{enumerate}
\item An \emph{\( X \)-\( G \)-module} is a Hilbert space $\mathcal{H}$ equipped with:
\begin{itemize}
\item A non-degenerate \st-representation 
\[
\varrho\colon \Co(X)\to\Bdd(\mathcal{H});
\]
\item A projective unitary representation
\[
  U\colon G\to\mathrm{U}(\mathcal{H}),
\]
such that:
\begin{equation}\label{eq:equiv_Xmod}
U_g\varrho(f)U_g^*=\varrho(\alpha_g(f)),\quad
\text{for}\ f\in\Co(X),\, x\in X,\, g\in G.
\end{equation}
\end{itemize}
An \( X \)-\( G \)-module $\mathcal{H}$ is \emph{ample} if 
\[
f\in\Co(X),\,\varrho(f)\in\Cpt(\mathcal{H})\quad\Rightarrow\quad f=0.
\]
\item An \emph{isomorphism} between \( X \)-\( G \)-modules \(
\mathcal{H},\mathcal{H}' \)  is a
unitary operator \( V\colon \mathcal{H}\to\mathcal{H}' \)
which intertwines the representation of \( \Co(X) \) and the (projective) 
unitary representations of \( G \) on them.
\end{enumerate}
\end{definition}

\begin{remark}
We use the terminology
\emph{$\sigma$-twisted $X$-$G$-module} if we wish to emphasize $\sigma$.
We refer to \cref{sec:magnetic_translations} 
for concrete naturally occurring examples
involving magnetic translations.

In physics applications, one also needs to consider \emph{real} Hilbert spaces
(more generally, complex Hilbert spaces equipped with real or quaternionic
structures). All the definitions and results of
\cref{sec:equiv_Roe_alg_transform,sec:symmetry_breaking} hold with
real Hilbert spaces instead of complex Hilbert spaces. In the real setting,
we can allow for ${\rm O}(1)$-valued 2-cocycles, although we will only
consider trivial cocycles in our concrete examples and computations.
\end{remark}

\begin{example}\label{ex:standard_equivariant_ample_modules}
Take $X$ to be the underlying set of $G$, and fix an integer \(N\geq 1 \).
The \( \Cst \)-algebra \( \Cont_0(G) \) is represented on \(
\mathcal{H}=\ell^2(G)\otimes \C^N \) by
pointwise multiplication,
\[
\varrho(f)\cdot\xi(g)\defeq f(g)\xi(g),\quad \xi\in\ell^2(G), f\in\Co(G).
\]
Let $U=\lambda_G^\sigma\otimes 1$ be the left $\sigma$-regular
representation tensored with the identity operator on $\C^N$. Then $\mathcal{H}$ is
an $X$-$G$-module, but it is not ample.

We may replace $\C^N$ in the previous example by any separable Hilbert
space to get an ample $X$-$G$-module;
see~\cite{Willett-Yu:Higher_index_theory}*{Example 4.5.3}.
\end{example}

\medskip
The representation \( \varrho\colon \Co(X)\to\Bdd(\mathcal{H}) \)
extends to the \Cst-algebra of bounded Borel functions on \( X \).
In particular, for $D\subseteq X$ 
a Borel subset and $\chi_D$ its characteristic function, 
 \( \chi_D \) is a bounded Borel function with values in \(
\{0,1\} \). Hence \( \varrho(\chi_D) \) is an orthogonal 
projection on \( \mathcal{H}
\) and we write 
\[
  \mathcal{H}_D\defeq\varrho(\chi_D)\mathcal{H}.
\]
The covariance relation 
\eqref{eq:equiv_Xmod} gives
\begin{equation}\label{eq:spatial_unitary}
U_g\varrho(\chi_D)U_g^*=\varrho(\alpha_g(\chi_D))
=\varrho(\chi_{gD}),\quad
g\in G,
\end{equation}
so that $U_g$ restricts to an isomorphism of Hilbert spaces
$\mathcal{H}_D\xrightarrow{\sim}\mathcal{H}_{gD}$.

\begin{definition}\label{def:fundamental_domain}
A fundamental domain for an \(X\)-\(G\)-module \(\mathcal{H}\) is a Borel subset \( D\subseteq X \) such that
\[
    X=\bigcup_{g\in G}g\overline{D}\quad\text{and}\quad
    \sum_{g\in G}\varrho(\chi_{gD})=1_{\mathcal{H}}.
\]
\end{definition}

Note that \cref{def:fundamental_domain} implies that \(\varrho(\chi_{gD}),
g\in G\), are mutually orthogonal projections on \(\mathcal{H}\). For
convenience, we often omit the representations \( \alpha \) and 
\( \varrho \), and just write $\chi_D$ for $\varrho(\chi_D)$ and
$\chi_{gD}$ for $\varrho(\alpha_g(\chi_D))$. 

\begin{remark}
We note that \cref{def:fundamental_domain} differs from the 
standard definition in the literature (see,
e.g.~\cite{Willett-Yu:Higher_index_theory}*{Definition 5.3.3}), where a
fundamental domain \( D\subseteq X \) is usually
defined as a Borel subset such that \( X=\coprod_{g\in G}gD \).
Instead, we allow the translates
\( gD \) to have non-empty intersection with \( D \); 
the essential requirement is that 
these intersections have zero support on the \( X \)-\( G
\)-module \( \mathcal{H} \). 
Thus we also allow non-free actions \( G\curvearrowright X \), such as that
of a crystallographic group on Euclidean space (see
\cref{sec:crystallographic.generalization}).
\end{remark}

\medskip
Let \( \mathcal{H} \) be an \( X \)-\( G \)-module,
with a fundamental domain \( D \).
By \eqref{eq:spatial_unitary} and \cref{def:fundamental_domain}, 
we obtain a unitary isomorphism of Hilbert spaces,
\begin{equation}\label{eq:V_D}
\begin{aligned}
V_{D}\colon\mathcal{H}=\bigoplus_{g\in G} \mathcal{H}_{gD}
&\overset{\sim}{\longrightarrow}\ell^2(G)\otimes\mathcal{H}_{D},\\
\xi&\longmapsto
\sum_{g\in G}\delta_g\otimes\chi_{D}U_g^*\xi\\
U_g\zeta &\longmapsfrom \delta_g\otimes\zeta.
\end{aligned}
\end{equation}

\begin{proposition}\label{prop:V_D_intertwine}
Let \( \mathcal{H} \) be a $\sigma$-twisted \( X \)-\( G \)-module, and
$D$ be a fundamental domain. The unitary $V_D$ of \eqref{eq:V_D}
intertwines the $\sigma$-projective unitary representation on $\mathcal{H}$
with $\lambda_G^\sigma\otimes 1$.
\end{proposition}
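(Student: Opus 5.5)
The plan is to verify the intertwining relation directly on the elementary tensors $\delta_g\otimes\zeta$, with $g\in G$ and $\zeta\in\mathcal{H}_D$. Their linear span is dense in $\ell^2(G)\otimes\mathcal{H}_D$, so this is enough once $V_D$ is known to be unitary. I expect the relation to hold on the nose, with no phases $\omega(g)$ as in \eqref{eq:projective_intertwine}. The only real care needed is to confirm that the two formulas in \eqref{eq:V_D} are mutually inverse and that $V_D$ is isometric. The cocycle bookkeeping is then immediate.

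First I check that $V_D$ is a well-defined unitary.

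By \eqref{eq:spatial_unitary}, $\chi_D U_g^* = U_g^*\chi_{gD}$. Hence $\chi_D U_g^*\xi = U_g^*(\chi_{gD}\xi)$, and this lies in $\mathcal{H}_D$.

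By \cref{def:fundamental_domain}, the projections $\chi_{gD}$ are mutually orthogonal and sum strongly to $1_{\mathcal{H}}$. Since each $U_g^*$ is unitary, this gives
\[
\|V_D\xi\|^2=\sum_{g}\|\chi_{gD}\xi\|^2=\|\xi\|^2 .
\]
So $V_D$ is an isometry.

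Next, let $W$ denote the map $\delta_g\otimes\zeta\mapsto U_g\zeta$, extended by linearity and continuity. For $\zeta\in\mathcal{H}_D$ we have $U_g\zeta\in\mathcal{H}_{gD}$. The spaces $\mathcal{H}_{gD}$ are mutually orthogonal, so $W$ is isometric.

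Finally, I check $V_D W=1$ on elementary tensors. We have $\chi_{D}U_{h}^*U_g\zeta=U_h^*\chi_{hD}U_g\zeta$. This vanishes for $h\neq g$, because $U_g\zeta\in\mathcal{H}_{gD}$ and $\mathcal{H}_{gD}\perp\mathcal{H}_{hD}$. For $h=g$ it equals $\zeta$. Thus $V_D W=1$, so $V_D$ is surjective, hence unitary, with $V_D^*=W$.

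The intertwining computation is then one line. For $h,g\in G$ and $\zeta\in\mathcal{H}_D$:
\[
U_h V_D^*(\delta_g\otimes\zeta)=U_hU_g\zeta=\sigma(h,g)U_{hg}\zeta
=V_D^*\bigl(\sigma(h,g)\delta_{hg}\otimes\zeta\bigr)
=V_D^*\bigl((\lambda_G^\sigma(h)\otimes 1)(\delta_g\otimes\zeta)\bigr).
\]
The last step uses the definition \eqref{eq:twisted_regular_rep} of $\lambda^\sigma_G$. By density this gives $V_DU_hV_D^*=\lambda_G^\sigma(h)\otimes 1$ for all $h\in G$. In particular the cocycles agree exactly, not just up to cohomology.

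The main obstacle, mild as it is, is the unitarity step. It relies on the convention $V_D\xi=\sum_g\delta_g\otimes\chi_DU_g^*\xi$ using $U_g^*$ rather than $U_{g^{-1}}$. This avoids the phase in \eqref{eq:twisted.inverse.formula}. With that convention, $V_D^*$ is exactly $\delta_g\otimes\zeta\mapsto U_g\zeta$, with no correction factors.
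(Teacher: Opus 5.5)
Your proof is correct and is essentially the paper's argument. The paper checks $V_D U_g V_D^*(\delta_{g^\prime}\otimes\zeta)=\sigma(g,g^\prime)\,\delta_{gg^\prime}\otimes\zeta$ on elementary tensors, using that only the $g^{\prime\prime}=gg^\prime$ term survives, which is the same computation as yours. The only addition in your version is the explicit verification that $V_D$ is unitary with $V_D^*(\delta_g\otimes\zeta)=U_g\zeta$; the paper simply asserts this when it introduces $V_D$ in \eqref{eq:V_D}.
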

\begin{proof}
Let $U$ be the projective unitary representation on $\mathcal{H}$ with 2-cocycle $\sigma$.
For $g, g^\prime\in G$ and $\zeta\in \mathcal{H}_D$, we have
\begin{align*}
V_D\circ U_{g}\circ V_D^*(\delta_{g^\prime}\otimes \zeta)
&=\sum_{g^{\prime\prime}}\delta_{g^{\prime\prime}}\otimes\chi_D
U_{g^{\prime\prime}}^*U_gU_{g^\prime}\chi_D\zeta & \\
&=\delta_{g g^\prime}\otimes\chi_D U_{g
g^\prime}^*\sigma(g,g^\prime)U_{gg^\prime}\zeta &
(\mathrm{only}\;g^{\prime\prime}=gg^\prime\;\mathrm{contributes})\\
&=(\lambda_G^\sigma(g)\otimes 1)(\delta_{g^\prime}\otimes\zeta).
\end{align*}
This verifies that $\Ad_{V_D}$ converts $U_g$ to
$\lambda_G^\sigma(g)\otimes 1$.
\end{proof}

\begin{remark}\label{rem:compose.with.Fourier}
If $G$ is abelian, we can compose $V_D$ with the Fourier transform on the
$\ell^2(G)$ tensor factor. In the typical solid-state physics context with
$X\simeq\R^d$ and $G\simeq\Z^d$, this gives the (magnetic)
\emph{Bloch--Floquet transform} $\mathcal{H}\to L^2(\T^d)\otimes
\mathcal{H}_D$. For general $G$, we may think of \eqref{eq:V_D} as an
``abstract Bloch transform''.
\end{remark}

\subsection{Transforming equivariant Roe algebras to group algebras}
\label{sec:mapping_equiv_to_group_alg}
We shall define equivariant Roe algebras in this section, and provide
explicit isomorphisms with group \Cst-algebras, following
\cite{Willett-Yu:Higher_index_theory}. To define the equivariant Roe
algebra, we must be able to talk about the propagation of operators on \( X
\)-\( G \)-modules. This can be done when \( X \) is a metric space; or
more generally, if \( X \) carries a coarse structure (see
\cite{Roe:Lectures_on_coarse_geometry}).

Recall that a metric space is \emph{proper} if the closed unit ball is
compact.

\begin{definition}[\cite{Willett-Yu:Higher_index_theory}*{Definition 4.1.8,
5.1.1 and 5.2.1}]\label{dfn:Roe_algebra}
Let \( X \) be a proper metric space,
\( \mathcal{H} \) be an \( X \)-\( G \)-module, and
\( T\in\Bdd(\mathcal{H}) \).
\begin{itemize}
\item The \emph{support} of \( T \)  
denoted by \( \supp(T) \), is the collection of all points \( (x,y)\in
X\times X \), such that \( \chi_AT\chi_B\neq 0 \)
holds for all open neighbourhoods \( A \) of \( x \) and \( B \) of
\( y \).
\item We say \( T \) is \emph{locally compact}, 
if for any compact subset \( K\subseteq X \), the operators
\( \chi_KT \) and \( T\chi_K \) are both compact;
\item We say \( T \) is \emph{controlled} (or \emph{has finite
propagation}), if
\[ 
    \operatorname{prop}(T)\defeq\sup\{\operatorname{dist}(x,y)\,:\,
    (y,x)\in\supp(T)\}<+\infty.
\]
\item We say \( T \) is \emph{\( G \)-invariant}, if \( U_gT=TU_g \)
for all \( g\in G \). 
\end{itemize}
The \emph{equivariant Roe algebra} \( \Cst(X,\mathcal{H})^G \) is
the \Cst-algebra generated by all locally compact, controlled and \( G
\)-invariant operators on $\mathcal{H}$. 
If \( \mathcal{H} \) is ample and $G$ is the trivial group $\{1\}$, 
then
\[
\Cst(X,\mathcal{H})\defeq \Cst(X,\mathcal{H})^{\{1\}}
\]
is called the (non-equivariant) \emph{Roe algebra}.
\end{definition}

\begin{basicsetup}\label{basic_setup}
Subsequently, we will work under the following \emph{basic
setup}:
\begin{enumerate}
\item \( X \) is a proper metric space.
\item \( \Gamma \) is a countable discrete group, 
acting properly on \( X \).\footnote{Here we use the notation $\Gamma$
instead of $G$. From
\cref{sec:symmetry_breaking} onwards, we will start from a fixed
$\Gamma$ and $X$-$\Gamma$-module $\mathcal{H}$, then regard $\mathcal{H}$
as an $X$-$G$-module for any finite-index subgroup $G\subseteq\Gamma$.}
\item $\mathcal{H}$ is a (non-zero) $X$-$\Gamma$-module which admits a 
\emph{bounded} fundamental domain.
\end{enumerate}
\end{basicsetup}

It is known that a countable discrete group $\Gamma$ admits a proper
left-invariant metric, see
\cite{Smith:Asymptotic_dimension_abelian_groups}.  Thus given any
countable discrete group \( \Gamma \) and a 2-cocycle \( \sigma\colon
\Gamma\times \Gamma\to\mathrm{U}(1) \), there  exists a proper
metric space \( X \) and a \( \sigma \)-twisted \( X \)-\( \Gamma
\)-module \( \mathcal{H}\) satisfying \refbasicsetup, say, \( X=\Gamma \)
(with any choice of proper left-invariant metric) and \(
\mathcal{H}=\ell^2(\Gamma) \) equipped with the \( \sigma \)-twisted
regular representation (this is \cref{ex:standard_equivariant_ample_modules}).

The boundedness of fundamental domains is crucial for
the following twisted generalization of
\cite{Willett-Yu:Higher_index_theory}*{Proposition 5.3.4}; see also
\cref{rmk:equivariant_coarse_equivalence}.

\begin{proposition}\label{prop:twisted_Willett}
Let $\mathcal{H}$ be a
$\sigma$-twisted $X$-$\Gamma$-module satisfying \refbasicsetup, 
and $D$ be a bounded fundamental domain. Then the unitary $V_D$ of
\eqref{eq:V_D}
induces an isomorphism of $\Cst$-algebras,
\begin{equation}\label{eq:Willett_isomorphism_twisted}
\Ad_{V_D}\colon \Cst(X,\mathcal{H})^\Gamma\xrightarrow{\simeq}
\Cstr(\Gamma,\overline{\sigma})\otimes\Cpt(\mathcal{H}_{D}).
\end{equation}
Here, $\overline{\sigma}$ denotes the 2-cocycle which is complex conjugate to $\sigma$.
\end{proposition}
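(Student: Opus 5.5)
The plan is to compute everything in the picture $\ell^2(\Gamma)\otimes\mathcal{H}_D$ supplied by $V_D$. By \cref{prop:V_D_intertwine}, $\Ad_{V_D}$ turns $U_g$ into $\lambda_\Gamma^\sigma(g)\otimes 1$. So $\Ad_{V_D}$ maps the $\Gamma$-invariant operators on $\mathcal{H}$ bijectively onto the operators on $\ell^2(\Gamma)\otimes\mathcal{H}_D$ commuting with $\lambda^\sigma_\Gamma\otimes 1$. It also turns $\chi_{gD}$ into $e_{g}\otimes 1$, where $e_g$ is the rank-one projection onto $\C\delta_g$. It then remains to show two things: the image of the $*$-algebra of locally compact, controlled, invariant operators sits inside the algebraic tensor product $\C[\Gamma]_\rho\odot\Cpt(\mathcal{H}_D)$, and it contains this algebraic tensor product. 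Here $\C[\Gamma]_\rho$ denotes the span of the $\rho^{\overline\sigma}_\Gamma(g)$. Taking norm closures then gives \eqref{eq:Willett_isomorphism_twisted}.

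\emph{Step 1 (cocycle bookkeeping).} I first check directly from \eqref{eq:twisted_regular_rep} and the cocycle identity \eqref{eq:2_cocycle_condition} that $\rho^{\overline\sigma}_\Gamma(h)$ commutes with $\lambda^\sigma_\Gamma(g)$ for all $g,h$. This amounts to comparing $\sigma(g,g'h^{-1})\overline{\sigma(g'h^{-1},h)}$ with $\overline{\sigma(gg'h^{-1},h)}\sigma(g,g')$. Next, write $T$ (already conjugated) as a matrix $(T_{g,h})_{g,h\in\Gamma}$ with $T_{g,h}=(e_g\otimes1)T(e_h\otimes 1)\in\Bdd(\mathcal{H}_D)$. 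Commutation with $\lambda^\sigma(k)\otimes 1$ gives $T_{kg,kh}=\sigma(k,g)\overline{\sigma(k,h)}\,T_{g,h}$. Hence $T$ is determined by the row $K_x\defeq T_{1,x}$, and $T=\sum_{x}c_x\,\rho^{\overline\sigma}_\Gamma(x^{-1})\otimes K_x$ for suitable phases $c_x$, at least when the sum is finite. I expect this step to be the main obstacle. Matching the phases exactly, and in particular seeing why $\overline\sigma$ rather than $\sigma$ appears, requires careful use of \eqref{eq:twisted.inverse.formula} and the cocycle identity. I would handle it by verifying the commutant relation on basis vectors $\delta_{g'}\otimes\zeta$, rather than by trying to guess the formula.

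\emph{Step 2 (finiteness and compactness).} Suppose $T$ has propagation $r$. If $T_{1,x}\ne0$, then $\supp T$ meets $\overline{D}\times x\overline{D}$, so $x\overline D$ meets the $r$-neighbourhood of $\overline D$. This neighbourhood is bounded, hence relatively compact since $X$ is proper. Properness of the action then allows only finitely many such $x$, so the sum in Step 1 is finite. Local compactness applied to the compact set $\overline D$ shows that $\chi_D T$ is compact, hence each $K_x$ is compact. Therefore the image lies in $\C[\Gamma]_\rho\odot\Cpt(\mathcal{H}_D)$.

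\emph{Step 3 (converse).} Let $S=\rho^{\overline\sigma}_\Gamma(x)\otimes K$ with $K\in\Cpt(\mathcal{H}_D)$. By Step 1 it commutes with $\lambda^\sigma\otimes 1$, so $V_D^*SV_D$ is $\Gamma$-invariant. Its support lies in $\bigcup_g g(\overline D\times x^{-1}\overline D)$, whose points are at distance at most $\operatorname{diam}D+d(D,x^{-1}D)$ because $\Gamma$ acts isometrically (the actions in the setup are by isometries). Hence it is controlled. For a compact $C\subseteq X$, only finitely many $gD$ meet $C$, using properness and boundedness of $D$. So $\chi_C S$ is a finite sum of operators of the form $e_g\otimes(\text{compact})$, and hence is compact. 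The same argument applies to $S\chi_C$. Finally, the map is a $*$-isomorphism since $\Ad_{V_D}$ is, and the closure of $\C[\Gamma]_\rho\odot\Cpt(\mathcal{H}_D)$ is $\Cstr(\Gamma,\overline\sigma)\otimes\Cpt(\mathcal{H}_D)$; this completes the argument.
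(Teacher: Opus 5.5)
Your proposal is correct and is essentially the paper's argument. You expand $\Ad_{V_D}(T)$ as a finite sum of terms $\rho^{\overline{\sigma}}_\Gamma(\cdot)\otimes(\text{compact})$, using finite propagation plus properness for finiteness and local compactness on $\overline{D}$ for compactness. You then check the converse on $\rho^{\overline{\sigma}}_\Gamma(x)\otimes K$ and pass to norm closures, with your matrix-coefficient bookkeeping replacing the paper's direct computation of $\Ad_{V_D}(T)(\delta_{g'}\otimes\zeta)$.

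Two small corrections. First, with $K_x=T_{1,x}$ the matching term is exactly $\rho^{\overline{\sigma}}_\Gamma(x)\otimes K_x$, so $c_x=1$, not $\rho^{\overline{\sigma}}_\Gamma(x^{-1})\otimes K_x$; correspondingly, the support in Step 3 lies in $\bigcup_g g(\overline{D}\times x\overline{D})$. Second, the isometric action you use for controlledness in Step 3 is not literally stated in the Basic Setup, but it is a tacit standing assumption that the paper's converse step also relies on.
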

\begin{proof}
Let $T$ be a locally compact, controlled, $\Gamma$-invariant operator.
Applying formula \eqref{eq:V_D} for $V_D$, for $g^\prime\in \Gamma,
\zeta\in \mathcal{H}_D$, we have
\begin{align*}
\Ad_{V_D}(T)(\delta_{g^\prime}\otimes\zeta)
&=V_D T U_{g^\prime}\zeta\\
&=V_D U_{g^\prime} T\zeta & (TU_{g^\prime}=U_{g^\prime}T)\\
&=(\lambda^\sigma_\Gamma(g^\prime)\otimes 1)(V_D(T\zeta)) &
(\text{\cref{prop:V_D_intertwine}})\\
&=\sum_{g\in \Gamma}\sigma(g^\prime,g)\delta_{g^\prime g}\otimes \chi_D
U_g^* T\zeta & \eqref{eq:twisted_regular_rep}\\
&=\sum_{g\in \Gamma}\overline{\sigma(g^\prime
g,g^{-1})}\sigma(g,g^{-1})\delta_{g^\prime g}\otimes U_g^*\chi_{gD}  T\zeta
& \eqref{eq:2_cocycle_condition}\\
&=\sum_{g\in
\Gamma}\rho_\Gamma^{\overline{\sigma}}(g^{-1})(\delta_{g^\prime})\otimes
U_{g^{-1}}\chi_{gD} T\chi_D\zeta. &
\eqref{eq:twisted.inverse.formula},\eqref{eq:twisted_regular_rep}
\end{align*}
Due to properness of $X$ and the $\Gamma$-action, and $T$ being controlled,
$\chi_{gD}T\chi_D$ is nonzero only for finitely many $g\in \Gamma$, so the
above sum is a finite sum. Furthermore, $T$ is locally
compact, so the above formula for $\Ad_{V_D}(T)$ shows that it lies in
$\mathbb{C}[\Gamma,\overline{\sigma}]\otimes \Cpt(\mathcal{H}_D)$, where
$\mathbb{C}[\Gamma,\overline{\sigma}]$ is the algebra generated by the
$\rho_\Gamma^{\overline{\sigma}}$ operators.
It follows that for $T$ in the norm closure $\Cst(X,\mathcal{H})^\Gamma$,
we have $\Ad_{V_D}(T)$ being in the norm-closure
$\Cstr(\Gamma,\overline{\sigma})\otimes \Cpt(\mathcal{H}_D)$.

Conversely, we check that for
$\rho_\Gamma^{\overline{\sigma}}(g)\otimes A\in
\mathbb{C}[\Gamma,\overline{\sigma}]\otimes \Cpt(\mathcal{H}_D)$, the
operator
\[
  \Ad_{V_D^*}(\rho_\Gamma^{\overline{\sigma}}(g)\otimes A)
\]
is locally compact, and controlled. Furthermore,
$\rho_\Gamma^{\overline{\sigma}}(g)\otimes A$ commutes with
$\lambda_\Gamma^{\sigma}\otimes 1$, so by \cref{prop:V_D_intertwine},
$\Ad_{V_D^*}(\rho_\Gamma^{\overline{\sigma}}(g)\otimes A)$ is
$\Gamma$-invariant. Then we pass to the norm closures.
\end{proof}

Quite generally, if \( V\colon \mathcal{H}_1\to\mathcal{H}_2\) is an
isometry between Hilbert
spaces, then the $\st$-homomorphism \( \id\otimes\Ad_{V}\colon
A\otimes\Cpt(\mathcal{H}_1)\to A\otimes\Cpt(\mathcal{H}_2) \) induces
isomorphisms in $\K$-theory,
\begin{equation}\label{eq:isometry.stabilization}
\Stab=\Stab^{(A)}\colon \K_i(A\otimes\Cpt(\mathcal{H}_1))\overset{\sim}{\to} \K_i(A\otimes\Cpt(\mathcal{H}_2)).
\end{equation}
In particular, if $\mathcal{H}_1$ is one-dimensional, $\Ad_V$ is a corner
embedding \( \C\hookrightarrow \Cpt(\mathcal{H}_2) \), 
and we can view
$\Stab$ as being induced by tensoring with a(ny) rank-1 projection on
$\mathcal{H}_2$.

For the case of $\Cpt=\Cpt(\mathcal{H}_D)$ appearing in
\eqref{eq:Willett_isomorphism_twisted}, we shall write
\[
\Stab_{\mathcal{H}_D}\colon \K_i(\Cstr(G,\overline{\sigma}))\xrightarrow{\sim}\K_i(\Cstr(G,\overline{\sigma})\otimes\Cpt(\mathcal{H}_D))
\]
to keep track of the dependence on $\mathcal{H}_D$. Combining the stabilization isomorphism with \eqref{eq:Willett_isomorphism_twisted}, we have the isomorphism
\begin{equation}\label{eq:nat.trans.symm.break}
\alpha_G\defeq \Stab_{\mathcal{H}_D}^{-1}\circ \K_i(\Ad_{V_{D}})\colon \K_i(\Cst(X,\mathcal{H})^{G}))\xrightarrow{\sim} \K_i(\Cstr(G,\overline{\sigma})).
\end{equation}

Notice that we did not include the dependence on $D$ in the notation $\alpha_G$. This is because of the following result, which may be compared with \cite{Willett-Yu:Higher_index_theory}*{Theorem 5.3.2}.

\begin{proposition}\label{prop:equiv_Roe_decomp_K_canonical}
Let $\mathcal{H}$ be a $\sigma$-twisted $X$-$\Gamma$-module satisfying
\refbasicsetup. The
isomorphism \eqref{eq:nat.trans.symm.break}
is independent of the choice of bounded fundamental domain $D$.
\end{proposition}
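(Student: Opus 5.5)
Fix two bounded fundamental domains $D$ and $D'$ for $\mathcal{H}$. We must show $\Stab_{\mathcal{H}_D}^{-1}\circ\K_i(\Ad_{V_D})=\Stab_{\mathcal{H}_{D'}}^{-1}\circ\K_i(\Ad_{V_{D'}})$. The plan is to compare the two isomorphisms of \cref{prop:twisted_Willett} through the unitary $W\defeq V_{D'}V_D^*\colon \ell^2(\Gamma)\otimes\mathcal{H}_D\to\ell^2(\Gamma)\otimes\mathcal{H}_{D'}$. It suffices to prove that $\K_i(\Ad_W)$, restricted to $\Cstr(\Gamma,\overline{\sigma})\otimes\Cpt(\mathcal{H}_D)\to\Cstr(\Gamma,\overline{\sigma})\otimes\Cpt(\mathcal{H}_{D'})$, equals $\Stab_{\mathcal{H}_{D'}}\circ\Stab_{\mathcal{H}_D}^{-1}$.

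\textbf{Step 1: structure of $W$.} By \cref{prop:V_D_intertwine}, $W$ intertwines $\lambda_\Gamma^\sigma\otimes 1$ with itself. Computing from \eqref{eq:V_D}, we get $W(\delta_{g'}\otimes\zeta)=\sum_{g}\delta_g\otimes\chi_{D'}U_g^*U_{g'}\zeta$. Rewriting as in the proof of \cref{prop:twisted_Willett}, this becomes
\[
W=\sum_{h\in\Gamma}\rho_\Gamma^{\overline{\sigma}}(h)\otimes W_h,\qquad W_h\defeq(\text{phase})\cdot\chi_{D'}U_{h}\chi_D\colon\mathcal{H}_D\to\mathcal{H}_{D'}.
\]
Since $D$ and $D'$ are bounded and the action is proper, only finitely many $W_h$ are nonzero. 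Hence $W$ is a unitary "multiplier" lying in $\mathbb{C}[\Gamma,\overline{\sigma}]\otimes\Bdd(\mathcal{H}_D,\mathcal{H}_{D'})$, and $\Ad_W$ maps $\Cstr(\Gamma,\overline{\sigma})\otimes\Cpt(\mathcal{H}_D)$ onto $\Cstr(\Gamma,\overline{\sigma})\otimes\Cpt(\mathcal{H}_{D'})$.

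\textbf{Step 2: reduce to a common space and use the standard trick.} Embed both $\mathcal{H}_D$ and $\mathcal{H}_{D'}$ isometrically into $\mathcal{K}\defeq\mathcal{H}_D\oplus\mathcal{H}_{D'}$ via the two summand inclusions. Those inclusions induce $\Stab$-isomorphisms, compatible with $\Stab_{\mathcal{H}_D}$ and $\Stab_{\mathcal{H}_{D'}}$ by functoriality of \eqref{eq:isometry.stabilization}. Consider the unitary
\[
\widetilde W=\begin{pmatrix}0 & W^*\\ W & 0\end{pmatrix}
\]
on $\ell^2(\Gamma)\otimes\mathcal{K}$. It lies in the multiplier algebra $M(\Cstr(\Gamma,\overline{\sigma})\otimes\Cpt(\mathcal{K}))$; indeed it lies in the unitization-type algebra $\Cstr(\Gamma,\overline\sigma)\otimes\Bdd(\mathcal{K})$, because only finitely many group terms appear. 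Under the inclusions, $\Ad_W$ becomes conjugation by $\widetilde W$. Inner automorphisms by multiplier unitaries act trivially on $\K$-theory, since $\Ad_{\widetilde W}\oplus\id$ is homotopic to $\id\oplus\Ad_{\widetilde W}$ through a rotation, or equivalently $\widetilde W\oplus\widetilde W^*$ connects to $1$ in the multiplier algebra. Therefore $\K_i(\Ad_W)$ corresponds to the identity under the stabilizations, which gives exactly the claimed equality.

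\textbf{Main obstacle.} The delicate point is Step 1. We need the explicit formula showing that $W$ is a finite sum $\sum_h\rho^{\overline\sigma}_\Gamma(h)\otimes W_h$ with the correct cocycle phases, so that $\widetilde W$ genuinely multiplies $\Cstr(\Gamma,\overline{\sigma})\otimes\Cpt(\mathcal{K})$ and is not merely a unitary in $\Bdd(\ell^2(\Gamma)\otimes\mathcal{K})$. To handle this, I would first treat the case where $D'\subseteq\bigcup_{h\in F}hD$ for a finite set $F$, and verify commutation with $\lambda^\sigma_\Gamma\otimes1$ directly; the commutant of $\lambda^\sigma_\Gamma$ is the von Neumann algebra generated by $\rho^{\overline\sigma}_\Gamma$, and finite support comes from boundedness. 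The remaining argument, that inner automorphisms by multiplier unitaries act trivially on $\K$-theory, is standard.
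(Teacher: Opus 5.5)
Your proposal is correct, and its core agrees with the paper's proof. In both, the comparison unitary ($V_{D'}V_D^*$ for you, essentially $V_DV_{D'}^*$ in the paper) is shown to be a finite sum
\[
W=\sum_k\rho^{\overline{\sigma}}_\Gamma(k)\otimes\chi_{D'}U_k\chi_D ,
\]
where finiteness comes from boundedness of $D,D'$ together with properness. Then one uses that conjugation by a unitary multiplier acts trivially on $\K$-theory.

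Two small simplifications are available in your Step 1. The phase in your $W_h$ is actually trivial, by the cocycle identity $\overline{\sigma(g,g^{-1})}\sigma(g^{-1},g')=\overline{\sigma(g,g^{-1}g')}$ that the paper uses. Also, your proposed special case $D'\subseteq\bigcup_{h\in F}hD$ is automatic from properness, so the commutant detour is unnecessary.

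Where you genuinely differ is in how you handle the fact that $W$ maps between two different algebras, $\Cstr(\Gamma,\overline{\sigma})\otimes\Cpt(\mathcal{H}_D)$ and $\Cstr(\Gamma,\overline{\sigma})\otimes\Cpt(\mathcal{H}_{D'})$.

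The paper builds an explicit cut-and-paste unitary $S\colon\mathcal{H}_D\to\mathcal{H}_{D'}$ from the restrictions $U_k\colon\mathcal{H}_{k^{-1}D'\cap D}\to\mathcal{H}_{D'\cap kD}$. Then $(1\otimes S)V_DV_{D'}^*$ is a unitary multiplier of a single algebra, and $\K_i(\Ad_{1\otimes S})$ is absorbed into the stabilization isomorphisms via $\Stab_{\mathcal{H}_{D'}}^{-1}=\Stab_{\mathcal{H}_D}^{-1}\circ\K_i(\Ad_{1\otimes S^*})$.

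You instead embed both coefficient spaces into $\mathcal{H}_D\oplus\mathcal{H}_{D'}$ and use the self-adjoint unitary $\widetilde W=\left(\begin{smallmatrix}0&W^*\\W&0\end{smallmatrix}\right)$. This is legitimately a multiplier, since it lies in the algebraic tensor product of $\C[\Gamma,\overline{\sigma}]$ with $\Bdd(\mathcal{K})$. It satisfies $\Ad_{\widetilde W}\circ\Ad_{1\otimes\iota_1}=\Ad_{1\otimes\iota_2}\circ\Ad_W$, which yields $\K_i(\Ad_W)\circ\Stab_{\mathcal{H}_D}=\Stab_{\mathcal{H}_{D'}}$.

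Your route needs no geometric construction of $S$, and it would work for any finitely supported intertwiner between arbitrary coefficient spaces; the price is an auxiliary stabilization. The paper's route additionally produces a concrete isomorphism $\mathcal{H}_D\simeq\mathcal{H}_{D'}$, showing for instance that $\dim\mathcal{H}_D$ does not depend on $D$. One minor point: your phrase ``$\Ad_{\widetilde W}\oplus\id$ is homotopic to $\id\oplus\Ad_{\widetilde W}$'' is imprecise. Your second formulation, that $\widetilde W\oplus\widetilde W^*$ is connected to $1$ in the multiplier algebra, is the correct standard argument.
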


\begin{proof}
Let $D,D^\prime$ be two choices of bounded fundamental domains. 
For each $k\in \Gamma$,
\[
\mathcal{H}_{D^\prime}=\chi_{D^\prime}\bigoplus_{k\in \Gamma}\chi_{kD}\mathcal{H}=\bigoplus_{k\in \Gamma}\chi_{D^\prime\cap kD}\mathcal{H}=\bigoplus_{k\in \Gamma}\mathcal{H}_{D^\prime\cap kD},
\]
and similarly,
\[
\mathcal{H}_D=\bigoplus_{k\in \Gamma}\mathcal{H}_{k^{-1}D^\prime\cap D}.
\]
For the $k$-th summands, recall from \eqref{eq:spatial_unitary} that $U_k$ restricts to unitary isomorphisms
\[
U_k\colon \mathcal{H}_{k^{-1} D^\prime\cap D}\to \mathcal{H}_{D^\prime \cap kD}.
\]
We may combine these isomorphisms into a single unitary isomorphism
\[
S\colon \mathcal{H}_{D}=\bigoplus_{k\in \Gamma}\mathcal{H}_{k^{-1}D^\prime \cap D}\to \bigoplus_{k\in \Gamma}\mathcal{H}_{D^\prime \cap kD}=\mathcal{H}_{D^\prime}.
\]

Denote $1_{\ell^2(\Gamma)}$ by $1$. For $g\in \Gamma$ and $\zeta\in \mathcal{H}_{D^\prime}$, we have
\begin{equation}
\label{eq:sum_of_multipliers}
\begin{aligned}
(1\otimes S)\circ V_{D}\circ V_{D^\prime}^*(\delta_g\otimes \zeta)
&=(1\otimes S)\Big(\sum_{k\in \Gamma}\delta_k\otimes \chi_DU_k^*U_g\zeta\Big)\\
&=\sum_{k\in\Gamma}\delta_k\otimes S\chi_D\overline{\sigma(k,k^{-1})}\sigma(k^{-1},g)U_{k^{-1}g}\zeta\\
&=\sum_{k\in\Gamma}\overline{\sigma(k,k^{-1}g)}\delta_k\otimes S\chi_D U_{k^{-1}g}\zeta\\
&=\sum_{k\in\Gamma}\overline{\sigma(gk^{-1},k)}\delta_{gk^{-1}}\otimes S\chi_D U_k\chi_{D^\prime}\zeta\\
&=\sum_{k\in\Gamma}\big(\rho_\Gamma^{\overline{\sigma}}(k)\otimes S\chi_{D\cap kD^\prime}U_k\big)(\delta_g\otimes\zeta).
\end{aligned}
\end{equation}
As $X$ is a proper metric space, $\overline{D}, \overline{D^\prime}$ are
compact, so $\overline{D}\cup\overline{D^\prime}$ is compact. By properness
of the $\Gamma$-action,
\[
D\cap kD^\prime\subseteq \overline{D}\cap k\overline{D^\prime}\subseteq (\overline{D}\cup\overline{D^\prime})\cap k(\overline{D}\cup\overline{D^\prime})
\]
is nonempty only for finitely many $k\in\Gamma$. So the sum in
\eqref{eq:sum_of_multipliers} is a finite sum. Therefore, $(1\otimes
S)\circ V_{D}\circ V_{D^\prime}^*$ is a unitary multiplier for the algebra
\mbox{$\Cstr(\Gamma,\overline{\sigma})\otimes\Cpt(\mathcal{H}_{D^\prime})$},
and it is a standard result that its conjugation action induces the
identity map on
$\K_i(\Cstr(\Gamma,\overline{\sigma})\otimes\Cpt(\mathcal{H}_{D^\prime}))$;
see, e.g., \cite{Willett-Yu:Higher_index_theory}*{Proposition 2.7.5}.
Therefore,
\begin{align*}
\Stab_{\mathcal{H}_{D^\prime}}^{-1}\circ\K_i(\Ad_{V_{D^\prime}})&=\big(\Stab_{\mathcal{H}_D}^{-1}\circ\K_i(\Ad_{1\otimes S^*})\big)\circ\big(\K_i(\Ad_{1\otimes S})\circ \K_i(\Ad_{V_{D}})\big)\\
&=\Stab_{\mathcal{H}_D}^{-1}\circ \K_i(\Ad_{V_{D}}).\qedhere
\end{align*}
\end{proof}

\begin{remark}\label{rmk:equivariant_coarse_equivalence}
We may understand
\cref{prop:twisted_Willett,prop:equiv_Roe_decomp_K_canonical}
from the viewpoint of coarse geometry.

The \emph{coboundedness} assumption in \refbasicsetup\, says that \( X \) is
equivariantly coarsely equivalent to \( \Gamma \), where \( \Gamma \) is
equipped with any choice of proper, left \( \Gamma \)-invariant metric.
In fact, any two such metrics are coarsely equivalent, as explained in
\cite{Smith:Asymptotic_dimension_abelian_groups}.

If \( \mathcal{H} \) is \emph{ample} 
and $\sigma=1$, then the equivariant Roe algebra \(
\Cst(X,\mathcal{H})^\Gamma \) depends on \( X \) only up to equivariant
coarse equivalence, and is independent of the choice of the \( X \)-\(
\Gamma \)-module \( \mathcal{H} \) defining it. 
Indeed, an isomorphism between two ample \( X \)-\(
\Gamma \)-modules is given by a (non-canonical) \( \Gamma \)-equivariant
unitary $V\colon \mathcal{H}\to\mathcal{H}^\prime$, see
\cite{Willett-Yu:Higher_index_theory}*{Proposition 4.5.16}.

In the twisted case, suppose $\mathcal{H},\mathcal{H}^\prime$ are
ample $\sigma$-twisted and $\sigma^\prime$-twisted $X$-$\Gamma$-modules
respectively, with $\sigma^\prime$ cohomologous to $\sigma$. The
equivariant Roe algebra \( \Cst(X,\mathcal{H}^\prime)^\Gamma \) is
unchanged when the operators $U^\prime_g$ are modified by phases, so it depends on the group cohomology class of $\sigma$. The same construction of a
$\Gamma$-equivariant $V\colon \mathcal{H}\to \mathcal{H}^\prime$ gives an
isomorphism \(\Cst(X,\mathcal{H})^\Gamma \simeq
\Cst(X,\mathcal{H}^\prime)^\Gamma\). 

Thus if
\( X \) is \(\Gamma\)-equivariantly coarsely equivalent to \( Y \) and
\( \mathcal{H}_X, \mathcal{H}_Y\) are ample modules for them that satisfy
\refbasicsetup, with cohomologous 2-cocycles $\sigma,\sigma^\prime$,
then \(
\Cst(X,\mathcal{H}_X)^\Gamma\simeq\Cst(Y,\mathcal{H}_Y)^\Gamma \). This is
a special case of \cref{prop:twisted_Willett}, since $\Cstr(\Gamma,\overline{\sigma})$ and
$\Cstr(\Gamma,\overline{\sigma^\prime})$ are isomorphic (see
\eqref{eq:cohom_iso_algebra}).

In the non-ample setting, the equivariant Roe algebra
\( \Cst(X,\mathcal{H})^\Gamma \) depends also on \( \mathcal{H} \),
because \( \mathcal{H}_D \) could have any finite dimension.
Nevertheless, \( \Cpt(\mathcal{H}_D) \)
is always Morita equivalent to \( \C \), and a Morita equivalence induces
an isomorphism in K-theory. Thus the induced K-theory maps of
\eqref{eq:Willett_isomorphism_twisted} are independent of the \( X \)-\(
\Gamma \)-module, as shown in \cref{prop:equiv_Roe_decomp_K_canonical}.

Therefore, under \refbasicsetup, the K-theory of the 
equivariant Roe algebra \(\Cst(X,\mathcal{H})^\Gamma \) is
a coarse-geometric invariant of \( \Gamma \) together with 
the cohomology class of
$\sigma$. If we further assume that 
\( \mathcal{H} \) is ample, then the isomorphism class of the
equivariant Roe algebra is already an invariant.
\end{remark}

\section{Symmetry breaking}\label{sec:symmetry_breaking}
A central idea of the present work is to describe the symmetry breaking
process as a direct colimit, with its universal property understood as
consistency conditions.
The relevant categorical terminology is recalled in
\cref{app:categorical_terminology}.
The main result of this section is \cref{thm:main_computation_theorem}, stating that the isomorphisms \eqref{eq:nat.trans.symm.break} give a natural identification between two pictures of symmetry breaking --- the equivariant Roe algebra picture \eqref{eq:iota_HG} and the group $\Cst$-algebra picture \eqref{eq:phi_HG}.

\subsection{Symmetry breaking Roe algebra}
\begin{definition}
Let $\Gamma$ be a discrete countable group. The \emph{symmetry breaking
poset} for $\Gamma$ is the poset \( (\mathfrak{S},\prec) \), where 
\[ 
    \mathfrak{S}=\set{\text{finite index subgroups of}\;\Gamma}
\]
and \( G\prec H \) iff \( H \) is a subgroup of \( G \).    
\end{definition}

Note that if $G,H$ are finite index subgroups of $\Gamma$, then so is
$G\cap H$, thus $\mathfrak{S}$ is directed. 

An (ample) $X$-$\Gamma$-module \( \mathcal{H} \) is also an (ample)
$X$-$G$-module for any subgroup $G\subseteq \Gamma$. Thus, whenever $G\prec H$ in $\mathfrak{S}$, there is an obvious inclusion homomorphism
\begin{equation}\label{eq:iota_HG}
\iota_{H\backslash G}\colon\Cst(X,\mathcal{H})^{G}
\hookrightarrow\Cst(X,\mathcal{H})^{H},
\end{equation} 
given by regarding a $G$-invariant operator as
a $H$-invariant operator.

\begin{definition}\label{DfnSymmBreakingAlg}
Let $X$ be a proper metric space $X$, and \( \mathcal{H} \) be an \( X
\)-\( \Gamma \)-module.  
The \emph{symmetry breaking Roe algebra} of \( \mathcal{H} \), 
denoted $\Cst(X,\mathcal{H})^\mathfrak{S}$,
is defined to be the colimit of the directed system of \Cst-algebras whose
connecting morphisms are the inclusions \eqref{eq:iota_HG}.
\end{definition}

The symmetry breaking Roe algebra is concretely realized as
\begin{equation}\label{eq:concrete.symm.break.alg}
\Cst(X,\mathcal{H})^\mathfrak{S}\defeq
\colim_{G\in\mathfrak{S}}\Cst(X,\mathcal{H})^{G} \simeq
\overline{\bigcup_{G\in\mathfrak{S}}\Cst(X,\mathcal{H})^{G}},
\end{equation}
and is equipped with the universal inclusions
\begin{equation}\label{eq:universal.inclusion.to.symm.break}
\iota_{\mathfrak{S}\backslash G}\colon \Cst(X,\mathcal{H})^{G}\hookrightarrow \Cst(X,\mathcal{H})^\mathfrak{S},\qquad G\in\mathfrak{S}.
\end{equation}

\subsection{Symmetry breaking with group \Cst-algebras}
We would like to explicitly compute the induced K-theory maps of \(
\iota_{H\backslash G} \) as in \eqref{eq:iota_HG}. This will be done by
passing to certain homomorphisms between the $\K$-theory of group
$\Cst$-algebras, which will, moreover, form a well-defined directed system. 

A 2-cocycle $\sigma$ on $G$ restricts to a 2-cocycle on any subgroup
$H\subseteq G$. We continue to denote the latter 2-cocycle on $H$ with the
same symbol $\sigma$. 

Write \( H\backslash G \) for the 
set of right cosets.
A \emph{section} of the quotient map \( G\to H\backslash G \) is a map \( s\colon H\backslash G\to G \) such that $Hs(Hg)=Hg$ for all cosets $Hg\in H\backslash G$.
With respect to $s$, there is a bijection
\begin{equation*}
\begin{aligned}
G & \xlongrightarrow{\sim} H\times H\backslash G\\
g & \longmapsto \big(g s(Hg)^{-1}, Hg\big)
\end{aligned}
\end{equation*}
and a corresponding unitary isomorphism,
\begin{equation}\label{eq:unitary_l2G_l2H}
\begin{aligned}
\eta_s^\sigma\colon \ell^2(G) &\longrightarrow \ell^2(H)\otimes \ell^2(H\backslash
G)\\
\delta_{g}&\longmapsto \rho_G^\sigma(s(Hg))(\delta_g)\otimes \delta_{Hg}\\
\rho_G^\sigma(s(Hg))^*(\delta_h)&\longmapsfrom \delta_{h}\otimes \delta_{Hg}.
\end{aligned}
\end{equation}

\begin{proposition}\label{prop:i_GH}
Let $H\subseteq G$ be a finite index subgroup of a discrete countable group
$G$, and $\sigma$ be a 2-cocycle for $G$.
\begin{enumerate}
\item For each section
$s\colon H\backslash G\to G$, conjugation by the unitary \(
\eta_s^{\sigma} \) in \eqref{eq:unitary_l2G_l2H} restricts to an injective \st-homomorphism
\[
\Ad_{\eta_{s}^\sigma}\colon\Cstr(G,\sigma)\hookrightarrow\Cstr(H,\sigma)\otimes \Cpt(\ell^2(H\backslash G)),
\]
\item The induced \K-theory map (referred to as a \emph{symmetry breaking} map)
\begin{equation}\label{eq:phi_HG}
\phi_{H\backslash G}^{s,\sigma}\defeq\K_i(\Ad_{\eta_s^\sigma})\colon
\K_i(\Cstr(G,\sigma))\to\K_i(\Cstr(H,\sigma))
\end{equation}
is independent of the section \( s\colon H\backslash G\to G \).

\end{enumerate}
\end{proposition}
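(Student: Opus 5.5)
Part (1) follows from a direct computation of $\eta_s^\sigma\rho_G^\sigma(g)(\eta_s^\sigma)^*$ on basis vectors, showing it lies in the algebraic tensor product of $\C[H,\sigma]$ (the right regular operators) with matrix units on the finite-dimensional space $\ell^2(H\backslash G)$. Fix $g\in G$ and a basis vector $\delta_h\otimes\delta_{Hk}$ with $h\in H$. Its preimage under $\eta_s^\sigma$ is $\rho_G^\sigma(s(Hk))^*\delta_h$, a phase times $\delta_{h s(Hk)}$. Applying $\rho_G^\sigma(g)$ gives a phase times $\delta_{hs(Hk)g^{-1}}$, which lies in the coset $Hs(Hk)g^{-1}=Hkg^{-1}$. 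Writing $s(Hk)g^{-1}=h'\,s(Hkg^{-1})$ with $h'\in H$ depending only on $(Hk,g)$, and applying $\eta_s^\sigma$, we get a phase times $\delta_{hh'}\otimes\delta_{Hkg^{-1}}$. Hence
\[
\Ad_{\eta_s^\sigma}(\rho_G^\sigma(g))=\sum_{Hk\in H\backslash G}c_{Hk,g}\,\rho_H^\sigma(h'(Hk,g)^{-1})\otimes e_{Hkg^{-1},Hk}.
\]
Here the $c_{Hk,g}$ are scalars. One must check from the cocycle identity \eqref{eq:2_cocycle_condition} that the accumulated phases do not depend on $h$, so that the $H$-part is really a twisted right translation. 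I expect this phase bookkeeping to be the fiddliest part. Since $\sigma$ is normalized, the term $\rho_G^\sigma(x)\delta_y=\sigma(yx^{-1},x)\delta_{yx^{-1}}$ combined with associativity should collapse the dependence on $h$. The sum is finite because the index is finite, so $\Ad_{\eta_s^\sigma}$ maps $\C[G,\sigma]$ into $\C[H,\sigma]\otimes\Cpt(\ell^2(H\backslash G))$. Because conjugation by a unitary is isometric, it extends to the norm closure $\Cstr(G,\sigma)$, and injectivity is automatic.

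For (2), let $s,s'$ be two sections. Then $s'(Hg)=h_{Hg}\,s(Hg)$ for some $h_{Hg}\in H$. Consider the unitary
\[
W\defeq\eta_{s'}^\sigma(\eta_s^\sigma)^*\quad\text{on }\ell^2(H)\otimes\ell^2(H\backslash G).
\]
On $\delta_h\otimes\delta_{Hg}$ it acts as $\rho_G^\sigma(s'(Hg))\rho_G^\sigma(s(Hg))^*$, which is a phase times $\rho^\sigma_G(h_{Hg})$. This is a phase times $\rho_H^\sigma(h_{Hg})$ tensored with $\delta_{Hg}$. So $W=\sum_{Hg}\omega_{Hg}\,\rho_H^\sigma(h_{Hg})\otimes e_{Hg,Hg}$ with phases $\omega_{Hg}$, once we check that the phase is independent of $h$ by the same cocycle manipulation.

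This $W$ is therefore a unitary in the unitization of (and in fact in the multiplier algebra of) $\Cstr(H,\sigma)\otimes\Cpt(\ell^2(H\backslash G))$, and it satisfies $\Ad_{\eta_{s'}^\sigma}=\Ad_W\circ\Ad_{\eta_s^\sigma}$. Conjugation by a unitary multiplier induces the identity on K-theory (\cite{Willett-Yu:Higher_index_theory}*{Proposition 2.7.5}, as already used in the proof of \cref{prop:equiv_Roe_decomp_K_canonical}). It follows that $\phi^{s',\sigma}_{H\backslash G}=\phi^{s,\sigma}_{H\backslash G}$, where we read the target through $\Stab^{-1}$.

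The main obstacle is just confirming that these phases are $h$-independent, i.e.\ that left multiplication by $\sigma$-twisted right translations commutes appropriately. If a direct check is messy, I would instead note that $W$ commutes with $\lambda_H^\sigma\otimes1$, since both $\eta$'s intertwine $\lambda_G^\sigma|_H$ with $\lambda_H^\sigma\otimes1$. Therefore $W$ lies in the commutant, block diagonal with entries in $\lambda_H^\sigma(H)'$, which is the von Neumann algebra generated by the twisted right translations. Being diagonal with one-term entries, it is then a multiplier as required.
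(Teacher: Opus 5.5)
Your proposal is correct and follows essentially the paper's route: for (1) you compute $\Ad_{\eta_s^\sigma}(\rho_G^\sigma(g))$ as a finite sum of twisted right translations of $H$ tensored with matrix units, and for (2) you show that $\eta^\sigma_{t}(\eta^\sigma_s)^*$ is a block-diagonal unitary whose entries are scalar multiples of $\rho^\sigma_H(t(i)s(i)^{-1})$, so it lies in $\Cstr(H,\sigma)\otimes\Cpt(\ell^2(H\backslash G))$ and conjugation by it is trivial on $\K$-theory. The $h$-independence of the phases, which you flag as the fiddly step, is in fact immediate, as in the paper: $\rho^\sigma_G$ is itself a $\sigma$-projective representation, so $\rho^\sigma_G(a)\rho^\sigma_G(b)\rho^\sigma_G(c)^*$ is a scalar multiple of $\rho^\sigma_G(abc^{-1})$ at the operator level, and it restricts to $\rho^\sigma_H$ on $\ell^2(H)$ when $abc^{-1}\in H$. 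Also, in your commutant fallback, the left representation that commutes with $\rho^\sigma$ and is intertwined by the $\eta$'s is $\lambda^{\overline{\sigma}}$, not $\lambda^{\sigma}$.
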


\begin{remark}\label{rem:symm.break.stabilization}
We note that since \( H \) has finite index in \( G \),   
\[
  \Cpt(\ell^2(H\backslash G))=\Bdd(\ell^2(H\backslash
  G))\simeq\Mat_{[G:H]}(\C). 
\]
In \eqref{eq:phi_HG}, we implicitly identify \( \K_i(\Cstr(H,\sigma)) \) with \(
\K_i\big(\Cstr(H,\sigma)\otimes\Cpt(\ell^2(H\backslash G))\big) \) using any corner
embedding \( \C\hookrightarrow \Cpt(\ell^2(H\backslash G)) \).
\end{remark}

\begin{proof}
Fix $g\in G$. We use the formulae \eqref{eq:unitary_l2G_l2H} and \eqref{eq:twisted.inverse.formula} to determine what $\Ad_{\eta_s^\sigma}(\rho_G^\sigma(g))$ is.
For any $h\in H$ and coset $i\in H\backslash G$, we have
\begin{equation}\label{eq:Ad_on_regular_translation_calculation}
\begin{aligned}
&\eta_s^\sigma\circ \rho_G^\sigma(g)\circ(\eta_s^\sigma)^*(\delta_h\otimes\delta_{i})\\
=&\eta_s^\sigma\circ \rho_G^\sigma(g)\circ\rho_G^\sigma(s(i))^*(\delta_{h})\\
=&\sigma(g,s(i)^{-1})\cdot\overline{\sigma(s(i)^{-1},s(i))}\cdot \eta_s^\sigma\circ\rho_G^\sigma(gs(i)^{-1})(\delta_h)\\
=&\overline{\sigma(gs(i)^{-1},s(i))}\cdot\rho_G^\sigma(s(Hhs(i)g^{-1}))\circ\rho_G^\sigma(gs(i)^{-1})(\delta_h)\otimes\delta_{Hhs(i)g^{-1}}\\
=&\overline{\sigma(gs(i)^{-1},s(i))}\cdot\sigma\big(s(Hs(i)g^{-1}),gs(i)^{-1}\big)\\
&\qquad\qquad\qquad \cdot\rho_G^\sigma\big(s(Hs(i)g^{-1})gs(i)^{-1}\big)(\delta_h)\otimes\delta_{Hs(i)g^{-1}}\\
=&\colon\epsilon^{s,\sigma}_{g;i}\cdot \rho_G^\sigma(h^s_{g;i})(\delta_h)\otimes\delta_{Hs(i)g^{-1}},
\end{aligned}
\end{equation}
where in the last line, we collected all the phase factors into a single
term $\epsilon^{s,\sigma}_{g;i}$,
and also abbreviated $s(Hs(i)g^{-1})gs(i)^{-1}$ to $h^s_{g;i}$.
Importantly, $\epsilon^{s,\sigma}_{g;i}$ is $h$-independent, and
$h^s_{g;i}\in H$. Therefore, the result of
\eqref{eq:Ad_on_regular_translation_calculation} can be written as the operator equality
\begin{equation}\label{eq:Ad_on_regular_translation}
\Ad_{\eta_{s}^\sigma}\big(\rho_G^\sigma(g)\big)=\sum_{i\in H\backslash G}
\rho_H^\sigma(h^s_{g;i})\otimes
\ket{\epsilon^{s,\sigma}_{g;i}\cdot\delta_{Hs(i) g^{-1}}}\bra{\delta_i},
\end{equation}
which is a finite sum of terms in
$\Cstr(H,\sigma)\otimes\Cpt(\ell^2(H\backslash G))$. Since $g\in G$ was
arbitrary, and since \( \Cstr(G,\sigma) \) is generated by \(
\rho_G^\sigma(g) \) for all 
\( g\in G \), we conclude that $\Ad_{\eta_{s}^\sigma}$ maps
\( \Cstr(G,\sigma) \) into \(
\Cstr(H,\sigma)\otimes\Cpt(\ell^2(H\backslash G)) \).

Now let \( t\colon H\backslash G\to G \) be another section. We
use \eqref{eq:unitary_l2G_l2H} and \eqref{eq:twisted.inverse.formula} to
compute the unitary isomorphism
\[ 
  \eta_{t,s}^{\sigma}\defeq \eta_{t}^\sigma\circ(\eta_{s}^\sigma)^{-1}\colon
  \ell^2(G)\otimes\ell^2(H\backslash
  G)\to\ell^2(G)\otimes\ell^2(H\backslash G).
\]
Similarly to \eqref{eq:Ad_on_regular_translation_calculation}, we have
  (note that \(
t(i)s(i)^{-1}\in H \) for any \( i\in H\backslash G \)):
\begin{align*}
\eta_{t,s}^{\sigma}(\delta_h\otimes\delta_i)
=&\eta_{t}^{\sigma}\circ\rho_G^\sigma(s(i))^*(\delta_h) \\
=&\overline{\sigma(s(i)^{-1},s(i))}\cdot\sigma(t(i),s(i)^{-1})
\cdot\rho_G^\sigma(t(i)s(i)^{-1})(\delta_h)\otimes \delta_i\\
=&\colon\omega_i^{s,t,\sigma}\cdot\rho_H^\sigma(t(i)s(i)^{-1})(\delta_h)\otimes\delta_i
\end{align*}
where the phase factor \( \omega_{i}^{s,t,\sigma} \) is independent of \(
h \). It follows that
\[ 
  \eta^{\sigma}_{t,s}=\sum_{i\in H\backslash G}
  \rho_H^\sigma(t(i)s(i)^{-1})\otimes\ket{\omega_{i}^{s,t,\sigma}\cdot\delta_i}\bra{\delta_i}
\]
belongs to $\Cstr(H,\sigma)\otimes
\Cpt(\ell^2(H\backslash G))$.
Therefore, the conjugation \( \Ad_{\eta^{\sigma}_{t,s}} \) induces the identity map on
\( \K_i(\Cstr(H,\sigma)) \). Then
\[ 
  \underbrace{\K_i(\Ad_{\eta_{t}^\sigma})}_{\phi^{t,\sigma}_{H\backslash G}}=
  \underbrace{\K_i(\Ad_{\eta^{\sigma}_{t,s}})}_{\id}\circ
  \underbrace{\K_i(\Ad_{\eta_{s}^\sigma})}_{\phi^{s,\sigma}_{H\backslash
G}}.\qedhere
\]
\end{proof}

\begin{corollary}\label{cor:exists.equivalent.directed.system}
Let \( \Gamma \) be a discrete countable group, \( \sigma\colon
\Gamma\times\Gamma\to\mathrm{U}(1) \) a
2-cocycle for \( \Gamma \), and \( \mathfrak{S} \) its symmetry breaking
poset. Then there is a directed system of abelian groups \(
\{\K_i(\Cstr(G,\sigma))\}_{G\in\mathfrak{S}} \), whose connecting
morphisms are given by the symmetry breaking maps \eqref{eq:phi_HG}.
\end{corollary}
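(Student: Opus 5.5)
To obtain a directed system from the maps $\phi_{H\backslash G}\defeq\phi^{s,\sigma}_{H\backslash G}$ of \cref{prop:i_GH}, two things must be checked. First, $\phi_{G\backslash G}=\id$. Second, for $K\subseteq H\subseteq G$ in $\mathfrak{S}$ we need $\phi_{K\backslash G}=\phi_{K\backslash H}\circ\phi_{H\backslash G}$. The maps are well defined, independent of sections, by \cref{prop:i_GH}(2). The identity case is immediate: $G\backslash G$ is a single coset, and we may take $s(G)=1$. Then $\eta_s^\sigma$ is the canonical identification $\ell^2(G)\simeq\ell^2(G)\otimes\C$, because $\rho^\sigma_G(1)=1$ by the normalization $\sigma(g,1)=\sigma(1,g)=1$.

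For composition, the freedom in \cref{prop:i_GH}(2) lets us choose compatible sections. Fix a section $s\colon H\backslash G\to G$ and a section $t\colon K\backslash H\to H$. Define $u\colon K\backslash G\to G$ by $u(Kg)=t(Kh)\,s(Hg)$, where $h=g\,s(Hg)^{-1}\in H$. This is well defined: if $Kg=Kg'$ then $Hg=Hg'$, and $Kh=Kh'$. It is also a section, since $Ku(Kg)=Kt(Kh)s(Hg)=Khs(Hg)=Kg$. We then compare the unitaries
\[
(\eta_t^\sigma\otimes 1)\circ\eta_s^\sigma\colon \ell^2(G)\to \ell^2(K)\otimes\ell^2(K\backslash H)\otimes\ell^2(H\backslash G)
\]
and $\eta_u^\sigma\colon\ell^2(G)\to\ell^2(K)\otimes\ell^2(K\backslash G)$. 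The bijection $K\backslash G\simeq K\backslash H\times H\backslash G$, $Kg\mapsto (Kh,Hg)$, identifies the second tensor factors. On $\delta_g$, the first unitary gives $\rho_H^\sigma(t(Kh))\rho_G^\sigma(s(Hg))\delta_g\otimes\delta_{Kh}\otimes\delta_{Hg}$, with $\rho^\sigma_H$ being the restriction of $\rho^\sigma_G$ on $\ell^2(H)\subseteq\ell^2(G)$. By the twisted multiplication rule, this product equals a phase $\sigma(\cdot,\cdot)$ (or its conjugate) times $\rho_G^\sigma(t(Kh)s(Hg))\delta_g=\rho_G^\sigma(u(Kg))\delta_g$. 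So the two unitaries differ by a diagonal unitary $W=\sum_{j\in K\backslash G}1\otimes\omega_j\ket{\delta_j}\bra{\delta_j}$ with phases $\omega_j$ depending only on the coset $j$. I expect this step to be the main obstacle. One must verify that the phase depends only on the coset and not on $g$ itself, since otherwise $W$ would not have the form $1\otimes(\text{matrix})$. I will check this by writing $\rho(a)\rho(b)=c(a,b)\rho(ab)$, where $c$ is a cocycle expression in the entries of $a,b$ alone, with $a=t(Kh)$ and $b=s(Hg)$ both functions of the coset $Kg$.

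Granting this, $\Ad_{\eta_u^\sigma}=\Ad_W\circ(\Ad_{\eta_t^\sigma}\otimes\id_{\Mat_{[G:H]}})\circ\Ad_{\eta_s^\sigma}$. Here $W$ is a unitary in $1\otimes\Mat_{[G:K]}(\C)$, which lies in the multiplier algebra and is connected to $1$. As in the proof of \cref{prop:i_GH}, conjugation by $W$ induces the identity on $\K$-theory. Moreover, $\K_i(\Ad_{\eta_t^\sigma}\otimes\id)$ agrees with $\phi_{K\backslash H}$ under the stabilization identifications of \cref{rem:symm.break.stabilization}, by naturality of corner embeddings. Hence $\phi^{u}_{K\backslash G}=\phi^t_{K\backslash H}\circ\phi^s_{H\backslash G}$, and section-independence yields the required relation. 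Together with directedness of $\mathfrak{S}$ (intersections of finite-index subgroups), this gives the directed system.
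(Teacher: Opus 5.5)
Your proposal is correct and follows essentially the same route as the paper. The paper also takes the composite section $t\star s$, $Kg\mapsto t(Kgs(Hg)^{-1})s(Hg)$, and compares $(\eta_t^\sigma\otimes 1)\circ\eta_s^\sigma$ with $\eta^\sigma_{t\star s}$ via a unitary $1\otimes\mu^\sigma_{t,s}$, built from the coset bijection and the phase $Kg\mapsto\sigma(t(Kgs(Hg)^{-1}),s(Hg))$, which acts trivially on K-theory. The step you flag as the main obstacle follows at once from the operator identity $\rho^\sigma_G(a)\rho^\sigma_G(b)=\sigma(a,b)\rho^\sigma_G(ab)$, a consequence of the cocycle condition; your extra check that $\phi_{G\backslash G}=\id$ is a harmless addition the paper leaves implicit.
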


\begin{proof}
We must show that the symmetry breaking maps \(
\phi^{s,\sigma}_{H\backslash G} \) satisfy
\begin{equation}\label{eq:composition.of.symm.break}
  \phi^{t,\sigma}_{K\backslash H}\circ\phi^{s,\sigma}_{H\backslash G}=
  \phi^{r,\sigma}_{K\backslash G},
\end{equation}
where, by \cref{prop:i_GH}, we may use any choices of sections
\[ 
  s\colon H\backslash G\to G,\quad t\colon K\backslash H\to H,\quad r\colon
  K\backslash G\to G.
\]
for the maps in \eqref{eq:composition.of.symm.break}.
In particular, given sections $s,t$, we may choose $r$ to be
\begin{equation}\label{eq:iterated_section}
r\defeq t\star s\colon K\backslash G \to G,\quad Kg \mapsto t\big(Kg
s(Hg)^{-1}\big)s(Hg).
\end{equation}
To see that $r=t\star s$ is indeed a section, first note that for any $g\in G$, we have $gs(Hg)^{-1}\in H$, so the formula
\eqref{eq:iterated_section} is well-defined. Then, as
$t\colon H\to K\backslash H$ is a section, we may write $t\big(Kg s(Hg)^{-1}\big)=k g s(Hg)^{-1}$ for some $k\in K$, so
\[
K(t\star s)(Kg)= Kt\big(Kg s(Hg)^{-1}\big)s(Hg)=Kk g s(Hg)^{-1} s(Hg)=Kg.
\]

Next, let us construct a unitary map $\mu_{t,s}^\sigma\colon
\ell^2(K\backslash G)\overset{\sim}{\to}\ell^2(K\backslash
H)\otimes\ell^2(H\backslash G)$ such that
\begin{equation}\label{eq:iterated_section_iso}
\begin{tikzcd}[column sep=1.5cm]
  \ell^2(G)\ar[r,"\eta_s^\sigma"]\ar[drr,"\eta_{t\star s}^\sigma", bend right=10] & \ell^2(H)\otimes \ell^2(H\backslash G) \ar[r,"\eta_t^\sigma\otimes 1_{\ell^2(H\backslash G)}"] & \ell^2(K)\otimes\ell^2(K\backslash H)\otimes\ell^2(H\backslash G)\\
  & &\ell^2(K)\otimes \ell^2(K\backslash G)\ar[u,"1_{\ell^2(K)}\otimes\mu_{t,s}^\sigma"]
\end{tikzcd}
\end{equation}
commutes. For the horizontal composition in
\eqref{eq:iterated_section_iso}, we have, for each $g\in G$,
\begin{equation}\label{eq:iterated_coset_computation}
\begin{aligned}
&\big(\eta_t^\sigma\otimes 1_{\ell^2(H\backslash G)}\big)\circ \eta_s^\sigma(\delta_g)\\
=&\big(\eta_t^\sigma\otimes 1_{\ell^2(H\backslash G)}\big)\big(\rho_G^\sigma(s(Hg))(\delta_g)\otimes\delta_{Hg}\big)\\
=&\rho_H^\sigma(t(Kgs(Hg)^{-1}))\circ\rho_G^\sigma(s(Hg))(\delta_g)\otimes \delta_{Kgs(Hg)^{-1}}\otimes\delta_{Hg}\\
=&\sigma(t(Kgs(Hg)^{-1}),s(Hg))\cdot\rho_G^\sigma(t(Kgs(Hg)^{-1})s(Hg))(\delta_g)\otimes \delta_{Kgs(Hg)^{-1}}\otimes\delta_{Hg}\\
=&\sigma(t(Kgs(Hg)^{-1}),s(Hg))\cdot\rho_G^\sigma(t\star s(Kg))(\delta_g)\otimes \delta_{Kgs(Hg)^{-1}}\otimes\delta_{Hg}.
\end{aligned}
\end{equation}
For the other composition in \eqref{eq:iterated_coset_computation}, we have
\begin{equation}\label{eq:iterated_coset_computation.2}
\eta_{t\star s}^\sigma(\delta_g)=\rho_G^\sigma(t\star s(Kg))(\delta_g)\otimes \delta_{Kg}.
\end{equation}
To compare \eqref{eq:iterated_coset_computation} with \eqref{eq:iterated_coset_computation.2}, we define the unitary
\begin{equation*}\label{eq:iterated_coset_space_bijection}
\begin{aligned}
\mu_s\colon \ell^2(K\backslash G)&\overset{\sim}{\to}\ell^2(K\backslash H)\otimes\ell^2(H\backslash G)\\
\delta_{Kg}&\mapsto \delta_{Kgs(Hg)^{-1}}\otimes \delta_{Hg}
\end{aligned}
\end{equation*}
and compose it with the multiplication operator by the following phase function
\begin{equation*}\label{eq:extra.phase.function}
Kg\mapsto \sigma(t(Kgs(Hg)^{-1}),s(Hg)),
\end{equation*}
to obtain a unitary $\mu_{t,s}^\sigma\colon \ell^2(K\backslash
G)\overset{\sim}{\to}\ell^2(K\backslash H)\otimes\ell^2(H\backslash G)$.
Then it is clear, by construction, that \eqref{eq:iterated_coset_computation.2} and
\eqref{eq:iterated_coset_computation} are related by the commutative diagram
\eqref{eq:iterated_section_iso}. It follows that 
\[ 
\underbrace{\K_i(\Ad_{1_{\ell^2(K)}\otimes\mu_{t,s}^\sigma})}_{\id}\circ
  \underbrace{\K_i(\Ad_{\eta_{t\star
  s}^\sigma})}_{\phi^{t\star s,\sigma}_{K\backslash G}}=
  \underbrace{\K_i(\Ad_{\eta_{t}^\sigma\otimes 1_{\ell^2(H\backslash
  G)}})}_{\phi^{t,\sigma}_{K\backslash
  H}}\circ\underbrace{\K_i(\Ad_{\eta_s^\sigma})}_{\phi^{s,\sigma}_{H\backslash
G}}.
\]
Above, we implicitly use the stabilization isomorphisms (\cref{rem:symm.break.stabilization}) to identify the $\K_i(\Ad_{(-)})$ with their respective symmetry-breaking maps. Under these identifications, we also have $\K_i(\Ad_{1_{\ell^2(K)}\otimes\mu_{t,s}^\sigma})=\id$ because of \eqref{eq:isometry.stabilization}.
\end{proof}

\begin{notation}
Subsequently, we will just write
$\phi^\sigma_{H\backslash G}$ instead of $\phi^{s,\sigma}_{H\backslash G}$.
We will also often drop the superscript $\sigma$ to simplify notation.
\end{notation}

\subsection{Natural isomorphism of symmetry breaking directed systems}
Let $D$ be a fundamental domain for an $X$-$G$-module $\mathcal{H}$ in
the sense of \cref{def:fundamental_domain}. It follows that for any distinct $g, g^\prime \in G$,
\begin{equation}\label{eq:sum_disjoint}
\begin{aligned}
\varrho(\chi_{gD\cup g^\prime D})&=\varrho\left(\chi_{gD}+\chi_{g^\prime D}-\chi_{gD\cap g^\prime D}\right)\\
&=\varrho\left(\chi_{gD}+\chi_{g^\prime D}\right)-\varrho(\chi_{gD})\varrho(\chi_{g^\prime D})\\
&=\varrho\left(\chi_{gD}+\chi_{g^\prime D}\right).
\end{aligned}
\end{equation}
Now let $H\subseteq G$ be a subgroup, 
and $s\colon H\backslash G\to G$ be a section, then
\begin{align*}
1_{\mathcal{H}} = \sum_{g\in G}\varrho(\chi_{gD})
=\sum_{h\in H}\sum_{i\in H\backslash G}\varrho(\chi_{hs(i)D})
&=\sum_{h\in H}\varrho\left(\sum_{i\in H\backslash G}\chi_{hs(i)D}\right)\\
&=\sum_{h\in H}\varrho\Big(\chi_{\bigcup_{i\in H\backslash G} hs(i)D}\Big) & ({\rm By}\;\eqref{eq:sum_disjoint})\\
&= \sum_{h\in H}\varrho\Big(\chi_{h\bigcup_{i\in H\backslash G}s(i)D}\Big).
\end{align*}
This says that
\begin{equation}\label{eq:section.act.on.fund.domain}
s\cdot D\defeq \bigcup_{i\in H\backslash G} s(i) D
\end{equation}
is a fundamental domain for $\mathcal{H}$ regarded as an $X$-$H$-module.  
Consequently, \eqref{eq:V_D} also gives a unitary
\[
V_{s\cdot D}\colon \mathcal{H}\xrightarrow{\sim} \ell^2(H)\otimes \mathcal{H}_{s\cdot D}.
\]
Also recall that for each coset $i\in H\backslash G$, $\mathcal{H}_{s(i)D}$
is identified with $\mathcal{H}_D$ via $U_{s(i)}$, so there is a
unitary isomorphism
\[
W_{s\cdot D}\colon\mathcal{H}_{s\cdot D}\xrightarrow{\sim} 
\ell^2(H\backslash G)\otimes \mathcal{H}_D.
\]

\begin{lemma}\label{lem:VsD.related.to.eta.s}
Let $D$ be a fundamental domain for a \( \sigma \)-twisted
$X$-$G$-module $\mathcal{H}$. Let
$H\subseteq G$ be a finite index subgroup, and $s\colon H\backslash G\to G$
be a section. Then there is a commutative diagram, whose arrows are all
unitary isomorphisms of Hilbert spaces:
\begin{equation}\label{eq:VD.compatible.with.eta}
\begin{tikzcd}[column sep=large]
\mathcal{H} \arrow[r, "1_\mathcal{H}", "\sim"'] \arrow[dd, "V_D"', "\sim"] &
\mathcal{H} \arrow[d, "V_{s\cdot D}", "\sim"']  \\
& 
\ell^2(H)\otimes\mathcal{H}_{s\cdot D} \arrow[d, "1_{\ell^2(H)}\otimes
W_{s\cdot D}", "\sim"'] \\
\ell^2(G)\otimes\mathcal{H}_D \arrow[r, "\eta_s^{\overline{\sigma}}\otimes
1_{\mathcal{H}_D}", "\sim"'] &
\ell^2(H)\otimes\ell^2(H\backslash G)\otimes\mathcal{H}_D.
\end{tikzcd} 
\end{equation}
\end{lemma}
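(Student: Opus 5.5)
The plan is to first write the unitary $W_{s\cdot D}$ down explicitly, and then check commutativity of \eqref{eq:VD.compatible.with.eta} on the total set of vectors $\delta_g\otimes\zeta$, with $g\in G$ and $\zeta\in\mathcal{H}_D$. Since every arrow is then a unitary, checking on these vectors suffices.

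\emph{Defining $W_{s\cdot D}$.} By \eqref{eq:sum_disjoint} and \cref{def:fundamental_domain}, the projections $\chi_{s(i)D}$, $i\in H\backslash G$, are mutually orthogonal and sum to $\chi_{s\cdot D}$. Hence $\mathcal{H}_{s\cdot D}=\bigoplus_{i}\mathcal{H}_{s(i)D}$. By \eqref{eq:spatial_unitary}, $U_{s(i)}$ restricts to a unitary $\mathcal{H}_D\to\mathcal{H}_{s(i)D}$. I would therefore set
\[
W_{s\cdot D}(\xi)\defeq\sum_{i\in H\backslash G}\delta_i\otimes\chi_D U_{s(i)}^*\chi_{s(i)D}\,\xi ,
\]
which is a unitary with inverse $\delta_i\otimes\zeta\mapsto U_{s(i)}\zeta$.

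\emph{Bottom-left path.} Write $i=Hg$ and $h\defeq g s(i)^{-1}\in H$. By \eqref{eq:V_D}, $V_D^*(\delta_g\otimes\zeta)=U_g\zeta$, so going around the bottom-left corner means computing $(\eta_s^{\overline\sigma}\otimes 1)(\delta_g\otimes\zeta)$. By \eqref{eq:unitary_l2G_l2H} and \eqref{eq:twisted_regular_rep},
\[
\rho_G^{\overline\sigma}(s(i))\delta_g=\overline{\sigma(h,s(i))}\,\delta_h ,
\]
so this path gives $\overline{\sigma(h,s(i))}\,\delta_h\otimes\delta_i\otimes\zeta$.

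\emph{Top-right path.} Here I need $V_{s\cdot D}(U_g\zeta)=\sum_{h'\in H}\delta_{h'}\otimes\chi_{s\cdot D}U_{h'}^*U_g\zeta$.
\begin{itemize}
\item The vector $U_g\zeta$ lies in $\mathcal{H}_{gD}=\mathcal{H}_{hs(i)D}$. This is a summand of $\mathcal{H}_{h\,s\cdot D}$, which is orthogonal to $\mathcal{H}_{h'\,s\cdot D}$ for $h'\neq h$, because $s\cdot D$ is a fundamental domain for $H$ as shown just before the lemma. So only $h'=h$ survives.
\item Using \eqref{eq:twisted.inverse.formula} and the cocycle identity \eqref{eq:2_cocycle_condition}, I will simplify $U_h^*U_g=U_h^*U_{hs(i)}$ to $\overline{\sigma(h,s(i))}\,U_{s(i)}$.
\item Applying $1\otimes W_{s\cdot D}$ then yields $\overline{\sigma(h,s(i))}\,\delta_h\otimes\delta_i\otimes\zeta$, which agrees with the bottom-left path.
\end{itemize}

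The main obstacle I expect is the phase bookkeeping: making sure the cocycle phases produced by $\rho^{\overline\sigma}_G$ on the bottom-left path exactly match those from $U_h^*U_{hs(i)}$ on the top-right path. This is where the conjugate cocycle $\overline{\sigma}$ in the statement must come out right, and I would verify it with the identity $\sigma(h^{-1},h)\,\sigma(h,s(i))=\sigma(h^{-1},hs(i))$, obtained from \eqref{eq:2_cocycle_condition} with $(g_1,g_2,g_3)=(h^{-1},h,s(i))$. A secondary point is the support argument killing the $h'\neq h$ terms, which needs the orthogonality of the $H$-translates of $s\cdot D$ established just before the lemma.
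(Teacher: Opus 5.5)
Your verification is essentially the paper's argument. The paper checks the same square on the inverse maps, starting from $\delta_h\otimes\delta_{Hg}\otimes\zeta$, which makes your support argument (killing the $h'\neq h$ terms of $V_{s\cdot D}$) unnecessary. Your forward computation is nonetheless sound, and the final phase $\overline{\sigma(h,s(i))}$ you obtain on both paths is correct.

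The one thing to fix is the cocycle identity you quote. Taking $(g_1,g_2,g_3)=(h^{-1},h,s(i))$ in \eqref{eq:2_cocycle_condition} gives $\sigma(h^{-1},h)=\sigma(h^{-1},hs(i))\,\sigma(h,s(i))$. It does not give $\sigma(h^{-1},h)\,\sigma(h,s(i))=\sigma(h^{-1},hs(i))$. If you feed your version through \eqref{eq:twisted.inverse.formula}, it produces the phase $\sigma(h,s(i))$ instead of $\overline{\sigma(h,s(i))}$, and the two paths no longer match.

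The detour is unnecessary anyway. Multiplying $U_hU_{s(i)}=\sigma(h,s(i))U_{hs(i)}$ on the left by $U_h^*$ gives $U_h^*U_{hs(i)}=\overline{\sigma(h,s(i))}\,U_{s(i)}$ directly. This is exactly the relation the paper's proof uses.
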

\begin{proof}
The formula \eqref{eq:V_D} applied to (the inverses of) 
$V_{s\cdot D}$ and \( W_{s\cdot D} \) 
gives
\[ 
\begin{tikzcd}[row sep=0, column sep=large]
\mathcal{H} \arrow[r, "V_{s\cdot D}"] &
\ell^2(H)\otimes\mathcal{H}_{s\cdot D} 
\arrow[r, "1_{\ell^2(H)}\otimes W_{s\cdot D}"] & 
\ell^2(H)\otimes\ell^2(H\backslash G)\otimes \mathcal{H}_D \\
U_hU_{s(Hg)}\zeta & \delta_h\otimes U_{s(Hg)}\zeta \arrow[l, maps to] &
\delta_h\otimes \delta_{Hg}\otimes\zeta \arrow[l, maps to]. 
\end{tikzcd}
\]
At the same time, formula \eqref{eq:unitary_l2G_l2H} for $\eta^{\overline{\sigma}}_s$ and formula \eqref{eq:V_D} for $V_D$ give
\begin{equation*}
\begin{tikzcd}[row sep=0]
      \mathcal{H} \ar[r,"V_{D}"]  &  \ell^2(G)\otimes \mathcal{H}_{D} \ar[r,"\eta^{\overline{\sigma}}_{s}\otimes 1_{\mathcal{H}_D}"] & \ell^2(H)\otimes\ell^2(H\backslash G)\otimes\mathcal{H}_{D}  \\
 \sigma(h,s(Hg))U_{hs(Hg)}\zeta & \sigma(h,s(Hg))\delta_{hs(Hg)}\otimes\zeta \ar[l,mapsto] & \delta_h\otimes\delta_{Hg}\otimes\zeta \ar[l,mapsto] .
\end{tikzcd}
\end{equation*}
Since $U$ is a $\sigma$-twisted representation of $G$, the above two formulas verify that the diagram
\eqref{eq:VD.compatible.with.eta} commutes.
\end{proof}

\begin{corollary}\label{cor:VsD.related.to.eta.s}
Let \( \mathcal{H} \) be a \( \sigma \)-twisted \( X \)-\( G \)-module
satisfying \refbasicsetup, and let \( D \) be a bounded fundamental domain for
\( \mathcal{H} \).
Let $H\subseteq G$ be a finite index subgroup, and $s\colon H\backslash
G\to G$ be a section. Then the following diagram commutes:
\begin{equation}\label{eq:transferring_symmetry_breaking_maps}
\begin{tikzcd}[column sep=large]
\Cst(X,\mathcal{H})^G \arrow[r, "\iota_{H\backslash G}"] 
\arrow[dd, "\Ad_{V_D}"', "\sim"] &
\Cst(X,\mathcal{H})^H \arrow[d, "\Ad_{V_{s\cdot D}}", "\sim"']  \\
& 
\Cstr(H,\overline{\sigma})\otimes\Cpt(\mathcal{H}_{s\cdot D}) \arrow[d, "\Ad_{1_{\ell^2(H)}\otimes W_{s\cdot D}}", "\sim"'] \\
\Cstr(G,\overline{\sigma})\otimes\Cpt(\mathcal{H}_D) \arrow[r, "\Ad_{\eta_s^{\overline{\sigma}}\otimes
1_{\mathcal{H}_D}}"] &
\Cstr(H,\overline{\sigma})\otimes\Cpt(\ell^2(H\backslash G))\otimes\Cpt(\mathcal{H}_D).
\end{tikzcd} 
\end{equation}
\end{corollary}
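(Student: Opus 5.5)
The corollary follows by taking $\Ad$ of the unitaries in the commutative diagram of \cref{lem:VsD.related.to.eta.s}. Write $\Theta\defeq(\eta_s^{\overline{\sigma}}\otimes 1_{\mathcal{H}_D})\circ V_D$. The lemma says that
\[
\Theta=(1_{\ell^2(H)}\otimes W_{s\cdot D})\circ V_{s\cdot D}
\]
as unitaries $\mathcal{H}\to\ell^2(H)\otimes\ell^2(H\backslash G)\otimes\mathcal{H}_D$. Since $\Ad$ is multiplicative under composition, for every $T\in\Cst(X,\mathcal{H})^G\subseteq\Bdd(\mathcal{H})$ we get
\[
\Ad_{\eta_s^{\overline{\sigma}}\otimes 1}\big(\Ad_{V_D}(T)\big)=\Ad_\Theta(T)=\Ad_{1\otimes W_{s\cdot D}}\big(\Ad_{V_{s\cdot D}}(\iota_{H\backslash G}(T))\big).
\]
Here $\iota_{H\backslash G}(T)=T$ as an operator. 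So the diagram commutes at the level of bounded operators on Hilbert spaces.

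It remains to check that every arrow is a well-defined map between the indicated $\Cst$-algebras, and that the vertical ones are isomorphisms.

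\begin{itemize}
\item The left vertical arrow $\Ad_{V_D}$ is the isomorphism of \cref{prop:twisted_Willett}, applied to $G$ with the bounded fundamental domain $D$.
\item For the upper right arrow, note that $\mathcal{H}$, regarded as an $X$-$H$-module, has fundamental domain $s\cdot D=\bigcup_{i\in H\backslash G}s(i)D$ by \eqref{eq:section.act.on.fund.domain}. This is a finite union of bounded sets because $[G:H]<\infty$, so it is bounded. Hence $\mathcal{H}$ satisfies \refbasicsetup{} for $H$, and \cref{prop:twisted_Willett} gives the isomorphism $\Ad_{V_{s\cdot D}}$ onto $\Cstr(H,\overline{\sigma})\otimes\Cpt(\mathcal{H}_{s\cdot D})$.
\item The lower right vertical arrow is $\id\otimes\Ad_{W_{s\cdot D}}$. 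Since $W_{s\cdot D}$ is a unitary, $\Ad_{W_{s\cdot D}}$ carries $\Cpt(\mathcal{H}_{s\cdot D})$ isomorphically onto $\Cpt(\ell^2(H\backslash G)\otimes\mathcal{H}_D)\simeq\Cpt(\ell^2(H\backslash G))\otimes\Cpt(\mathcal{H}_D)$.
\item The bottom arrow is $\Ad_{\eta_s^{\overline{\sigma}}}\otimes\id_{\Cpt(\mathcal{H}_D)}$. By \cref{prop:i_GH}(1) with cocycle $\overline{\sigma}$, it maps $\Cstr(G,\overline{\sigma})\otimes\Cpt(\mathcal{H}_D)$ into the stated codomain.
\end{itemize}

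The only point requiring care is the cocycle bookkeeping. The lemma uses $\eta_s^{\overline{\sigma}}$ and the right regular representation $\rho^{\overline{\sigma}}$, matching the $\Cstr(\cdot,\overline{\sigma})$ appearing in \cref{prop:twisted_Willett}. Since \cref{prop:i_GH} is stated for an arbitrary 2-cocycle, applying it to $\overline{\sigma}$ is legitimate.

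Everything is then a direct consequence of the earlier results, and I expect no real obstacle beyond confirming that $s\cdot D$ is bounded. That boundedness is exactly where finiteness of the index is used.
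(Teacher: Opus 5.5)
Your proposal is correct and follows the paper's own argument. Both apply \cref{prop:twisted_Willett} to $V_D$ and $V_{s\cdot D}$ and \cref{prop:i_GH} (with cocycle $\overline{\sigma}$) to the bottom arrow, then obtain commutativity by conjugating with the unitaries of \cref{lem:VsD.related.to.eta.s}; your explicit check that $s\cdot D$ is bounded (a finite union, since $[G:H]<\infty$) just spells out a point the paper leaves implicit.
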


\begin{proof}
By \cref{prop:twisted_Willett}, the conjugations \( \Ad_{V_D} \) and \( \Ad_{V_{s\cdot D}} \) restrict to isomorphisms
\[ 
\Cst(X,\mathcal{H})^G\xrightarrow{\sim}\Cstr(G,\overline{\sigma})\otimes\Cpt(\mathcal{H}_D),\quad
\Cst(X,\mathcal{H})^H\xrightarrow{\sim}\Cstr(H,\overline{\sigma})\otimes\Cpt(\mathcal{H}_{s\cdot
D}).
\]
It follows from \cref{prop:i_GH} that 
the conjugation \( \Ad_{\eta_{s}^{\overline{\sigma}}\otimes1_{\mathcal{H}_D}} \) maps \(
\Cstr(G,\overline{\sigma})\otimes\Cpt(\mathcal{H}_D)\) into \(
\Cstr(H,\overline{\sigma})\otimes\Cpt(\ell^2(H\backslash G))\otimes\Cpt(\mathcal{H}_D) \).
Then commutativity of the diagram \eqref{eq:transferring_symmetry_breaking_maps} follows from the commutativity of
\eqref{eq:VD.compatible.with.eta}.  
\end{proof}

\begin{theorem}\label{thm:main_computation_theorem}
Let \( \mathcal{H} \) be a
$\sigma$-twisted \( X \)-\( \Gamma \)-module as in the
\refbasicsetup, and \( \mathfrak{S} \) be the symmetry breaking poset
of \( \Gamma \). Then the abelian group isomorphisms 
of \eqref{eq:nat.trans.symm.break},
\[
  \{\alpha_G\colon
  \K_i(\Cst(X,\mathcal{H})^G)\xrightarrow{\sim}\K_i(\Cstr(G,\overline{\sigma}))\}_{G\in\mathfrak{S}},
\]
yield a natural isomorphism of the following
directed systems of abelian groups:
\begin{itemize}
\item The directed system \(
\{\K_i(\Cst(X,\mathcal{H})^G)\}_{G\in\mathfrak{S}} \) whose
connecting morphisms \( \K_i(\iota_{H\backslash G}) \) are induced by the
inclusion \st-homomorphisms \( \iota_{H\backslash G} \) \textup{(}see
\eqref{eq:iota_HG}\textup{)}. 
\item The directed system \( \{\K_i(\Cst(G,\overline{\sigma}))\}_{G\in\mathfrak{S}} \)
whose connecting morphisms are the symmetry breaking maps
\( \phi^{\overline{\sigma}}_{H\backslash G} \) \textup{(}see \eqref{eq:phi_HG}\textup{)}.
\end{itemize}
Consequently,
\begin{equation}\label{eq:inductive_limit_substitution}
\K_i\big(\Cst(X,\mathcal{H})^{\mathfrak{S}}\big)\simeq\colim_{G\in\mathfrak{S}}\K_i(\Cst(X,\mathcal{H})^G)\simeq
\colim_{G\in\mathfrak{S}}\K_i(\Cstr(G,\overline{\sigma})).
\end{equation}

\end{theorem}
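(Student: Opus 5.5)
The plan is to check naturality square by square, applying $\K_i$ to the commutative diagram of \cref{cor:VsD.related.to.eta.s}. Fix $G\prec H$ in $\mathfrak{S}$, so $H\subseteq G$ has finite index. Fix a bounded fundamental domain $D$ for $\mathcal{H}$ as an $X$-$G$-module. Such a $D$ exists: by \refbasicsetup\ there is a bounded fundamental domain $D_\Gamma$ for $\Gamma$, and $t\cdot D_\Gamma$ (see \eqref{eq:section.act.on.fund.domain}) is a bounded fundamental domain for $G$, being a finite union of translates of a bounded set. Fix also a section $s\colon H\backslash G\to G$. Then $s\cdot D$ is again bounded, hence a bounded fundamental domain for $H$. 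By \cref{prop:equiv_Roe_decomp_K_canonical} we may compute $\alpha_G$ using $D$ and $\alpha_H$ using $s\cdot D$. The claim to verify is
\[
\alpha_H\circ\K_i(\iota_{H\backslash G})=\phi^{\overline{\sigma}}_{H\backslash G}\circ\alpha_G .
\]

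Apply $\K_i$ to \eqref{eq:transferring_symmetry_breaking_maps}. This gives
\[
\K_i(\Ad_{1\otimes W_{s\cdot D}})\circ\K_i(\Ad_{V_{s\cdot D}})\circ\K_i(\iota_{H\backslash G})=\K_i(\Ad_{\eta_s^{\overline\sigma}\otimes 1})\circ\K_i(\Ad_{V_D}).
\]
It remains to match the stabilization isomorphisms on the two sides.

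On the right, $\K_i(\Ad_{V_D})=\Stab_{\mathcal{H}_D}\circ\alpha_G$. Moreover $\Ad_{\eta_s^{\overline\sigma}}\otimes\id_{\Cpt(\mathcal{H}_D)}$ commutes with tensoring by a rank-one projection $p\in\Cpt(\mathcal{H}_D)$. So, under \cref{rem:symm.break.stabilization}, $\K_i(\Ad_{\eta_s^{\overline\sigma}\otimes 1})\circ\Stab_{\mathcal{H}_D}$ equals the stabilization by $\Cpt(\ell^2(H\backslash G))\otimes\Cpt(\mathcal{H}_D)$ composed with $\phi^{\overline\sigma}_{H\backslash G}$.

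On the left, $\K_i(\Ad_{V_{s\cdot D}})=\Stab_{\mathcal{H}_{s\cdot D}}\circ\alpha_H$. Next, $\Ad_{1\otimes W_{s\cdot D}}=\id\otimes\Ad_{W_{s\cdot D}}$ is of the form in \eqref{eq:isometry.stabilization}. It carries a rank-one projection in $\Cpt(\mathcal{H}_{s\cdot D})$ to a rank-one projection in $\Cpt(\ell^2(H\backslash G)\otimes\mathcal{H}_D)$. Hence $\K_i(\Ad_{1\otimes W_{s\cdot D}})\circ\Stab_{\mathcal{H}_{s\cdot D}}$ is again the stabilization by $\Cpt(\ell^2(H\backslash G)\otimes\mathcal{H}_D)$. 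This uses the fact that all corner embeddings induce the same K-theory map.

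Both sides thus equal the same stabilization isomorphism composed with $\alpha_H\circ\K_i(\iota_{H\backslash G})$ and with $\phi^{\overline\sigma}_{H\backslash G}\circ\alpha_G$ respectively. Cancelling it gives naturality. By \cref{cor:exists.equivalent.directed.system}, the maps $\phi$ form a directed system, and the $\iota$'s clearly do. Hence $\{\alpha_G\}$ is a natural isomorphism of directed systems, and the two colimits are isomorphic.

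For the first isomorphism in \eqref{eq:inductive_limit_substitution}, use \eqref{eq:concrete.symm.break.alg}. This expresses $\Cst(X,\mathcal{H})^{\mathfrak{S}}$ as the closure of a directed union of subalgebras, i.e.\ an inductive limit over the directed poset $\mathfrak{S}$. Continuity of $\K$-theory under directed colimits of \Cst-algebras then gives $\K_i(\colim_G\Cst(X,\mathcal{H})^G)\simeq\colim_G\K_i(\Cst(X,\mathcal{H})^G)$.

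The main obstacle I expect is the bookkeeping of the stabilization identifications. One must ensure that the implicit corner embeddings in $\phi$, in $\alpha_G$ and in $\alpha_H$ are compatible. I would handle this as above, reducing everything to the single statement that any two rank-one corner embeddings into the same $A\otimes\Cpt$ induce the same K-theory map. That is immediate from \eqref{eq:isometry.stabilization}, since any two such embeddings differ by conjugation by a unitary of the form $1\otimes u$.
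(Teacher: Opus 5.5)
Your proposal is correct and follows essentially the same route as the paper. You apply $\K_i$ to the commutative diagram of \cref{cor:VsD.related.to.eta.s}, compute $\alpha_G$ with $D$ and $\alpha_H$ with $s\cdot D$ via \cref{prop:equiv_Roe_decomp_K_canonical}, and match the stabilizations through rank-one corner embeddings; your extra remarks (existence of bounded fundamental domains for finite-index subgroups, and continuity of $\K$-theory for the first isomorphism in \eqref{eq:inductive_limit_substitution}) only make explicit what the paper leaves implicit.
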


\begin{proof}
Let \( G,H\in\mathfrak{S} \) with \( G\prec H \). Pick any section $s\colon H\backslash G\to G$, and any bounded fundamental domain $D$ for the $X$-$G$-module $\mathcal{H}$.
We claim that the following diagram commutes:
\begin{equation}\label{eq:transferring_symmetry_breaking_maps_K-theory}
\begin{tikzcd}[row sep=large, column sep=huge]
\K_i(\Cst(X,\mathcal{H})^G) \arrow[r, "\K_i(\iota_{H\backslash G})"]
\arrow[d, "\K_i(\Ad_{V_D})"', "\sim"] \arrow[dd, bend right=90,
"\alpha_G"'] &
\K_i(\Cst(X,\mathcal{H})^H)
\arrow[d, "\K_i\big(\Ad_{1_{\ell^2(H)\otimes W_{s\cdot D}}}\big)\circ \K_i(\Ad_{V_{s\cdot
D}})"', "\sim"] \arrow[dd, bend left=100, "\alpha_H"] \\ 
\K_i(\Cstr(G,\overline{\sigma})\otimes\Cpt(\mathcal{H}_D)) \arrow[r, "\K_i(\Ad_{\eta_s^{\overline{\sigma}}\otimes
1_{\mathcal{H}_D}})", "\sim"'] &
\K_i\left(\begin{matrix}
\Cstr(H,\overline{\sigma})\otimes\Cpt(\ell^2(H\backslash G))\\
\otimes\Cpt(\mathcal{H}_D)
\end{matrix}\right)\\
\K_i(\Cstr(G,\overline{\sigma})) \arrow[u, "\Stab_{\mathcal{H}_D}", "\sim"'] \arrow[r, "\K_i(\Ad_{\eta_s^{\overline{\sigma}}})"] &
\K_i(\Cstr(H,\overline{\sigma})) \arrow[u, "\Stab_{\mathcal{H}_D}", "\sim"'].
\end{tikzcd} 
\end{equation}
Commutativity of the top square follows from
commutativity of the diagram
\eqref{eq:transferring_symmetry_breaking_maps} of \cref{lem:VsD.related.to.eta.s}. The stablization map \(
\Stab_{\mathcal{H}_D} \) is induced by the \st-homomorphism 
\( A\to A\otimes\Cpt(\mathcal{H}_D) \) for either \( A=\Cstr(G,\overline{\sigma}) \) or \(
\Cstr(H,\overline{\sigma})\otimes\Cpt(\ell^2(H\backslash G)) \),
sending \( a\mapsto a\otimes p \)
for any rank-one projection \( p\in\Cpt(\mathcal{H}_D) \), thus the bottom
square commutes as well. (As usual, \( \K_i(\Cstr(H,\overline{\sigma})) \) is identified with \(
\K_i\big(\Cstr(H,\overline{\sigma})\otimes\Cpt(\ell^2(H\backslash G))\big) \), see \cref{rem:symm.break.stabilization}). 
The commutativity of
\eqref{eq:transferring_symmetry_breaking_maps_K-theory} now follows from
\begin{align*}
\alpha_G&=\Stab_{\mathcal{H}_D}^{-1}\circ\K_i(\Ad_{V_D}), \\
\alpha_H&=\Stab_{\mathcal{H}_{s\cdot D}}^{-1}\circ\K_i(\Ad_{V_{s\cdot D}})=\Stab_{\mathcal{H}_D}^{-1}\circ \K_i\big(\Ad_{1_{\ell^2(H)\otimes W_{s\cdot D}}}\big) \circ\K_i(\Ad_{V_{s\cdot D}}),
\end{align*}
see \eqref{eq:isometry.stabilization}--\eqref{eq:nat.trans.symm.break} and
\cref{prop:equiv_Roe_decomp_K_canonical}.

Now, the bottom horizontal map of
\eqref{eq:transferring_symmetry_breaking_maps_K-theory} is, by definition, the
symmetry breaking map $\phi^{\overline{\sigma}}_{H\backslash G}$, see
\eqref{eq:phi_HG} of \cref{prop:i_GH}. Thus, the commutativity of the outer part of \eqref{eq:transferring_symmetry_breaking_maps_K-theory} is precisely the statement that $\phi^{\overline{\sigma}}_{H\backslash G}$ is naturally
identified with $\K_i(\iota_{H\backslash G})$.
\end{proof}

As the following proposition will show, the
directed system of abelian groups \(
\{\K_i(\Cstr(G,\sigma))\}_{G\in\mathfrak{S}} \) depends, up to isomorphism,
only on the group cohomology class of the 2-cocycle \( \sigma \). This
allows us to modify the 2-cocycle \( \sigma \) to a more convenient one,
for the computation of
$\colim_{G\in\mathfrak{S}}\K_i(\Cstr(G,\overline{\sigma}))$ on the right
side of \eqref{eq:inductive_limit_substitution}.

\begin{proposition}\label{prop:cohomologous_symm_breaking}
Let $\sigma,\sigma^\prime$ be cohomologous 2-cocycles on a discrete
countable group $\Gamma$. Let $\mathfrak{S}$ be the symmetry breaking poset
for $\Gamma$. Then there is a natural isomorphism between the
symmetry breaking directed systems whose connecting morphisms are given by
\[
\begin{aligned}
    \phi_{H\backslash G}^\sigma\colon \K_i(\Cstr(G,\sigma))& \to \K_i(\Cstr(H,\sigma)),\\
    \phi_{H\backslash G}^{\sigma^\prime}\colon
    \K_i(\Cstr(G,\sigma^\prime))& \to \K_i(\Cstr(H,\sigma^\prime)),
\end{aligned}
\]
thus an induced isomorphism between their colimits.
\end{proposition}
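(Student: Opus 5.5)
We take $\omega\colon\Gamma\to\mathrm{U}(1)$ with $\sigma'(g,h)=\sigma(g,h)\omega(g)\omega(h)/\omega(gh)$, as in \cref{dfn:cohomologous_cocycles}, and restrict it to every $G\in\mathfrak{S}$; restricted to $G$, it still exhibits $\sigma|_G$ and $\sigma'|_G$ as cohomologous. We write $M_\omega^G$ for the diagonal unitary $\delta_g\mapsto\omega(g)\delta_g$ on $\ell^2(G)$. The plan is to show that conjugation by $M^G_\omega$ (or its adjoint, depending on the exact convention) carries $\rho^\sigma_G(g)$ to a scalar multiple of $\rho^{\sigma'}_G(g)$. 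Checking on $\delta_{g'}$,
\[
M_\omega^G\rho^\sigma_G(g)(M_\omega^G)^*\delta_{g'}=\overline{\omega(g')}\,\omega(g'g^{-1})\,\sigma(g'g^{-1},g)\,\delta_{g'g^{-1}},
\]
and the cocycle relation should convert this to $\overline{\omega(g)}\,\rho^{\sigma'}_G(g)\delta_{g'}$, with a phase depending only on $g$. If the phase turns out to depend on $g'$, we replace $\omega$ by $\overline{\omega}$. Consequently $\beta_G\defeq\Ad_{M^G_\omega}$ restricts to a \st-isomorphism $\Cstr(G,\sigma)\xrightarrow{\sim}\Cstr(G,\sigma')$ (this is presumably \eqref{eq:cohom_iso_algebra}), and we set $\beta_G^*\defeq\K_i(\beta_G)$.

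For naturality we must show $\phi^{\sigma'}_{H\backslash G}\circ\beta_G^*=\beta_H^*\circ\phi^\sigma_{H\backslash G}$. Fix a section $s$; by \cref{prop:i_GH} any section computes both maps. It then suffices to compare the unitaries
\[
\eta^{\sigma'}_s\circ M^G_\omega \quad\text{and}\quad (M^H_\omega\otimes 1_{\ell^2(H\backslash G)})\circ\eta^\sigma_s
\]
from $\ell^2(G)$ to $\ell^2(H)\otimes\ell^2(H\backslash G)$. Evaluating on $\delta_g$ with $g=hs(i)$, both sides send $\delta_g$ to a multiple of $\delta_h\otimes\delta_i$, so they differ by a diagonal unitary
\[
\Theta=\sum_{i\in H\backslash G}\sum_{h\in H}\theta(h,i)\,\ket{\delta_h}\bra{\delta_h}\otimes\ket{\delta_i}\bra{\delta_i}.
\]
The key computation uses the cocycle identity to show that $\theta(h,i)=c_i\,\omega(h)\,\omega'(\dots)$ splits appropriately. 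The cleanest expectation is $\theta(h,i)=c_i$, depending only on $i$, after collecting the factors $\omega(hs(i))$, $\omega(h)$, $\omega(s(i))$ and $\sigma,\sigma'$ evaluated at $(h,s(i))$. If so, $\Theta=1\otimes\mathrm{diag}(c_i)$ is a unitary in $1\otimes\Cpt(\ell^2(H\backslash G))$. Its conjugation acts trivially on $\K_i$ by \eqref{eq:isometry.stabilization}, or as an inner automorphism of the multiplier algebra, as in the proof of \cref{prop:i_GH}. This yields the commuting square on $\K$-theory.

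The main obstacle is this phase bookkeeping. The factor $\theta(h,i)$ might retain an $h$-dependence of the form $\omega(h)\cdot(\text{something in }h)$. If it does, I would first try to absorb it by writing $\Theta$ as $\Ad$ of (or composition with) the multiplier $M^H_{\omega''}\otimes 1$ for a suitable function $\omega''$ on $H$, and then check that $\Ad_{M^H_{\omega''}}$ preserves $\Cstr(H,\sigma')$, acting on generators by scalars that depend only on the group element. Automorphisms of the form $\rho(h)\mapsto\chi(h)\rho(h)$ need not be trivial on $\K$-theory in general, so the safer route is to arrange the definition of $\beta$ via $\omega$ so that the $h$-dependence cancels exactly. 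With the conventions $\rho^\sigma$ and $\eta_s^\sigma(\delta_g)=\rho^\sigma_G(s(Hg))\delta_g\otimes\delta_{Hg}$, the natural choice is $M_\omega$ on the left of $\eta$. Once naturality holds for each pair $G\prec H$, the family $\{\beta_G^*\}$ is a natural isomorphism of directed systems, and the universal property of colimits gives the induced isomorphism of colimits.
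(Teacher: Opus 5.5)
Your proposal is correct and is essentially the paper's proof. Your $\Ad_{M^G_\omega}$ is the isomorphism $\Omega_G$ of \eqref{eq:cohom_iso_algebra}: it sends $\rho^\sigma_G(g)\mapsto\overline{\omega(g)}\rho^{\sigma'}_G(g)$ using $\omega$ itself, so no replacement by $\overline{\omega}$ is needed. The phase bookkeeping you left open also closes exactly as you hoped: since $\eta^\sigma_s(\delta_{hs(i)})=\sigma(h,s(i))\,\delta_h\otimes\delta_i$, you get $\theta(h,i)=\omega(hs(i))\sigma'(h,s(i))/\bigl(\omega(h)\sigma(h,s(i))\bigr)=\omega(s(i))$, so $\Theta=1\otimes W_s$ with $W_s\delta_i=\omega(s(i))\delta_i$, which is precisely the paper's unitary $W_s$, and $\K_i(1\otimes\Ad_{W_s})=\id$ finishes the argument.
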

\begin{proof}
Let $\omega\colon \Gamma\to\mathrm{U}(1)$ be a map implementing the equivalence $\sigma\sim\sigma^\prime$, as in \eqref{eq:cohomologous_cocycles}. For each $G\in\mathfrak{S}$, the map
\begin{equation}\label{eq:cohom_iso_algebra}\begin{aligned}
    \Omega_G\colon \Cstr(G,\sigma)&\to\Cstr(G,\sigma^\prime)\\
    \lambda^\sigma_{G}(g) & \mapsto \overline{\omega(g)}\cdot\lambda^{\sigma^\prime}_{G}(g),\qquad g\in G,
\end{aligned}
\end{equation}
is an isomorphism of \Cst-algebras. 
For \( G,H\in\mathfrak{S} \) with \( G\prec H \),
pick any section $s\colon H\backslash G\to G$, and define the unitary
\begin{align*}
W_s\colon \ell^2(H\backslash G) &\to\ell^2(H\backslash G)\\
\delta_{i} &\mapsto \omega(s(i))\cdot\delta_{i}.
\end{align*}
Using the formula \eqref{eq:Ad_on_regular_translation} for $\Ad_{\eta_s^\sigma}$, we may check that the following diagram commutes,
\begin{equation*}\label{eq:cohomologous_morphism_diagram}
\begin{tikzcd}
\Cstr(G,\sigma)\ar[r,"\Ad_{\eta_s^\sigma}"]\ar[d,"\Omega_G","\sim"'] &
\Cstr(H,\sigma)\otimes \Cpt(\ell^2(H\backslash G))\ar[d,"\Omega_H\otimes
\Ad_{W_s}","\sim"']\\
\Cstr(G,\sigma^\prime)\ar[r,"\Ad_{\eta_s^{\sigma^\prime}}"] &
\Cstr(H,\sigma^\prime)\otimes\Cpt(\ell^2(H\backslash G)).
\end{tikzcd}
\end{equation*}
Applying the $\K_i$-functor, composing with stabilization isomorphisms, and
noting that $\K_i(1\otimes\Ad_{W_s})$ is the identity map, we deduce that the diagram
\begin{equation}\label{eq:cohomologous_symm_breaking}
\begin{tikzcd}
\K_i(\Cstr(G,\sigma))\ar[r,"\phi_{H\backslash G}^\sigma"]\ar[d,"\K_i(\Omega_G)","\sim"'] & \K_i(\Cstr(H,\sigma))\ar[d,"\K_i(\Omega_H)","\sim"']\\
\K_i(\Cstr(G,\sigma^\prime))\ar[r,"\phi_{H\backslash G}^{\sigma^\prime}"] & \K_i(\Cstr(H,\sigma^\prime))
\end{tikzcd}
\end{equation}
commutes for every \( G\prec H \).
So $\{\K_i(\Omega_G)\}_{G\in\mathfrak{S}}$ is a natural isomorphism (depending on $\omega$), as claimed.
\end{proof}

\section{Symmetry breaking in Euclidean space: the untwisted case}
\label{sec:Euclidean_untwisted}
A simple example of a group action \( \Gamma\curvearrowright X \)
satisfying \refbasicsetup\, is given by
 \( X \) the \( d
\)-dimensional Euclidean space, and \( \Gamma \) a lattice in ({the translation group of)} \( X \).
The goal of this section is to compute the K(O)-theory of
the symmetry breaking algebra of an \emph{untwisted}, real or complex \( X
\)-\( \Gamma \)-module. The result is summarised in
\cref{tab:K-theory_untwisted}.

The computations in both the complex and real cases follow essentially the
same strategy. We analyze the symmetry breaking maps at the group
$\Cst$-algebra level, as justified by \cref{thm:main_computation_theorem}.
In \cref{sec:symmetry breaking_1-dim} we use explicit generators to compute
the \( 1 \)-dimensional symmetry breaking maps \( \phi_{H\backslash G} \).
The general computation in dimensions \( d\geq 2 \), as in
\cref{sec:d-dim}, is done by passing to a cofinal subset of the symmetry
breaking poset, so that a version of the K\"unneth Theorem may be applied.

The technicality in the real case is that Schochet's K\"unneth Theorem for
complex \Cst-algebras \cite{Schochet:Kuenneth_formula} does not pass to
real \Cst-algebras, as the KO-theory of real \Cst-algebras typically
contains torsion elements. Instead, we apply a version of the K\"unneth Theorem, due
to Boersema \cite{Boersema:Real_Cst-algebras_Kuenneth_formula}. Such a
computation involves describing the KO-theory of real \Cst-algebras using
their \( \KO_*(\R) \)-module structure, as recalled in
\cref{sec:real_Cst-algebras}.

Computation of the real case is not only more involved, but also physically
more interesting, see \cref{sec:real_case_motivation_Kane_Mele}. As a
consequence, we shall provide a detailed computation for the real case in
\cref{sec:real_d-dim} and only sketch its complex counterpart in
\cref{sec:complex_d-dim}.

\begin{notation}
We will use subscripts $\R$ and $\C$ when we want to emphasize whether the
$\Cst$-algebra is real or complex.
\end{notation}

\begin{table}[h!]
\centering
\resizebox{\textwidth}{!}{
\begin{tabular}{c|ccccccccc}
\hline
\diagbox{\( i \)}{\( d \)} & 0 & 1 & 2 & 3 & 4 & 5 & 6 & 7 & 8 \\
\hline
\( \K_0 \)  & \( \Z \)  & \( \Q \) & \( \Q\oplus\Z \) & \( \Q^{4} \) & \( \Q^{7}\oplus\Z \) & \( \Q^{16} \) & \( \Q^{31}\oplus\Z \) & \( \Q^{64} \) & \( \Q^{127}\oplus\Z \) \\
\( \K_1 \)  & \( 0 \) & \( \Z \) & \( \Q^{2} \) & \( \Q^{3}\oplus\Z \) & \( \Q^{8} \) & \( \Q^{15}\oplus\Z \) & \( \Q^{32} \) & \( \Q^{63}\oplus\Z \) & \( \Q^{128} \) \\
\hline\hline
\diagbox{\( i \)}{\( d \)} & 0 & 1 & 2 & 3 & 4 & 5 & 6 & 7 & 8 \\
\hline
\( \KO_0 \) & \( \Z \)  & \( \Q \) & \( \Q \) & \( \Q \) & \( \Q\oplus\Z \) & \( \Q^{6}
\) & \( \Q^{16}\oplus\Z/2 \) & \( \Q^{36}\oplus\Z/2 \) & \( \Q^{71}\oplus\Z \) \\ 
\( \KO_1 \) & \( \Z/2 \)  & \( \Z \) & \( \Q^{2} \) & \( \Q^{3} \) & \( \Q^{4} \) & \( \Q^{5}\oplus\Z \) & \( \Q^{12} \) & \( \Q^{28}\oplus\Z/2 \) & \( \Q^{64}\oplus\Z/2 \) \\
\( \KO_2 \) & \( \Z/2 \) & \( \Z/2 \) & \( \Z \) & \( \Q^{3} \) & \( \Q^{6} \) & \( \Q^{10} \) & \( \Q^{15}\oplus\Z \) & \( \Q^{28} \) & \( \Q^{56}\oplus\Z/2 \) \\
\( \KO_3 \) & 0 & \( \Z/2 \) & \( \Z/2 \) & \( \Z \) & \( \Q^{4} \) & \( \Q^{10} \) & \( \Q^{20} \) & \( \Q^{35}\oplus\Z \) & \( \Q^{64} \) \\
\( \KO_4 \) & \( \Z \) & \( \Q \) & \( \Q\oplus\Z/2 \) & \( \Q\oplus\Z/2 \) & \( \Q\oplus\Z \) & \( \Q^{6} \) & \( \Q^{16} \) & \( \Q^{36} \) & \( \Q^{70}\oplus\Z \) \\
\( \KO_5 \) & 0 & \( \Z \) & \( \Q^{2} \) & \( \Q^{3}\oplus\Z/2 \) & \( \Q^{4}\oplus\Z/2 \) & \( \Q^{5}\oplus\Z \) & \( \Q^{12} \) & \( \Q^{28} \) & \( \Q^{64} \) \\
\( \KO_6 \) & 0 & 0 & \( \Z \) & \( \Q^{3} \) & \( \Q^{6}\oplus\Z/2 \) & \( \Q^{10}\oplus\Z/2 \) & \( \Q^{15}\oplus\Z \) & \( \Q^{28} \) & \( \Q^{56} \) \\
\( \KO_7 \) & 0 & 0 & 0 & \( \Z \) & \( \Q^{4} \) & \( \Q^{10}\oplus\Z/2 \) &
\( \Q^{20}\oplus\Z/2 \) & \( \Q^{35}\oplus\Z \) & \( \Q^{64} \) \\
\hline
\end{tabular}
}
\caption{K-theory of \( \Cst(\R^d)^{\mathfrak{Z}^d}_\C \) and $\KO$-theory
of \( \Cst(\R^d)^{\mathfrak{Z}^d}_\R \) for
low-dimensional Euclidean spaces $\R^d$.}
\label{tab:K-theory_untwisted}
\end{table}

\subsection{Background and motivation}
\label{sec:real_case_motivation_Kane_Mele}
Let $X$ be $d$-dimensional Euclidean space, i.e., the Riemannian manifold
underlying the additive Lie group $(W,+)$ of a $d$-dimensional inner
product space $W$. Let \( \Gamma \) be a lattice, i.e.~a discrete cocompact
subgroup of $(W,+)$. Then \( \Gamma \) is a free abelian group of rank $d$.
Upon choosing a basis, $\Gamma$ is identified with $\Z^d$, and $W$ is
identified with $\R^d$. After picking an origin, $X$ is then identified
with $\R^d$. (But note that $\R^d$ is not equipped with the metric from the
standard inner product as $\Gamma$ is not necessarily a standard square
lattice.)

An identification $\Gamma\simeq\Z^d$ induces an identification of the
symmetry breaking poset \( \mathfrak{S} \) of \( \Gamma \) with the poset
$(\mathfrak{Z}^d,\prec)$, where: 
\begin{itemize}
\item \( \mathfrak{Z}^d \) consists of \emph{all} finite-index subgroups of
\( \Z^d \); 
\item for \( G,H\in \mathfrak{S} \), we set \( G\prec H \) iff \( H
\) is a subgroup of \( G \).
\end{itemize}
Then
\[ 
  \K_i(\Cst(X,\mathcal{H})^\mathfrak{S}_\C)\simeq
  \K_i(\Cst(\R^d)^{\mathfrak{Z}^d}_\C),\quad
  \KO_i(\Cst(X,\mathcal{H})^\mathfrak{S}_\R)\simeq
  \KO_i(\Cst(\R^d)^{\mathfrak{Z}^d}_\R).
\]
 
The following two cases (see \cref{thm:KO-theory_real}) are of
particular importance in physics,
\begin{equation}\label{eq:physical.KO.cases}
  \KO_4(\Cst(\R^2)^{\mathfrak{Z}^2}_\R)\simeq\Q\oplus\Z/2,\qquad
  \KO_4(\Cst(\R^3)^{\mathfrak{Z}^3}_\R)\simeq\Q\oplus\Z/2.
\end{equation}
In the first case, the \( \Z/2 \)-summand is the ``symmetry breaking colimit'' of the \( \Z/2 \)
topological invariant for 2-dimensional Quantum spin Hall systems introduced by physicists in
\cite{Kane-Mele:Quantum_spin_Hall_effect}. In the second case, the \( \Z/2 \)-summand 
is the ``symmetry breaking colimit'' of the ``strong'' 3-dimensional
topological insulator invariant 
\cites{Fu-Kane:Topological_insulators_with_inversion_symmetry,Fu-Kane-Mele:TI_in_3D}.
This should be contrasted with KO-theory of the equivariant Roe algebra,
\begin{equation}\label{eq:3D.TI.all.invariants}
\KO_4(\Cst(\R^3)^{\Z^3}_\R)\simeq \Z\oplus (\Z/2)^3\oplus\Z/2,
\end{equation}
where there are \emph{four} copies of $\Z/2$. We will also see that the $\Z$-summand in \eqref{eq:3D.TI.all.invariants} gets mapped into the $\Q$-summand in \eqref{eq:physical.KO.cases}.
The four $\Z/2$ invariants in \eqref{eq:3D.TI.all.invariants} were introduced in
\citelist{\cite{Moore-Balents:Topological_invariants_TRS}
\cite{Fu-Kane-Mele:TI_in_3D}}, and three of them are usually called ``weak'' invariants. We refer to
\cites{dNittis-Gomi:Classification_quaternionic,Gomi-Thiang:Real_gerbes}
for a precise presentation in terms of \KR-theory of the involutive
$d$-torus $\widehat{\Z}^3$. 

The degree $4$ appearing in
\eqref{eq:physical.KO.cases}, \eqref{eq:3D.TI.all.invariants} is due to the
presence of a \emph{fermionic time-reversal symmetry} $\mathcal{T}$,
i.e.~an anti-unitary operator such that \( \mathcal{T}^2=-1
\).
Mathematically, in the presence of such a symmetry $\mathcal{T}$, we should work with a
real Hilbert space which carries an action by the quaternion
algebra commuting with the representation of the relevant real
\Cst-algebra \( A=\Cst(X)^{\Gamma} \). 
Topological phases will therefore belong to the \emph{quaternionic
\K-theory group} \( \operatorname{KH}_0(A) \), which is isomorphic to
\( \KO_4(A) \), see
\citelist{\cite{Karoubi:K-theory}
\cite{Boersema-Schochet:Real_K-theory}*{Corollary 7.11}}. 

\subsection{The \(1\)-dimensional symmetry breaking map}
\label{sec:symmetry breaking_1-dim}
Let \( \F\in\{\C,\R\} \). We write \( \Cstr(\Gamma)_\F \) for the
reduced group \Cst-algebra of \( \Gamma \), i.e.~the \Cst-algebra
generated by (right) regular representations \( \rho_\Gamma\colon
\Gamma\to\mathrm{U}(\ell^2(\Gamma)_\F) \). Then \(
\Cstr(\Gamma)_\C\simeq\Cstr(\Gamma)_\R\otimes_\R\C \) is the
complexification of \( \Cstr(\Gamma)_\R \). 

Thus we may identify \( \Cstr(\Gamma)_\R \) as a real \st-subalgebra of \(
\Cstr(\Gamma)_\C\). For the case \( \Gamma\simeq\Z \), the Gelfand duality
\begin{equation}\label{eq:real_Fourier}
    \Cstr(\Z)_\C\simeq\Cont(\T)_\C,\; \rho_\Z(n)\mapsto z^n
\end{equation}
restricts to an isomorphism \( \Cstr(\Z)_\R\simeq\Cont(\T,\tau_\T) \),
where \( \T \) denotes the unit circle in the complex plane and
\[
\Cont(\T,\tau_\T)\defeq\big\{f\in\Cont(\T)\,:\,\overline{f(z)}=f(\overline{z})\big\}
\]
is the real fixed point algebra associated to the involution \(
\tau_\T(z)\defeq
\overline{z} \).

It is convenient to describe the KO-theory of \( \Cstr(\Z)_\R \) using
its \( \KO_*(\R) \)-module structure (see \cref{sec:real_Cst-algebras}), as
follows.

\begin{lemma} \label{lem:KO_Z_generators}
The \( \KO_*(\R) \)-module \( \KO_*( \Cstr(\Z)_\R) \) is
freely generated by a generator in
degree \( 0 \) together with a generator in degree \( 1 \).
\end{lemma}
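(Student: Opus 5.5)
We identify $\Cstr(\Z)_\R$ with $\Cont(\T,\tau_\T)$ via \eqref{eq:real_Fourier}. The plan is to split off the fixed point $1\in\T$ with a split short exact sequence. Evaluation at $z=1$ is compatible with the involution, since $1$ is a fixed point of $\tau_\T$. It gives a surjective real $\st$-homomorphism
\[
\ev_1\colon \Cont(\T,\tau_\T)\to\R.
\]
This map is split by the unital inclusion $\R\hookrightarrow\Cont(\T,\tau_\T)$ as constant functions. Its kernel consists of functions vanishing at $1$ and satisfying $\overline{f(z)}=f(\overline z)$.

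Next we identify the kernel. We parametrize $\T\setminus\{1\}$ by $t\in(0,2\pi)\mapsto e^{it}$, or equivalently by $\R$. Under this parametrization the involution becomes $t\mapsto -t$ on $\R$, and the kernel becomes $\Co(\R,\tau)$, with $\tau$ the reflection. This real algebra is $\Co(\R^{0,1})$ in the notation of Atiyah's Real spaces, i.e. the real suspension algebra $C_0(i\R)$. Its $\KO$-theory is $\KO_*(\R)$ shifted by one degree. Concretely, $\KO_n(\Co(\R^{0,1}))\simeq \KO_{n-1}(\R)$, by the $(1,1)$-periodicity / Bott periodicity for Real K-theory (equivalently, $\Co(\R^{0,1})$ is $\KK$-equivalent to the Clifford algebra $Cl_{0,1}$ or $Cl_{1,0}$ up to the usual convention). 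This is free of rank one as a $\KO_*(\R)$-module on a generator in degree $1$. The degree $1$ generator is realized by the class of the unitary $z\in\Cont(\T,\tau_\T)$, since $z\mapsto \bar z$ is compatible with the real structure.

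The short exact sequence
\[
0\to \Co(\R^{0,1})\to \Cont(\T,\tau_\T)\to\R\to 0
\]
is split. Hence the six-term (in fact $24$-term) long exact sequence in $\KO$-theory breaks into split short exact sequences of $\KO_*(\R)$-modules. Moreover, all maps are $\KO_*(\R)$-linear, because the module structure is natural via external products (\cref{sec:real_Cst-algebras}). Therefore
\[
\KO_*(\Cstr(\Z)_\R)\simeq \KO_*(\R)\oplus \KO_{*-1}(\R)
\]
as $\KO_*(\R)$-modules. This is free on $[1]\in\KO_0$ and the class $[z]\in\KO_1$, as claimed.

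The main obstacle is the identification of the kernel's $\KO$-theory with the degree shift by $+1$, including getting the sign convention right: $\R^{0,1}$ versus $\R^{1,0}$ shift in opposite directions. I would handle this by checking the known answer in degree $1$. The class $[z]$ is nonzero in $\KO_1$, since its complexification is the generator of $\K_1(\Cont(\T))$; this forces the shift to be $+1$. Alternatively, one can cite the Pimsner--Voiculescu sequence for the trivial real $\Z$-action on $\R$, which directly gives the split sequence $0\to\KO_n(\R)\to\KO_n(\Cstr(\Z)_\R)\to\KO_{n-1}(\R)\to 0$. Both routes also give module freeness, because the splitting maps are $\KO_*(\R)$-linear.
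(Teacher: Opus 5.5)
Your argument is correct and is essentially the paper's: a split extension obtained by evaluating at a $\tau_\T$-fixed point, whose kernel is $\Co(\I,\tau_\I)=\mathrm{V}\R$, followed by $\KO_*(\mathrm{V}\R)\simeq\KO_{*-1}(\R)$ from \cref{lem:KO-theory_suspension_desuspension}. You use the correct degree shift; only your Atiyah label is swapped, since with $\R^{p,q}=\R^q\oplus i\R^p$ the flip line $i\R$ is $\R^{1,0}$, not $\R^{0,1}$. The one difference, evaluating at $1$ rather than at the paper's $-1$, is slightly cleaner: $z$ then has scalar part $1$, so $[z]$ lies in the kernel summand, whereas at $-1$ its scalar part $-1$ represents $\eta\neq 0$ in $\KO_1(\R)$, so $[z]$ differs from a kernel generator by $\eta\cdot[1]$ (it is still a valid free basis element).
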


\begin{proof}
We identify \( \Cstr(\Z)_\R \) with \( \Cont(\T,\tau_\T) \) using the
isomorphism \eqref{eq:real_Fourier}.
Then \( -1\in\T \) is invariant under the involution \(
\tau_{\T} \), so \( f(-1)\in\R \) for any \(
f\in\Cont(\T,\tau_\T) \). The kernel of the evaluation map
\[ 
  \ev_{-1}\colon \Cont(\T,\tau_\T)\to\R,\quad f\mapsto f(-1)
\]
consists of functions in \( \Cont(\T,\tau_\T) \) vanishing at \( -1 \).
Define\footnote{We may choose \( \I \) to be any open interval. The choice
\( \I\defeq (-\pi,\pi) \) is for convenience.} \( \I\defeq (-\pi,\pi) \)
with involution \( \tau_{\I}(x)=-x \) , and 
\[
\Co(\I,\tau_\I)\defeq\big\{f\in\Co(\I)\,:\,\overline{f(x)}=f(-x)\big\}.
\]
Thus we have a split exact sequence of real \Cst-algebras
\[
\begin{tikzcd}
0 \arrow[r]  &\Co(\I,\tau_\I) \arrow[r, "j"] & \Cont(\T,\tau_\T) 
\arrow[r,"\ev_{-1}"] & \R \arrow[r] \arrow[l, bend left, "s"]  & 0,
\end{tikzcd}
\]
where
\[
j\colon\Co(\I,\tau_\I)\to\Cont(\mathbb{T},\tau_\mathbb{T}),\quad
  j(f)(\mathrm{e}^{ix})\defeq f(x).
\]
and the splitting map \( s\colon \R\to\Cont(\T,\tau_\T) \) sends a real
number to the corresponding constant function on \( \T \).

The split exactness of KO-theory implies that there is 
an isomorphism of \( \KO_*(\R) \)-modules
\[
  \KO_*(\Cont(\T,\tau_\T))\simeq\KO_*(\R)\oplus\KO_*(\Co(\I,\tau_\I)).
\]
Recall from \cref{lem:KO-theory_suspension_desuspension} that 
\( \KO_*(\Co(\I,\tau_\I)) \) is isomorphic
to \( \KO_*(\R) \) with a degree shift \( -1 \). 
Thus there is a free
generator \( e \) of degree \( +1 \) that yields the following 
isomorphism of free \( \KO_*(\R) \)-modules
\[
  \KO_*(\R)\xrightarrow{\sim}\KO_{*+1}(\Co(\I,\tau_\I)),\quad
  \xi\mapsto \xi\cdot e,\quad \xi\in\KO_*(\R).\qedhere
\]
\end{proof}

Let us provide explicit representatives of the free \( \KO_*(\R) \)-module
generators in \cref{lem:KO_Z_generators} coming from 
the unique free generators of \( \KO_*(\R) \) and \(
\KO_*(\Co(\mathbb{I},\tau_{\mathbb{I}})) \).

\begin{itemize}
\item Degree 0. $\KO_*(\R)$ is freely generated by $[1]\in
\KO_0(\R)$ where \(1\in\R\) is viewed as a rank-1 projection matrix. 
Under the splitting map $s$, the generator $[1]$ is mapped to a 
generator of \( \KO_0(\Cont(\T, \tau_\T))) \), which is 
represented by the
class $[1_{\ell^2(\Z)_\R}]$ of the identity operator on \( \ell^2(\Z)_\R
\) under the isomorphism \eqref{eq:real_Fourier}.

\item Degree 1. $\KO_*(\Cont(\T,\tau_\T))$ is generated by \( e
\in \KO_1(\Co(\I,\tau_\I))\simeq\Z \). For a real
\Cst-algebra \( A \), a class in \( \KO_1(A) \) may be described by a
unitary \( u\in\Mat_n(A^+) \) in the unitary picture of KO-theory,
see~\cite{Boersema-Loring:K-theory_unitaries}. In particular, $e$ is
represented by the function $\widehat{w}_1\colon z\mapsto z$ in
$\Cont(\T,\tau_\T)=(\Co(\I,\tau_\I))^+$, which is mapped to
the unitary shift operator $w_1\defeq\rho_\Z(1)$ 
on \( \ell^2(\Z)_\R \) under the isomorphism \eqref{eq:real_Fourier}.
\end{itemize}

Generators of \( \KO_*(\Cstr(m\Z)_\R) \) are similarly described as
follows. View \( m\Z \) as a subgroup of \( \Z \), freely generated
by \( m \). This induces an isomorphism of real \Cst-algebras 
\[
\Cst(\Z)_\R\xrightarrow{\sim}\Cst(m\Z)_\R, \qquad
1_{\ell^2(\Z)}\mapsto 1_{\ell^2(m\Z)}, \quad w_1 \mapsto w_m, 
\]
where \( \Cst(m\Z) \) is realized as a concrete \Cst-algebra on \(
\ell^2(m\Z)_{\R} \), and \( w_m\defeq \rho_{m\Z}(m) \) is the shift
operator on
\( \ell^2(m\Z)_\R \). Thus \( \KO_*(\Cstr(m\Z)_\R) \) is the free \(
\KO_*(\R) \)-module generated by
\begin{equation}\label{eq:real_1D_KO_generators}
[1_{\ell^2(m\Z)_\R}]\quad\text{and}\quad [w_m] 
\end{equation} 
with degrees $0$ and $1$ respectively,

\begin{lemma}\label{lem:SB_map_1-dim}
For \( m\mid n \), the \K-theory symmetry breaking map \eqref{eq:phi_HG}
\begin{multline*}
\phi_{n\Z\backslash m\Z}\colon \KO_i(\Cstr(m\Z)_\R)
\xrightarrow{\KO_i(\Ad_{\eta_s})}
\KO_i\big(\Cstr(n\Z)_\R\otimes\Cpt(\ell^2(n\Z\backslash m\Z)_\R)\big)\\
\xrightarrow{\mathrm{St}^{-1}} 
\KO_i(\Cstr(n\Z)_\R)
\end{multline*}
is given on the free generators by
\begin{equation}\label{eq:KO.symm.breaking.d1}
[1_{\ell^2(m\Z)_\R}]\mapsto (n/m)\cdot[1_{\ell^2(n\Z)_\R}],
\qquad [w_m]\mapsto [w_n].
\end{equation}
\end{lemma}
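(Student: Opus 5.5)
We compute $\Ad_{\eta_s}$ explicitly on the two generators \eqref{eq:real_1D_KO_generators}, using the formula \eqref{eq:Ad_on_regular_translation} with trivial cocycle. By \cref{prop:i_GH} the map $\phi_{n\Z\backslash m\Z}$ does not depend on the section, so we fix the convenient one. Put $k\defeq n/m$, write $G=m\Z$, $H=n\Z$, and take
\[
s(n\Z+jm)=jm,\qquad j=0,\dots,k-1 .
\]
Since $\phi_{n\Z\backslash m\Z}$ is a $\KO_*(\R)$-module map (it is induced by a $\st$-homomorphism, followed by stabilization), it suffices to determine the images of the two free generators.

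\emph{Degree $0$.} The unitary $\eta_s$ is a unitary between the underlying Hilbert spaces. Hence $\Ad_{\eta_s}(1_{\ell^2(m\Z)})$ is the identity operator $1_{\ell^2(n\Z)}\otimes 1_{\ell^2(n\Z\backslash m\Z)}$, which is a sum of $k$ orthogonal projections $1_{\ell^2(n\Z)}\otimes\ket{\delta_i}\bra{\delta_i}$. Each of these is the image of $[1_{\ell^2(n\Z)_\R}]$ under a corner embedding. All corner embeddings induce the same map in $\KO$-theory (\cref{rem:symm.break.stabilization}), so after applying $\Stab^{-1}$ we obtain $k\cdot[1_{\ell^2(n\Z)_\R}]$.

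\emph{Degree $1$.} Apply \eqref{eq:Ad_on_regular_translation} to $g=m$. The phases are trivial, and we must determine the elements $h^s_{m;i}$ and the permutation $i\mapsto H s(i)m^{-1}$ of cosets. In additive notation, for $i$ with $s(i)=jm$ the target coset is $H+(j-1)m$. So the operator $\Ad_{\eta_s}(w_m)$ is
\[
P\otimes\sum_{j=1}^{k-1}\ket{\delta_{j-1}}\bra{\delta_j}
\;+\;w_n^{\pm1}\otimes\ket{\delta_{k-1}}\bra{\delta_0},
\]
roughly speaking. Here the $H$-component equals $0$ for $1\le j\le k-1$, and equals $\pm n$ only in the wrap-around term $j=0$. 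Concretely, it is the $k\times k$ cyclic shift matrix with one entry replaced by $w_n^{\pm1}$. We must check the sign convention carefully: with $\rho$ a right regular representation, $h^s_{g;i}=s(Hs(i)g^{-1})gs(i)^{-1}$. For $j=0$ this gives $(k-1)m+m-0=n$, so the entry is $w_n$.

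Next we deform this unitary in $\Mat_k(\Cstr(n\Z)_\R^+)$ to $\operatorname{diag}(w_n,1,\dots,1)$. The matrix equals $\operatorname{diag}(1,\dots,1,w_n)$ times the cyclic permutation matrix $C$, up to ordering of factors. The matrix $C$ is a real scalar permutation matrix. Its determinant is $\pm1$, so $C$ may fail to lie in the identity component of $\mathrm{O}(k)$. We handle this by modifying by $\operatorname{diag}(\pm1,1,\dots)$: the resulting path stays in real scalar matrices, and $\KO_1$ classes of real scalar unitaries only contribute the image of $\KO_1(\R)$. That term has to be tracked. Alternatively, a cleaner approach is to use the Whitehead-type trick: conjugating and multiplying by scalar orthogonal matrices gives $[C\cdot\operatorname{diag}(w_n,1,\ldots)]=[C]+[w_n]$ in $\KO_1$ of the unitization. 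Then $[C]$ lies in the image of $\KO_1(\R)\to\KO_1(\Cstr(n\Z)_\R)$, and must be killed.

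\textbf{Main obstacle.} This unitization/sign subtlety is where the real work lies. We would resolve it as follows. The degree-$1$ generator $e$ lives in the summand $\KO_1(\Co(\I,\tau_\I))$, i.e.\ the kernel of $\ev_{-1}$. We apply $\ev_{-1}$, and more precisely use the split exact sequence from \cref{lem:KO_Z_generators}, to project away the $\KO_*(\R)$-part. At the same time we check that the component of $\phi([w_m])$ in the $\KO_1(\R)$-summand vanishes. For this we compare the scalar part of $\Ad_{\eta_s}(w_m)$ with that of $\Ad_{\eta_s}(1)$ at another invariant point: evaluating at $z=1$ gives the matrix $C$, while evaluating at $z=-1$ gives $C$ with one sign flipped. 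Alternatively, we note that $w_m$ maps to the class whose $\ev_{1}$-image is trivial, since $\ev_1(w_m)=1$. This identifies the image as exactly $[w_n]$, giving \eqref{eq:KO.symm.breaking.d1}.
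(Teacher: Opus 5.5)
Your matrix for $\Ad_{\eta_s}(w_m)$ is right. You have also isolated exactly the point the paper's own proof passes over. The paper asserts that this weighted cyclic matrix is ``unitarily equivalent via a permutation matrix'' to $w_n\oplus 1\oplus\cdots\oplus 1$. But conjugation by a permutation cannot diagonalize a $k$-cycle: one must \emph{multiply} by the scalar permutation matrix $C^{-1}$. That is harmless in complex $\K$-theory ($\K_1(\C)=0$) but not in $\KO_1$.

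The step in your proposal that fails is the claim that the resulting $\KO_1(\R)$-component vanishes. That component is detected by the trivial character $\ev_1\colon\Cstr(n\Z)_\R\to\R$, $w_n\mapsto 1$, which splits the unital inclusion $\R\to\Cstr(n\Z)_\R$, together with $\KO_1(\R)=\pi_0(\mathrm{O}(\infty))\simeq\Z/2$ via the determinant. Applying $\ev_1$ entrywise sends $\Ad_{\eta_s}(w_m)$ to $C$, with $\det C=(-1)^{k-1}$, whereas $\ev_1(w_n)=1$.

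Your alternative remark ``$\ev_1(w_m)=1$'' evaluates in the wrong algebra. The map $(\ev_1\otimes\id)\circ\Ad_{\eta_s}$ is the permutation representation of $m\Z$ on $\ell^2(n\Z\backslash m\Z)_\R$, which for even $k$ contains the sign character $w_m\mapsto -1$. Your evaluation at $z=-1$ shows the same discrepancy: determinant $(-1)^k$ against $\ev_{-1}(w_n)=-1$. Concretely, for $k=2$ one has $\Ad_{\eta_s}(w_1)=\begin{pmatrix}0&1\\ w_2&0\end{pmatrix}$, whose $\ev_1$-image is the swap matrix of determinant $-1$. In general
\[
\phi_{n\Z\backslash m\Z}[w_m]=[w_n]+(k-1)\,[-1_{\ell^2(n\Z)_\R}],
\]
where $[-1_{\ell^2(n\Z)_\R}]=\eta\cdot[1_{\ell^2(n\Z)_\R}]\neq 0$ is the image of the generator $\eta\in\KO_1(\R)$. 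So for even $n/m$ the second formula in \eqref{eq:KO.symm.breaking.d1} is false in the real case, and no argument can close your gap. It holds for odd $n/m$, and always in complex $\K$-theory.

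What survives, and is all the later computations need, is a scale-invariant degree-$1$ generator. Put $e_m\defeq[-w_m]=[w_m]+[-1_{\ell^2(m\Z)_\R}]$. This class, unlike $[w_m]$, actually lies in the summand $\KO_1(\Co(\I,\tau_\I))=\ker(\ev_{-1})_*$ of \cref{lem:KO_Z_generators}, because $-w_m$ (the function $z\mapsto -z$) takes the value $1$ at the fixed point $-1$. The pair $\{[1],e_m\}$ differs from $\{[1],[w_m]\}$ by a unipotent change of basis, so it is still a free $\KO_*(\R)$-basis. Since $\phi[-1_{\ell^2(m\Z)_\R}]=k[-1_{\ell^2(n\Z)_\R}]$ and $2[-1]=0$,
\[
\phi_{n\Z\backslash m\Z}(e_m)=[w_n]+(k-1)[-1]+k[-1]=e_n\qquad\text{for all } m\mid n.
\]
Using $e_m$ instead of $[w_m]$ in \eqref{eq:d-dim_KO-generators}, formula \eqref{eq:KO.breaking.formula} and \cref{thm:KO-theory_real} go through unchanged.
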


\begin{proof}
Up to replacing \( n\Z\backslash m\Z \) by
\( \frac{n}{m}\Z\backslash \Z \), it suffices to compute the
maps \( \phi_{N\Z\backslash \Z} \) for \( N\in\N \).
To this end, pick the section \( s\colon N\Z\backslash \Z\to \Z \) given by
\begin{equation}\label{eq:section_1D_case}
  s(N\Z+i)\defeq i,\quad i=1,\ldots,N,
\end{equation}
which induces the unitary isomorphism of Hilbert spaces 
(see \eqref{eq:unitary_l2G_l2H}):
\begin{equation}\label{eq:break.N.Hilbert.iso}
\begin{aligned}
  \eta_s\colon\ell^2(\Z)_\R&\xrightarrow{\sim}\ell^2(N\Z)_\R\otimes\ell^2(N\Z\backslash
  \Z)_\R\\
  \quad \delta_{Nk+i}&\mapsfrom \delta_{Nk}\otimes\delta_{N\Z+i}. 
\end{aligned}
\end{equation}
In turn, we have an isomorphism
\[
\Ad_{\eta_s}\colon \Bdd(\ell^2(\Z)_\R)\to \Bdd(\ell^2(N\Z)_\R\otimes\ell^2(N\Z\backslash
  \Z)_\R)\simeq \Mat_N(\Bdd(\ell^2(N\Z)_\R)),
\]
where in the second isomorphism, 
we identify \( \ell^2(N\Z\backslash \Z)_\R\simeq\R^N \) via the basis \(
\{\delta_{N\Z+i}\}_{i=1,\ldots,N} \). 
Thus, for a general $T\in\Bdd(\ell^2(\Z))$, the operator
\[
\Ad_{\eta_s}(T)\defeq (\Ad_{\eta_s}(T)_{i,j})_{i,j=1}^N,
\]
is an $N\times N$ matrix with $\Bdd(\ell^2(N\Z)_\R)$-valued entries. 
Explicitly,
for \( i,j\in\{1,\dots,N\} \), the $i,j$-entry
of \( \Ad_{\eta_s}(T) \) is the operator given by 
\begin{equation}\label{eq:1-dim_matrix_form}
\begin{aligned}
  \braketvert{\delta_{Nk}}{\Ad_{\eta_s}(T)_{i,j}}{\delta_{N\ell }}
  &=\braketvert{\delta_{Nk}\otimes\delta_{N\Z+i}}{\Ad_{\eta_s}(T)}{\delta_{N\ell }\otimes \delta_{N\Z+j}}\\
  &=\braketvert{\eta_s^*(\delta_{Nk}\otimes\delta_{N\Z+i})}{T}{\eta_s^*(\delta_{N\ell }\otimes \delta_{N\Z+j})}\\
  &=\braketvert{\delta_{Nk+i}}{T}{\delta_{N\ell + j}}, & k,\ell\in\Z.
  \end{aligned}
\end{equation}

In particular,
\[
\Ad_{\eta_s}(1_{\ell^2(\Z)_\R})=\mathrm{diag}(\underbrace{1_{\ell^2(N\Z)_\R},\ldots,1_{\ell^2(N\Z)_\R}}_{N\ \mathrm{copies}}),
\]
thus \( \phi_{N\Z\backslash \Z}[1_{\ell^2(\Z)_\R}]=N\cdot[1_{\ell^2(N\Z)_\R}] \).

As for the degree-1 generator, we have
\begin{equation}\label{eq:1D.translation.broken.matrix}
    \Ad_{\eta_s}(w_1)=\begin{pmatrix}
    0 & 1_{\ell^2(N\Z)_\R} & 0 &\cdots & 0 \\
    0 &  & 1_{\ell^2(N\Z)_\R} & & 0 \\
      \vdots & & & \ddots & \vdots \\
    0 & & & & 1_{\ell^2(N\Z)_\R}\\
    w_N & 0 & 0 & \cdots & 0
    \end{pmatrix}\in\Mat_N(\Cstr(N\Z)_\R),
\end{equation}
where we recall that \( w_1\defeq \rho_{\Z}(1) \) and
$w_N\defeq \rho_{N\Z}(N)$ are the left shift operators on
\( \ell^2(\Z)_\R \) and \( \ell^2(N\Z)_\R \), respectively.
Via a permutation matrix, \(
\Ad_{\eta_s}(w_1) \) is unitarily equivalent in 
$\Mat_N(\Cstr(N\Z)_\R)$ to the diagonal operator
\[ 
  w_N\oplus
  \underbrace{1_{\ell^2(N\Z)_\R}\oplus \cdots \oplus
  1_{\ell^2(N\Z)_\R}}_\text{\( N-1 \)\ copies}.
\]
Therefore \( \phi_{N\Z\backslash \Z}[w_1]=[w_N] \).
\end{proof}

\subsection{The \(d\)-dimensional real and complex cases}
\label{sec:d-dim}
Write \( \mathfrak{Z}^d \) for the symmetry breaking
poset of \( \Z^d \), that is, the directed poset of all finite-index
subgroups of \( \Z^d \). For \(d\geq 2\), the poset $\mathfrak{Z}^d$ is
significantly more complicated than the $d=1$ case. For example, the range
of any \( M\in\Mat_{d}(\Z) \) that is invertible over \( \Q \)
(i.e.~\( M \) also belongs to \( \GL_{d}(\Q) \)) is a finite-index
subgroup of \( \Z^d \). Computing the K-theory symmetry breaking maps for
such a subgroup involves explicit description of their K-theory generators.
Nevertheless, we claim that the symmetry breaking Roe algebra \(
\Cst(\R^d)^{\mathfrak{Z}^d} \) (thus also its $\K$-theory) can be computed
using a much smaller poset that is cofinal in \(
\mathfrak{Z}^d \). 

\subsubsection{The real case}
\label{sec:real_d-dim}

Recall from \cref{dfn:cofinal} that 
a subset \( J \) of a directed poset \(
(I,\prec) \) is called \emph{cofinal}, if for every \(
i\in I \), there exists \( j\in J \) such that \(
i\prec j \).

\begin{lemma}\label{lem:cofinal_subseteq_R}
The collection of finite-index subgroups of \( \Z^d
\) of the form
\[ 
  m\Z^d \defeq\{(mx_1,\dots,mx_d)\,:\,m\in\N,\,(x_1,\dots,x_d)\in\Z^d\},
\]
is cofinal in $\mathfrak{Z}^d$. Thus there are natural isomorphisms
\[ 
  \Cst(X,\mathcal{H})^{\mathfrak{Z}^d}\simeq\colim_{G\in\mathfrak{Z}^d}\Cst(X,\mathcal{H})^G\simeq
  \colim_{m\in\N}\Cst(X,\mathcal{H})^{m\Z^d}.
\]
for any \( X \)-\( \Z^d \)-module \( \mathcal{H} \) satisfying
\refbasicsetup. 
\end{lemma}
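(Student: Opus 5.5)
The plan has two parts. First, a group-theoretic cofinality statement. Second, the standard categorical fact that a colimit over a directed poset may be computed on any cofinal subposet.

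For cofinality, fix a finite-index subgroup $G\subseteq\Z^d$ and set $m\defeq[\Z^d:G]$, which is finite. The quotient $\Z^d/G$ is a finite abelian group of order $m$, so by Lagrange's theorem $m\cdot\bar x=0$ for every $\bar x\in\Z^d/G$. Equivalently, $mx\in G$ for all $x\in\Z^d$, so $m\Z^d\subseteq G$, that is, $G\prec m\Z^d$ in $\mathfrak{Z}^d$. Each $m\Z^d$ has index $m^d$, so it is itself finite-index and belongs to $\mathfrak{Z}^d$. The subsets $\{m\Z^d\}_{m\in\N}$ are also directed among themselves: $m\Z^d\prec n\Z^d$ exactly when $m\mid n$. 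Ordering $\N$ by divisibility, the map $m\mapsto m\Z^d$ is therefore an order-embedding of a directed poset onto a cofinal subset of $\mathfrak{Z}^d$.

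For the colimit statement, I would invoke the cofinality lemma from \cref{app:categorical_terminology}: restricting a directed system to a cofinal subposet does not change its colimit, and the canonical comparison map is an isomorphism. Here the connecting maps are the inclusions $\iota_{H\backslash G}$, so concretely the claim is
\[
\overline{\bigcup_{G\in\mathfrak{Z}^d}\Cst(X,\mathcal{H})^G}
=\overline{\bigcup_{m\in\N}\Cst(X,\mathcal{H})^{m\Z^d}}
\]
inside $\Bdd(\mathcal{H})$, using \eqref{eq:concrete.symm.break.alg}. The inclusion ``$\supseteq$'' is trivial. For ``$\subseteq$'', if $G\prec m\Z^d$ then every $G$-invariant operator is $m\Z^d$-invariant, so $\Cst(X,\mathcal{H})^G\subseteq\Cst(X,\mathcal{H})^{m\Z^d}$. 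Naturality holds because both colimit structures consist of the same concrete inclusions.

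I expect no real obstacle here. The only point requiring care is to order $\N$ by divisibility rather than by size, so that the indexing matches $\prec$ and the system over $\N$ is genuinely directed. The \refbasicsetup\ hypothesis is not needed for the algebra-level statement itself; it only ensures that the colimit is the one studied later, where \cref{thm:main_computation_theorem} transfers it to group \Cst-algebras.
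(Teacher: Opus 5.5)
Your proposal is correct and follows essentially the same route as the paper: the paper also observes that $m\Z^d\subseteq G$ for $m=[\Z^d:G]$, because every element of the order-$m$ quotient is killed by $m$, and then applies \cref{lem:cofinal_isomorphism}. Your remarks that $\N$ must be ordered by divisibility and that the colimit can be written concretely as a closure of unions are correct refinements of what the paper leaves implicit.
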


\begin{proof}
Let $G\in\mathfrak{Z}^d$ be a finite-index subgroup of $\mathbb{Z}^d$. Then
\( G\backslash \Z^d  \) has order \( m \) for some $m\in\mathbb{N}$.
Then for any \( x\in\Z^d \),
its image \( [x]\in G\backslash \Z^d \) satisfies \( m[x]=[0] \). This implies that
\( mx\in G \) for every \( x\in\Z^d \), i.e., \( m\Z^d\subseteq G \).  Thus
every finite-index subgroup \( G\subseteq \Z^d \) contains a subgroup of
the form \( m\Z^d \). We may now apply \cref{lem:cofinal_isomorphism} to deduce the second claim of \cref{lem:cofinal_subseteq_R}. 
\end{proof} 

In the \refbasicsetup, let $\Gamma\simeq\Z^d$ be a free abelian group of
rank $d$. As a consequence of 
\cref{thm:main_computation_theorem,lem:cofinal_subseteq_R}, we have
\begin{equation}\label{eq:real.symm.break.as.colimit}
\KO_i(\Cst(X,\mathcal{H})_{\R}^{\mathfrak{Z}^d})\simeq\colim_{m\in\N}\KO_i(\Cst(X,\mathcal{H})_{\R}^{m\Z^d})\simeq\colim_{m\in\N}\KO_i(\Cstr(m\Z^d)_{\R}).
\end{equation}
So our task reduces to the computation of the K-theory symmetry breaking maps
\begin{equation}\label{eq:real.symm.break.maps}
  \phi_{n\Z^d\backslash m\Z^d}\colon
  \KO_i(\Cstr(m\Z^d)_\R)\to\KO_i(\Cstr(n\Z^d)_\R),\quad \text{for \( m\mid
  n\)},
\end{equation}
which will be facilitated by the following version of the K\"unneth formula for real \Cst-algebras, due to Boersema 
\citelist{\cite{Boersema:Real_Cst-algebras_Kuenneth_formula}
\cite{Boersema-Schochet:Real_K-theory}*{Theorem 7.3}}:
\begin{lemma}\label{lem:real_Kuenneth_theorem}
Let \( A \) and \( B \) be real \Cst-algebras, 
where \( A \) is in the pre-bootstrap category 
\textup{(}which includes all commutative real \Cst-algebras\textup{)}, 
and \( B \) has the
property that \( \K_*(B_\C) \) is torsion-free, where \( B_\C \) is the
complexification of \( B \). 
Then the external product map induces an isomorphism of 
graded \( \KO_*(\R) \)-modules 
\[ 
  \KO_*(A)\otimes_{\KO_*(\R)}\KO_*(B)\xrightarrow{\sim}\KO_*(A\otimes B).
\]
\end{lemma}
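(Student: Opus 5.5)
The plan is the standard ``bootstrap'' argument: construct a natural transformation of functors in the variable $A$, check it on generating algebras, and extend it using exactness and continuity. Fix $B$ with $\K_*(B_\C)$ torsion-free. The external (exterior) product gives a natural map of graded $\KO_*(\R)$-modules
\[
\beta_A\colon \KO_*(A)\otimes_{\KO_*(\R)}\KO_*(B)\to\KO_*(A\otimes B),
\]
natural in $A$ and compatible with suspension, since $\KO_*(\Co(\I,\tau_\I)\otimes A)\simeq \KO_{*-1}(A)$ by \cref{lem:KO-theory_suspension_desuspension} and the product is compatible with this shift. The pre-bootstrap category is generated by $\R$, $\C$ and the quaternions (equivalently, it is the smallest class containing the real commutative algebras and closed under extensions, countable inductive limits and $\KK$-equivalence). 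The argument therefore has two parts: prove $\beta_A$ is an isomorphism on the generators, and show that the class of $A$ for which $\beta_A$ is an isomorphism has these closure properties.

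\textbf{Generators.} For $A=\R$, $\beta_\R$ is the identity. For $A=\C$, the target is $\KO_*(\C\otimes B)=\K_*(B_\C)$, and the source is $\K_*(\C)\otimes_{\KO_*(\R)}\KO_*(B)$. I would identify the map with the complexification map via the Wood/Bott sequence
\[
\KO_{*+1}(B)\xrightarrow{\eta}\KO_{*+2}(B)\to\K_{*+2}(B_\C)\to\KO_{*}(B)\xrightarrow{\eta}\cdots,
\]
together with the presentation $\K_*(\C)=\KO_*(\R)/(\eta)$, suitably shifted. The case of the quaternions is similar, using the periodicity shift by $4$.

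\textbf{Closure properties.} Inductive limits are easy, because both $\KO_*$ and $\otimes_{\KO_*(\R)}$ commute with filtered colimits. $\KK$-equivalences are easy by naturality. Extensions are the heart of the matter. Given $0\to I\to A\to A/I\to0$, tensoring with $B$ gives an exact sequence, since the minimal tensor product with a commutative (nuclear) algebra is exact, and so the target side is a long exact sequence. For the five lemma one also needs the source side $A\mapsto \KO_*(A)\otimes_{\KO_*(\R)}\KO_*(B)$ to be exact. This is the main obstacle, because $\KO_*(B)$ is generally not flat over $\KO_*(\R)$, which contains $\eta$-torsion.

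To handle this I would pass to united $\K$-theory $\K^{\mathrm{CRT}}(B)=(\KO_*(B),\K_*(B_\C),\KT_*(B))$. Torsion-freeness of $\K_*(B_\C)$ forces $\K^{\mathrm{CRT}}(B)$ to have a projective resolution of a special form; this should follow from the Bott sequence above, by splitting off the $\eta$-divisible and $\eta$-torsion parts. That form makes the relevant $\operatorname{Tor}^{\KO_*(\R)}_1(\KO_*(M),\KO_*(B))$ vanish for every module $\KO_*(M)$ occurring in the long exact sequence. I would verify this first for $B$ equal to a free module on the generators $\R,\C$, i.e.\ $B=\R$ or $\C$ up to suspension, and then deduce it for general $B$ by a filtered-colimit argument on the $B$ side. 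With exactness of the source side established, the five lemma closes the induction, and the stated isomorphism follows for every $A$ in the pre-bootstrap category.
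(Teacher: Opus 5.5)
The paper gives no proof of its own here; it only cites Boersema. Your proposal breaks at the generator $A=\C$, and the break cannot be repaired, because the statement in this form is false. The form in question takes the tensor product over $\KO_*(\R)$ and assumes only that $\K_*(B_\C)$ is torsion-free.

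Take $A=B=\C$, viewed as the commutative real \Cst-algebra $\Cont(\{\pm1\},\tau)$ with $\tau$ the swap. Then $B_\C\cong\C\oplus\C$ has torsion-free $\K$-theory. On $\KO_*(\C)=\K_*(\C)=\Z[\beta_U^{\pm1}]$, with $\deg\beta_U=2$, the ring $\KO_*(\R)$ acts through complexification: $\eta$ acts by $0$, $\alpha$ by $\pm2\beta_U^2$, and $\beta$ by $\beta_U^4$. Put $t_k\defeq\beta_U^k\otimes\beta_U^{-k}$. The only relations in degree $0$ are $t_{k+4}=t_k$ (from $\beta$) and $2t_{k+2}=2t_k$ (from $\alpha$). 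Hence, writing $\Z^4$ for the free abelian group on $t_0,\dots,t_3$,
\[
\bigl(\K_*(\C)\otimes_{\KO_*(\R)}\K_*(\C)\bigr)_0\cong\Z^4/\bigl(2(t_2-t_0),\,2(t_3-t_1)\bigr)\cong\Z^2\oplus(\Z/2)^2 .
\]
On the other hand, $\KO_0(\C\otimes_\R\C)=\KO_0(\C\oplus\C)\cong\Z^2$ is torsion-free. So the external product map has a nonzero kernel.

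This also shows your plan for $A=\C$ cannot work. The Wood sequence only exhibits $\K_n(B_\C)$ as an extension of $\ker(\eta\colon\KO_{n-2}(B)\to\KO_{n-1}(B))$ by $\operatorname{coker}(\eta\colon\KO_{n-1}(B)\to\KO_n(B))$. That extension is not $\K_*(\C)\otimes_{\KO_*(\R)}\KO_*(B)$ in general. Separately, the presentation $\K_*(\C)=\KO_*(\R)/(\eta)$ is false: the right side vanishes in degrees $2,6 \bmod 8$.

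The same example defeats the $\operatorname{Tor}$-vanishing you need in the extension step. In Boersema's united $K$-theory, torsion-freeness of $\K_*(B_\C)$ is what kills $\operatorname{Tor}$ over the CRT ring. It gives nothing for $\operatorname{Tor}^{\KO_*(\R)}_1(-,\KO_*(B))$. Indeed $\KO_*(\C)$ is not flat over $\KO_*(\R)$: tensoring the inclusion $(\eta)\hookrightarrow\KO_*(\R)$ with it gives the zero map out of $\KO_{*-1}(\C)/2\neq0$.

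Under this hypothesis, the bootstrap argument actually proves
\[
\K^{\mathrm{CRT}}(A)\otimes_{\mathrm{CRT}}\K^{\mathrm{CRT}}(B)\cong\K^{\mathrm{CRT}}(A\otimes B).
\]
The $\KO$-component of the left side is in general not $\KO_*(A)\otimes_{\KO_*(\R)}\KO_*(B)$.

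The paper only uses the case where $\KO_*(B)$ is a free $\KO_*(\R)$-module, and there no bootstrap argument is needed. For $B=\Cstr(m\Z)_\R$, tensor the split extension from the proof of \cref{lem:KO_Z_generators} with $A$. Split exactness and \cref{lem:KO-theory_suspension_desuspension} then give
\[
\KO_*(A\otimes B)\cong\KO_*(A)\oplus\KO_{*-1}(A)\cong\KO_*(A)\otimes_{\KO_*(\R)}\KO_*(B)
\]
for every $A$, and one iterates. The right fix is to strengthen the hypothesis on $B$ to this case, or to work in united $K$-theory, rather than to look for a different proof.
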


There is a canonical isomorphism of real \Cst-algebras,
\begin{equation}\label{eq:basis_induce_tensor}
\Cstr(m\Z^d)_\R\simeq\bigotimes_{j=1}^d \Cstr(m\Z)_\R.
\end{equation}
The complexification 
$\Cstr(m\Z^d)_\R\otimes_\R\C=\Cstr(m\Z^d)_\C$ has
torsion-free $\K$-theory.
So Boersema's K\"unneth Theorem applies,
and gives an isomorphism of graded \( \KO_*(\R) \)-modules: 
\begin{equation}\label{eq:KO_mZd_tensor_product}
  \KO_*( \Cstr(m\Z^d)_{\R})\simeq
  \underbrace{\KO_*(
  \Cstr(m\Z)_{\R})\otimes_{\KO_*(\R)}\cdots\otimes_{\KO_*(\R)}\KO_*( \Cstr(m\Z)_{\R})}_{\text{\(
  d\)-copies of \( \KO_*( \Cstr(m\Z)_{\R}) \)} }.
\end{equation}

By \cref{lem:KO_Z_generators}, \( \KO_*(\Cstr(m\Z)_{\R}) \) is a free \(
\KO_*(\R) \)-module generated by \( [1_{\ell^2(m\Z_\R)}] \) in degree \( 0
\) and \( [w_m] \) in degree \( 1 \). 
The right-hand side of \eqref{eq:KO_mZd_tensor_product} is
a \( d \)-fold tensor product of this free module. Define
\begin{equation}\label{eq:d-dim_KO-generators}
  w_{m,I}\defeq x_1\otimes\dots\otimes x_d,\quad x_j\defeq \begin{cases}
  [1_{\ell^2(m\Z)_\R}] & \text{if \( j\in I\);} \\
  [w_m] & \text{if \( j\notin I \),} 
  \end{cases}
\end{equation}
where
\[
  I=\{i_1,\dots,i_{\abs{I}}\},\quad 1\leq i_1<i_2<\dots<i_{\abs{I}}\leq d,
\]
is a multi-index. Then the right-hand side of
\eqref{eq:KO_mZd_tensor_product} is a free \( \KO_*(\R) \)-module generated
by \( w_{m,I} \) for all multi-indices \( I\subseteq \{1,\dots,d\} \). The
generator \( w_{m,I} \) has
degree \( d-\abs{I} \) because 
$w_{m,I}$ has $\abs{I}$ factors of $[1_{\ell^2(m\Z^d)}]$ and \( d-\abs{I}
\) factors of \( [w_m] \).  

We claim that the symmetry breaking maps \( \phi_{n\Z^d\backslash m\Z^d} \) 
are functorial under the isomorphism in \eqref{eq:KO_mZd_tensor_product}. 
This means the following:

\begin{lemma}\label{lem:real_Kunneth_breaking_formula}
The following diagram of isomorphisms commutes:
\[
\begin{tikzcd}
\KO_*\left(\bigotimes\limits_{j=1}^d\Cstr(m\Z)_{\R}\right) \ar[r,"\textnormal{K\"unneth}"]\ar[d,"\KO_*\big(\bigotimes\limits_{j=1}^d\Ad_{\eta_{s}}\big)"] & \bigotimes_{j=1}^d\KO_*(\Cstr(m\Z)_{\R}) \ar[d,"\otimes_{j=1}^d\KO_*(\Ad_{\eta_{s}})"]\\
\KO_*\left(\bigotimes\limits_{j=1}^d \big(\Cstr(n\Z)_{\R}\otimes \Cpt(\ell^2(n\Z\backslash m\Z)_{\R})\big)\right)\ar[r,"\textnormal{K\"unneth}"]\ar[d,"\Stab^{-1}"] & \bigotimes\limits_{j=1}^d\KO_*\left(\Cstr(n\Z)_{\R}\otimes \Cpt(\ell^2(n\Z\backslash m\Z)_{\R})\right)\ar[d,"\Stab^{-1}"] \\
\KO_*\left(\bigotimes\limits_{j=1}^d\Cstr(n\Z)_{\R}\right)
\ar[r,"\textnormal{K\"unneth}"] & \bigotimes\limits_{j=1}^d\KO_*(\Cstr(n\Z)_{\R}).
\end{tikzcd}
\]
\end{lemma}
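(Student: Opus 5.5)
The plan is to prove the two squares separately, using the naturality of Boersema's external product in each variable. The Künneth isomorphism of \cref{lem:real_Kuenneth_theorem} is the inverse of the external (cup) product map
\[
\KO_*(A_1)\otimes_{\KO_*(\R)}\cdots\otimes_{\KO_*(\R)}\KO_*(A_d)\to\KO_*(A_1\otimes\cdots\otimes A_d).
\]
This map is natural with respect to \st-homomorphisms $f_j\colon A_j\to B_j$: one has $f_1\otimes\cdots\otimes f_d$ on the left and $\otimes_j\KO_*(f_j)$ on the right. I will first check that each algebra in the diagram satisfies the hypotheses of \cref{lem:real_Kuenneth_theorem}, so that the horizontal maps are isomorphisms.

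\begin{itemize}
\item The algebras $\Cstr(m\Z)_\R\simeq\Cont(\T,\tau_\T)$ are commutative.
\item Each $\Cstr(n\Z)_\R\otimes\Cpt(\ell^2(n\Z\backslash m\Z)_\R)\simeq\Mat_{n/m}(\Cont(\T,\tau_\T))$ is in the pre-bootstrap class, being Morita equivalent to a commutative algebra.
\item The iterated tensor products have complexifications $\Mat_N(\Cont(\T^k))$, whose $\K$-theory is torsion-free.
\end{itemize}

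Applying the Künneth theorem inductively, one factor at a time, identifies the $d$-fold external product with an isomorphism.

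\textbf{Top square.} The left vertical map is $\KO_*$ of the tensor product homomorphism $\bigotimes_j\Ad_{\eta_s}$, with each factor $\Ad_{\eta_s}\colon\Cstr(m\Z)_\R\to\Cstr(n\Z)_\R\otimes\Cpt$ as in \cref{prop:i_GH}. Naturality of the external product then gives commutativity directly. One checks this on elementary tensors of classes in the projection/unitary pictures: the product of $[p_1],\dots,[p_d]$ is represented by $p_1\otimes\cdots\otimes p_d$, which is mapped compatibly. For odd degrees one uses suspensions, where naturality is standard (\cite{Boersema-Schochet:Real_K-theory}).

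\textbf{Bottom square.} Each factor $\Stab$ is induced by the corner embedding $a\mapsto a\otimes p_j$, with $p_j$ a rank-one projection in $\Cpt(\ell^2(n\Z\backslash m\Z)_\R)$. The tensor product of these embeddings is the corner embedding
\[
\textstyle\bigotimes_j\Cstr(n\Z)_\R\to\bigotimes_j\bigl(\Cstr(n\Z)_\R\otimes\Cpt\bigr),\qquad x\mapsto x\otimes(p_1\otimes\cdots\otimes p_d),
\]
after the canonical reshuffling of tensor factors. It therefore induces the left $\Stab$, using that $p_1\otimes\cdots\otimes p_d$ is a rank-one projection. Naturality of the external product again shows that the square commutes with $\Stab$ in place of $\Stab^{-1}$. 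Since all maps are isomorphisms, inverting them gives the stated square.

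\textbf{Main obstacle.} The delicate point is the bookkeeping of the reordering isomorphism
\[
\textstyle\bigotimes_j\bigl(A_j\otimes\Cpt_j\bigr)\simeq\bigl(\bigotimes_j A_j\bigr)\otimes\bigotimes_j\Cpt_j .
\]
I must check that it is compatible with the external product, with no sign or degree twist on $\KO_*$. The $\Cpt_j$ factors are even (degree $0$), so the graded-commutativity signs of the external product should be trivial here. I would verify this by reducing to the case where the moved classes lie in degree $0$, coming from $\KO_0(\Cpt)\simeq\KO_0(\R)$.
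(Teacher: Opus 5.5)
Your proposal is correct and follows essentially the same route as the paper: the top square is naturality of the external (Künneth) product under the \st-homomorphisms $\Ad_{\eta_s}$, and the bottom square commutes because each stabilization is induced by tensoring with a rank-one projection. Your worry about signs in the reshuffling is harmless. The reshuffling is a \st-isomorphism of algebras, so the left-hand $\Stab$ is simply $\KO_*$ of the \st-homomorphism $\bigotimes_j(a\mapsto a\otimes p_j)$, and the bottom square is then one more instance of naturality.
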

\begin{proof}
Pick a section $s\colon n\Z\to m\Z$. The K\"unneth isomorphism for
KO-theory is functorial for \st-homomorphisms between real \Cst-algebras.
Consequently, the top half of the above diagram of $\KO_*(\R)$-modules
commutes.  Likewise, the lower half commutes because the stabilization
isomorphisms are induced by the \st-homomorphism of tensoring with any
rank-1 projection.
\end{proof}

Consequently, \( \phi_{n\Z^d\backslash m\Z^d} \) 
can be identified with a \( d \)-fold tensor product of
symmetry breaking maps in dimension one using the natural isomorphism of
\eqref{eq:KO_mZd_tensor_product}. That is,
\begin{equation}\label{eq:KO_break_map_tensor}
\phi_{n\Z^d\backslash m\Z^d}\defeq\phi_{n\Z\backslash
m\Z}\otimes\cdots\otimes\phi_{n\Z\backslash m\Z},\quad
\text{for \( m\mid n \).} 
\end{equation}

\begin{corollary}
In terms of the K\"unneth isomorphism \eqref{eq:KO_mZd_tensor_product} and
the \KO-theory generators \eqref{eq:d-dim_KO-generators}, the
symmetry breaking map \( \phi_{n\Z^d\backslash m\Z^d} \) for \( m\mid n \) 
is the $\KO_*(\R)$-module homomorphism given by the formula
\begin{equation}\label{eq:KO.breaking.formula}
\begin{aligned}
\phi_{n\Z^d\backslash m\Z^d}\colon 
\KO_{d-|I|}(\Cstr(m\Z^d)_\R) & \to \KO_{d-|I|}(\Cstr(n\Z^d)_\R)\\
w_{m,I} &\mapsto (n/m)^{\abs{I}}w_{n,I}.
\end{aligned}
\end{equation}
\end{corollary}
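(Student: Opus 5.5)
The plan is to combine the tensor-product description \eqref{eq:KO_break_map_tensor} of $\phi_{n\Z^d\backslash m\Z^d}$, obtained from \cref{lem:real_Kunneth_breaking_formula}, with the one-dimensional formula \eqref{eq:KO.symm.breaking.d1} of \cref{lem:SB_map_1-dim}. The only preliminary point is that the $d$-dimensional map really is the tensor product of the one-dimensional ones at the $\Cst$-level. I would choose the section $s^{\times d}\colon n\Z^d\backslash m\Z^d\to m\Z^d$ given coordinatewise by a section $s$ of $n\Z\backslash m\Z$. Under the canonical isomorphism \eqref{eq:basis_induce_tensor} and $\ell^2(m\Z^d)\simeq\ell^2(m\Z)^{\otimes d}$, the unitary $\eta_{s^{\times d}}$ of \eqref{eq:unitary_l2G_l2H} is then $\eta_s^{\otimes d}$, up to the obvious reshuffling of tensor factors. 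This holds because the bijection $g\mapsto(gs(Hg)^{-1},Hg)$ is a product of the coordinate bijections, and $\sigma=1$. The reshuffling is a permutation unitary between the $\ell^2(n\Z\backslash m\Z)^{\otimes d}$ and $\ell^2(n\Z^d\backslash m\Z^d)$ factors, so by \eqref{eq:isometry.stabilization} it acts trivially after $\Stab^{-1}$. Since \cref{prop:i_GH} says $\phi$ is independent of the section, $\phi_{n\Z^d\backslash m\Z^d}=\KO_*(\bigotimes_j\Ad_{\eta_s})$ followed by $\Stab^{-1}$, which is exactly the left column of \cref{lem:real_Kunneth_breaking_formula}.

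Next I would evaluate the tensor product on the generators. By \cref{lem:real_Kunneth_breaking_formula}, the map sends $w_{m,I}=x_1\otimes\cdots\otimes x_d$ to $\phi_{n\Z\backslash m\Z}(x_1)\otimes\cdots\otimes\phi_{n\Z\backslash m\Z}(x_d)$. By \eqref{eq:KO.symm.breaking.d1}, each factor $x_j=[1_{\ell^2(m\Z)_\R}]$ with $j\in I$ goes to $(n/m)[1_{\ell^2(n\Z)_\R}]$, and each factor $x_j=[w_m]$ with $j\notin I$ goes to $[w_n]$. Pulling out the integer scalars, which is legitimate because the tensor product over $\KO_*(\R)$ is multilinear, gives $(n/m)^{\abs{I}}w_{n,I}$. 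Degree preservation is automatic, since each one-dimensional map preserves degree.

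It then remains to note that the tensor product of $\KO_*(\R)$-module maps is a $\KO_*(\R)$-module map. Since the $w_{m,I}$ freely generate $\KO_*(\Cstr(m\Z^d)_\R)$, the formula on generators determines $\phi_{n\Z^d\backslash m\Z^d}$ completely. One caveat on signs: graded tensor products of module maps can introduce Koszul signs. Here every one-dimensional map has degree $0$, so no signs arise.

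The main obstacle is the identification $\eta_{s^{\times d}}=\eta_s^{\otimes d}$, together with checking that the corner-embedding and stabilization conventions of \cref{rem:symm.break.stabilization} are compatible with the Künneth isomorphism. I would handle the latter by taking the rank-one projection on $\ell^2(n\Z^d\backslash m\Z^d)$ to be the tensor product of rank-one projections on each factor. With that choice the stabilization maps commute with the external product, by naturality of the external product.
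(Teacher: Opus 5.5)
Your proposal is correct and follows the paper's route. The paper likewise identifies $\phi_{n\Z^d\backslash m\Z^d}$ with the $d$-fold tensor product of one-dimensional symmetry breaking maps via \cref{lem:real_Kunneth_breaking_formula} (this is \eqref{eq:KO_break_map_tensor}), and then applies \eqref{eq:KO.symm.breaking.d1} factorwise on the generators $w_{m,I}$. Your extra checks supply details the paper leaves implicit: that a coordinatewise section gives $\eta_{s^{\times d}}=\eta_s^{\otimes d}$ up to a tensor-factor permutation, and that product rank-one projections make the stabilizations compatible with the external product.
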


\begin{theorem}\label{thm:KO-theory_real}
Let $\Gamma$ be a free abelian group on $d$ generators. Let \( \mathcal{H}
\) be a untwisted,
real \( X \)-\( \Gamma \)-module as in \refbasicsetup, and \(
\mathfrak{S} \) its symmetry breaking poset. Let \(
\Cst(X,\mathcal{H})_\R^\mathfrak{S} \) be the corresponding real
symmetry breaking Roe algebra. Then
\begin{align*} 
\KO_i(\Cst(X,\mathcal{H})_\R^{\mathfrak{S}})&\simeq\colim_{G\in\mathfrak{Z}^d}\KO_i(\Cstr(G)_\R)\\
  &\simeq
  \begin{cases}
  \Q^{q(d,i)-1}\oplus\Z, & \text{if}\ i-d\equiv 0,4\hphantom{,4,4}\mod 8;\\
  \Q^{q(d,i)}\oplus\Z/2, & \text{if}\ i-d\equiv 1,2\hphantom{,1,2}\mod 8;\\
  \Q^{q(d,i)}, & \text{if}\ i-d\equiv 3,5,6,7\mod 8;
  \end{cases}
\end{align*}
where
\begin{equation}\label{eq:q.coefficient}
q(d,i)=\sum_{j\in\Z}\binom{d}{i+8j}+\binom{d}{i+4+8j}.
\end{equation}
\end{theorem}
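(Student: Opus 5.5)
The plan is to compute the right-hand side of \eqref{eq:real.symm.break.as.colimit} directly, using the explicit description of the connecting maps in \eqref{eq:KO.breaking.formula}. The first isomorphism in the statement is already given by \cref{thm:main_computation_theorem}. By \cref{lem:cofinal_subseteq_R} we may replace $\mathfrak{Z}^d$ by the cofinal subposet $\{m\Z^d\}_{m\in\N}$. This subposet is indexed by $\N$ ordered by divisibility, since $m\Z^d\supseteq n\Z^d$ iff $m\mid n$.

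By \eqref{eq:KO_mZd_tensor_product}, each $\KO_*(\Cstr(m\Z^d)_\R)$ is a free $\KO_*(\R)$-module on the generators $w_{m,I}$, $I\subseteq\{1,\dots,d\}$, where $w_{m,I}$ has degree $d-\abs{I}$. By \eqref{eq:KO.breaking.formula}, the connecting map sends $w_{m,I}\mapsto (n/m)^{\abs{I}}w_{n,I}$. Hence the directed system splits as a direct sum over $I$ of directed systems, each of the form
\[
\KO_{i-d+\abs{I}}(\R)\xrightarrow{\;(n/m)^{\abs{I}}\;}\KO_{i-d+\abs{I}}(\R)
\]
in degree $i$. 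Since colimits commute with direct sums, it suffices to compute the colimit of each summand separately. This uses the standard values of $\KO_k(\R)$: it is $\Z$ for $k\equiv 0,4$, $\Z/2$ for $k\equiv1,2$, and $0$ otherwise, modulo $8$.

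The summands fall into three cases.
\begin{enumerate}
\item For $I=\emptyset$ all connecting maps are the identity. The colimit is therefore $\KO_{i-d}(\R)$, which is $\Z$ for $i-d\equiv0,4$, $\Z/2$ for $i-d\equiv1,2$, and $0$ otherwise.
\item For $I\neq\emptyset$ with $\KO_{i-d+\abs{I}}(\R)\simeq\Z$, the system is $\Z$ with maps multiplication by $(n/m)^{\abs{I}}$. I will identify its colimit with $\Q$ via the compatible maps $x\mapsto x/m^{\abs{I}}$ at stage $m$. These are injective. They are jointly surjective because any $p/q\in\Q$ equals $(pq^{\abs{I}-1})/q^{\abs{I}}$, which is the image of $pq^{\abs{I}-1}$ at stage $q$.
\item For $I\neq\emptyset$ with $\KO_{i-d+\abs{I}}(\R)\simeq\Z/2$, every element dies in the colimit, because it maps to zero at stage $2m$. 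The summands with $\KO_{i-d+\abs{I}}(\R)=0$ contribute nothing.
\end{enumerate}

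It remains to count the number of $\Q$-summands, i.e.\ the nonempty $I$ with $\abs{I}\equiv d-i \pmod 4$. Writing $j=\abs{I}$, the number of subsets of size $j$ with $j\equiv d-i$ or $d-i+4\pmod 8$ is
\[
\sum_{l}\binom{d}{d-i+8l}+\binom{d}{d-i+4+8l}.
\]
Using $\binom{d}{d-k}=\binom{d}{k}$ and reindexing $l\mapsto -l$, this becomes $q(d,i)$ as in \eqref{eq:q.coefficient}; note that $i-4$ and $i+4$ agree modulo $8$. The empty set is among these subsets exactly when $i-d\equiv0,4 \pmod 8$, and in that case it contributes $\Z$ rather than $\Q$, which accounts for the exponent $q(d,i)-1$. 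In the remaining residues, $I=\emptyset$ contributes $\Z/2$ (for $i-d\equiv 1,2$) or $0$, which gives the stated three cases.

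The main obstacle is only bookkeeping: checking that the binomial count matches $q(d,i)$ with the correct residues, and that the colimit over the divisibility-ordered index set of $\Z$ is really $\Q$. The structural input, namely the tensor-product form of the connecting maps, is already supplied by \eqref{eq:KO.breaking.formula}.
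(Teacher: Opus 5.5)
Your proposal is correct and takes essentially the same route as the paper: pass to the cofinal system $\{m\Z^d\}$, use the diagonal formula $w_{m,I}\mapsto (n/m)^{|I|}w_{n,I}$ to split the system over $I$, and note that $I=\emptyset$ gives identity maps, the $\Z/2$ weak summands die at stage $2m$, and the $\Z$ weak summands have colimit $\Q$. You then count the $\Q$-summands by $q(d,i)$. Your explicit binomial reindexing and your check that the maps $x\mapsto x/m^{|I|}$ identify the colimit with all of $\Q$ supply details the paper leaves implicit.
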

Here \( \binom{m}{n} \) is understood as \( 0 \) unless \( 0\leq n\leq
m \). Thus the sum in the formula \eqref{eq:q.coefficient} for \( q(d,i) \) is finite.

\begin{proof}
We need to compute the right side of
\eqref{eq:real.symm.break.as.colimit}, i.e., the colimit of the
symmetry breaking maps $\phi_{n\Z^d\backslash m\Z^d}$ of
\eqref{eq:real.symm.break.maps}. The formula for $\phi_{n\Z^d\backslash
m\Z^d}$ has a simple ``diagonal'' form, given by
\eqref{eq:KO.breaking.formula} of
\cref{lem:real_Kunneth_breaking_formula}. 

\noindent{\itshape The ``strong'' summand}.
There is a unique \( \KO_*(\R) \)-module generator
\begin{equation}\label{eq:strong.basis}
  w_{m,\emptyset}=[w_m]\otimes[w_m]\otimes\dots\otimes [w_m]
\end{equation}
that has degree \( d \). This \(
w_{m,\emptyset} \) generates a \emph{``strong summand''} in \(
\KO_i(\Cstr(m\Z^d)_{\R}) \)
after multiplying it with a coefficient
\[
  \xi\in\KO_{i-d}(\R).
\]
By Bott periodicity, it suffices to consider \( i-d\in\{0,1,\dots,7\}
\) as we may replace \( \xi\in\KO_{i-d}(\R) \) by \(
\xi\beta^{j}\in\KO_{i+8j-d}(\R) \) for any \( j\in\Z \). 
The ``strong'' summand is isomorphic to \( 0 \), \( \Z/2 \) or \( \Z \)
depending
on \( i-d \):   
\begin{itemize}
\item For \( i-d\equiv 1,2\mod 8 \), the coefficient \( \xi \) belongs to
\( \KO_{i-d}(\R)\simeq\Z/2 \), so $w_{m,\emptyset}$ generates a ``strong''
summand isomorphic to $\Z/2$.
\item 
For \( i-d\equiv 0,4\mod 8 \), the coefficient \( \xi \) belongs to \(
\KO_{i-d}(\R)\simeq\Z \), so $w_{m,\emptyset}$ generates a ``strong''
summand in  \(
\KO_{i}(\Cstr(m\Z^d)_{\R}) \) isomorphic to $\Z$.
\item For \( i-d\equiv 3,5,6,7\mod 8 \), we have $\KO_{i-d}(\R)=0$, so
$w_{m,\emptyset}$ does not generate any non-trivial summand in  \(
\KO_{i}(\Cstr(m\Z^d)_{\R}) \).
\end{itemize}
Since the symmetry breaking map \( \phi_{n\Z^d\backslash m\Z^d}
\) is a \( \KO_*(\R) \)-module homomorphism, we have
\[ 
\phi_{n\Z^d\backslash m\Z^d}(\xi\cdot w_{m,\emptyset})=\xi\cdot \phi_{n\Z^d\backslash m\Z^d}(w_{m,\emptyset})=\xi\cdot w_{n,\emptyset},
\]
i.e., it restricts to isomorphisms on the ``strong''
summands. Therefore, the colimit abelian group \(
\colim\KO_i(\Cstr(m\Z^d)_{\R}) \) will contain a
``strong'' summand isomorphic to \( 0 \), \( \Z/2 \) or \( \Z \) according
to \( i-d \mod 8\).
\vspace*{1em}

\noindent{\itshape The ``weak'' summands}.
Now we consider the generators \( w_{m,I} \) where 
\(I\neq \emptyset\). Then \( w_{m,I} \) has degree
$d-|I|$. As before, \( w_{m,I} \) generates a summand in \(
\KO_i(\Cstr(m\Z^d)_{\R}) \) after multiplication with a coefficient \(
\xi\in\KO_{i-d+|I|}(\R) \) and we call this a ``weak'' summand.
Recalling the description of the graded ring $\KO_*(\R)$ in \eqref{eq:KO-ring_of_R}), we have the following cases. 
\begin{itemize}
\item 
For \( i-d+|I|\equiv 1,2\mod 8 \), then \( \KO_{i-d+|I|}(\R)\simeq\Z/2 \)
is generated by a torsion-two element of the form \( \eta\beta^k \)
or \( \eta^2\beta^k \) for some \( k\in\Z \).
So $w_{m,I}$ contributes a summand in \( \KO_i(\Cstr(m\Z^d)_{\R}) \)
isomorphic to \( \Z/2 \). We claim that such a
summand cannot survive in the colimit. 
To this end, we note that \eqref{eq:KO.breaking.formula} gives
\begin{equation}\label{eq:KO.weak.breaking.formula}
\begin{aligned}
  \phi_{n\Z^d\backslash m\Z^d}(\eta\beta^k\cdot
  w_{m,I})&=\eta\beta^k\cdot\phi_{n\Z^d\backslash m\Z^d}(w_{m,I})\\ 
  &=\eta\beta^k\cdot
  (n/m)^{|I|}w_{n,I}=(n/m)^{|I|}\eta\beta^k\cdot w_{n,I}.
\end{aligned}
\end{equation}
Since $|I|>0$ and $\eta\beta^k$ is 2-torsion, 
the right side vanishes when $n=2m$.
Thus \( \phi_{2m\Z^d\backslash m\Z^d} \) restricts to the zero map on the summands generated by \( \eta\beta^k\cdot w_{m,I} \)
and \( \eta^2\beta^k\cdot w_{m,I} \).
By the universal property, these 2-torsion summands must map to zero in the
colimit.

\item 
For \( i-d+|I|\equiv 0,4\mod 8 \), \( \KO_{i-d+|I|}(\R)\simeq\Z \) is
generated by a free element. Thus $w_{m,I}$ contributes a summand in \(
\KO_i(\Cstr(m\Z^d)_{\R}) \) isomorphic to \( \Z \). By \eqref{eq:KO.breaking.formula} of \cref{lem:real_Kunneth_breaking_formula}, the symmetry breaking maps restricted to these $I$-th summands are
\[
\Z\overset{(\frac{n}{m})^{|I|}}\longrightarrow\Z,\qquad m\mid n,
\]
and the colimit is isomorphic to $\Q$ since $|I|\neq \emptyset$; more
precisely,
\begin{equation}\label{eq:basic_mult_colimit}
\begin{tikzcd}[column sep=2cm]
\overbrace{\Z}^{m\text{-th}}\ar[r,"(\frac{n}{m})^{|I|}"]\ar[dr,"\frac{1}{m^{|I|}}"'] & \overbrace{\Z}^{n\text{-th}}\ar[d,"\frac{1}{n^{|I|}}"] \\
& \Q
\end{tikzcd}
\end{equation}
For the full symmetry breaking maps, the number of such $\Q$ summands
appearing in the colimit is equal to the number of non-empty 
multi-indices $I$ satisfying
\[
i-d+|I|\equiv 0,4\mod 8,\qquad 0<|I|\leq d.
\]
If \( i-d\not\equiv0,4\mod 8 \), then every multi-index \( I \) satisfying
\begin{equation}\label{eq:weak.multiindex}
  \abs{I}=d-i\mod 8\quad\text{or}\quad\abs{I}=d-i-4\mod 8
\end{equation}
is non-empty. The number of such multi-indices is given by $q(d,i)$ of
\eqref{eq:q.coefficient}.

If \( i-d\equiv 0,4\mod 8 \), the empty multi-index satisfies
\eqref{eq:weak.multiindex}, but as analyzed previously, $w_{m,\emptyset}$ generates a ``strong''
summand in the colimit, rather than a $\Q$ summand. \qedhere
\end{itemize}
\end{proof}

\subsubsection{The complex case}
\label{sec:complex_d-dim}

The complex case is conceptually the same as the real case, but much
simpler. All the results in
\cref{sec:real_d-dim} before \cref{thm:KO-theory_real}
hold in
the complex case, up to replacing the real \Cst-algebra \( \Cstr(m\Z^d)_\R
\), their KO-theory \( \KO_*(\Cstr(m\Z^d)_\R) \) and the base ring \(
\KO_*(\R) \) by their corresponding complex counterparts. Schochet's K\"unneth Theorem for complex \Cst-algebras
(see \citelist{\cite{Schochet:Kuenneth_formula}
\cite{Blackadar:Operator_algebras}*{Theorem 23.1.3}}) is the same as
\cref{lem:real_Kuenneth_theorem} if we replace \( \KO_*(A) \) and \(
\KO_*(B) \) by \( \K_*(A) \) and \( \K_*(B) \), as well as
\( \KO_*(\R) \) by \( \K_*(\C) \). This is because a \( \K_*(\C) \)-module
is the same data as a \( \Z/2 \)-graded abelian group, see
\cref{rmk:K-ring_of_C}. 
The symmetry breaking formula \eqref{eq:KO.breaking.formula} in the complex
case is now the \( \K_*(\C) \)-module homomorphism
\[
\begin{aligned}
\phi_{n\Z^d\backslash m\Z^d}\colon 
\K_{d-\abs{I}}(\Cstr(m\Z^d)_\C) & \to \K_{d-\abs{I}}(\Cstr(n\Z^d)_\C)\\
w_{m,I} &\mapsto (n/m)^{\abs{I}}w_{n,I}.
\end{aligned}
\]

The complex analogue of the computation in \cref{thm:KO-theory_real} is simpler because the \( \Z/2 \)-graded commutative ring \( \K_*(\C) \) is
much simpler, having no torsion summand. The classes $w_{m,\emptyset}$
generate a strong summand $\Z$, while the remaining $2^d-1$ classes
$w_{m,I}, I\neq \emptyset$ generate ``weak'' summands isomorphic to $\Q$. The
result is summarized below.
\begin{theorem}\label{thm:K-theory_complex_twisted}
Let $\Gamma$ be a free abelian group on $d$ generators. Let \( \mathcal{H}
\) be a untwisted, complex
\( X \)-\( \Gamma \)-module as in \refbasicsetup, and \(
\mathfrak{S} \) its symmetry breaking poset. Let \(
\Cst(X,\mathcal{H})_\C^\mathfrak{S} \) be the corresponding complex
symmetry breaking Roe algebra. Then
\begin{align*}
  \K_i(\Cst(X,\mathcal{H})_\C^{\mathfrak{S}})
  &\simeq\colim_{G\in\mathfrak{Z}^d}\K_i(\Cstr(G)_\C) \\
  &\simeq\begin{cases}
  \Q^{2^{d-1}-1}\oplus\Z, & \text{if \( i-d\equiv 0\mod 2 \);} \\
  \Q^{2^{d-1}}, & \text{if \( i-d\equiv 1\mod 2 \).}
  \end{cases}
\end{align*}
\end{theorem}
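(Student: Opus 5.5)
The plan is to run the real-case argument again with complex coefficients. First I reduce to a sequential colimit over group algebras. \cref{thm:main_computation_theorem} with $\sigma=1$ identifies $\K_i(\Cst(X,\mathcal{H})_\C^{\mathfrak{S}})$ with $\colim_{G\in\mathfrak{Z}^d}\K_i(\Cstr(G)_\C)$. The cofinality statement \cref{lem:cofinal_subseteq_R}, together with \cref{lem:cofinal_isomorphism}, then replaces this by $\colim_{m\in\N}\K_i(\Cstr(m\Z^d)_\C)$, where the connecting maps are $\phi_{n\Z^d\backslash m\Z^d}$ for $m\mid n$.

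Next I compute the groups and the connecting maps. I use $\Cstr(m\Z)_\C\simeq\Cont(\T)$. Its K-theory is free over $\K_*(\C)$ on $[1_{\ell^2(m\Z)}]$ in degree $0$ and $[w_m]$ in degree $1$; this is the complex analogue of \cref{lem:KO_Z_generators}, and it is standard. Apply Schochet's Künneth theorem to $\Cstr(m\Z^d)_\C\simeq\bigotimes_{j=1}^d\Cstr(m\Z)_\C$. It is applicable because $\Cont(\T^d)$ is in the bootstrap class and has torsion-free K-theory. This shows that $\K_*(\Cstr(m\Z^d)_\C)$ is free abelian with basis $w_{m,I}$, $I\subseteq\{1,\dots,d\}$, where $w_{m,I}$ has degree $d-|I| \bmod 2$.

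The step that carries the argument is the naturality of Künneth with respect to the $*$-homomorphism $\bigotimes_j\Ad_{\eta_s}$ and with respect to stabilization, exactly as in \cref{lem:real_Kunneth_breaking_formula}. This reduces $\phi_{n\Z^d\backslash m\Z^d}$ to a tensor product of one-dimensional maps. The proof of \cref{lem:SB_map_1-dim} is a matrix computation and works verbatim over $\C$, giving $[1]\mapsto (n/m)[1]$ and $[w_m]\mapsto[w_n]$. Hence $w_{m,I}\mapsto (n/m)^{|I|}w_{n,I}$. One must also check that the identification $\Cstr(m\Z^d)\simeq\bigotimes\Cstr(m\Z)$ is compatible with $\eta_s$. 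For this I choose the product section $s$ of $n\Z^d\backslash m\Z^d$, so that $\eta_s$ is the tensor product of the one-dimensional $\eta$'s. I expect this bookkeeping to be the only mildly delicate point.

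Finally I take the colimit summand by summand, since the connecting maps are diagonal in the bases. For $I=\emptyset$ the maps are the identity $\Z\to\Z$, giving a strong $\Z$ summand in degree $i\equiv d \pmod 2$. For $I\neq\emptyset$ the maps are multiplication by $(n/m)^{|I|}$ on $\Z$. Through $x\mapsto x/m^{|I|}$, as in \eqref{eq:basic_mult_colimit}, the colimit is $\Q$: every rational number is reached, because the indices $m$ are cofinal under divisibility, and the maps are injective. Since filtered colimits commute with direct sums, it remains to count the nonempty $I$ with $|I|\equiv d-i \pmod 2$. If $d\geq1$ there are $2^{d-1}$ subsets of each parity. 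When $i\equiv d$ one of these is the empty set, which leaves $2^{d-1}-1$ copies of $\Q$ plus $\Z$. When $i\not\equiv d$ there are $2^{d-1}$ copies of $\Q$. This gives the stated formula; for $d=0$ it is read consistently with \cref{tab:K-theory_untwisted}.
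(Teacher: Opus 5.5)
Your proposal is correct and follows the paper's argument. You reduce via \cref{thm:main_computation_theorem} and cofinality of the $m\Z^d$, then use Schochet's K\"unneth theorem together with the one-dimensional computation to get $w_{m,I}\mapsto (n/m)^{|I|}w_{n,I}$, and read off one $\Z$ from $I=\emptyset$ and one $\Q$ for each nonempty $I$ of the right parity. Your explicit choice of a product section, which makes $\eta_s$ factor as a tensor product, is exactly what \cref{lem:real_Kunneth_breaking_formula} uses implicitly.
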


\begin{remark}\label{rem:micro.geom.irrelevant}
As explained in \cref{rmk:equivariant_coarse_equivalence}, the K(O)-theory
of \( \Cst(X,\mathcal{H})^{\mathfrak{S}} \) 
is a (equivariant, coarse-geometric)
invariant of the free abelian group \( \Gamma \). Thus  
\cref{thm:KO-theory_real} applies, in particular, to $X\simeq\R^d$ the
Euclidean $d$-space (continuum models), and to $X=\Gamma\simeq\Z^d$ (tight-binding
models). In the $X\simeq\R^d$ case, we can actually allow the manifold
$\R^d$ to have ``microscopically varying geometry'' which is only
$\Gamma$-invariant. In other words, microscopic geometric details of each fundamental
domain are not important for the result of \cref{thm:KO-theory_real}.
\end{remark}

\section{Weak and strong topological phases}
\label{sec:weak_strong_phases}
In the proof of \cref{thm:KO-theory_real}, the adjectives ``strong'' and
``weak'' referred to the summands in the $\K$-theory of
$\Cstr(\Gamma)_\F\simeq\Cstr(\Z^d)_\F$ arising from generators 
$w_{1,I}$,
according to whether the degree 
$d-|I|$ equals $d$ or not. This is
consistent with terminology in physics
(e.g.~\cites{Fu-Kane-Mele:TI_in_3D,Fu-Kane:Topological_insulators_with_inversion_symmetry}
in the real case and \cites{Prodan-SBaldes:Complex_topological_insulators}
in the complex case).  The strong summands (see
\cref{tab:strong_symmetry breaking_colimits}; the terms \( \sK_i \) and \(
\sKO_i \) will be explained in \cref{sec:weak_vs_strong}) depend only on \(
i-d\mod 2 \) (in the complex case) or \( i-d\mod 8 \) (in the real case).
They are the same abelian groups which appear in the Kitaev's famous
periodic table of topological insulators and
superconductors \cite{Kitaev:Periodic_table}; in fact, this table comprises the homotopy groups of the
stable classical groups, whose periodicity was discovered by Bott
\cite{Bott:Periodicity}.  In \cite{Kitaev:Periodic_table}*{Page 28--29},
the table was argued to either classify topological insulators and
superconductors in Euclidean space with continuous untwisted
$\R^d$-symmetry, or the ``strong'' summand in the case of untwisted
$\Z^d$-symmetry.

\begin{table}[h!]
\centering
\begin{tabular}{c|ccccccccc}
\hline
\diagbox{\( i \)}{\( d \)} & 0 & 1 & 2 & 3 & 4 & 5 & 6 & 7 & 8 \\
\hline
\( {\rm sK}_0 \) & \( \Z \) & 0 & \( \Z \) & 0 & \( \Z \) & 0 & \( \Z \) & 0 & \( \Z \) \\
\( {\rm sK}_1 \) & 0 & \( \Z \) & 0 & \( \Z \) & 0 & \( \Z \) & 0 & \( \Z \) & 0 \\ 
\hline\hline
\diagbox{\( i \)}{\( d \)} & 0 & 1 & 2 & 3 & 4 & 5 & 6 & 7 & 8 \\
\hline
\( {\rm sKO}_0 \) & \( \Z \) & \( 0 \) & \( 0 \) & \( 0 \) & \( \Z \) & \( 0 \) & \( \Z/2 \) & \( \Z/2 \) & \( \Z \) \\ 
\( {\rm sKO}_1 \) & \( \Z/2 \) & \( \Z \) & \( 0 \) & \( 0 \) & \( 0 \) & \( \Z \) & \( 0 \) & \( \Z/2 \) & \( \Z/2 \) \\
\( {\rm sKO}_2 \) & \( \Z/2 \) & \( \Z/2 \) & \( \Z \) & \( 0 \) & \( 0 \) & \( 0 \) & \( \Z \) & \( 0 \) & \( \Z/2 \) \\
\( {\rm sKO}_3 \) & 0 & \( \Z/2 \) & \( \Z/2 \) & \( \Z \) & \( 0 \) & \( 0 \) & \( 0 \) & \( \Z \) & \( 0 \) \\
\( {\rm sKO}_4 \) & \( \Z \)  & \( 0 \) & \( \Z/2 \) & \( \Z/2 \) & \( \Z \) & \( 0 \) & \( 0 \) & \( 0 \) & \( \Z \) \\
\( {\rm sKO}_5 \) & 0 & \( \Z \) & \( 0 \) & \( \Z/2 \) & \( \Z/2 \) & \( \Z \) & \( 0 \) & \( 0 \) & \( 0 \) \\
\( {\rm sKO}_6 \) & 0 & 0 & \( \Z \) & \( 0 \) & \( \Z/2 \) & \( \Z/2 \) & \( \Z \) & \( 0 \) & \( 0 \) \\
\( {\rm sKO}_7 \) & 0 & 0 & 0 & \( \Z \) & \( 0 \) & \( \Z/2 \) & \( \Z/2 \) & \( \Z \) & \( 0 \) \\
\hline
\end{tabular}
\caption{The ``strong'' summand of K(O)-theory of \(
\Cst(\R^d)^{\mathfrak{Z}^d} \)}
\label{tab:strong_symmetry breaking_colimits}
\end{table}

However, it is not clear, a priori, that this notion of strong/weak
summands is independent of the choice of basis of $\Gamma$, that is, an
isomorphism $\Gamma\simeq\Z^d$; or is consistent with the passage to general
finite-index subgroups $G\subseteq \Gamma$. (We only verified this for the
case of $n\Z^d\subseteq m\Z^d$). Moreover, this kind of definition does not
generalize to non-abelian $\Gamma$, which have the further complication
that finite-index subgroups $G,G^\prime\subseteq \Gamma$ need not even be
isomorphic to each other. 

In particular, the notion of a ``strong summand'' relies on choosing a
special family of generator{s} $w_{m,\emptyset}$ for each \(
\Cstr(m\Z^d) \) (see
\eqref{eq:strong.basis}), built from a tensor product
decomposition $\Cstr(m\Z^d)_\F\simeq
\Cstr(m\Z)_\F^{\otimes d}$. It turned out that for each \( m \),
\( w_{m,\emptyset} \) is a free K(O)-theory generator, and that this generator does
not get rescaled under symmetry breaking from $m\Z^d$ to $n\Z^d$ (that
is, \( \phi_{n\Z^d\backslash m\Z^d}(w_{m,\emptyset})=w_{n,\emptyset} \)).
However, several choices are involved in constructing such a
family of K(O)-theory generators, and there are other possible generators
with this scale-invariance property. For example, in
$d=2$, we could pick any non-zero integer $k$ and set, for each $m$,
\[
w^\prime_{m,\emptyset}\defeq mkw_{m,\{1,2\}} + w_{m,\emptyset}.
\]
With respect to the bases $\{w_{m,\{1,2\}}, w^\prime_{m,\emptyset}\}$ for
$\K_0(\Cstr(\Z^2))\simeq\Z^2$, the symmetry breaking map
$\phi_{n\Z^2\backslash m\Z^2}$ is represented by the $2\times 2$ integer
matrix
\[
\begin{pmatrix}
1 & nk \\ 0 & 1
\end{pmatrix}
\begin{pmatrix}
n/m & 0 \\ 0 & 1
\end{pmatrix}
\begin{pmatrix}
1 & -mk \\ 0 & 1
\end{pmatrix}
=\begin{pmatrix}
n/m & 0 \\ 0 & 1
\end{pmatrix},
\]
which is the same as the matrix with respect to the bases $\{w_{m,\{1,2\}},
w_{m,\emptyset}\}$. In other words, the family of K(O)-theory
generators $\{w^\prime_{m,\emptyset}\}_{m\in\N}$ 
is also scale-invariant under symmetry breaking.

The above discussion indicates that ``strong'' K(O)-theory generators
cannot be chosen completely canonically.  We will provide an
\emph{intrinsic} definition of ``weak'' and ``strong'' K(O)-theory groups,
which applies to arbitrary \( \Gamma \).
We will see that ``weak'' elements form a \emph{canonical subgroup}, and
that after modding out the ``weak subgroup'', we obtain a ``strong''
\emph{canonical quotient} with an important universal property. When
$\Gamma\simeq \Z^d$, this
quotient group splits onto a direct summand of the K(O)-theory in a
non-canonical way; the image is a ``strong'' summand  as used in the
physical literature and in the proof of \cref{thm:KO-theory_real}.

\subsection{Weak versus strong via symmetry breaking colimits}
\label{sec:weak_vs_strong}

\medskip
Recall that an element $a$ of an abelian group $A$ is \emph{divisible}, if for each positive integer $n$, there exists $b\in A$ such that $a=nb$.
A subgroup of \( A \) is called divisible if it consists of divisible
elements.

There is a unique maximal divisible subgroup of each abelian group \(
  A \).  Let \( W \) be the maximal divisible subgroup of \( A \), and \(
R\defeq A/W \) be its quotient. Then \( R \) is a \emph{reduced} abelian
group, i.e.~possessing no non-trivial divisible elements. There exists
a splitting
\begin{equation}\label{eq:divisible.splitting}
A\simeq W\oplus R.
\end{equation}
However, the splitting
\eqref{eq:divisible.splitting}, thus the inclusion $R\hookrightarrow
  W\oplus R\simeq A$ as a reduced abelian subgroup of \( A \), is
\emph{not} canonical.

As displayed in \cref{tab:K-theory_untwisted},
K(O)-groups of symmetry breaking Roe algebras may have non-trivial divisible
subgroups, arising from symmetry breaking for ``weak'' summands.
Intuitively, such divisible subgroups should be irrelevant for describing
\emph{quantized} observables. Let us spell out this idea in detail. For
convenience, we will drop the reference to the
$X$-$\Gamma$-module $\mathcal{H}$, and work with complex $\K$-theory.
Exactly the same definitions work for real $\K$-theory.

\begin{definition}\label{dfn:weak_strong}
Under the \refbasicsetup, let $\mathfrak{S}$ denote the symmetry breaking
poset of $\Gamma$. We define the \emph{weak
(symmetry breaking) colimit} ${\rm wK}_i(\Cst(X)^\mathfrak{S})$ to be the
maximal divisible subgroup of $\K_i(\Cst(X)^\mathfrak{S})$, and the
\emph{strong (symmetry breaking) colimit} to be the (reduced) quotient
\begin{equation}\label{eq:strong_K}
{\rm sK}_i(\Cst(X)^\mathfrak{S})\defeq \K_i(\Cst(X)^\mathfrak{S})/{\rm wK}_i(\Cst(X)^\mathfrak{S}).
\end{equation}
For each $G\in \mathfrak{S}$, let $\K_i(\iota_{\mathfrak{S}\backslash
G})\colon \K_i(\Cst(X)^G)\to \K_i(\Cst(X)^\mathfrak{S})$ be the universal map into the symmetry breaking colimit.
We define the \emph{weak subgroup} of $\K_i(\Cst(X)^G)$ to be the preimage
\[
{\rm wK}_i(\Cst(X)^G)\defeq \big(\K_i(\iota_{\mathfrak{S}\backslash G})\big)^{-1}\big({\rm wK}_i(\Cst(X)^\mathfrak{S})\big),
\]
and the \emph{strong quotient} of $\K_i(\Cst(X)^G)$ to be
\begin{equation}\label{eq:strong.quotient}
{\rm sK}_i(\Cst(X)^G)\defeq \K_i(\Cst(X)^G)/{\rm wK}_i(\Cst(X)^G),
\end{equation}
with quotient maps denoted by
\[
q_G\colon \K_i(\Cst(X)^G) \to {\rm sK}_i(\Cst(X)^G).
\]
\end{definition}

\begin{remark}
We adopt the same weak/strong terminology and notation for 
\K-theory of the group \Cst-algebra
$\K_i(\Cstr(G,\sigma))$, based on the directed systems of abelian groups in \cref{cor:exists.equivalent.directed.system}. 
Actually, the weak/strong
dichotomy makes sense without the assumptions of properness of the
$\Gamma$-action and boundedness of the fundamental domain in the
\refbasicsetup. These assumptions are only needed to obtain the natural
isomorphisms (\cref{thm:main_computation_theorem}) between
$\K_i(\Cst(X)^G)$ and $\K_i(\Cstr(G,\sigma))$, therefore between their
weak subgroups and strong quotients as well.
\end{remark}

We denote by \(
\wK_i(\iota_{H\backslash G})\) the restriction of \(
\K_i(\iota_{H\backslash G}) \) to the weak subgroup \( \wK_i(\Cst(X)^G) \). Then \(
\wK_i(\iota_{H\backslash G}) \) maps \( \wK_i(\Cst(X)^G) \) into \(
\wK_i(\Cst(X)^H) \) by
definition.
So we get an $\mathfrak{S}$-shaped diagram whose connecting morphisms
are given by \( \wK_i(\iota_{H\backslash G}) \). 
Also, each \( \K_i(\iota_{H\backslash G}) \) induces a group
homomorphism between the quotients
\begin{equation}\label{eq:strong.diagram}
{\rm sK}_i(\iota_{H\backslash G})\colon {\rm sK}_i(\Cst(X)^G)\to {\rm
sK}_i(\Cst(X)^H).
\end{equation}
These are the connecting morphisms of 
an $\mathfrak{S}$-shaped diagram of strong quotients. Altogether,
we have a short exact sequence of 
$\mathfrak{S}$-shaped diagrams:
for every \( G,H\in\mathfrak{S} \) with \( G\prec H \), 
the following diagram commutes, and its rows are short exact sequences,
\[
\begin{tikzcd}[column sep=0.8cm]
0\ar[r] & {\rm wK}_i(\Cst(X)^G) \ar[d,"\wK_i(\iota_{H\backslash G})"]\ar[r] & \K_i(\Cst(X)^G)\ar[d,"\K_i(\iota_{H\backslash G})"]\ar[r,"q_G"] & {\rm sK}_i(\Cst(X)^G)\ar[d,"{\rm sK}_i(\iota_{H\backslash G})"]\ar[r] & 0\\
0\ar[r] & {\rm wK}_i(\Cst(X)^H)\ar[r]
 & \K_i(\Cst(X)^H)\ar[r,"q_H"]
 & {\rm sK}_i(\Cst(X)^H) \ar[r] & 0.
\end{tikzcd}
\]
It follows from
\cref{lem:direct_colimit_Ab_exact} that
\[
0\longrightarrow \colim_{G\in \mathfrak{S}}{\rm wK}_i(\Cst(X)^G)\overset{\alpha}{\longrightarrow} \overbrace{\colim_{G\in \mathfrak{S}}\K_i(\Cst(X)^G)}^{\simeq \K_i(\Cst(X)^\mathfrak{S})}\overset{q_\mathfrak{S}}{\longrightarrow} \colim_{G\in \mathfrak{S}}{\rm sK}_i(\Cst(X)^G)\longrightarrow 0
\]
is a short exact sequence. 
Here, $\alpha,q_\mathfrak{S}$ are the uniquely determined homomorphisms
coming out of the respective colimits. By construction, the image of
$\alpha$ is ${\rm wK}_i(\Cst(X)^\mathfrak{S})$. Thus, there is a canonical
isomorphism of abelian groups,
\[
\alpha\colon \colim_{G\in \mathfrak{S}}{\rm wK}_i(\Cst(X)^G)\xrightarrow{\sim}{\rm wK}_i(\Cst(X)^\mathfrak{S}).
\]
This also shows that
\[
\colim_{G\in \mathfrak{S}}{\rm sK}_i(\Cst(X)^G)\simeq \K_i(\Cst(X)^\mathfrak{S})/{\rm wK}_i(\Cst(X)^\mathfrak{S})\overset{\eqref{eq:strong_K}}{=}{\rm sK}_i(\Cst(X)^\mathfrak{S}).
\]
This justifies the terminology ``weak/strong symmetry breaking
\emph{colimits}'' in \cref{dfn:weak_strong} of ${\rm sK}_i(\Cst(X)^\mathfrak{S})$ and ${\rm wK}_i(\Cst(X)^\mathfrak{S})$.

\medskip
Note that the notions of weak subgroup and strong quotient of
$\K_i(\Cst(X)^G)$ provided by \cref{dfn:weak_strong} depend only on (the symmetry breaking poset of) $\Gamma$ and $\sigma$. In view of \cref{thm:main_computation_theorem}, the choice of $X$-$\Gamma$-module is also unimportant. In particular, they do not rely on any particular presentation of $G$ or of $\K_i(\Cst(X)^G)\simeq \K_i(\Cstr(G,\sigma))$. 

\begin{example}
Suppose $G_d$ is free abelian on $d$ generators, and we decompose $G_d\simeq
G_{d-1}\oplus G_\perp$ for some rank-$(d-1)$ sublattice $G_{d-1}$ and some
rank-1 lattice $G_\perp$. The inclusion $G_{d-1}\to G$
induces a group homomorphism

\begin{equation}\label{eq:inclusion_lower_sublattice}
\iota\colon \mathrm{KO}_i(\Cstr(G_{d-1})_{\R})\to\mathrm{KO}_i(\Cstr(G_d)_{\R}).
\end{equation}

We may pick any basis for $G_{d-1}$ and $G_\perp$, so that $G_d\simeq
\Z^{d-1}\times \Z$ and \( G_{d-1} \) embeds as the \( \Z^{d-1}
\)-component. Then the computation of \cref{thm:KO-theory_real} shows
that $\iota$ maps
$\mathrm{KO}_i(\Cstr(G_{d-1}))_{\R})\simeq\mathrm{KO}_i(\Cstr(\Z^{d-1})_{\R})$ into the weak subgroup ${\rm wKO}_i(\Cstr(\Z^d)_{\R})\simeq{\rm
wKO}_i(\Cstr(G_d))_{\R})$.

An element $x\in \KO_d(\Cstr(G_d)_{\R})$ is non-trivial in the strong
quotient iff $x$ is equal to some nonzero multiple of $w_{\emptyset}$
(constructed with respect to any choice of basis for $G$) plus some weak
element. Such an $x$ cannot lie in
$\iota(\mathrm{KO}_i(\Cstr(G_{d-1})_{\R}))$ for any choice of lower-rank
lattice $G_{d-1}$. Similarly for the complex case. A coarse-geometric
version of this result can be found in
\cite{Ewert-Meyer:Coarse_geometry}*{Proposition 10}.
\end{example}

\subsection{Universal property of strong colimit}
\label{sec:universal_strong_colimit}
The strong colimit ${\rm sK}_i(\Cst(X)^\mathfrak{S})$ has an important
universal property, relevant for its physical interpretation. Let $R$ be a
reduced abelian group. Typical examples are $R\simeq\Z^n$ or $R\simeq\Z/2$,
the latter case corresponding to existence/non-existence properties such as
that of hosting ``gapless'' Dirac cone edge states in topological
insulators, which may be ``gapped out'' in pairs
\cites{Fu-Kane-Mele:TI_in_3D, Gomi-Thiang:Real_gerbes,Hasan-Moore:3D_TI}. 

We say that a family
\begin{equation}\label{eq:maps.to.reduced} 
\{f_G\colon \K_i(\Cst(X)^G)\to R\}_{G\in \mathfrak{S}}
\end{equation}
of abelian group homomorphisms is \emph{consistent with symmetry breaking}, if
\begin{equation}\label{eq:consistent.symm.break}
f_H\circ\K_i(\iota_{H\backslash
G})=f_G,\quad G,H\in\mathfrak{S},\;G\prec H.
\end{equation}

By the universal property of $\K_i(\Cst(X)^\mathfrak{S})$ as a colimit, 
there is a unique group homomorphism
\[
F\colon \K_i(\Cst(X)^\mathfrak{S})\to R
\]
such that the factorizations $f_G=F\circ \K_i(\iota_{\mathfrak{S}\backslash
G})$ hold for all $G\in\mathfrak{S}$. As $R$ is reduced, this $F$
must be trivial when restricted to the divisible subgroup ${\rm wK}_i(\Cst(X)^\mathfrak{S})$, thus $F={\rm s}F\circ q_\mathfrak{S}$
for a unique group homomorphism
\[
{\rm s}F\colon {\rm sK}_i(\Cst(X)^\mathfrak{S})\to R.
\]
In fact, we have the
commutative diagram,
\begin{equation}\label{eq:weak.strong.colimit.diagram}
\begin{tikzcd}[column sep=1.2cm]
     & & \K_i(\Cst(X)^\mathfrak{S})\ar[ddd,"q_\mathfrak{S}"']\ar[r,"F"] & R\\
    \K_i(\Cst(X)^G)\ar[r,"\K_i(\iota_{H\backslash G})"']\ar[d,"q_G"]\ar[urr,"\K_i(\iota_{\mathfrak{S}\backslash G})"]\ar[urrr,"f_G", bend left=30] & \K_i(\Cst(X)^H)\ar[d,"q_H"]\ar[ur,"\K_i(\iota_{\mathfrak{S}\backslash H})"']\ar[urr,"{\quad\;}f_H"', bend right=16, crossing over] & & \\
    {\rm sK}_i(\Cst(X)^G)\ar[r,"{\rm sK}_i(\iota_{H\backslash G})"]\ar[drr,"\mathrm{s}_{\mathfrak{S}\backslash G}"'] & {\rm sK}_i(\Cst(X)^H)\ar[dr,"\mathrm{s}_{\mathfrak{S}\backslash H}"] & &\\ 
    & & {\rm sK}_i(\Cst(X)^\mathfrak{S})\ar[uuur,"{\rm s}F"', bend right=20] &
\end{tikzcd},
\end{equation}
where $\mathrm{s}_{\mathfrak{S}\backslash G}$ are the universal maps to the strong colimit, and
\[
q_\mathfrak{S}\circ\K_i(\iota_{\mathfrak{S}\backslash G})=\mathrm{s}_{\mathfrak{S}\backslash G}\circ q_G
\]
hold for all $G\in\mathfrak{S}$ by the universal property of $\K_i(\Cst(X)^\mathfrak{S})$.
It follows that the consistent family \eqref{eq:maps.to.reduced} descends canonically to a family of maps
\begin{equation}\label{eq:strong.consistent.family}
\big\{\mathrm{s}f_G\colon {\rm sK}_i(\Cst(X)^G)\to R\big\}_{G\in\mathfrak{S}},\qquad \mathrm{s}f_G \defeq {\rm s}F\circ\mathrm{s}_{\mathfrak{S}\backslash G},
\end{equation}
which is consistent with 
symmetry breaking in the sense that 
\begin{equation}\label{eq:consistent.symm.break.strong}
{\rm s}f_H\circ{\rm sK}(\iota_{H\backslash G})={\rm s}f_G,\quad 
G,H\in\mathfrak{S},\;G\prec H.
\end{equation}
In other words, the diagrams
\[
\begin{tikzcd}[column sep=10]
    \K_i(\Cst(X)^G)\ar[dr,"q_G"]\ar[rrrr,"\K_i(\iota_{H\backslash G})"]\ar[ddrr,"f_G"',bend right=25] & {} & {} & {} & \K_i(\Cst(X)^H)\ar[dl,"q_H"']\ar[ddll,"f_H",bend left=25]\\
    {} & {\rm sK}_i(\Cst(X)^G)\ar[dr,"{\rm s}f_G"]\ar[rr,"{\rm sK}_i(\iota_{H\backslash G})"] & {} & {\rm sK}_i(\Cst(X)^H)\ar[dl,"{\rm s}f_H"'] & {}\\
    {} & {} & R & {} & {}
\end{tikzcd}
\]
commute whenever $G,H\in\mathfrak{S}$ satisfy \( G\prec H \).

\begin{remark}
In physics, an insulating
system with $\Gamma$-symmetry has a topological invariant in
$\K_i(\Cst(X)^\Gamma)$. For a reduced abelian group $R$, one would like the
system's $R$-valued ``quantized'' property to be
controlled by the invariant $\K_i(\Cst(X)^\Gamma)$ via some homomorphism
$f_\Gamma\colon \K_i(\Cst(X)^\Gamma)\to R$. For example, the
integer-quantized Hall conductance is predicted from
$\K_0(\Cst(X)^{\Gamma})$ where \( \Gamma\simeq\Z^2
\) and \( X=\R^2 \) or \( X=\Z^2 \); 
see \cref{sec:Landau_level}, \cref{rem:physics.Chern.Hall} for an explicit
Landau operator model. A more abstract example is $R=\K_i(\Cst(X))$ with the map $\K_i(\Cst(X)^\Gamma)\to \K_i(\Cst(X))$ induced by inclusion, which will be studied in Section \ref{sec:comparison.with.non.equivariant}.

We are free to regard
the same system as possessing only $G$ symmetry, for any choice of
finite-index subgroup $G\subseteq \Gamma$. So we must actually have a whole
\emph{family} \eqref{eq:maps.to.reduced} of homomorphisms to $R$, which is
consistent with symmetry breaking in the sense of
\eqref{eq:consistent.symm.break}. Then the commutative diagram
\eqref{eq:weak.strong.colimit.diagram} says that only the strong quotients
contribute to quantized properties, consistently in the sense of \eqref{eq:consistent.symm.break.strong}. In contrast, the weak summands are
irrelevant. It is therefore of interest to understand when strong quotients are non-trivial.
\end{remark}

\subsection{Strong-to-coarse comparison maps}
\label{sec:comparison.with.non.equivariant}

Non-equivariant Roe algebras were introduced as observable \Cst-algebras
for topological insulators by Kubota
\cite{Kubota:Controlled_topological_phases}, and Ewert--Meyer
\cite{Ewert-Meyer:Coarse_geometry} and have already found many
applications
\cites{Ludewig-Thiang:Gapless_hyperbolic_half-plane,Ludewig-Thiang:Large-scale_obstructs_localization,Ludewig-Thiang:Quantization_coarse_cohomology,Kubota-Ludewig-Thiang:Delocalized_helical_surfaces,Kordyukov-Manuilov:Vanishing_theorem_magnetic,Rossi-Panati:Algebraic_localization_Roe_triviality,Ludewig-Thiang:Large-scale_quantization_1}.
Related ideas about coarse-graining and topological phases had also been
circulated earlier,
see~\citelist{\cite{Gross-Nesme-Vogts-Werner:Quantum_walks}*{Section 8.2}
\cite{Kitaev:Anyons}*{Appendix C}}, as well as T-duality ideas in \cite{Mathai-Thiang:T-duality}. As sketched in \cite{Witten:Lectures_on_topological_phases}*{Section 2.4--2.5}, for the integer quantized Hall conductance, the lattice-dependent TKNN--Kubo--Chern number \cite{TKNN:Quantized_Hall_conductance} approach (see \cref{rem:physics.Chern.Hall}) should be invariant under coarse-graining, so as to make contact with an effective Chern--Simons theory model in the large-scale limit. We also mention that symmetry breaking from a group $G$
to an arbitrary (possibly trivial) subgroup was considered in the context
of equivariant coarse homology theories in
\cite{Bunke-Ludewig:Breaking_symmetries}, with applications to spectral
theory of invariant differential operators on manifolds with boundary.

The general idea is that topological phases described by K-theory of 
Roe algebras remain protected even after
perturbation by disorder of a rather general nature. 
Due to this reason, Ewert and Meyer called such topological phases
``strong''. We prefer to call them \emph{coarse} here, so as to
maintain an a priori distinction from the \emph{strong} (\( G\)-invariant) phases in the sense of our \cref{dfn:weak_strong}.

\medskip
Let \( \mathcal{H} \) be an ample \( X \)-\( \Gamma \)-module and \(
\mathfrak{S} \) the symmetry breaking poset of \( \Gamma \). For each \(
G\in\mathfrak{S} \),
forgetting \( G \)-invariance gives
an inclusion of \Cst-subalgebras of \( \Bdd(\mathcal{H}) \), 
\[ 
  \iota_{1\backslash G}\colon
  \Cst(X,\mathcal{H})^G\hookrightarrow \Cst(X,\mathcal{H}).
\]
If \( G\prec H \), then clearly \( \iota_{1\backslash
G}=\iota_{1\backslash H}\circ\iota_{H\backslash G} \). So the family of group
homomorphisms 
\begin{equation}\label{eq:equiv.to.coarse}
\K_i(\iota_{1\backslash G})\colon \K_i(\Cst(X,\mathcal{H})^G)\to \K_i(\Cst(X,\mathcal{H})),\qquad G\in\mathfrak{S},
\end{equation}
is compatible with symmetry breaking in the sense of
\eqref{eq:consistent.symm.break}. Provided that the abelian group
$R\defeq\K_i(\Cst(X,\mathcal{H}))$ is known to be reduced (which
is typically the case), the
maps \eqref{eq:equiv.to.coarse} descend to maps
\begin{equation}\label{eq:strong.to.coarse.comparison.maps}
{\rm sK}_i(\iota_{1\backslash G})\colon {\rm sK}_i(\Cst(X,\mathcal{H})^G)\to \K_i(\Cst(X,\mathcal{H})),\qquad G\in\mathfrak{S}
\end{equation}
consistent with symmetry breaking.

\begin{definition}\label{dfn:strong.to.coarse.comparison}
Let $\mathcal{H}$ be an ample $X$-$\Gamma$ module satisfying the
\refbasicsetup.
Assume that $\K_i(\Cst(X,\mathcal{H}))$ is a reduced abelian group. 
\begin{enumerate}
\item The maps \eqref{eq:strong.to.coarse.comparison.maps} are called the
\emph{strong-to-coarse comparison maps}. 
\item The universal map
\begin{equation}\label{eq:strong.colimit.to.coarse.comparison.map}
\sK_i(\iota_{1\backslash \mathfrak{S}})\colon \sK_i(\Cst(X,\mathcal{H})^\mathfrak{S})\to \K_i(\Cst(X,\mathcal{H}))
\end{equation}
is called the \emph{strong-colimit-to-coarse comparison map}.
\end{enumerate}
\end{definition}

Our goal is to analyze the comparison maps of \cref{dfn:strong.to.coarse.comparison}, particularly, their surjectivity. 
Note that the comparison maps
\eqref{eq:strong.to.coarse.comparison.maps} and
\eqref{eq:strong.colimit.to.coarse.comparison.map}
make sense whether or not $\mathcal{H}$ is ample. For our analysis below,
it is convenient to use an ample $\mathcal{H}$ in the target Roe algebra
$\Cst(X,\mathcal{H})$, so that it is independent of the choice of
$\mathcal{H}$ (see \cref{rmk:equivariant_coarse_equivalence}).

Unfortunately, there
are very few methods to compute the K(O)-theory of Roe algebras.  The
most powerful general method is probably the coarse Baum--Connes conjecture
\cite{Higson-Roe:Coarse_Baum-Connes}, which has been verified for large
classes of spaces $X$.  For example, it has been employed to prove the
gaplessness of certain magnetic Hamiltonians on hyperbolic half-planes
\cite{Ludewig-Thiang:Gapless_hyperbolic_half-plane} and helical surfaces
\cite{Kubota-Ludewig-Thiang:Delocalized_helical_surfaces}.  In
\cref{sec:comparison.strong.coarse} below, we will provide a new kind of
application of the Baum--Connes conjecture.

\subsubsection{Mal'cev completion}
\label{sec:Malcev_completion}

Recall from \cref{thm:main_computation_theorem} that the colimit
$\K_i(\Cst(X,\mathcal{H})^\mathfrak{S})$ is controlled by the discrete
group $\Gamma$ (and 2-cocycle $\sigma$), whereas the choice of $X$ is
secondary. With this flexibility in $X$ in mind, let us recall a well-known
procedure to construct a \emph{Lie group} $\Gamma_\R$ from $\Gamma$, valid
for a fairly broad range of $\Gamma$. The homogeneity of the choice
$X=\Gamma_\R$ will, in turn, be exploited for computation.
 
Let \( \Gamma \) be a finitely generated, torsion free, nilpotent discrete
group. Then it is known that \( \Gamma \) is isomorphic to a cocompact lattice
in a connected, simply connected, nilpotent Lie group \( \Gamma_\R \), see
\cite{Raghunathan:Discrete_subgroups_Lie_groups}*{Theorem 2.18}.  The Lie
group \( \Gamma_\R \) is called the (real) \emph{Mal'cev completion} of \(
\Gamma \). We recall its construction, following
\cite{Dekimpe:Intra-nilmanifolds}*{Section 2}.

Under the above assumptions on
$\Gamma$, there exists a finite decreasing series of normal
subgroups of \( \Gamma \),
\[ 
  \Gamma=\Gamma_1\supsetneqq \Gamma_2\supsetneqq \dots\supsetneqq
  \Gamma_k\supsetneqq \Gamma_{k+1}=1,
\]
such that \( \Gamma_i/\Gamma_{i+1}\simeq\Z \) and \(
[\Gamma,\Gamma_{i}]\subseteq \Gamma_{i+1} \). 
A \emph{Mal'cev basis} of \( \Gamma \) for such a series is 
a set of elements
\[ 
  \{a_1,\dots,a_k\},\quad a_i\in \Gamma_i,
\]
such that for each \( i \), 
the image of \( a_i \) in \( \Gamma_i/\Gamma_{i+1}\simeq\Z \) is
a generator. Fix a Mal'cev basis of \( \Gamma \). Then every element \(
a\in\Gamma \) can be uniquely expressed as
\begin{equation}\label{eq:Malcev_basis_expansion}
  a=a_1^{x_1}a_2^{x_2}\dots a_k^{x_k}.
\end{equation}
where \( x_1,\dots,x_k\in\Z \).

Moreover, there exist polynomials with \( \Q \)-coefficients, 
\begin{gather*} 
p_1=0,\quad q_1=0,\\
p_i(x_1,x_2,\dots,x_{i-1},y_1,y_2,\dots,y_{i-1}),\quad
q_i(x_1,x_2,\dots,x_{i-1},z),\quad 2\leq i\leq k,
\end{gather*}
such that for $x_1,x_2, \dots, x_k,y_1,\dots,y_k,z\in \Z$ we have the
following equations
\begin{equation}\label{eq:Malcev_basis_polynomial}
\begin{gathered}
\prod_{i=1}^ka_i^{x_i}\prod_{i=1}^ka_i^{y_i}=\prod_{i=1}^ka_i^{x_i+y_i+p_i(x_1,x_2,\dots,x_{i-1},y_1,y_2,\dots,y_{i-1})},\\
(\prod_{i=1}^ka_i^{x_i})^z=\prod_{i=1}^ka_i^{zx_i+q_i(x_1,x_2,\dots,x_{i-1},z)}.
\end{gathered}
\end{equation}

\begin{definition}\label{def:real_Malcev_completion}
The \emph{real Mal'cev completion} of \( \Gamma \), denoted by \( \Gamma_\R
\), is the group generated by elements of the form
\eqref{eq:Malcev_basis_expansion} for \( x_1,\dots,x_k\in\R \), and
products given by \eqref{eq:Malcev_basis_polynomial} for the polynomials \(
p_i\) determined by \( \Gamma \) and its Mal'cev basis.
\end{definition}

The simplest example is \( (\Z^d)_\R\simeq\R^d \). The basic non-abelian
example is the (3-dimensional) \emph{integer Heisenberg group}
\begin{equation}\label{eq:Heisenberg_integer_lattice}
\HH\defeq \begin{pmatrix}1 & \Z & \Z \\ 0 & 1 & \Z \\ 0 & 0 & 1\end{pmatrix}=\left\{\begin{pmatrix}1 & x & z \\ 0 & 1 & y \\ 0 & 0 & 1\end{pmatrix}\,:\, x,y,z\in \Z\right\},
\end{equation}
which has $\HH_\R$ being the Heisenberg Lie group
\[
\HH_\R\defeq \begin{pmatrix}1 & \R & \R \\ 0 & 1 & \R \\ 0 & 0 & 1\end{pmatrix}=\left\{\begin{pmatrix}1 & x & z \\ 0 & 1 & y \\ 0 & 0 & 1\end{pmatrix}\,:\, x,y,z\in \R\right\}.
\]

\subsubsection{Strong-to-coarse comparison map}\label{sec:comparison.strong.coarse}

For the twisted case of the next result, we need $\sigma$ to be a magnetic
2-cocycle, as defined in \cref{dfn:magnetic.2.cocycle} of
\cref{sec:magnetic_translations}.

\begin{theorem}\label{thm:existence_strong_phase}
Let \( \Gamma \) be a finitely generated, torsion
free, nilpotent discrete group, and \( \mathfrak{S} \) be its symmetry breaking poset. 
Let \( \mathcal{H} \) be an ample complex $\sigma$-twisted \( X \)-\(
\Gamma \)-module satisfying \refbasicsetup. 
We assume that $\sigma$ is a magnetic 2-cocycle for $(\Gamma_\R,\Gamma)$.
Then for every \( G\in\mathfrak{S} \), 
the strong-to-coarse comparison maps
\eqref{eq:strong.to.coarse.comparison.maps} are surjective for every degree \( i \).
\end{theorem}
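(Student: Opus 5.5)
Since $q_G\colon \K_i(\Cst(X,\mathcal{H})^G)\to \sK_i(\Cst(X,\mathcal{H})^G)$ is surjective and $\K_i(\iota_{1\backslash G})=\sK_i(\iota_{1\backslash G})\circ q_G$, surjectivity of the strong-to-coarse comparison map \eqref{eq:strong.to.coarse.comparison.maps} is equivalent to surjectivity of the plain forgetful map $\K_i(\iota_{1\backslash G})\colon \K_i(\Cst(X,\mathcal{H})^G)\to\K_i(\Cst(X,\mathcal{H}))$. This holds provided the target is reduced; I will check reducedness along the way. The plan is therefore to exhibit explicit preimages of generators of $\K_i(\Cst(X,\mathcal{H}))$.

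\textbf{Step 1: move to the Mal'cev completion.} By \cref{rmk:equivariant_coarse_equivalence}, both $\K_i(\Cst(X,\mathcal{H})^G)$ and $\K_i(\Cst(X,\mathcal{H}))$ are unchanged if we replace $X$ by any $G$-equivariantly coarsely equivalent space with an ample module carrying a cohomologous cocycle. Moreover, $G\subseteq\Gamma$ is itself a cocompact lattice in $\Gamma_\R$. I therefore take $X=\Gamma_\R$ with a left-invariant Riemannian metric, and $\mathcal{H}=L^2(\Gamma_\R,S)\otimes\ell^2(\N)$, where $S$ is the spinor bundle; $\Gamma_\R$ is parallelizable, hence spin. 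The $\sigma$-twisted $G$-action will be realised by magnetic translations. This is where the hypothesis that $\sigma$ is a magnetic $2$-cocycle for $(\Gamma_\R,\Gamma)$ enters: there is a left-invariant (under $\Gamma$) exact $2$-form $B=dA$ on $\Gamma_\R$ whose magnetic translations give the cocycle $\sigma$.

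\textbf{Step 2: compute the coarse side.} $\Gamma_\R$ is a simply connected nilpotent Lie group, hence of finite asymptotic dimension, so the coarse Baum--Connes conjecture holds for it. It is also uniformly contractible. Hence
\[\K_i(\Cst(\Gamma_\R))\simeq \K_i^{\mathrm{lf}}(\Gamma_\R)\simeq \K_{i-n}(\C),\qquad n=\dim\Gamma_\R,\]
using Poincaré duality for the spin manifold $\Gamma_\R\simeq\R^n$. This group is $\Z$ when $i\equiv n\pmod 2$ and $0$ otherwise, so it is reduced and surjectivity is trivial in the odd case. In the case $i\equiv n$, the generator is the coarse index of the spin Dirac operator $\slashed{D}$ on $\Gamma_\R$, i.e.\ the image of its locally finite fundamental class.

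\textbf{Step 3: lift the generator to an equivariant index.} Twist $\slashed{D}$ by the connection $d+iA$ to get $\slashed{D}_A$. It commutes with the $\sigma$-twisted magnetic $G$-action, and its $G$-equivariant index defines a class
\[\ind_G(\slashed{D}_A)\in\K_n(\Cst(X,\mathcal{H})^G)\simeq\K_n(\Cstr(G,\overline{\sigma})).\]
Forgetting equivariance sends this to $\ind(\slashed{D}_A)\in\K_n(\Cst(\Gamma_\R))$, because the coarse index is constructed from functions of the operator that already lie in the equivariant Roe algebra. Since $B$ is exact, the family $d+itA$, $t\in[0,1]$, is a homotopy of finite-propagation perturbations. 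Non-equivariantly this homotopy is harmless, as only the $G$-action needed the magnetic twist. It gives $\ind(\slashed{D}_A)=\ind(\slashed{D})$, which is the generator from Step 2. Since the same construction works for every $G\in\mathfrak{S}$, the claim follows for all $G$.

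\textbf{Main obstacle.} The delicate point is the homotopy invariance in the twisted case. One must realise the $\sigma$-twisted and untwisted modules inside the same ample non-equivariant Roe algebra, so that the homotopy $t\mapsto \slashed{D}_{tA}$ stays within a single $\Cst(\Gamma_\R,\mathcal{H})$. Alternatively, one can note that multiplication by the gauge function $e^{i\int A}$, with the integral taken from a base point, is a unitary conjugating $\slashed{D}_A$ to $\slashed{D}$. It is not of finite propagation, but it preserves supports, hence it preserves the Roe algebra and acts trivially on $\K$-theory by \cite{Willett-Yu:Higher_index_theory}*{Proposition 2.7.5}. I would try this gauge-transformation argument first.
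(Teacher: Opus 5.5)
\textbf{There is a genuine gap.} Your overall plan matches the paper's. You reduce to surjectivity of $\K_i(\iota_{1\backslash G})$ onto the reduced group $\K_i(\Cst(\Gamma_\R))$, which is $\Z$ or $0$. You then pass to the Mal'cev completion, invoke coarse Baum--Connes, and lift the Dirac index by a magnetic twist that commutes with the $\sigma$-projective action.

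The gap is in the one step that carries content: showing that the forgotten class $\ind(\slashed{D}_A)\in\K_n(\Cst(\Gamma_\R))$ is a generator. Neither of your justifications works.
\begin{itemize}
\item \emph{The gauge argument is false.} The function $x\mapsto\int_{\mathcal{O}}^x A$ is path-independent only if $dA=0$. Here $dA=B$ is the magnetic field, which is nonzero in general; exactness of $B$ does not help. In the Landau example of \cref{sec:Landau_level}, $\slashed{D}_b$ has an infinite-dimensional kernel while the free Dirac operator on $\R^2$ has none. So no unitary at all conjugates one into the other.
\item \emph{The homotopy argument is unjustified.} The family $\slashed{D}_{tA}=\slashed{D}+t\,c(A)$ is an \emph{unbounded} perturbation, since the potential grows linearly (cf.\ \eqref{eq:standard.connection.form}). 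Consequently $t\mapsto\chi(\slashed{D}_{tA})$ is not norm-continuous. For a constant field this already fails at $t=0$, as one sees by conjugating with translations by distance of order $1/t$. Homotopy invariance of the coarse index therefore does not apply.
\end{itemize}
The obstacle is not fitting the operators into one Roe algebra, since they all act on the same $L^2(\Gamma_\R,S)$. It is this lack of continuity.

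\textbf{How to close the gap.} The point is that $\slashed{D}_A$ and $\slashed{D}$ differ only by a zeroth-order term, so they have the same principal symbol, and the relevant invariants are local. The paper argues in K-homology:
\begin{itemize}
\item $[\slashed{D}_A]=[\slashed{D}]$ in $\K_n(\Gamma_\R)$, by principal-symbol invariance.
\item $[\slashed{D}]$ generates, by concordance with the Euclidean Dirac operator (your Poincar\'e duality argument also works).
\item The assembly map, an isomorphism here, sends $[\slashed{D}_A]$ to $\ind(\slashed{D}_A)$.
\end{itemize}
Equivalently, you can argue at the Roe-algebra level. Take a normalizing function $\chi$ with compactly supported Fourier transform. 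By finite propagation speed, $f\chi(\slashed{D}_A)$ for $f\in\Cont_c(\Gamma_\R)$ only sees $A$ on a bounded neighbourhood of $\supp f$, where $A$ may be cut off to a bounded form. Hence $\chi(\slashed{D}_A)-\chi(\slashed{D})$ is locally compact with finite propagation, so it lies in $\Cst(\Gamma_\R)$, and the two boundary classes agree.

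\textbf{A smaller point.} On $L^2(\Gamma_\R,S)\otimes\ell^2(\N)$ the operator $(\chi(\slashed{D}_A)^2-1)\otimes 1$ is not locally compact. Construct the index on $L^2(\Gamma_\R,S)$, which is already ample, or transport it along the equivariant isometry $\xi\mapsto\xi\otimes e_0$.
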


\begin{proof}
For any $\sigma$-twisted $X$-$\Gamma$ module $\mathcal{H}$ in the \refbasicsetup, \cref{thm:main_computation_theorem} gives a natural identification of the symmetry breaking directed systems
\begin{align*}
\K_i(\iota_{H\backslash G})\colon \K_i(\Cst(X,\mathcal{H})^G)&\to \K_i(\Cst(X,\mathcal{H})^H),\\
\phi_{H\backslash G}\colon \K_i(\Cstr(G,\overline{\sigma}))&\to \K_i(\Cstr(H,\overline{\sigma})).
\end{align*}
Also, the choice of ample $\mathcal{H}$ is unimportant, in view of \cref{rmk:equivariant_coarse_equivalence}. So, without loss of generality, we may take $X=\Gamma_\R$ to be the Mal'cev completion equipped with a left-invariant Riemannian metric. We will also take $\mathcal{H}$ to be the Hilbert space of $L^2$-sections of $\mathcal{S}\otimes \mathcal{L}$, where $\mathcal{S}\to \Gamma_\R$ is the spinor bundle, and $\mathcal{L}\to\Gamma_\R$ is a Hermitian line bundle whose covariant derivative commutes with a $\sigma$-projective representation of $\Gamma$ (as described in \cref{sec:magnetic_translations}).

As \( \Gamma_\R \) is a connected, simply connected nilpotent Lie group, it is diffeomorphic to $\R^d$ (\cref{lem:nilpotent.is.Rn}), therefore,
\[
  \K_i(\Gamma_\R)\simeq\K_i(\R^d)\simeq
  \begin{cases}
  \Z, & \text{if \( i-d\equiv 0\mod 2 \);} \\
  0, & \text{if \( i-d\equiv 1\mod 2 \).} 
  \end{cases}
\]
The assembly map of the Baum--Connes conjecture \cite{Higson-Roe:Coarse_Baum-Connes}
\begin{equation}\label{eq:BC.conjecture}
  \mu\colon \K_i(\Gamma_\R)\to\K_i(\Cst(\Gamma_\R,\mathcal{H})),
\end{equation}
relates $\K_i(\Gamma_\R)$ to $\K_i(\Cst(\Gamma_\R,\mathcal{H}))$.
It follows from \cref{cor:contractible_nilpotent_coarse_Baum-Connes} that
$\mu$ is an isomorphism $\Z\xrightarrow{\sim}\Z$ if $i-d\equiv 0\mod 2$, and \( 0\xrightarrow{\sim}0 \) if $i-d\equiv 1\mod 2$. 

Surjectivity of \( \sK_{i}(\iota_{1\backslash G}) \) is thus automatic for
\( i-d\equiv 1\mod 2 \). It remains to prove surjectivity for \( i=d \). To
this end, we describe a generator for each side of
\eqref{eq:BC.conjecture}, as follows. It is well-known that the spin Dirac
operator on Euclidean $\R^d$ represents a generator of
$\K_d(\R^d)\simeq\Z$.  This follows from, say, an induction on the
dimension \( d \) using \cite{Higson-Roe:Analytic_K-homology}*{Proposition
11.2.5}.  As $\R^d$ is contractible, there is a unique (trivial) spin
structure, and by interpolating between the Euclidean metric and the metric
of \( \Gamma_\R \), we get a concordance of their spinor bundles in the
sense of \cite{Higson-Roe:Analytic_K-homology}*{Definition 11.2.6}, and by
\cite{Higson-Roe:Analytic_K-homology}*{Proposition 11.2.7}, the spin Dirac
operator $D_\mathcal{S}$ for $\Gamma_\R$ represents the same $\K$-homology
class as the spin Dirac operator on Euclidean $\R^d$. Thus \(
\K_d(\Gamma_\R)\simeq\Z \) is also generated by the K-homology class \(
[D_\mathcal{S}] \).  We may replace $D_\mathcal{S}$ by the twisted spin
Dirac operator $D_{\mathcal{S}\otimes\mathcal{L}}$, which has the same
principal symbol so it represents the same $\K$-homology class as the
untwisted one (e.g., \cite{Higson-Roe:Analytic_K-homology}*{Exercise
10.9.5}). By definition, the assembly map \( \mu \) sends
$[D_{\mathcal{S}\otimes\mathcal{L}}]$ to its \emph{coarse index} \(
\ind_\mathrm{c}(D_{\mathcal{S}\otimes\mathcal{L}})
\in\K_d(\Cst(\Gamma_\R,\mathcal{H})) \), so the latter is a generator
as well. 

As the metric on $\Gamma_\R$ is left-invariant, the (proper) translation action $\Gamma\curvearrowright \Gamma_\R$ is isometric and orientation-preserving, and it lifts to a unitary representation $T$ of $\Gamma$ on
$L^2(X;\mathcal{S})$ commuting with the (Levi--Civita) spin connection, thus with $D_\mathcal{S}$. The twisted $D_{\mathcal{S}\otimes\mathcal{L}}$ is obtained from $D_\mathcal{S}$ by tensoring the spin connection with the connection $\nabla$ on $\mathcal{L}$. If we tensor $T_g$ with the unitary magnetic translation operators $U_g$ of \cref{sec:magnetic_translations}, we get (after adjusting phases of $U_g$ if necessary) a $\sigma$-projective unitary representation of $\Gamma$ on $L^2(X;\mathcal{S}\otimes\mathcal{L})$, which commutes with $D_{\mathcal{S}\otimes\mathcal{L}}$. We remark that $\mathcal{H}=L^2(X;\mathcal{S}\otimes\mathcal{L})$ becomes an ample $\sigma$-twisted $\Gamma_\R$-$\Gamma$-module in this way.

Now, $\Gamma_\R$ is a complete Riemannian spin manifold, so
$D_{\mathcal{S}\otimes\mathcal{L}}$ is essentially self-adjoint on the
(dense, $\Gamma$-invariant) subspace of compactly-supported smooth sections
of $\mathcal{S}\otimes\mathcal{L}$ (see
\cite{Wolf:Essential_self-adjointness}). The coarse index
class $\ind_\mathrm{c}(D_{\mathcal{S}\otimes\mathcal{L}})$ has an operator
representative in $\Cst(\Gamma_\R,\mathcal{H})$ constructed by
continuous functional calculus
\cite{Roe:Index_theory_coarse_geometry}*{Chapter 3}. This representative
commutes with the twisted $\Gamma$-representation, thus it actually lies in
$\Cst(\Gamma_\R,\mathcal{H})^\Gamma$. Therefore,
$\ind_\mathrm{c}(D_{\mathcal{S}\otimes\mathcal{L}})$ lifts to a
$\Gamma$-equivariant class in
\( \K_i(\Cst(\Gamma_\R,\mathcal{H})^\Gamma) \). Likewise, there is also a
lift to \( \K_i(\Cst(\Gamma_\R,\mathcal{H})^G) \) for every
$G\in\mathfrak{S}$.

We have verified that the maps
\[
\K_i(\iota_{1\backslash G})\colon \K_i(\Cst(\Gamma_\R,\mathcal{H})^G)\to \K_i(\Cst(\Gamma_\R,\mathcal{H}))\simeq\Z,\qquad G\in\mathfrak{S},
\]
are surjective. As $\K_i(\Cst(\Gamma_\R,\mathcal{H}))\simeq\Z$ is a reduced group, these maps descend to surjective comparison maps,
\[
{\rm sK}_i(\iota_{1\backslash G})\colon {\rm
sK}_i(\Cst(\Gamma_\R,\mathcal{H})^G)\to
\K_i(\Cst(\Gamma_\R,\mathcal{H}))\simeq\Z,\qquad G\in\mathfrak{S}.\qedhere
\]
\end{proof}

\begin{remark}\label{rem:other.examples.surjective.comparison}
The proof of \cref{thm:existence_strong_phase} also works if $\Gamma$ is a
lattice in some contractible Lie group $Y$ (with left-invariant metric) for
which the assembly map is an isomorphism. It also works for the surface
groups $\Gamma_g$ (fundamental group of compact orientable surface of genus
$g\geq 2$) acting on the (negatively-curved) hyperbolic plane, because in
this case, the coarse assembly map is an isomorphism $\Z\to\Z$ in degree
$i=0$ \cite{Higson-Roe:Coarse_Baum-Connes}*{Corollary 7.4}.

However, in view of
\cref{rmk:equivariant_coarse_equivalence,rem:micro.geom.irrelevant}, we
prefer to provide a concrete,
yet fairly general criterion on the abstract group $\Gamma$ without a
priori knowledge of a ``nice'' manifold $X$ on which $\Gamma$ acts. 
\end{remark}

\begin{remark}
In \cref{sec:Euclidean.magnetic.translations}, it is shown that \emph{any} 2-cocycle $\sigma$ on $\Gamma\simeq\Z^d$ is a magnetic 2-cocycle for $(\R^d,\Z^d)$.

For general $\Gamma$, the 2-cocycles on $\Gamma$ are harder to understand. We do not know general criteria for a 2-cocycle on $\Gamma$ to be a magnetic 2-cocycle. Certainly, it is necessary that $\sigma$ can be deformed to a cohomologically trivial 2-cocycle, since the connection 1-form can be taken to 0. 
\end{remark}

\begin{remark}
For the basic non-abelian example, $\Gamma=\HH$, it follows immediately
from \cref{thm:existence_strong_phase} that all the strong quotients
\eqref{eq:strong.quotient} contain a $\Z$-summand, identified with
$\K_1(\Cst(\HH_\R))$ by the strong-to-coarse comparision maps.
In \cref{sec:Heisenberg}, we will carry out a more detailed
computation, and show, in particular, that these strong quotients are
isomorphic to $\Z$. Thus, we may say that strong topological invariants
coincide with coarse ones in this case.
\end{remark}

\begin{remark}\label{rmk:untwisted_real_abelian_strong_to_coarse}
\cref{thm:existence_strong_phase} still holds if \( \Gamma \) is a
finitely generated abelian group, and \( \mathcal{H} \) is an ample,
untwisted real \( X \)-\( \Gamma \)-module satisfying \refbasicsetup. This
result relies on a real version of the Kasparov product
\cite{Kasparov:Equivariant_KK-theory} and follows essentially from
\cite{Ewert-Meyer:Coarse_geometry}*{Theorem 7 and Corollary 5}.

We identify \( \Gamma \) with \( \Z^d \) by choosing a basis.
There is a commutative diagram of KO-theory groups for each \( i \) and
each \( d \) (compare the commutative diagram of
\cite{Ewert-Meyer:Coarse_geometry}*{Theorem 7}):
\[ \begin{tikzcd}
\KO_i(\Cst(\R^d)^{\Z^d}_\R) \arrow[rr, "\alpha_{i,d}"] 
\arrow[rd, "\KO_{i}(\iota_{1\backslash \Z^d})"'] 
&& \KO_{i-d}(\R) \\ 
&  \KO_i(\Cst(\R^d)_\R) \arrow[ru, "\alpha'_{i,d}"'] 
\end{tikzcd} \]
Here, \( \alpha_{i,d} \) is induced by the ``Dirac element''
of the real \Cst-algebra 
\( \Cstr(\Z^d)_\R\simeq\Cont(\T^d,\tau_{\T^d})
\) in Kasparov's bivariant K-theory (see
\citelist{\cite{Kasparov:Equivariant_KK-theory}*{Definition and Lemma 4.2}
\cite{Bourne-Carey-Lesch-Rennie:KO-valued_spectral_flow}*{Lemma 9.1}}). 
For \( i=d \), one checks that \( \alpha_{d,d} \) maps the class \(
w_{1,\emptyset} \) (the ``strong'' generator) to a generator of \(
\KO_0(\R)\simeq\Z \), and that \( \alpha_{d,d}' \) is an isomorphism \(
\Z\xrightarrow{\sim}\Z \). This implies that \( \KO_i(\iota_{1\backslash
\Z^d}) \) sends
the class of \( w_{1,\emptyset} \) to a generator of
\( \KO_d(\Cst(\R^d)_\R)\simeq\Z \).

The K-theory
maps \( (\alpha_{i,d})_{i\in\Z} \) and \( (\alpha'_{i,d})_{i\in\Z} \) 
are induced by Kasparov products, hence they give 
\( \KO_*(\R) \)-module homomorphisms
\[
\alpha_{*,d}\colon \KO_*(\Cst(\R^d)^{\Z^d}_\R)\to\KO_{*-d}(\R^d)
\quad\text{and}\quad 
\alpha'_{*,d}\colon \KO_*(\Cst(\R^d)_\R)\xrightarrow{\sim}\KO_{*-d}(\R). 
\]
Thus
\( \KO_*(\Cst(\R^d)_\R) \) is a free \( \KO_*(\R) \)-module generated by a
single generator in degree \( +d \), and \( \KO_0(\iota_{1\backslash \Z^d})
\) maps \( w_{1,\emptyset} \) to this free \( \KO_*(\R) \)-module
generator.  Therefore, every element of \( \KO_*(\Cst(\R^d)_\R) \) belongs
to the image of the \( \KO_*(\R) \)-module map \( \KO_*(\iota_{1\backslash
\Z^d}) \). Since \( \KO_i(\Cst(\R^d)_\R)\simeq\KO_{i-d}(\R) \) 
is reduced, it follows that \(
\mathrm{sKO}_i(\iota_{1\backslash \Z^d}) \) is surjective for every \( i
\).
\end{remark}

\subsubsection{A non-surjective comparison map}
\label{sec:crystallographic.generalization}
A \( d \)-dimensional crystallographic group is a discrete cocompact
subgroup of \( \Isom(\R^d) \), the isometry group of the \(d\)-dimensional
Euclidean space. It is known that every \(d\)-dimensional crystallographic
group $\widetilde{\Gamma}$ is isomorphic to an extension of a finite group
$F$ by a lattice $\Gamma\subseteq\R^d$, i.e., there is a short exact
sequence of groups
\begin{equation}\label{eq:crystal.extension}
\begin{tikzcd}
1 \arrow[r] & \underbrace{\Gamma}_{\simeq\Z^d} \arrow[r] 
& \widetilde{\Gamma} \arrow[r] & \underbrace{F}_{\text{finite}}
\arrow[r] & 1.
\end{tikzcd}
\end{equation}
Crystallographic groups are of
great importance in solid-state physics, and we would like an analogue of
\cref{thm:existence_strong_phase} for this case. If the
crystallographic group $\widetilde{\Gamma}$ acts on $\R^2$ by
orientation-preserving isometries, we can regard the spin Dirac operator on
$\R^2$ as a $\widetilde{\Gamma}$-invariant operator, and the proof of
\cref{thm:existence_strong_phase} works in the same way.
However, if $\widetilde{\Gamma}$ has orientation-reversing elements, the
surjectivity of the strong-to-coarse comparison map can fail, as the
following example shows.

Let \(\widetilde{\Gamma}\defeq\Z\rtimes\Z\) be the semidirect product given by the unique non-trivial action of \(\Z\) on \(\Z\). It fits into the non-split short exact sequence
\[
\begin{tikzcd}[row sep=0]
1\ar[r] & \Z^2 \ar[r] & \Z\rtimes\Z \ar[r] & \Z/2 \ar[r] & 1\\
{} & (a,b)\ar[r,mapsto] & (a,2b) & {} & {} \\
{} & {} & (a,b) \ar[r,mapsto] & b\!\!\!\mod 2\Z. & {}
\end{tikzcd}
\]
The group generator $(1,0)\in\Z\rtimes\Z$ acts on the Euclidean plane
$X=\R^2$ as $(x,y)\mapsto (x+1,y)$, whereas the generator
$(0,1)\in\Z\rtimes\Z$ acts as the orientation-reversing glide reflection
$(x,y)\mapsto (-x,y+\tfrac{1}{2})$. This exhibits $\Z\rtimes\Z$ as a 2D
crystallographic group, commonly called $\mathsf{pg}$ in the physics
literature. Below, we explain why the strong-to-coarse comparison map is
\emph{not} surjective in this case.

There is an isomorphism
\begin{equation}\label{eq:pg.crossed.product}
\Cstr(\Z\rtimes\Z)\simeq\Cstr(\Z)\rtimes_\alpha\Z,
\end{equation}
where $\alpha$ is the automorphism of $\Cstr(\Z)$ mapping $w_1=\rho_\Z(1)$ to $w_1^{-1}=\rho_\Z(-1)$. Thus,
\[
\K_1(\alpha)\colon \K_1(\Cstr(\Z))\to\K_1(\Cstr(\Z)),\qquad [w_1]\mapsto -[w_1],
\]
and obviously,
\[
\K_0(\alpha)\colon \K_0(\Cstr(\Z))\to\K_1(\Cstr(\Z)),\qquad [1]\mapsto [1].
\]
Thus, the Pimsner--Voiculescu sequence \cite{Pimnser-Voiculescu:Exact_sequence} for the crossed product \eqref{eq:pg.crossed.product} is
\[
\begin{tikzcd}
\Z\simeq\K_0(\Cstr(\Z))\ar[r,"0"] & \K_0(\Cstr(\Z)) \ar[r] & \K_0(\Cstr(\Z)\rtimes_\alpha\Z)\ar[d,"\partial_{\rm PV}"]\\
\K_1(\Cstr(\Z)\rtimes_\alpha\Z)\ar[u,"\partial_{\rm PV}"] & \K_1(\Cstr(\Z))\ar[l] & \K_1(\Cstr(\Z))\simeq\Z\ar[l,"\times 2"]
\end{tikzcd}.
\]
It follows from exactness that
\[
\K_0(\Cstr(\Z\rtimes\Z))\simeq\K_0(\Cstr(\Z)\rtimes_\alpha\Z)\simeq\Z,
\]
with generator $[1]$. 

Now, symmetry breaking from $\Z\rtimes\Z$ to its index-2 subgroup $\Z^2$
effects $[1]\mapsto 2\cdot[1]$, and $[1]$ is a weak class in
$\K_0(\Cstr(\Z^2))$ (see \cref{sec:complex_d-dim}). It follows that
$\wK_0(\Cstr(\Z\rtimes\Z))=\K_0(\Cstr(\Z\rtimes\Z))$ 
so
\[
  \sK_0(\Cstr(\Z\rtimes\Z))=0.
\]
However, for $X=\R^2$, we know that $\K_0(\Cst(X))\simeq\Z$. Thus the
strong-to-coarse comparison map cannot be surjective (it is the zero
map). The proof in \cref{thm:existence_strong_phase} does not work because
$(0,1)\in\Z\rtimes\Z$ does not preserve the orientation of $\R^2$, so the
Dirac operator is not $\Z\rtimes\Z$-invariant. The strong-to-coarse
comparison map only becomes surjective after we break the symmetry group
from $\Z\rtimes\Z \to \Z^2$.

\medskip
Nevertheless, the following colimit analogue of
\cref{thm:existence_strong_phase} holds, in particular, for all
crystallographic groups.

\begin{corollary}
\label{cor:existence.strong.colimit}
Let \( \widetilde{\Gamma} \) be an extension of a finite group $F$ by a
finitely generated, torsion-free, nilpotent discrete group $\Gamma$.  Let
\( \mathcal{H} \) be an ample complex $\sigma$-twisted \( X \)-\(
\widetilde{\Gamma} \)-module satisfying \refbasicsetup.  We assume that
$\sigma$ is a magnetic 2-cocycle for $(\Gamma_\R,\widetilde{\Gamma})$.
Then the strong-colimit-to-coarse comparison maps is surjective in
every degree \( i \).
\end{corollary}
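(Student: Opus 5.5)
The plan is to reduce to \cref{thm:existence_strong_phase} by passing to a cofinal family of subgroups. The symmetry breaking poset $\widetilde{\mathfrak{S}}$ of $\widetilde{\Gamma}$ contains the subposet $\mathfrak{S}_\Gamma$ of finite-index subgroups of $\Gamma$. Since $\Gamma$ has finite index in $\widetilde{\Gamma}$, every $G\in\widetilde{\mathfrak{S}}$ satisfies $G\prec G\cap\Gamma\in\mathfrak{S}_\Gamma$. Hence $\mathfrak{S}_\Gamma$ is cofinal, and by \cref{lem:cofinal_isomorphism}
\[
\K_i(\Cst(X,\mathcal{H})^{\widetilde{\mathfrak{S}}})\simeq\colim_{G\in\mathfrak{S}_\Gamma}\K_i(\Cst(X,\mathcal{H})^G).
\]
Restricting $\mathcal{H}$ to $\Gamma$ gives an ample $\sigma|_\Gamma$-twisted $X$-$\Gamma$-module. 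It still satisfies \refbasicsetup: a bounded fundamental domain for $\Gamma$ is obtained as $s\cdot D$ from \eqref{eq:section.act.on.fund.domain}, which is a finite union of translates of $D$.

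For every $G\in\mathfrak{S}_\Gamma$, I apply \cref{thm:existence_strong_phase} to this $X$-$\Gamma$-module. The hypothesis that $\sigma$ is magnetic for $(\Gamma_\R,\widetilde{\Gamma})$ should restrict to $\sigma|_\Gamma$ being magnetic for $(\Gamma_\R,\Gamma)$. The theorem then shows that $\K_i(\iota_{1\backslash G})\colon\K_i(\Cst(X,\mathcal{H})^G)\to\K_i(\Cst(X,\mathcal{H}))$ is surjective. Its target is $\Z$ or $0$ by coarse Baum--Connes for $\Gamma_\R$, and $X$ is coarsely equivalent to $\Gamma_\R$ because $\Gamma$ acts cocompactly.

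Next, the universal map from the colimit satisfies $\K_i(\iota_{1\backslash\widetilde{\mathfrak{S}}})\circ\K_i(\iota_{\widetilde{\mathfrak{S}}\backslash G})=\K_i(\iota_{1\backslash G})$. Surjectivity of the right-hand side for even one $G$ therefore gives surjectivity of $\K_i(\Cst(X,\mathcal{H})^{\widetilde{\mathfrak{S}}})\to\K_i(\Cst(X,\mathcal{H}))$. Since the target is reduced, this map factors through $q_{\widetilde{\mathfrak{S}}}$ as $\sK_i(\iota_{1\backslash\widetilde{\mathfrak{S}}})\circ q_{\widetilde{\mathfrak{S}}}$, as in \eqref{eq:weak.strong.colimit.diagram}. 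So the strong-colimit-to-coarse comparison map is surjective as well.

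The main obstacle is justifying the use of \cref{thm:existence_strong_phase} for the restricted module. Two points need care. First, the definition of magnetic 2-cocycle for the pair $(\Gamma_\R,\widetilde{\Gamma})$ must restrict correctly to $(\Gamma_\R,\Gamma)$. Second, the ample module $\mathcal{H}$ over a general $X$ must be identifiable, up to equivariant coarse equivalence and cohomologous cocycles as in \cref{rmk:equivariant_coarse_equivalence}, with the spinor–line-bundle module on $\Gamma_\R$ used in that proof. I would check the first point directly from \cref{dfn:magnetic.2.cocycle}. If the definition does not restrict cleanly, I would rerun the proof of \cref{thm:existence_strong_phase} only for $G\subseteq\Gamma$. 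That proof uses nothing about $\widetilde{\Gamma}$ beyond the fact that $\Gamma$ acts by left translations on $\Gamma_\R$, and left translations preserve orientation. This is exactly why passing to the cofinal subsystem avoids the orientation-reversing elements seen in the $\mathsf{pg}$ example.
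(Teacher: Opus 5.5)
Your proposal is correct and follows essentially the paper's route: restrict $\mathcal{H}$ and $\sigma$ to $\Gamma$, apply \cref{thm:existence_strong_phase} to finite-index subgroups of $\Gamma$, and conclude from the universal property of the colimit together with the factorization through $q_{\widetilde{\mathfrak{S}}}$. The paper instead takes the strong colimit over the poset of $\Gamma$ and transfers it to $\widetilde{\mathfrak{S}}$ by cofinality, whereas your argument does not actually need the cofinality isomorphism: surjectivity of $\K_i(\iota_{1\backslash G})$ for the single element $G=\Gamma\in\widetilde{\mathfrak{S}}$ already forces surjectivity out of the colimit.
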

\begin{proof}
Note that $\sigma$ restricts to a magnetic 2-cocycle for $\Gamma$, and that
$\mathcal{H}$ is also an ample $\sigma$-twisted \( X \)-\( \Gamma \)-module
satisfying \refbasicsetup. So the conclusion of
\cref{thm:existence_strong_phase} applies, saying that for all finite index
$G\subseteq \Gamma$, the strong-to-coarse comparison maps are surjections.
Write $\mathfrak{S}$ for the symmetry breaking poset $\Gamma$. By the
universal property of the strong colimit, we also have
\[
\sK_i(\iota_{1\backslash\mathfrak{S}})\colon \sK_i(\Cst(X,\mathcal{H})^\mathfrak{S})\to \K_i(\Cst(X,\mathcal{H}))
\]
being a surjection.

Write $\widetilde{\mathfrak{S}}$ for the symmetry breaking poset of $\widetilde{\Gamma}$. Note that $\widetilde{\mathfrak{S}}$ has $\mathfrak{S}$ as a cofinal subset. It follows that the (strong) colimit over $\widetilde{\mathfrak{S}}$ is isomorphic to that over $\mathfrak{S}$, thus
\[
\sK_i(\iota_{1\backslash\widetilde{\mathfrak{S}}})\colon \sK_i(\Cst(X,\mathcal{H})^{\widetilde{\mathfrak{S}}})\to \K_i(\Cst(X,\mathcal{H}))
\]
is a surjection as well.
\end{proof}

\begin{remark}
Let \( H=-\Delta+V \) be the Schr\"odinger operator acting on \( L^2(\R^2)
\), for a bounded, \( \widetilde{\Gamma}\)-invariant potential function \(
V \), where \( \widetilde{\Gamma} \) is a 2D crystallographic group as in
\eqref{eq:crystal.extension}. When the Fermi energy $E\in\R$ is in a
spectral gap of $H$, the Fermi projection $P_{\leq
E}=\chi_{(-\infty,E]}(H)$ is $\widetilde{\Gamma}$-invariant and belongs to
the equivariant Roe algebra \( \Cst(\R^2)^{\widetilde{\Gamma}} \). It is
convenient to temporarily consider \( H \), thus also $P_{\leq E}$, as
being only \( \Gamma \)-symmetric for the finite-index abelian subgroup \(
\Gamma\simeq\Z^2 \). Then, as mentioned in the Introduction, we may apply
the Bloch transform to obtain (after Fourier transform) a vector bundle
$\mathcal{E}$ over the Brillouin torus $\widehat{\Gamma}\simeq\T^2$, and
this has a first Chern class in $H^2(\T^2)\simeq\Z$.  This class is
identified with the reduced $\K$-theory class in $\widetilde{\K}^0(\T^2)$
under the Chern character, and generates the strong summand in
$\K^0(\T^2)\simeq\K_0(\Cstr(\Z^2))$ in our terminology. 

Thus far, symmetry under the lifts of $g\in F$ in $\widetilde{\Gamma}$ has
not been accounted for. This extra symmetry constrains $\mathcal{E}$ to be
(twisted) $F$-equivariant, where the $F$-action on $\T^2$ is that
dual to the $F$-action on $\Z^2$ associated to the extension
\eqref{eq:crystal.extension}; see
\cite{Freed-Moore:Twisted_equivariant_matter} for details. When $F$ has an
orientation-reversing element, the action of this element on $\T^2$ is
orientation-reversing. Consequently, the first Chern class of
$\mathcal{E}$, thus the $\widetilde{\K}^0(\T^2)$ class, vanishes. In other
words, after breaking the symmetry $\widetilde{\Gamma}$ to
$\Gamma\simeq\Z^2$, the class of $P_{\leq E}$ becomes trivial in the strong
quotient, thus it must vanish in $\K_0(\Cst(\R^2))$.  
\end{remark}

\section{Non-abelian example: Integer Heisenberg group}
\label{sec:Heisenberg}

In this section, we shall study weak and strong phases for the 
basic non-abelian example: the integer Heisenberg group \( \HH \) (see
\eqref{eq:Heisenberg_integer_lattice}).
Compared to the abelian case of 
$\Gamma\simeq\Z^d$, there are several complications:

\begin{enumerate}
\item Finite index subgroups of $\HH$ are not mutually isomorphic. For example, if $m_1,m_2,M\in\N_{\geq 1}$, then
\begin{equation}\label{eq:Heisenberg_integer_basic_subgroups}
\begin{pmatrix}1 & m_1\Z & \frac{m_1m_2}{M}\Z \\ 0 & 1 & m_2\Z \\ 0 & 0 & 1\end{pmatrix}
\end{equation}
are finite index subgroups of $\HH$, but they are non-isomorphic for
distinct $M$. In fact, their group $\Cst$-algebras have non-isomorphic
$\K$-theories, because an extra $M$-torsion $\K$-theory
class appears, see
\cite{Hannabuss-Mathai-Thiang:T-duality_parametrised}*{Section 5.1.1}.

\item While a general finite index subgroup $G\subseteq \HH$ is abstractly isomorphic to one of the form \eqref{eq:Heisenberg_integer_basic_subgroups}, see
\cite{Lee-Packer:Twisted_group_Cst-algebras}*{Section 1, Page 93}, the symmetry breaking maps
$\phi_{H\backslash G}$ are defined by the subgroup structure, and this is very complicated in general.
\end{enumerate}

Nevertheless, as will be shown in \cref{prop:Heis.cofinal}, there is a
cofinal subset of the poset of all finite-index subgroups of \( \HH \),
such that every  \( G\subseteq \HH \) belonging to this cofinal
subset has the following convenient properties.
First, \( G \) is isomorphic to
the ``standard'' integer Heisenberg group $\HH$, so that no extra torsion
K-theory elements appear. Second, their associated Pimsner--Voiculescu
sequences
allow us to explicitly describe the K-theory generators for these \( \Cst(G)
\),
as well the symmetry breaking maps between them.

\subsection{K-theory of integer Heisenberg group algebra}

Let $\Gamma=\HH$ be the integer Heisenberg group 
(see \eqref{eq:Heisenberg_integer_lattice}), and write
\[
\mathfrak{H}=\{G\subseteq \HH\,:\,G\;\text{has\;finite\;index\;in}\;\HH\}
\]
for its symmetry breaking poset. 

\begin{proposition}\label{prop:Heis.cofinal}
The collection 
\begin{equation}\label{eq:Heis_cofinal_subset}
\left\{m\HH\defeq \begin{pmatrix} 
1 & m\Z & m^2\Z \\ 0 & 1 & m\Z \\ 0 & 0 & 1\end{pmatrix}\,:\,m\in\N_{\geq 1}\right\}
\end{equation}
is a cofinal subset of $\mathfrak{H}$.
\end{proposition}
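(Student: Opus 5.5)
The plan has two parts. First I would check that each $m\HH$ is a subgroup of finite index in $\HH$. Then I would show that every finite-index subgroup $G\subseteq\HH$ contains some $m\HH$; this is the cofinality condition, since $G\prec m\HH$ means exactly $m\HH\subseteq G$.

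\emph{$m\HH$ is a finite-index subgroup.} Write $[x,y,z]$ for the matrix with entries $x,z,y$ in positions $(1,2),(1,3),(2,3)$. The group law is
\[
[x,y,z][x',y',z']=[x+x',\,y+y',\,z+z'+xy'],
\]
and inverses are $[x,y,z]^{-1}=[-x,-y,-z+xy]$. If $x,x',y,y'\in m\Z$ and $z,z'\in m^2\Z$, then $xy'\in m^2\Z$, and the same holds for the inverse. So $m\HH$ is a subgroup. Using the Mal'cev coordinates \eqref{eq:Malcev_basis_expansion}, $\HH$ is in bijection with $\Z^3$, and the left cosets of $m\HH$ are represented by triples with $x,y\in\{0,\dots,m-1\}$ and $z\in\{0,\dots,m^2-1\}$. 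Hence the index is $m^4$, which is finite.

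\emph{Every finite-index $G$ contains some $m\HH$.} I would first replace $G$ by its normal core $N=\bigcap_{g}gGg^{-1}$. This is a finite intersection, because $G$ has only finitely many conjugates, so $N$ is normal of finite index. It then suffices to find $m$ with $m\HH\subseteq N$. Let $n=[\HH:N]$. In the finite quotient $\HH/N$ every element has order dividing $n$, so $g^n\in N$ for all $g\in\HH$. In particular:
\begin{itemize}
\item $a^n=[n,0,0]\in N$, where $a=[1,0,0]$;
\item $b^n=[0,n,0]\in N$, where $b=[0,1,0]$;
\item $c^n=[0,0,n]\in N$, where $c=[0,0,1]$ is central.
\end{itemize}

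The main obstacle is to produce all of $m\HH$ from these elements, because products of the generators pick up extra central terms. My first attempt is to take $m=n$. A general element of $n\HH$ has the form $[nx,ny,n^2z]$, and I would write it as
\[
[nx,ny,n^2z]=(a^n)^{x}(b^n)^{y}c^{\,n^2z-n^2xy}.
\]
To verify this, note that $(a^n)^x(b^n)^y=[nx,0,0][0,ny,0]=[nx,ny,n^2xy]$. The remaining factor is the central correction $c^{\,n^2(z-xy)}$, which is a power of $c^n$ and therefore lies in $N$. The elements $(a^n)^x$ and $(b^n)^y$ also lie in $N$. Hence $n\HH\subseteq N\subseteq G$.

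This is cofinality. It can also be double-checked with the formulas \eqref{eq:Malcev_basis_polynomial}, where the polynomial $p_3=x_1y_2$ is exactly the cross term absorbed above.
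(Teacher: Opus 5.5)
Your proof is correct and complete, but it gets the key elements into $G$ by a different route than the paper.

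\textbf{The paper's route.} The paper quotes the Lightwood--Sahin--Ugarcovici parametrization of a finite-index $G\subseteq\HH$ by data $(A_G,m_G,i_G)$. It uses the congruence \eqref{eq:normal.Heis.parameters} to place $[M,0,0]$ and $[0,M,0]$ (in your notation) directly in $G$, with $M=m_G\ell_G$. It then notes that these two elements generate $M\HH$. Their commutator is $[0,0,M^2]$, which is exactly the central correction you write out by hand.

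\textbf{Your route.} You pass to the normal core $N$ and apply Lagrange's theorem in $\HH/N$. This is the natural non-abelian version of the paper's own proof of \cref{lem:cofinal_subseteq_R}: normality of $N$ replaces commutativity, and your explicit factorization $[nx,ny,n^2z]=(a^n)^x(b^n)^y c^{n^2(z-xy)}$ absorbs the cross term.

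\textbf{Comparison.} Your argument is elementary and self-contained. It needs no classification of subgroups, and the same pattern proves cofinality whenever the $n$-th powers of a generating set generate a member of the candidate family. You also check that $[\HH:m\HH]=m^4$, which the paper leaves implicit. The paper's argument gives a sharper exponent. Since $[\HH:G]=[\Z^2:\varpi(G)]\cdot[Z(\HH):G\cap Z(\HH)]=\ell_G m_G$, it shows $[\HH:G]\,\HH\subseteq G$ without passing to the core. Your $n=[\HH:N]$ divides $[\HH:G]!$ and can be larger; for cofinality either bound suffices.

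\textbf{A small slip in your closing aside.} With the Mal'cev basis $(a,b,c)$ one has $a^{x_1}b^{x_2}c^{x_3}=[x_1,x_2,x_1x_2+x_3]$. So the polynomial in \eqref{eq:Malcev_basis_polynomial} is $p_3=-x_2y_1$, not $x_1y_2$. The term $xy'$ is the cross term in the matrix coordinates $[x,y,z]$, which are not Mal'cev coordinates. This plays no role in your proof.
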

\begin{proof}
We recall the parametrization of general finite index subgroups of $\HH$, taken from
\cite{Lightwood-Sahin-Ugarcovici:Heisenberg_odometers}*{Proposition 3.1 and
3.2}. The integer Heisenberg group \( \HH \) is isomorphic to
an extension of \( \Z^2 \) by its center 
\( Z(\HH)\simeq\Z \), i.e.~there is a short exact sequence
\[ 
  \begin{tikzcd}[row sep=0]
  1 \arrow[r] & \Z \arrow[r] & \HH \arrow[r,"\varpi"] & \Z^2 \arrow[r] & 1
  \\
  & z \arrow[r, mapsto] & \left(\begin{smallmatrix}
  1 & 0 & z \\ 0 & 1 & 0 \\ 0 & 0 & 1
  \end{smallmatrix}\right)
  & \\
  & & \left(\begin{smallmatrix}
  1 & x & z \\ 0 & 1 & y \\ 0 & 0 & 1
  \end{smallmatrix}\right) \arrow[r, mapsto] & (x,y).
  \end{tikzcd}
\]
Let $G$ be a finite index subgroup of $\HH$. Then $\varpi(G)$ is a finite-index subgroup of $\Z^2$, so $\varpi(G)=A_G\Z^2$ for some
$A_G\in\Mat_2(\Z)$ with $\det(A_G)\neq 0$. Setting $m_G\in \N_{\geq 1}$ to
be the finite index of $G\cap Z(\HH)$ in $Z(\HH)$, and 
\begin{align*}
i_G\colon A_G\Z^2 &\to \Z\\
(x,y)&\mapsto \min\{z\geq 0\,:\,(x,y,z)\in G\},
\end{align*}
we have the following presentation of $G$,
\begin{equation}\label{eq:finite.index.Heis.general}
G=\left\{\begin{pmatrix} 1 & x & i_G(x,y)+m_Gk\\ 0 & 1 & y \\ 0 &
  0 & 1 \end{pmatrix}\,:\,(x,y)\in A_G\Z^2, k\in\Z\right\}.
\end{equation}
Furthermore, the parameters $m_G, i_G, A_G$ satisfy
\begin{equation}\label{eq:normal.Heis.parameters}
i_G(x,y)+i_G(x^\prime,y^\prime)+xy^\prime =i_G(x+x^\prime,y+y^\prime)\;\mod\,m_G,\quad (x,y)\in A_G\Z^2.
\end{equation}

Let $G$ be a finite index subgroup of $\HH$, as described by
\eqref{eq:finite.index.Heis.general}. Since \( \det(A_G)\neq 0 \),
it has an inverse \( A_G^{-1}\in\GL_2(\Q) \).
Set $\ell_G=|\det(A_G)|\in \N_{\geq 1}$. Then $A_G^{-1}(\ell_G,0)\in \Z^2$
and $A_G^{-1}(0,\ell_G)\in \Z^2$. Therefore, $(\ell_G,0), (0,\ell_G)\in
A_G\Z^2$, so \eqref{eq:finite.index.Heis.general} says that
\[
\begin{pmatrix} 
1 & \ell_G & i_G(\ell_G,0) \\ 0 & 1 & 0 \\ 0 & 0 & 1
\end{pmatrix},\;
\begin{pmatrix} 
1 & 0 & i_G(0,\ell_G) \\ 0 & 1 & \ell_G \\ 0 & 0 & 1 
\end{pmatrix}\quad\in G.
\]
Because of \eqref{eq:normal.Heis.parameters}, $i_G(m_G\ell_G,0)=m_G\cdot i_G(\ell_G,0)\mod m_G$ holds. Therefore $i_G(m_G\ell_G,0)$, and similarly $i_G(0,m_G\ell_G)$, is an integer multiple of $m_G$. Then it follows from \eqref{eq:finite.index.Heis.general} that
\[
\begin{pmatrix} 
1 & m_G\ell_G & 0 \\ 0 & 1 & 0 \\ 0 & 0 & 1
\end{pmatrix},\;
\begin{pmatrix} 
1 & 0 & 0 \\ 0 & 1 & m_G\ell_G \\ 0 & 0 & 1
\end{pmatrix}\quad\in G.
\]
The two elements above generate the subgroup $m_G\ell_G\HH\subseteq G$. As $G\in\mathfrak{H}$ was arbitrary, this construction verifies the cofinality of \eqref{eq:Heis_cofinal_subset}.
\end{proof}

Now we describe the K-theory generators of \( \Cstr(m\HH) \).
In order to do this, notice that every finite
index subgroup of the form \( m\HH \) as in
\eqref{eq:Heis_cofinal_subset} is isomorphic to $\HH$, via
\[
  \HH\xrightarrow{\sim}m\HH,\quad (x,y,z)\mapsto (mx,my,m^2z).
\]
This induces an isomorphism \( \Cstr(\HH)\xrightarrow{\sim}\Cstr(m\HH)
\) for each \( m\in\N_{\geq 1} \). 

It is well-known that
$\Cstr(\HH)$ is isomorphic to
a crossed product \Cst-algebra as follows. 
Consider the automorphism $\alpha$
of $\Cstr(\Z\times \Z)\simeq \Cstr(\Z)\otimes\Cstr(\Z)$,
defined by
\[
\alpha(1\otimes w_1)=1\otimes w_1,\qquad \alpha(w_1\otimes 1)= w_1\otimes w_1^*,
\]
where $1$ denotes $1_{\ell^2(\Z)}$ and $w_1=\rho_\Z(1)$. 
We may write $\alpha=\Ad_{v}$, 
where $v$ is the unitary operator on $\ell^2(\Z\times\Z)$ for the
change-of-basis matrix $\left(\begin{smallmatrix} 1 & 0 \\ -1 & 1
\end{smallmatrix}\right)\in\GL(2,\Z)$. 
The reduced \emph{crossed product} $\Cst$-algebra $\Cstr(\Z\times \Z)\rtimes_{\alpha}\Z$ is the $\Cst$-subalgebra of $\Bdd(\ell^2(\Z\times\Z)\otimes\ell^2(\Z))$ generated by
\[
(w_1\otimes 1)\otimes 1,\qquad (1\otimes w_1)\otimes 1,\qquad v\otimes w_1.
\]
There is a canonical inclusion homomorphism
\begin{equation}\label{eq:inclusion.into.crossed.product}
\begin{aligned}
j\colon \Cstr(\Z\times\Z) &\to \Cstr(\Z\times\Z)\rtimes_{\alpha}\Z\\
A & \mapsto A\otimes 1.
\end{aligned}
\end{equation}
There is also the $\Z$-equivariant inclusion homomorphism,
\begin{equation}\label{eq:jprime.M.inclusion}
\begin{aligned}
j^\prime\colon \Cstr(\Z\times \Z)\simeq \Cstr(\Z)\rtimes_{\rm id}\Z &\to \Cstr(\Z\times \Z)\rtimes_{\alpha} \Z\\
w_1\otimes 1 &\mapsto v\otimes w_1,\\
1\otimes w_1 &\mapsto j(1\otimes w_1)=(1\otimes w_1)\otimes 1.
\end{aligned}
\end{equation}

Below, we will describe how to identify $\Cstr(m\HH)$ with
$\Cstr(\Z\times\Z)\rtimes_\alpha\Z$. To keep track of the various copies of
$\Z$, we will write $j_{xz}, j^\prime_{yz}$ instead of $j, j^\prime$.
The identification is
\begin{equation}\label{eq:Heis.crossed.product}
\begin{tikzcd}[cramped, row sep=0]
\Cstr(m\HH)\ar[r,"\sim"] & \Cstr(\HH)\ar[r,"\sim"] &  \Cstr(\Z_x\times\Z_z)\rtimes_{\alpha}\Z_y\\
U_m\defeq \rho_{m\HH}(m,0,0)\ar[r,mapsto] & \rho_\HH(1,0,0)\ar[r,mapsto] & j_{xz}(w_1\otimes 1)=(w_1\otimes 1)\otimes 1\\
V_m\defeq \rho_{m\HH}(0,m,0)\ar[r,mapsto] & \rho_\HH(0,1,0)\ar[r,mapsto] & j^\prime_{yz}(w_1\otimes 1)=v\otimes w_1\\
W_m\defeq \rho_{m\HH}(0,0,m^2) \ar[r,mapsto] & \rho_\HH(0,0,1)\ar[r,mapsto] & j_{xz}(1\otimes w_1)=(1\otimes w_1)\otimes 1=j^\prime_{yz}(1\otimes w_1).
\end{tikzcd}
\end{equation}
Note that \eqref{eq:Heis.crossed.product} indeed an isomorphism of
\Cst-algebras because of the relations
\begin{equation}\label{eq:Heis.group.commutator}
U_mV_mU_m^*V_m^*=W_m,\qquad m\in \N_{\geq 1},
\end{equation}
and
\begin{align*}
&j_{xz}(w_1\otimes 1)\circ j^\prime_{yz}(w_1\otimes 1)\circ j_{xz}(w_1^*\otimes 1) \circ j^\prime_{yz}(w_1^*\otimes 1)\\
&=((w_1\otimes 1)\otimes 1)\circ (v\otimes w_1)\circ  ((w_1^*\otimes 1)\otimes 1)\circ (v^*\otimes w_1^*)\\
&=((w_1\otimes 1)\otimes 1)\circ ((w_1^*\otimes w_1)\otimes 1)\\
&=j_{xz}(1\otimes w_1).
\end{align*}
For later use, we will also need the alternative crossed product structure,
\begin{equation}\label{eq:alternative.crossed.product}
\begin{tikzcd}[cramped]
\Cstr(m\HH)\ar[r,"\sim"] & \Cstr(\HH)\ar[r,"\sim"] & \Cstr(\Z_y\times\Z_z)\rtimes_{\alpha^{-1}}\Z_x,
\end{tikzcd}
\end{equation}
where we note that the \emph{inverse} automorphism $\alpha^{-1}$ appears on
the right side because of \eqref{eq:Heis.group.commutator} with the roles
of $U,V$ swapped. Under \eqref{eq:alternative.crossed.product}, with
$j_{yz}$ denoting the inclusion of $\Cstr(\Z_y\times\Z_z)$ into the
alternative crossed product, we
have another way of writing $V_m$ and $W_m$, namely,
\begin{equation}\label{eq:Vm.in.alternative.crossed.product}
\begin{tikzcd}[row sep=0]
V_m \defeq\rho_{m\HH}(0,m,0) \arrow[r, mapsto] & \rho_{\HH}(0,1,0)
\arrow[r, mapsto] & j_{yz}(w_1\otimes 1), \\
W_m \defeq\rho_{m\HH}(0,0,m) \arrow[r, mapsto] & \rho_{\HH}(0,0,1)
\arrow[r, mapsto] & j_{yz}(1\otimes w_1).
\end{tikzcd}
\end{equation}

The $\K$-theory of $\Cstr(m\HH)\simeq\Cstr(\HH)$ can be computed explicitly
using the Pimsner--Voiculescu sequence for the crossed product
$\Cstr(\Z\times\Z)\rtimes_\alpha\Z$. The result, which can be found in
\cite{Anderson-Paschke:Rotation_algebra}*{Proposition 1.4}, is given below.
\begin{proposition}\label{prop:Heis.standard.K.groups}
The abelian group $\K_0(\Cstr(\Z_x\times \Z_z)\rtimes_{\alpha}\Z_y)$ is freely generated by
\[
[1],\qquad (j_{xz})_*\big([w_1]\otimes [w_1]\big),\qquad (j^\prime_{yz})_*\big([w_1]\otimes [w_1]\big).
\]
The abelian group $\K_1(\Cstr(\Z_x\times \Z_z)\rtimes_{\alpha}\Z_y)$ is freely generated by
\[
[j_{xz}(w_1\otimes 1)],\qquad [j_{yz}^\prime(w_1\otimes 1)],\qquad [\mathcal{X}],
\]
where $\mathcal{X}$ is a unitary in $\Mat_2(\Cstr(\Z_x\times \Z_z)\rtimes_{\alpha}\Z_y)$ constructed in \cite{Anderson-Paschke:Rotation_algebra}*{pp.~8} (denoted $V_a$ there).
\end{proposition}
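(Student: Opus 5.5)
The plan is to run the Pimsner--Voiculescu six-term exact sequence for the crossed product $A\rtimes_\alpha\Z_y$, where $A=\Cstr(\Z_x\times\Z_z)\simeq \Cstr(\Z)\otimes\Cstr(\Z)$. By the (complex) Künneth theorem and \cref{lem:KO_Z_generators} (complex version), $\K_0(A)\simeq\Z^2$ is freely generated by $[1]$ and the Bott class $[w_1]\otimes[w_1]$, and $\K_1(A)\simeq\Z^2$ is freely generated by $[w_1\otimes 1]$ and $[1\otimes w_1]$. The first step is to compute $\K_*(\alpha)-\id$. On $\K_1$, $\alpha(w_1\otimes 1)=w_1\otimes w_1^*$, whose class is $[w_1\otimes 1]-[1\otimes w_1]$ (the class of a product of commuting unitaries is the sum of the classes), and $\alpha$ fixes $1\otimes w_1$. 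Hence $\id-\K_1(\alpha)$ has matrix $\left(\begin{smallmatrix}0&0\\1&0\end{smallmatrix}\right)$: its image is $\Z[1\otimes w_1]$, its kernel is $\Z[1\otimes w_1]$, and its cokernel is $\Z[w_1\otimes 1]$. On $\K_0$, $\alpha$ is implemented by the change of basis $\left(\begin{smallmatrix}1&0\\-1&1\end{smallmatrix}\right)$ of determinant $1$, so it acts trivially on $\K^0(\T^2)$: it fixes $[1]$ and the Bott class, since the first Chern class pulls back by the determinant. Thus $\id-\K_0(\alpha)=0$.

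From exactness we then get
\[0\to\operatorname{coker}(\id-\K_0(\alpha))=\Z^2\xrightarrow{j_*}\K_0(A\rtimes\Z)\xrightarrow{\partial}\ker(\id-\K_1(\alpha))=\Z[1\otimes w_1]\to 0,\]
and
\[0\to\operatorname{coker}(\id-\K_1(\alpha))=\Z[w_1\otimes 1]\xrightarrow{j_*}\K_1(A\rtimes\Z)\xrightarrow{\partial}\ker(\id-\K_0(\alpha))=\Z^2\to 0.\]
Both groups are therefore free of rank $3$, and $j_*$ supplies the generators $[1]$ and $(j_{xz})_*([w_1]\otimes[w_1])$ in $\K_0$, and $[j_{xz}(w_1\otimes1)]$ in $\K_1$. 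It remains to find lifts of generators of the right-hand terms. For $\K_0$ we need a lift of $[1\otimes w_1]$ under $\partial$. The plan is to use naturality of the PV sequence along the equivariant inclusion $j'_{yz}$ of \eqref{eq:jprime.M.inclusion} from $\Cstr(\Z_z)\rtimes_{\id}\Z_y$. In the trivial crossed product, $\partial([w_1]\otimes[w_1])=[w_1]$ (the standard Bott-to-generator map for the torus), and this maps to $[1\otimes w_1]$. Hence $(j'_{yz})_*([w_1]\otimes[w_1])$ lifts the generator.

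For $\K_1$ we need lifts of $[1]$ and of the Bott class. The lift of $[1]$ is the class of the implementing unitary $v\otimes w_1=j'_{yz}(w_1\otimes1)$, by the same naturality argument applied to $\Cstr(\Z_z)\rtimes\Z_y$ in degree $0$. The lift of the Bott class is the main obstacle. It requires an explicit unitary $\mathcal{X}\in\Mat_2(A\rtimes\Z)$ with $\partial[\mathcal{X}]=[w_1]\otimes[w_1]$. This can be constructed as in \cite{Anderson-Paschke:Rotation_algebra}, essentially a Rieffel-type projection in $A$ combined with the implementing unitary. Verifying $\partial[\mathcal{X}]$ is the only nonformal computation, and I would cite that reference for it. The standard splitting argument then shows that the three stated classes freely generate each group.
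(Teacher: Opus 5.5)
Your proposal is correct and follows the same route as the paper. The paper only invokes the Pimsner--Voiculescu sequence for $\Cstr(\Z_x\times\Z_z)\rtimes_\alpha\Z_y$ and cites Anderson--Paschke for the result, and your steps fill in that outline: the computation of $\id-\K_*(\alpha)$, naturality along $j'_{yz}$ to get the lifts $(j'_{yz})_*([w_1]\otimes[w_1])$ and $[j'_{yz}(w_1\otimes 1)]$ (up to sign, which does not affect generation), and the citation for $\mathcal{X}$. One small correction: $\Cstr(\Z_x\times\Z_z)\simeq\Cont(\T^2)$ has no nontrivial projections, so the ingredient behind $\mathcal{X}$ is the Bott projection in $\Mat_2(\Cstr(\Z_x\times\Z_z))$, not a Rieffel-type projection in the algebra itself.
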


\begin{remark}
Above, the external product $[w_1]\otimes[w_1]$ is one of two generators of $\K_0(\Cstr(\Z\times\Z))\simeq\K_0(\Cstr(\Z)\otimes \Cstr(\Z))\simeq\Z\oplus\Z$ (the other one being the class of the identity element), see \eqref{eq:d-dim_KO-generators}, \eqref{eq:strong.basis}. In \cite{Anderson-Paschke:Rotation_algebra}, the so-called Bott projection $P_{\rm Bott}\in\Mat_2(\Cont(\T^2))\simeq\Mat_2(\Cstr(\Z\times\Z))$ was used as an explicit representative for a $\K_0(\Cstr(\Z\times\Z))$ class. It may be checked that $[w_1]\otimes [w_1]=[P_{\rm Bott}]-[1]$, but we will not need this.
\end{remark}

For convenience, we will write $X_m\in\Cstr(m\HH)$ for the operator corresponding to \mbox{$\mathcal{X}\in \Cstr(\Z_x\times\Z_z)\rtimes_\alpha\Z_y$} under the isomorphism \eqref{eq:Heis.crossed.product}. It will also be convenient to write
\[
\Cstr(m\HH)\simeq \Cstr(m\Z_x\times m^2\Z_z)\rtimes_\alpha m\Z_y \simeq \Cstr(m\Z_y\times m^2\Z_z)\rtimes_{\alpha^{-1}} m\Z_x,
\]
and still write $j_{xz}, j_{yz}^\prime$ for the respective inclusions into the crossed products.

\begin{corollary}\label{cor:Heis.standard.K.groups}
Let $m\in\N_{\geq 1}$. The abelian group $\K_0(\Cstr(m\HH))$ is freely generated by
\[
[1],\qquad j_{xz}\big([w_{m}]\otimes [w_{m^2}]\big),\qquad j_{yz}\big([w_{m}]\otimes [w_{m^2}]\big),
\]
while the abelian group $\K_1(\Cstr(m\HH))$ is freely generated by
\[
[U_m],\qquad [V_m],\qquad [X_m].
\]
\end{corollary}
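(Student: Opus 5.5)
This is a transport of structure: we push \cref{prop:Heis.standard.K.groups} through the isomorphisms \eqref{eq:Heis.crossed.product} and \eqref{eq:alternative.crossed.product}. The composite
\[
\Cstr(m\HH)\xrightarrow{\sim}\Cstr(\HH)\xrightarrow{\sim}\Cstr(\Z_x\times\Z_z)\rtimes_\alpha\Z_y
\]
is a \st-isomorphism, so it induces isomorphisms on $\K_0$ and $\K_1$. Free generators are therefore carried to free generators. The plan is to identify, term by term, the preimages in $\Cstr(m\HH)$ of the six generators listed in \cref{prop:Heis.standard.K.groups}.

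First, $\K_1$. By \eqref{eq:Heis.crossed.product}, the unitary $U_m$ corresponds to $j_{xz}(w_1\otimes 1)$ and $V_m$ corresponds to $j'_{yz}(w_1\otimes 1)$. The element $X_m$ is by definition the preimage of $\mathcal{X}$. This gives $[U_m],[V_m],[X_m]$ directly.

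Next, $\K_0$. The unit maps to the unit, so $[1]\mapsto[1]$. Under the scaling isomorphism $\HH\to m\HH$, the subgroup $\Z_x\times\Z_z$ of $\HH$ corresponds to $m\Z_x\times m^2\Z_z$ in $m\HH$. The shift $w_1$ on each factor therefore corresponds to $w_m$ and $w_{m^2}$ respectively. Since $j_{xz}$ is natural with respect to this identification, $(j_{xz})_*([w_1]\otimes[w_1])$ pulls back to $(j_{xz})_*([w_m]\otimes[w_{m^2}])$, where the external product is taken in $\K_0(\Cstr(m\Z_x)\otimes\Cstr(m^2\Z_z))$.

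The remaining generator is $(j'_{yz})_*([w_1]\otimes[w_1])$, and this is the only step needing care. The statement uses $j_{yz}$, the honest inclusion of $\Cstr(m\Z_y\times m^2\Z_z)$ into the alternative crossed product \eqref{eq:alternative.crossed.product}. The plan is to check that the homomorphism $j'_{yz}\colon\Cstr(\Z_y\times\Z_z)\to\Cstr(\Z_x\times\Z_z)\rtimes_\alpha\Z_y$ corresponds, under the two identifications with $\Cstr(\HH)$, to the inclusion $j_{yz}$ of the abelian subgroup $\Z_y\times\Z_z\subseteq\HH$. Indeed, by \eqref{eq:Heis.crossed.product} we have $j'_{yz}(w_1\otimes1)=\rho_\HH(0,1,0)$ and $j'_{yz}(1\otimes w_1)=\rho_\HH(0,0,1)$. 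By \eqref{eq:Vm.in.alternative.crossed.product} these are exactly the images of $j_{yz}(w_1\otimes1)$ and $j_{yz}(1\otimes w_1)$. Both maps are \st-homomorphisms from $\Cstr(\Z^2)$ that agree on generators, so they coincide as maps into $\Cstr(\HH)$. They therefore agree on $\K_0$, and after rescaling to $m\HH$ as above we obtain $j_{yz}([w_m]\otimes[w_{m^2}])$.

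Two points in the last step need attention. One is that the normalization $W_m=\rho_{m\HH}(0,0,m^2)$ in \eqref{eq:Heis.crossed.product} must be matched with the $z$-generator in \eqref{eq:Vm.in.alternative.crossed.product}, so that the $z$-factor is consistently $m^2\Z_z$. The other is that the external product $[w_1]\otimes[w_1]$ is natural under the isomorphisms $\Cstr(\Z)\cong\Cstr(m\Z)$ and $\Cstr(\Z)\cong\Cstr(m^2\Z)$, which holds by functoriality of the external product. With these checked, the two families in the statement are free bases of $\K_0(\Cstr(m\HH))$ and $\K_1(\Cstr(m\HH))$.
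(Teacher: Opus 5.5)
Your proposal is correct and is essentially the paper's argument, which the paper leaves unwritten: the corollary is \cref{prop:Heis.standard.K.groups} transported through the rescaling $\HH\cong m\HH$ and the crossed-product identifications \eqref{eq:Heis.crossed.product} and \eqref{eq:alternative.crossed.product}. Your explicit check that $j'_{yz}$ and $j_{yz}$ coincide as maps into $\Cstr(\HH)$ supplies the details the paper omits. So does your remark that the $z$-generator in \eqref{eq:Vm.in.alternative.crossed.product} must be $\rho_{m\HH}(0,0,m^2)$, since the paper's $\rho_{m\HH}(0,0,m)$ there is a typo.
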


\subsection{Symmetry breaking for integer Heisenberg group}
\begin{proposition}\label{prop:Heis.symm.breaking.maps}
Let $\mathcal{H}$ be any
$X$-$\HH$-module as in the \refbasicsetup.
For $m\in\N_{\geq 1}$, the weak subgroup ${\rm wK}_*(\Cst(X,\mathcal{H})^{m\HH})\simeq
{\rm wK}_*(\Cstr(m\HH))$ is free abelian on the generators
listed in
\cref{cor:Heis.standard.K.groups}), with $[X_m]$ excluded. Furthermore, under the
symmetry breaking maps $\phi_{n\HH\backslash m\HH}$, $m\mid n$, we have
\begin{align*}
[1]&\mapsto \left(\frac{n}{m}\right)^4[1],\\
[U_m]&\mapsto \left(\frac{n}{m}\right)^3[U_n],\\
[V_m]&\mapsto \left(\frac{n}{m}\right)^3[V_n],\\
(j_{xz})_*\big([w_{m}]\otimes [w_{m^2}]\big) 
&\mapsto \frac{n}{m}(j_{xz})_*\big([w_{n}]\otimes [w_{n^2}]\big),\\
(j_{yz}^\prime)_*\big([w_{m}]\otimes [w_{m^2}]\big)&\mapsto \frac{n}{m}(j_{yz}^\prime)_*\big([w_{n}]\otimes [w_{n^2}]\big).
\end{align*}
The strong quotient ${\rm sK}_1(\Cst(X,\mathcal{H})^{m\HH})\simeq {\rm sK}_1(\Cstr(m\HH))$ is free abelian on a single generator $q_{m\HH}[X_m]$, where $[X_m]$ is the remaining generator of $\K_*(\Cstr(m\HH))$. Furthermore, the strong symmetry breaking maps
\[
{\rm sK}_1(\iota_{n\HH\backslash m\HH})\colon {\rm sK}_1(\Cst(X,\mathcal{H})^{m\HH})\to {\rm sK}_1(\Cst(X,\mathcal{H})^{n\HH}),\qquad m\mid n,
\]
are isomorphisms $\Z\to\Z$.

\end{proposition}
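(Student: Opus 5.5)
By \cref{thm:main_computation_theorem} and \cref{prop:Heis.cofinal}, I would work entirely with the directed system $\{\K_*(\Cstr(m\HH))\}_{m\in\N}$ and the maps $\phi_{n\HH\backslash m\HH}$. By \cref{cor:exists.equivalent.directed.system}, composition is consistent, so it suffices to compute $\phi_{N\HH\backslash\HH}$ and then transport via the isomorphisms $\HH\simeq m\HH$. I will use the concrete section $s(i)=(a,b,c)$ with $0\le a,b<N$ and $0\le c<N^2$, so that $[\HH:N\HH]=N^4$, and use formula \eqref{eq:Ad_on_regular_translation}. The trivial-cocycle case only needs permutation matrices with entries in $\Cstr(N\HH)$.

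\textbf{Step 1: the degree-0 generator $[1]$.} This is immediate, since $\Ad_{\eta_s}(1)$ is the identity in $\Mat_{N^4}$. Hence $[1]\mapsto N^4[1]$.

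\textbf{Step 2: the unitaries $U$ and $V$.} The element $\rho(1,0,0)$ acts on $\HH/N\HH$ by a permutation. Its cycles have length $N$, and the $N$-th power of a representative is $(N,0,*)$. After a permutation conjugation, as in \eqref{eq:1D.translation.broken.matrix}, the matrix becomes a direct sum of $N^3$ blocks. Each block is a cyclic shift with one entry of the form $\rho(N,0,c')$. Here $c'$ is a multiple of $N^2$ up to a central correction of the form $W$-power times $U_N$. I must check that this correction is trivial in $\K_1$, since the central unitary $W_N$ has $[W_N]=0$ in $\K_1(\Cstr(N\HH))$ (it is a commutator by \eqref{eq:Heis.group.commutator}). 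This gives $[U_1]\mapsto N^3[U_N]$, and symmetrically $[V_1]\mapsto N^3[V_N]$.

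\textbf{Step 3: the products $[w]\otimes[w]$.} For $(j_{xz})_*([w_1]\otimes[w_1])$, naturality helps. The subalgebra $\Cstr(\Z_x\times\Z_z)$ of $\Cstr(\HH)$ breaks to $\Cstr(N\Z_x\times N^2\Z_z)$ with multiplicity $N^3/N^3$ absorbed as follows. The restriction of $\eta_s$ to $\ell^2(\HH)$, decomposed over cosets of $\Z_x\times\Z_z$, is compatible with $\Z^2$ symmetry breaking with index $N\cdot N^2$, repeated over the $N$ cosets in the $y$-direction. By the abelian computation \eqref{eq:KO.breaking.formula}, the strong class $[w]\otimes[w]$ of $\Z^2$ is preserved under each breaking, and the $N$ $y$-translates contribute $N$ copies. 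This gives the factor $n/m$. I would make this precise with a commutative square relating $j_{xz}$, the induction of subgroups, and $\phi$.

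\textbf{Step 4: the class $[X_m]$.} This is the main obstacle, since $X_m$ is only given implicitly through the Pimsner--Voiculescu boundary. I would avoid computing it directly and instead use \cref{thm:existence_strong_phase}: the comparison map $\K_1(\Cstr(m\HH))\to\K_1(\Cst(\HH_\R))\simeq\Z$ is surjective and compatible with symmetry breaking. The generators from Steps 1--3 get multiplied by $(n/m)^k$ with $k\ge1$, so they are divisible in the colimit, and their images in the reduced group $\Z$ vanish. Therefore $[X_m]$ must map to $\pm$ a generator. Consequently $\phi([X_m])=\pm[X_n]+(\text{weak combination})$, because the comparison maps commute. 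From this, the weak subgroup is exactly the span of the other generators: any element with a nonzero $[X]$-coefficient has nonzero image in $\Z$, so it is not divisible in the colimit. It follows that $\mathrm{sK}_1\simeq\Z$ is generated by $q[X_m]$, and the strong maps are $\pm1$, hence isomorphisms.
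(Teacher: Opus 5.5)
Your overall plan is the paper's: reduce to $m=1$ and $N=n/m$, show that the five generators other than $[X]$ are multiplied by positive powers of $N$ (hence weak), and then use \cref{thm:existence_strong_phase} to force $[X_m]$ onto a generator of $\K_1(\Cst(\HH_\R))\simeq\Z$. Step 1 is fine. Step 4 is correct and is essentially the paper's concluding argument; you even make explicit that the $[X]$-coefficient of $\phi_{n\HH\backslash m\HH}[X_m]$ is $\pm1$, and that no class with non-zero $[X]$-coefficient can be weak.

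\textbf{Step 2 rests on a false claim.} Because $N\HH$ is not normal in $\HH$, the orbits of $(1,0,0)$ on $N\HH\backslash\HH$ do not all have length $N$. One computes $(a,b,c)(N,0,0)(a,b,c)^{-1}=(N,0,-Nb)$, which lies in $N\HH$ only when $N\mid b$. So after $N$ steps the coset of $(a,b,c)$ has moved to the coset of $(a,b,c+Nb)$; for $N=2$ and $b=1$ the cycle has length $4$. The ``central correction'' you want to discard is therefore not an element of $\Cstr(N\HH)$ at all, and there are not $N^3$ blocks of size $N$.

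The correct bookkeeping is as follows. The orbit through $(a,b,c)$ has length $L=N^2/\gcd(N,b)$. The product of the entries around that cycle is $\rho_{N\HH}\big((a,b,c)(L,0,0)(a,b,c)^{-1}\big)=\rho_{N\HH}(L,0,-Lb)=U_N^{L/N}W_N^{k}$ for some $k\in\Z$. By $[W_N]=0$, the block contributes $(L/N)[U_N]$, and summing $L/N$ over all orbits gives $N^4/N=N^3$. So your formula survives, but only after this recount. The same holds for $V$ with $a$ in place of $b$, or you can use the automorphism $\varsigma_1$ as the paper does.

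\textbf{Step 3 needs the same care.} The $N$ orbits of $\Z_x\times\Z_z$ on $N\HH\backslash\HH$, indexed by $b$, have stabilizers $\{(Nu,\,Nbu+N^2v)\}$. These are carried onto $N\Z\times N^2\Z$ only by conjugation with $(0,b,0)$, i.e.\ by the shear $(x,z)\mapsto(x,z-bx)$. So besides K\"unneth and the 1D formula, you must use that this determinant-one automorphism fixes $[w_1]\otimes[w_1]$. (Note that \eqref{eq:KO.breaking.formula} by itself only covers $m\Z^d\to n\Z^d$.)

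The paper organizes this computation by breaking in two steps, through the subgroup with $y\in N\Z$. In the second step, $\Z_x\times\Z_z$ acts transitively on the cosets of $N\HH$, with stabilizer exactly $N\Z\times N^2\Z$, so K\"unneth and \eqref{eq:KO.symm.breaking.d1} apply directly. The paper also treats $(j'_{yz})_*([w_1]\otimes[w_1])$ via the alternative crossed product \eqref{eq:alternative.crossed.product}; your proposal never addresses this class. With these repairs your argument goes through.
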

\begin{proof}
It will suffice to consider $m=1$ and $N=\frac{n}{m}$.
\begin{itemize}
\item $[1]$ case. It is clear that breaking $\HH$ to $N\HH$ maps $[1_{\ell^2(\HH)}]$ to $N^4\cdot [1_{\ell^2(N\HH)}]$.
\item {} $[U_1], [V_1]$ cases.
As usual, we make the Hilbert space identifications
\begin{equation}\label{eq:Heis.Hilbert.decomposition}
\begin{aligned}
\ell^2(\HH)&\simeq \big(\ell^2(\Z_x)\otimes \ell^2(\Z_z)\big)\otimes \ell^2(\Z_y),\\
\ell^2(N\HH) & \simeq \big(\ell^2(N\Z_x)\otimes \ell^2(N^2\Z_z)\big)\otimes \ell^2(N\Z_y),
\end{aligned}
\end{equation}
under which
\begin{equation}\label{eq:tensor.form.U}
\begin{aligned}
U_1\defeq \rho_{\HH}(1,0,0) &\simeq (w_1\otimes 1_{\ell^2(\Z_z)})\otimes 1_{\ell^2(\Z_y)},\\
U_N\defeq \rho_{N\HH}(1,0,0) &\simeq (w_N\otimes 1_{\ell^2(N^2\Z_z)})\otimes 1_{\ell^2(N\Z_y)}.
\end{aligned}
\end{equation}
Consider the symmetry breaking
\[
\HH=\begin{pmatrix}1 & \Z & \Z \\ 0 & 1 & \Z\\ 0 & 0 & 1\end{pmatrix}\prec \begin{pmatrix} 1 & N\Z & \Z \\ 0 & 1 & \Z\\ 0 & 0 & 1 \end{pmatrix}\prec \begin{pmatrix} 1 & N\Z & N^2\Z \\ 0 & 1 & N\Z\\ 0 & 0 & 1 \end{pmatrix}=N\HH.
\]
For the first symmetry breaking step, only the tensor factor $\ell^2(\Z_x)$ is involved, and the 1D symmetry breaking result \eqref{eq:KO.symm.breaking.d1} for $\phi_{N\Z_x\backslash \Z_x}$ gives $[w_1]\mapsto [w_N]$. For the second symmetry breaking step, the $1_{\ell^2(\Z_z)}\otimes1_{\ell^2(\Z_y)}$ tensor factor is mapped to $1_{\ell^2(N^2\Z_z)^{\oplus N^2}}\otimes 1_{\ell^2(N\Z_y)^{\oplus N}}$. In total, 
\[
\phi_{N\HH\backslash \HH}\colon [U_1]\mapsto \big[\big(w_N\otimes 1_{\ell^2(N^2\Z_z)^{\oplus N^2}}\big)\otimes 1_{\ell^2(N\Z_y)^{\oplus N}}\big]=N^3[U_N].
\]

Unlike formula \eqref{eq:tensor.form.U} for $U_1$, under the Hilbert space isomorphism \eqref{eq:Heis.Hilbert.decomposition}, the unitary $V_1$ does not have a simple tensor decomposition formula. Instead, let us consider the map
\begin{equation}\label{eq:Heis.real.auto}
\varsigma_1\colon \HH\to\HH,\qquad (x,y,z)\mapsto (y,x,xy-z),
\end{equation}
which is an automorphism squaring to the identity. Restricting $\varsigma_1$ to $N\HH$, we also have an automorphism
\[
\varsigma_N\colon  N\HH\to N\HH.
\]
We denote the induced $\Cst$-algebra automorphisms of $\Cstr(\HH)$ and $\Cstr(N\HH)$ with the same symbols $\varsigma_1$ and $\varsigma_N$ respectively. We have $\varsigma_1(U_1)=V_1$ and $\varsigma_N(U_N)=V_N$. Moreover, the automorphisms are compatible with symmetry breaking,
\begin{align*}
\phi_{N\HH\backslash \HH}[V_1]&=\phi_{N\HH\backslash \HH}\circ (\varsigma_1)_*[U_1]\\
&=(\varsigma_N)_*\circ\phi_{N\HH\backslash \HH}[U_1]\\
&=(\varsigma_N)_*\big(N^3[U_N]\big)=N^3[V_N].
\end{align*}

\item {} $(j_{xz})_*\big([w_{1}]\otimes [w_{1}]\big)$ and $(j_{yz}^\prime)_*\big([w_{1}]\otimes [w_{1}]\big)$ cases. We break the symmetry in two steps,
\begin{equation}\label{eq:symm.break.steps.for.weak.Bott}
\HH=\begin{pmatrix}1 & \Z & \Z \\ 0 & 1 & \Z\\ 0 & 0 & 1\end{pmatrix}\prec \begin{pmatrix} 1 & \Z & \Z \\ 0 & 1 & N\Z\\ 0 & 0 & 1 \end{pmatrix} \prec \begin{pmatrix} 1 & N\Z & N^2\Z \\ 0 & 1 & N\Z\\ 0 & 0 & 1 \end{pmatrix}=N\HH.
\end{equation}
For the first symmetry breaking step of
\eqref{eq:symm.break.steps.for.weak.Bott}, we pick a section $s\colon N\Z_y\to \Z_y$, thus a section
\[
{\rm id}\times s\colon  \begin{pmatrix} 1 & \Z & \Z \\ 0 & 1 & N\Z\\ 0 & 0 & 1 \end{pmatrix} \to \begin{pmatrix} 1 & \Z & \Z \\ 0 & 1 & \Z\\ 0 & 0 & 1 \end{pmatrix}.
\]
The symmetry breaking map is $\K_*(\Ad_{\eta_{{\rm id}\times s}})$ composed with stabilization. We have
\[
\Ad_{\eta_{{\rm id}\times s}}(A\otimes 1_{\ell^2(\Z)})=A\otimes {\rm Ad}_{\eta_s}(1_{\ell^2(\Z)})=A\otimes 1_{\ell^2(N\Z)}^{\oplus N}.
\]

Applying the last observation to $(j_{xz})_*\big([w_{1}]\otimes [w_{1}]\big)$, we have
\begin{equation}\label{eq:partial.symm.break.Bott.gen}
\begin{aligned}
\K_*(\Ad_{\eta_{{\rm id}\times s}})\big((j_{xz})_*([w_1]\otimes [w_1])\big)
&=\K_*(\Ad_{\eta_{{\rm id}\times s}})\big(([w_1]\otimes [w_1])\otimes [1_{\ell^2(\Z)}]\big)\\
&=([w_1]\otimes [w_1])\otimes [1_{\ell^2(N\Z)}^{\oplus N}]\\
&=N\cdot ([(w_1]\otimes [w_1])\otimes \big[1_{\ell^2(N\Z)}\big].
\end{aligned}
\end{equation}
Next, the second symmetry breaking step of \eqref{eq:symm.break.steps.for.weak.Bott} takes
\[
\K_*\left(\Cstr\begin{pmatrix} 1 & \Z & \Z \\ 0 & 1 & N\Z\\ 0 & 0 & 1 \end{pmatrix}\right)\simeq \K_*(\Cstr(\Z\times \Z)\rtimes N\Z) \to\K_*(\Cstr(N\Z\times N^2\Z)\rtimes_\alpha N\Z),
\]
and it is induced by $\Ad_{t\times {\rm id}}$, where $t\colon N\Z\times N^2\Z\to \Z\times \Z$ is a section. When applied to the class \eqref{eq:partial.symm.break.Bott.gen}, we have
\begin{align*}
\K_*(\Ad_{t\times{\rm id}})\big(N\cdot ([(w_1]\otimes [w_1])\otimes \big[1_{\ell^2(N\Z)}\big]\big)
&=N\cdot\K_*(\Ad_t)([w_1]\otimes [w_1])\otimes \big[1_{\ell^2(N\Z)}\big]\\
&=N\cdot\big([w_N]\otimes [w_{N^2}]\big)\otimes \big[1_{\ell^2(N\Z)}\big]\\
&=N\cdot (j_{xz})_*([w_N]\otimes [w_{N^2}])
\end{align*}
where in the second equality, we used the K\"unneth isomorphism (\cref{lem:real_Kunneth_breaking_formula}) together with the 1D symmetry breaking formula \eqref{eq:KO.symm.breaking.d1}. This gives the scaling-by-$N$ result for $\phi_{N\HH\backslash \HH}$ applied to $(j_{xz})_*([w_1]\otimes [w_1])$.

As for $(j^\prime_{yz})_*([w_1]\otimes [w_1])$, we use the alternative crossed product identification \eqref{eq:alternative.crossed.product}, to convert it into $j_{yz}([w_1]\otimes[w_1])$ (see \eqref{eq:Vm.in.alternative.crossed.product}). Then we similarly deduce that it scales by $N$ under $\phi_{N\HH\backslash \HH}$.
\end{itemize}

The above calculations show that five of the six free generators of
$\K_*(\Cstr(m\HH))$ are weak, so they are annihilated by the map $q_{m\HH}$
to the strong quotient. So the strong quotient
${\rm sK}_*(\Cst(X,\mathcal{H})^{m\HH})\simeq{\rm
sK}_*(\Cstr(m\HH))$ is either trivial or is singly-generated by
$q_{m\HH}[X_m]$, where $[X_m]$ is the last remaining generator of
$\K_*(\Cstr(m\HH))$. 

Without loss of generality, we may assume that $\mathcal{H}$ is ample. By \cref{thm:existence_strong_phase}, the comparison map is surjective from ${\rm sK}_1(\Cst(X,\mathcal{H})^{m\HH})$ to $\Z$. It follows that ${\rm sK}_1(\Cst(X,\mathcal{H})^{m\HH})$ must be isomorphic to $\Z$. This also shows ${\rm wK}_*(\Cstr(\Z^d,\sigma_\Theta))$ has no further generators besides the five mentioned above.

As the above conclusion holds for every $m$, consistency under
symmetry breaking (see \eqref{eq:consistent.symm.break.strong}) implies
that the strong symmetry breaking maps ${\rm sK}_1(\iota_{n\HH\backslash
m\HH})$ must be isomorphisms $\Z\xrightarrow{\sim}\Z$.
\end{proof}

\begin{theorem}\label{thm:Heis.symm.breaking}
Let $\HH$ be the integer Heisenberg group, and $\mathfrak{H}$ be its symmetry breaking poset. For any $X$-$\HH$-module $\mathcal{H}$ in the \refbasicsetup, we have
\[
\K_i(\Cst(X,\mathcal{H})^{\mathfrak{H}})\simeq \colim_{m\in
\N}\K_i(\Cstr(m\HH))\simeq 
\begin{cases}
    \Q^3, & i=0,\\
\Q^2\oplus \Z, & i=1.
\end{cases}
\]
\end{theorem}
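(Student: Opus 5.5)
The plan is to reduce everything to an explicit colimit of free abelian groups along the cofinal subsystem $\{m\HH\}_{m\in\N}$, and then read off the answer generator by generator. First, by \cref{thm:main_computation_theorem} (with $\sigma=1$), the directed system $\{\K_i(\Cst(X,\mathcal{H})^G)\}_{G\in\mathfrak{H}}$ is naturally isomorphic to $\{\K_i(\Cstr(G))\}_{G\in\mathfrak{H}}$ with connecting maps $\phi_{H\backslash G}$. The symmetry breaking Roe algebra is itself a colimit, and $\K$-theory commutes with inductive limits, so
\[
\K_i(\Cst(X,\mathcal{H})^{\mathfrak{H}})\simeq\colim_{G\in\mathfrak{H}}\K_i(\Cstr(G)).
\]
By \cref{prop:Heis.cofinal} the family $\{m\HH\}$ is cofinal. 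Cofinal subsets compute the same colimit (\cref{lem:cofinal_isomorphism}), so the colimit can be taken over $m\in\N$, ordered by divisibility, with connecting maps $\phi_{n\HH\backslash m\HH}$.

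Next, fix the bases of \cref{cor:Heis.standard.K.groups} at each stage and use the formulas of \cref{prop:Heis.symm.breaking.maps}. On $\K_0$, every connecting map is diagonal in the basis $[1]$, $(j_{xz})_*([w_m]\otimes[w_{m^2}])$, $(j'_{yz})_*([w_m]\otimes[w_{m^2}])$, with entries $(n/m)^4$, $n/m$ and $n/m$. Each summand is then a system $\Z\xrightarrow{(n/m)^k}\Z$ with $k\geq 1$. As in \eqref{eq:basic_mult_colimit}, the maps $x\mapsto x/m^k$ into $\Q$ are compatible, and their images exhaust $\Q$ because every denominator divides some $m^k$. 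Hence each summand has colimit $\Q$, and $\K_0$ of the colimit is $\Q^3$. For $\K_1$, the summands spanned by $[U_m]$ and $[V_m]$ scale by $(n/m)^3$ and so contribute $\Q^2$ in the same way.

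The main obstacle is the remaining generator $[X_m]$, since \cref{prop:Heis.symm.breaking.maps} does not give $\phi(X_m)$ exactly. It only gives $\phi_{n\HH\backslash m\HH}[X_m]=\varepsilon[X_n]+a[U_n]+b[V_n]$ for some integers $a,b$ and sign $\varepsilon=\pm1$. Here the sign comes from ${\rm sK}_1$ mapping isomorphically $\Z\to\Z$, and the error $a[U_n]+b[V_n]$ lies in the weak subgroup. One could try to compute $\phi[X_m]$ directly from the Anderson--Paschke construction, but I would avoid that and argue abstractly with exact sequences instead. Write the stage-$m$ sequence
\[
0\to{\rm wK}_1(\Cstr(m\HH))\to\K_1(\Cstr(m\HH))\to{\rm sK}_1(\Cstr(m\HH))\to0.
\]
These sequences form an $\mathfrak{H}$-shaped diagram, and colimits of abelian groups are exact (\cref{lem:direct_colimit_Ab_exact}). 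Taking colimits therefore gives a short exact sequence $0\to\Q^2\to\colim\K_1\to\colim\,{\rm sK}_1\to0$. The right-hand term is $\Z$, because all strong connecting maps are isomorphisms $\Z\to\Z$. Since $\Z$ is free, this sequence splits, giving $\colim\K_1\simeq\Q^2\oplus\Z$. (Alternatively, $\Q^2$ is divisible and hence injective, which also forces the splitting.)

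The same exact-sequence argument doubles as a check in degree $0$, where the strong quotient vanishes. It is consistent with the divisible/reduced decomposition of \cref{sec:weak_vs_strong}, and it confirms $\K_0\simeq\Q^3$ independently of the diagonal computation above.
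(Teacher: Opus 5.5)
Your proposal is correct and follows the paper's proof: you pass to the cofinal system $\{m\HH\}$ via \cref{thm:main_computation_theorem}, \cref{lem:cofinal_isomorphism} and \cref{prop:Heis.cofinal}, read off the weak colimits from the diagonal scalings in \cref{prop:Heis.symm.breaking.maps}, and obtain the extra $\Z$ in degree $1$ from the strong colimit. Your explicit handling of $\phi[X_m]$, which is known only modulo weak classes, via exactness of colimits (\cref{lem:direct_colimit_Ab_exact}) and splitting over the free quotient $\Z$, spells out a step the paper leaves implicit.
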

\begin{proof}
The first isomorphism is due to \cref{thm:main_computation_theorem}, \cref{lem:cofinal_isomorphism} and \cref{prop:Heis.cofinal}. In \cref{prop:Heis.symm.breaking.maps}, the weak symmetry breaking maps $\phi_{m\HH\backslash n\HH}, n\mid n$, were found to diagonal matrices
\[
\mathrm{diag}\big((n/m)^4, (n/m)^3, (n/m)^3, (n/m), (n/m)\big)
\]
with respect to suitable free abelian bases for the weak subgroups. So the weak colimit is isomorphic to $\Q^3$ in degree $1$ and $\Q^2$ in degree $0$. As for the extra $\Z$-summand in degree $1$, this comes from the last part of \cref{prop:Heis.symm.breaking.maps} which implies that the strong colimit is isomorphic to $\Z$.
\end{proof}

\section{Symmetry breaking in Euclidean space: complex twisted case}
\label{sec:Euclidean.twisted}
In this section, we study symmetry breaking for $\Gamma\simeq\Z^d$ in the
twisted case. The tensor product factorization and K\"unneth theorem method
no longer works. Instead, we will use the Pimsner--Voiculescu exact
sequence \cite{Pimnser-Voiculescu:Exact_sequence}, combined with
\cref{thm:existence_strong_phase}, to compute the weak subgroups and
strong quotients, as well as their colimits.

\subsection{Symmetry breaking computation}\label{sec:Euclidean.twisted.computation}
Fix a dimension $d\geq 1$, and a standardized 2-cocycle $\sigma_\Theta$ on $\Z^d$ (Eq.~\eqref{eq:standard_2_cocycle}), specified by a real antisymmetric matrix $(\Theta_{jk})_{j,k=1,\ldots,d}$. We continue to write $\sigma_\Theta$ for the restricted 2-cocycle on subgroups of $\Z^d$.
Let $\{e_1,\ldots,e_d\}$ denote the standard basis elements of $\Z^d$. For a multi-index
\[
I=\{i_1,\ldots,i_{|I|}\}\subseteq\{1,\ldots,d\},
\]
we write $\Z^I\subseteq\Z^d$ for the rank-$|I|$ subgroup with generators $\{e_i\}_{i\in I}$, and $I^\mathrm{c}$ for the complementary multi-index.

The twisted group $\Cst$-algebra $\Cstr(\Z^I,\sigma_\Theta)$ is generated by the unitary operators
\[
U^I_j=\rho_{\Z^I}^{\sigma_\Theta}(e_j),\qquad j\in I,
\]
which, by formula \eqref{eq:standard_2_cocycle}, satisfy
\begin{equation}\label{eq:higher_NC_torus_conjutation_rule}
U^I_k U^I_j (U^I_k)^*=\exp(2\pi i\Theta_{kj}) U^I_j,\qquad j,k\in I.
\end{equation} 
So $\Cstr(\Z^I,\sigma_\Theta)$ is an \emph{$|I|$-dimensional noncommutative
torus} in the sense of
\cites{Rieffel:Higher-dimensional_NC_tori,Elliott:Projective_representation_torsion-free}.

The unitaries $U^I_j, j\in I\setminus\{j_{|I|}\}$ generate the
$\Cst$-subalgbera
\mbox{$\Cstr(\Z^{I\setminus\{j_{|I|}\}},\sigma_\Theta)\otimes
1_{\ell^2(\Z)}$.} Eq.~\eqref{eq:higher_NC_torus_conjutation_rule} specifies
how conjugation by $U_{j_{|I|}}$ implements an automorphism of this
subalgebra. Thus $\Cstr(\Z^{I},\sigma_\Theta)$ is a (reduced) crossed
product of $\Cstr(\Z^{I\setminus\{j_{|I|}\}},\sigma_\Theta)$ by a group
$\Z$ of automorphisms.
Iterating, $\Cstr(\Z^I,\sigma_\Theta)$ is an $|I|$-fold \emph{crossed
product} of the algebra $\C$ by $\Z$.

Let us consider
\[
  \Cstr(\Z^d,\sigma_\Theta)\defeq \Cstr(\Z^{\{1,\ldots,d\}},\sigma_\Theta)=\Cstr(\Z^{\{1,\ldots,d-1\}},\sigma_\Theta)\rtimes\Z^{\{d\}}.
\]
Above, the automorphism is labelled by the real parameters $\Theta_{dj}, j=1,\ldots,d-1$, as in \eqref{eq:higher_NC_torus_conjutation_rule}, thus it is homotopic to the identity automorphism, and induces the identity map on $\K_*(\Cstr(\Z^{\{1,\ldots,d-1\}},\sigma_\Theta))$. Consequently, the associated PV exact sequence simplifies to the short exact sequences
\begin{equation}\label{eq:higher_NC_PV_SES}
\begin{tikzcd}
0\ar[r] & \K_i(\Cstr(\Z^{\{1,\ldots,d-1\}},\sigma_\Theta)) \ar[r,"(\blank)\otimes 1"] & \K_i\big(\Cstr(\Z^{\{1,\ldots,d\}},\sigma_\Theta)\rtimes\Z^{\{d\}}\big)\ar[d,"\partial_{{\rm PV}}"] \\
{} & 0 & \K_{i-1}(\Cstr(\Z^{\{1,\ldots,d-1\}},\sigma_\Theta))\ar[l]
\end{tikzcd}
\end{equation}
for $i=0,1$.
We know that $\K_0(\C)\simeq\Z$, generated by $[1]$. Then, by an inductive argument, \begin{equation}\label{eq:complex_K_groups_torus}
\K_*(\Cstr(\Z^d,\sigma_\Theta))\simeq \Z^{2^d}.
\end{equation}

Below, we shall describe a basis $\{w_I\}_{I\subseteq\{1,\ldots,d\}}$ for $\K_*(\Cstr(\Z^d,\sigma_\Theta))$. For each multi-index $I\subseteq\{1,\ldots,d\}$, consider $\Cstr(\Z^{I^\mathrm{c}},\sigma_\Theta)$ as an $|I^\mathrm{c}|$-fold crossed product, and set
\begin{equation}\label{eq:PV.lift.iterated}
w^{I^\mathrm{c}}_\emptyset\in \K_{|I^\mathrm{c}|}(\Cstr(\Z^{I^\mathrm{c}},\sigma_\Theta))
\end{equation}
to be an $|I^\mathrm{c}|$-fold iterated PV lift of $[1]\in\K_0(\C)$. Then define
\begin{equation}\label{eq:higher_NC_basis_choice}
w_I\defeq w^{I^\mathrm{c}}_\emptyset\otimes 1_{\ell^2(\Z^I)}\in\K_{|I^\mathrm{c}|}(\Cstr(\Z^d,\sigma_\Theta)).
\end{equation}
Note that:
\begin{itemize}
\item If $d\in I$, then $w_I$ factorizes as
\begin{equation*}
w_I=\big(w^{I^\mathrm{c}}_\emptyset\otimes 1_{\ell^2(\Z^{I\setminus\{d\}})}\big)\otimes 1_{\ell^2(\Z^{\{d\}})}.
\end{equation*}
So $w_I$ is obtained from a $\K_*(\Cstr(\Z^{\{1,\ldots,d-1\}},\sigma_\Theta))$ class by applying $(\blank)\otimes 1$ in \eqref{eq:higher_NC_PV_SES}.
\item Otherwise, if $d\in I^\mathrm{c}$, we have the crossed product
\[
\Cstr(\Z^{I^\mathrm{c}},\sigma_\Theta)=\Cstr(\Z^{I^\mathrm{c}\setminus\{d\}},\sigma_\Theta)\rtimes\Z^{\{d\}}.
\]
The homomorphism
\[
(\blank)\otimes 1_{\ell^2(\Z^I)}\colon \Cstr(\Z^{I^\mathrm{c}\setminus\{d\}},\sigma_\Theta)\to \Cstr(\Z^{\{1,\ldots,d-1\}},\sigma_\Theta)
\]
is $\Z^{\{d\}}$-equivariant. So naturality of PV sequences gives
\[
\partial_{{\rm PV}} w_I=\partial_{{\rm PV}} w_\emptyset^{I^\mathrm{c}}\otimes \big[1_{\ell^2(\Z^I)}\big]\in \K_*(\Cstr(\Z^{\{1,\ldots,d-1\}},\sigma_\Theta)).
\]
\end{itemize}
Inductively, and from our definition of $w_\emptyset^{I^\mathrm{c}}$, we see that $w_I$ is obtained from $[1]\in\K_0(\C)$ by taking PV lifts $|I^\mathrm{c}|$ times and $|I|$-fold applications of $(\cdot)\otimes 1_{\ell^2(\Z)}$. Recalling the short exact sequences \eqref{eq:higher_NC_PV_SES}, iterated $d$ times, we conclude that these $w_I, I\subseteq\{1,\ldots,d\}$ provide a free abelian basis for $\K_*(\Cstr(\Z^d,\sigma_\Theta))$.

\begin{remark}
A similar iterated-PV construction of a basis for
$\K_*(\Cstr(\Z^d,\sigma))$ was given in
\cite{Prodan-SBaldes:Complex_topological_insulators}*{Section 4.2.3}. We stress that many choices are involved in constructing the basis elements $w_I$. This is because the splitting of the PV short exact sequences are not canonical, so the $w_\emptyset^{I^\mathrm{c}}$ in \eqref{eq:PV.lift.iterated} are not canonical. In fact, it is not easy to specify any $i=0$ PV lift's representative explicitly.

For our basis \eqref{eq:higher_NC_basis_choice}, the $(\blank)\otimes
1_{\ell^2(\Z^I)}$ tensor factor indicates that the basis elements $w_I$
with $I\neq \emptyset$ are ``stacked topological phases'' with ``stacking
directions'' $I$. This terminology is borrowed from physics, where ``weak''
topological phases are often described as being connected to ``stacked''
phases.
\end{remark}

For $m\in\N_{\geq 1}$, there is likewise a basis $\{w_{m,I}\}_{I\subseteq\{1,\ldots,d\}}$ for $\K_*(\Cstr(\Z^d,\sigma_\Theta))\simeq \Z^{2^d}$.

\begin{proposition}\label{prop:strong.basis}
The weak subgroup ${\rm wK}_*(\Cstr(m\Z^d,\sigma_\Theta))$ is free abelian on the generators $w_{m,I}, I\neq \emptyset$, while the strong quotient ${\rm sK}_*(\Cstr(m\Z^d,\sigma_\Theta))$ is freely generated by the image of $w_{m,\emptyset}$. Furthermore, for $m\mid n$, the strong symmetry breaking maps ${\rm sK}_d(\Cstr(m\Z,\sigma_\Theta))\to {\rm sK}_d(\Cstr(n\Z,\sigma_\Theta))$ are isomorphisms $\Z\to\Z$.
\end{proposition}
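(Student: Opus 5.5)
The plan follows the pattern of \cref{prop:Heis.symm.breaking.maps}. First I show that every $w_{m,I}$ with $I\neq\emptyset$ is weak. Second I use \cref{thm:existence_strong_phase} to pin down the strong quotient.

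\textbf{Weak generators.} By cofinality (\cref{lem:cofinal_subseteq_R}) it suffices to compute $\phi_{n\Z^d\backslash m\Z^d}$ on the basis, and by rescaling I take $m=1$, $n=N$. I would show that, modulo classes already known to be weak,
\[
\phi_{N\Z^d\backslash \Z^d}(w_{1,I})=N^{|I|}w_{N,I}.
\]
Then in the colimit $w_{1,I}=N^{|I|}\cdot\iota(w_{N,I})$ for every $N$, so its image is divisible. To get this formula, break the symmetry one coordinate at a time: $\Z^d\prec N\Z\times\Z^{d-1}\prec\cdots\prec N\Z^d$. For a coordinate $j\in I$, the class $w_I$ carries the factor $1_{\ell^2(\Z^{\{j\}})}$, so this factor becomes $1^{\oplus N}$, exactly as in the $[1]$ case of \cref{lem:SB_map_1-dim}. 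This gives a factor $N$.

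For a coordinate $j\in I^{\mathrm c}$, the class is a PV lift. Here I would check that symmetry breaking $\Z\to N\Z$ in the last crossed-product direction is compatible with the PV boundary maps, in the sense that $\partial_{\rm PV}\circ\phi=\phi'\circ\partial_{\rm PV}$, with no scaling. This mirrors $[w_1]\mapsto[w_N]$ in \cref{lem:SB_map_1-dim}. The cocycle $\sigma_\Theta$ restricted to $N\Z^d$ is $\sigma_{N^2\Theta}$-like, but it is still a standardized cocycle, so the PV sequences \eqref{eq:higher_NC_PV_SES} remain short exact. Compatibility of the lifts then holds only up to elements in the image of $(\blank)\otimes 1$, and those are combinations of $w_{N,I'}$ with $I'\ni j$.

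An induction on $d$ then shows that $\phi$ is block upper triangular with respect to the filtration by $|I|$. The diagonal entries are $N^{|I|}$, and the off-diagonal terms lie in spans of $w_{N,I'}$ with $|I'|>|I|$. Downward induction on $|I|$ then makes all $w_{1,I}$ with $I\ne\emptyset$ weak. Indeed, if $x=N^k y+z$ with $z$ in the span of already-weak classes, then $x$ is weak, because divisible elements form a subgroup and a sum $N^k y+z$ with $z$ divisible is divisible for all $N$ simultaneously after a diagonal choice. Making this precise with varying $N$ is the main obstacle. I would handle it by working with the filtration subgroups, whose image in the colimit I would show to be divisible.

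\textbf{Strong quotient.} Once the $w_{m,I}$ with $I\neq\emptyset$ are known to be weak, the strong quotient is generated by $q(w_{m,\emptyset})$, which lies in degree $d$. By \cref{sec:Euclidean.magnetic.translations}, $\sigma_\Theta$ is a magnetic cocycle for $(\R^d,\Z^d)$. So \cref{thm:existence_strong_phase} gives a surjection ${\rm sK}_d\to\K_d(\Cst(\R^d))\simeq\Z$. Hence ${\rm sK}_d\cong\Z$, generated by $q(w_{m,\emptyset})$, and ${\rm sK}_{d-1}=0$.

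It follows that the weak subgroup is exactly the span of the $w_{m,I}$ with $I\ne\emptyset$. It contains this span, and the quotient by the span is $\Z$ mapping onto $\Z$, so nothing else can be weak. Finally, the strong maps ${\rm sK}_d\to{\rm sK}_d$ commute with the surjective comparison maps to $\K_d(\Cst(\R^d))\simeq\Z$ by \eqref{eq:consistent.symm.break.strong}. A map $\Z\to\Z$ compatible with two isomorphisms onto $\Z$ is itself an isomorphism.
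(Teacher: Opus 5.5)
Your treatment of the strong quotient matches the paper's argument. You use \cref{thm:existence_strong_phase} for the magnetic cocycle $\sigma_\Theta$ to get ${\rm sK}_d\simeq\Z$, conclude that nothing beyond the span of the $w_{m,I}$ with $I\neq\emptyset$ is weak, and get the isomorphisms from consistency.

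The gap is in showing that the $w_{m,I}$, $I\neq\emptyset$, are weak. The block-triangular form you claim for the full breaking $\Z^d\to N\Z^d$ does not follow from PV-compatibility. When you break the last crossed-product direction $d$ with $d\notin I$, compatibility with $\partial_{\rm PV}$ fixes the image of $w_I$ only modulo the image of $(-)\otimes 1$ from $\K_*(\Cstr(\Z^{\{1,\ldots,d-1\}},\sigma_\Theta))$. That image is spanned by the $w_{N,I'}$ with $d\in I'$ of \emph{every} cardinality of the right parity, not only $|I'|>|I|$.
\begin{itemize}
\item For $d=3$ and $I=\{1\}$, the class $w_{N,\{3\}}$ lies in that image and has the same degree and cardinality as $w_{N,I}$.
\item For $d=4$ and $I=\{1,2,3\}$, the class $w_{N,\{4\}}$ lies in $\K_1$ like $w_{N,I}$ but has smaller cardinality.
\end{itemize}
Nothing in your argument rules out such components, so the downward induction on $|I|$ has nothing to stand on. Separately, a coordinate $j\in I^\mathrm{c}$ with $j\neq d$ is not the last direction of the iterated crossed product defining the $w_I$, so your PV step does not directly apply there.

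A triangular form of this kind does hold, but the paper obtains it only afterwards, in \cref{prop:higher_NC_torus_symm_break_matrix}. That proof uses that $w_I$ is pushed forward from the subalgebra $\Cstr(\Z^{I^\mathrm{c}},\sigma_\Theta)$, and it invokes the present proposition in dimension $|I^\mathrm{c}|$.

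The fix is to not break the $I^\mathrm{c}$ directions at all. Weakness is tested in the colimit over all finite-index subgroups, so you may stop at $\Z^{I^\mathrm{c}}\times N\Z^I$. The class $w_I$ is the image of $w^{I^\mathrm{c}}_\emptyset$ under the inclusion of $\Cstr(\Z^{I^\mathrm{c}},\sigma_\Theta)$, which this partial breaking leaves alone. Your own observation for the coordinates $j\in I$ then gives exactly $w_I\mapsto N^{|I|}w_{N,I}$, with no error terms. Hence the image of $w_I$ in the colimit is divisible by every $N$.

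To conclude, note that the colimit is a direct colimit of free abelian groups, hence torsion-free. In a torsion-free group every divisible element $a$ lies in the maximal divisible subgroup, because the elements $kb_n$ with $nb_n=a$ form a divisible subgroup. This is the paper's argument, and it removes both the triangularity and the ``varying $N$'' difficulty you flagged.
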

\begin{proof}
As usual, we may replace $(m,n)$ by $(1,N)$ where $N=\frac{n}{m}$.

First, suppose $I\neq \emptyset$. For the subgroup $\Z^{I^\mathrm{c}}\times N\Z^I$, we can construct a basis for $\K_*(\Cstr(\Z^{I^\mathrm{c}}\times N\Z^I,\sigma_\Theta))$ by the same kind of formula as \eqref{eq:higher_NC_basis_choice},
\[
w_{N,J}\defeq w^{J^\mathrm{c}}_\emptyset\otimes 1_{\ell^2(\Z^{J\cap I^\mathrm{c}})}\otimes 1_{\ell^2(N\Z^{J\cap I})}\in\K_*(\Cstr(\Z^d,\sigma_\Theta)),\qquad J\subseteq\{1,\ldots,d\}.
\]
In particular, for $J=I$, we have
\[
w_{N;I}=w^{I^\mathrm{c}}_\emptyset\otimes 1_{\ell^2(N\Z^{I})}
\]
It is clear that symmetry breaking from $\Z^d=\Z^{I^\mathrm{c}}\times \Z^I$ to $\Z^{I^\mathrm{c}}\times N\Z^I$ maps
\begin{equation}\label{eq:partial.weak.symm.breaking}
w_I\mapsto N^{|I|}\cdot w_{N,I}.
\end{equation}
The right side is $N$-divisible (since $|I|>0$). As $N$ was arbitrary, this implies that $w_I$ maps to a divisible element in the symmetry breaking colimit, i.e., $w_I$ is weak. So all ``stacked'' basis elements $w_I, I\neq\emptyset$ map to zero in the strong quotient ${\rm sK}_*(\Cstr(\Z^d,\sigma_\Theta))$.

Thus ${\rm sK}_*(\Cstr(\Z^d,\sigma_\Theta))$ is generated by the image
$q_{\Z^d}(w_\emptyset)$ of the ``unstacked'' degree-$d$ basis element
$w_\emptyset$. Apply \cref{thm:existence_strong_phase} to deduce
that the comparison map 
\[
{\rm sK}_d(\iota_{1\backslash \Z^d})\colon {\rm sK}_d(\Cstr(\Z^d,\sigma_\Theta))\simeq \K_d(\Cst(\R^d)^{\Z^d})\to \K_d(\Cst(\R^d))\simeq \Z
\]
is a surjection (using an ample $\sigma_\Theta$-twisted
$\R^d$-$\Z^d$-module for the Roe algebras). It follows that $q_{\Z^d}(w_\emptyset)$ is a free generator for ${\rm sK}_d(\Cstr(\Z^d,\sigma_\Theta))\simeq\Z$. This also shows that no multiple of $w_\emptyset$ is weak, thus ${\rm wK}_*(\Cstr(\Z^d,\sigma_\Theta))$ has no further generator besides the ``stacked'' ones mentioned earlier.

The same conclusion holds for every $m\in\N_{\geq 1}$. By consistency,
Eq.~\eqref{eq:consistent.symm.break.strong}, the symmetry breaking maps
must be isomorphisms $\Z\to\Z$.
\end{proof}

\begin{proposition}\label{prop:higher_NC_torus_symm_break_matrix}
For $N\in\N_{\geq 1}$, let $\{w_I\}_{\emptyset\neq I\subseteq\{1,\ldots,d\}}$ and
$\{w_{N,I}\}_{\emptyset\neq I\subseteq\{1,\ldots,d\}}$ be bases for
${\rm wK}_*(\Cstr(\Z^d,\sigma_\Theta))$ and ${\rm wK}_*(\Cstr(N\Z^d,\sigma_\Theta))$ respectively,
as in \cref{prop:strong.basis}. Order these bases in decreasing order of
$|I|$. Then the symmetry breaking homomorphism $\phi_{N\Z^d\setminus \Z^d}$ on the weak subgroups is represented as an upper-triangular $(2^d-1)\times (2^d-1)$ integer matrix
$\Phi_N$ with non-zero diagonal entries.
\end{proposition}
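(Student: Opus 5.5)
We will factor the symmetry breaking $\Z^d\to N\Z^d$ into $d$ one-directional steps,
\[
\Z^d=G_0\prec G_1\prec\cdots\prec G_d=N\Z^d,\qquad G_k\defeq N\Z^{\{1,\ldots,k\}}\times\Z^{\{k+1,\ldots,d\}},
\]
and use \cref{cor:exists.equivalent.directed.system} to write $\phi_{N\Z^d\backslash\Z^d}$ as the composite of the single-direction maps $\phi_{G_k\backslash G_{k-1}}$. Each $G_k$ gets a basis $\{w^{(k)}_J\}_J$ built exactly as in \eqref{eq:higher_NC_basis_choice}, with $\Z^{\{j\}}$ replaced by $N\Z^{\{j\}}$ for $j\le k$. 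A product of upper-triangular matrices with nonzero diagonal is again of that form, so it suffices to treat one step. We then restrict to weak subgroups: by \cref{prop:strong.basis} the weak subgroups are spanned by the $w_I$ with $I\neq\emptyset$ and are preserved. The target claim for a single step breaking direction $k$ is
\[
\phi(w_J)=N^{\epsilon}\, w^{(k)}_J+\sum_{|J'|<|J|}c_{J'}\,w^{(k)}_{J'},\qquad \epsilon\in\{0,1\},
\]
with $\epsilon=1$ when $k\in J$ and $\epsilon=0$ otherwise. For the final composite we order the whole basis by decreasing $|I|$.

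\textbf{Case $k\in J$.} Here $w_J$ contains the tensor factor $1_{\ell^2(\Z^{\{k\}})}$. Reorder the iterated crossed product so that direction $k$ is the last one adjoined; \cref{prop:cohomologous_symm_breaking} and $\Theta$-rescaling make this legitimate, up to isomorphism of bases that preserves the degree filtration. Then, as in \eqref{eq:partial.weak.symm.breaking}, $\Ad_{\eta_s}(1_{\ell^2(\Z)})=1_{\ell^2(N\Z)}^{\oplus N}$ gives $\phi(w_J)=N\,w^{(k)}_J$ exactly.

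\textbf{Case $k\notin J$.} Now $w_J$ involves a PV lift in direction $k$. We will use naturality of the PV sequence \eqref{eq:higher_NC_PV_SES} under the symmetry breaking $\Z^{\{k\}}\to N\Z^{\{k\}}$. In one dimension (\cref{lem:SB_map_1-dim}) the boundary commutes with breaking up to the factor $N$ on the base class and $1$ on the lifted generator. More precisely, the boundary $\partial_{\rm PV}$ intertwines $\phi$ on the crossed product with the identity on the coefficient algebra, which is unbroken in the other directions. Hence $\partial_{\rm PV}\phi(w_J)=\partial_{\rm PV}w^{(k)}_J$. By exactness of \eqref{eq:higher_NC_PV_SES}, the difference $\phi(w_J)-w^{(k)}_J$ lies in the image of $(\blank)\otimes 1_{\ell^2(N\Z^{\{k\}})}$. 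That image is spanned by the $w^{(k)}_{J'}$ with $k\in J'$.

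\textbf{Main obstacle.} The hard part is showing that these correction terms only involve $J'$ with $|J'|<|J|$, in the chosen ordering. I expect to obtain this from degree and filtration considerations, since $w_{J'}$ lies in $\K_{d-|J'|}$ and $\phi$ preserves degree mod $2$. But parity alone does not prevent a correction term with $|J'|=|J|+2$. I would first try to strengthen the claim to: $\phi$ preserves the filtration by the number of PV lifts, i.e.\ by $|I^{\mathrm c}|$. I would prove this by induction on $d$ via the PV sequences, using that every map in sight commutes with the $(\blank)\otimes1$ inclusions. If that fails, the fallback is to redefine the basis so that it is adapted to the filtration by images of lower-rank subalgebras $\Cstr(\Z^{I^{\mathrm c}})$, which are compatible with symmetry breaking.

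Given filtration preservation, the diagonal entries of $\Phi_N$ are $N^{|I|}\neq0$: each step contributes $N$ exactly in the directions contained in $I$. The composite is then upper triangular with nonzero diagonal, as stated in \cref{prop:higher_NC_torus_symm_break_matrix}.
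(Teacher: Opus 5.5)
There is a genuine gap. The triangularity you set out to prove points the wrong way, so your single-step claim is false. With the bases listed by decreasing $|I|$ and $\Phi_N$ read column by column (as in the paper: the $I$-th column has $I$-th entry $\pm N^{|I|}$ and zeros afterwards), upper-triangularity puts the off-diagonal entries of the $I$-th column in rows listed \emph{before} $I$. Those rows are multi-indices of \emph{larger} cardinality. This is also what really happens: $\phi(w_I)=\pm N^{|I|}w_{N,I}+\sum_{I'\supsetneq I}(\cdots)\,w_{N,I'}$.

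The corrections you want to exclude do occur. For $d=3$ one finds
\[
w_{\{1\}}\mapsto \pm N\,w_{N,\{1\}}+Nn\,w_{N,\{1,2,3\}}.
\]
The PV lifts $w^{\{2,3\}}_\emptyset$ and $w^{\{2,3\}}_{N,\emptyset}$ are only determined up to adding multiples of $[1]$. Replacing them by $w^{\{2,3\}}_\emptyset+a[1]$ and $w^{\{2,3\}}_{N,\emptyset}+b[1]$ changes $n$ into $n+aN^2\mp b$, so $n\neq 0$ for generic choices. This is exactly a term with $|J'|=|J|+2$, the ``main obstacle'' you hope to rule out. It cannot be ruled out, and it need not be, because it lies above the diagonal.

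Hence your formula $\phi(w_J)=N^{\epsilon}w^{(k)}_J+\sum_{|J'|<|J|}c_{J'}w^{(k)}_{J'}$ fails in general. The end bases are arbitrary, so some single step must produce such a term. The filtration that is actually preserved is $\mathrm{span}\{w_I:|I^{\mathrm c}|\le p\}$, which gives $|I'|\geq|I|$, not $|I'|<|I|$.

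Even with the inequality reversed, your Case $k\notin J$ argument does not give triangularity. Applied to the whole algebra, exactness of the PV sequence only places $\phi(w_J)-w^{(k)}_J$ in the span of all $w^{(k)}_{J'}$ with $k\in J'$. That span contains same-size indices, such as $J'=\{k\}$ for $J=\{j\}$. These are allowed by parity, and the global argument cannot exclude them. The argument also presupposes that $k$ is the outermost direction of the iterated crossed product defining $w_J$; that is why you need the reordering, which itself changes the lifts.

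The missing idea is to localize to the lower-rank torus. The class $w_J$ is the image of $w^{J^{\mathrm c}}_\emptyset$ under the inclusion of $\Cstr(\Z^{J^{\mathrm c}},\sigma_\Theta)$, and symmetry breaking is compatible with this inclusion. Breaking first in the $J$-directions multiplies by $N^{|J|}$, as in \eqref{eq:partial.weak.symm.breaking}. Breaking then in the $J^{\mathrm c}$-directions happens entirely inside $\Cstr(\Z^{J^{\mathrm c}},\sigma_\Theta)$. There, \cref{prop:strong.basis} (with $\Z^{J^{\mathrm c}}$ in place of $\Z^d$) gives $w^{J^{\mathrm c}}_\emptyset\mapsto\pm w^{J^{\mathrm c}}_{N,\emptyset}+\sum_{\emptyset\neq K\subseteq J^{\mathrm c}}n_K\,w^{J^{\mathrm c}}_{N,K}$. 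Pushing forward, every correction is some $w_{N,K\cup J}$ with $K\cup J\supsetneq J$.

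This is the paper's two-step argument, $\Z^d\to\Z^{J^{\mathrm c}}\times N\Z^J\to N\Z^d$. It needs neither PV naturality under breaking nor any reordering of crossed products. Your fallback of adapting the basis to images of the lower-rank subalgebras $\Cstr(\Z^{I^{\mathrm c}})$ is precisely this mechanism. Note, however, that it delivers $I'\supseteq I$, the opposite of the inequality you were aiming for.
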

\begin{proof}
We can carry out the symmetry breaking in two steps,
\[
\Z^d\to \Z^{I^\mathrm{c}}\times N\Z^I,\qquad\mathrm{then}\qquad \Z^{I^\mathrm{c}}\times N\Z^I\to N\Z^{I^\mathrm{c}}\times N\Z^I=N\Z^d.
\]
For the first symmetry breaking step, \eqref{eq:partial.weak.symm.breaking} says that
\[
w_I\mapsto N^{|I|}\cdot w_{N,I}=N^{|I|}\cdot w^{I^\mathrm{c}}_\emptyset\otimes 1_{\ell^2(N\Z^I)}\in \K_*\big(\Cstr(\Z^{I^\mathrm{c}}\times N\Z^I,\sigma_\Theta)\big).
\]
The second symmetry breaking step only involves the $I^\mathrm{c}$ indices. By replacing $\Z^d$ with $\Z^{I^\mathrm{c}}$, we know from the proof of \cref{prop:strong.basis} that
\[
\phi_{N\Z^{I^\mathrm{c}}\setminus \Z^{I^\mathrm{c}}}\colon w^{I^\mathrm{c}}_\emptyset\mapsto A\pm w^{I^\mathrm{c}}_{N,\emptyset},
\]
with $A$ being some weak element of $\K_*(\Cstr(N\Z^{I^\mathrm{c}},\sigma_\Theta))$, necessarily of the form
\[
A=\sum_{\emptyset\neq J\subseteq I^\mathrm{c}} n_J\cdot w^{I^\mathrm{c}}_{N,J}.
\]
Overall, the total symmetry breaking map $\phi_{N\Z^d\setminus \Z^d}$ takes
\[
w_I \mapsto N^{|I|}\cdot\Big(\big(\pm w^{I^\mathrm{c}}_{N,\emptyset}+A\big)\otimes 1_{\ell^2(N\Z^I)}\Big)\\
=\pm N^{|I|}\cdot w_{N,I}\; + N^{|I|}\cdot\sum_{\emptyset\neq J\subseteq I^\mathrm{c}}n_J\cdot w_{N,J\cup I}.
\]
On the right side, the extra weak terms have multi-indices $J\cup I$ of cardinality strictly greater than $|I|$. Therefore, the $I$-th column of the matrix $\Phi_N$ has $I$-th entry being $\pm N^{|I|}$ and zero entries afterwards.
\end{proof}

\begin{theorem}\label{thm:complex_twisted_colimit}
Let $\Gamma\simeq\Z^d$ be a free abelian group on $d$ generators, and $\mathfrak{Z}^d$
be its symmetry breaking poset. Let $\sigma$ be a 2-cocycle on $\Gamma$.
For any $\sigma$-twisted $X$-$\Gamma$-module $\mathcal{H}$ in
the \refbasicsetup, we have
\begin{equation}\label{eq:complex_twisted_colimit}
\K_i(\Cst(X,\mathcal{H})^{\mathfrak{Z}^d})\simeq\colim_{m\in \N}\K_i(\Cstr(m\Z^d,\sigma))\simeq\begin{cases}
\Q^{2^{d-1}-1}\oplus\Z, & \text{if \( i-d \) is even;} \\
\Q^{2^{d-1}}, & \text{if \( i-d \) is odd.}
\end{cases}
\end{equation}
\end{theorem}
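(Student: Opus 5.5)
The plan is to reduce to the group-algebra picture and compute the colimit over the cofinal family $m\Z^d$, separating the strong and weak parts. By \cref{thm:main_computation_theorem} together with \cref{lem:cofinal_subseteq_R} (cofinality of $\{m\Z^d\}$ in $\mathfrak{Z}^d$) and \cref{lem:cofinal_isomorphism}, the left side is $\colim_{m}\K_i(\Cstr(m\Z^d,\overline{\sigma}))$. Every 2-cocycle on $\Z^d$ is cohomologous to a standardized $\sigma_\Theta$, so by \cref{prop:cohomologous_symm_breaking} we may assume $\sigma=\sigma_\Theta$. Note also that $\sigma_\Theta$ restricted to $m\Z^d$ is again standardized, for the matrix $m^2\Theta$ relative to the basis $me_j$; this is why the bases $w_{m,I}$ exist for every $m$.

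Next, pass to the short exact sequence of directed systems $0\to\wK_i\to\K_i\to\sK_i\to 0$ and take colimits. Exactness of directed colimits (\cref{lem:direct_colimit_Ab_exact}) gives a short exact sequence of colimits.

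The strong part comes from \cref{prop:strong.basis}. Each $\sK_*(\Cstr(m\Z^d,\sigma_\Theta))\cong\Z$ sits in degree $d$ (so it vanishes when $i-d$ is odd), and the connecting maps are isomorphisms. Hence the strong colimit is $\Z$ when $i-d$ is even and $0$ otherwise.

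For the weak part, use \cref{prop:higher_NC_torus_symm_break_matrix}. For fixed parity of $i$, the weak subgroup is $\Z^{r}$, where $r$ counts the nonempty $I$ with $|I|\equiv d-i\pmod 2$. That gives $r=2^{d-1}-1$ if $i-d$ is even and $r=2^{d-1}$ if $i-d$ is odd. The connecting maps are upper triangular integer matrices $\Phi_{n/m}$ with diagonal entries $\pm (n/m)^{|I|}$, $|I|\ge 1$. The colimit of a sequence of injective maps $\Z^r\to\Z^r$ (injective because the determinant is nonzero) is a torsion-free subgroup of $\Q^r$ containing $\Z^r$. It is divisible, since by definition every weak element is divisible in the colimit: the weak colimit equals $\wK_i$ of the colimit, i.e.\ the maximal divisible subgroup. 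A torsion-free divisible group is a $\Q$-vector space, and its rank is $r$ after tensoring with $\Q$. So the weak colimit is $\Q^{r}$.

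Finally, the sequence $0\to\Q^r\to\K_i(\text{colim})\to\Z\text{ or }0\to 0$ splits, because $\Z$ is free (alternatively, because $\Q^r$ is injective). This gives \eqref{eq:complex_twisted_colimit}.

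The main obstacle is the weak part: justifying both that the weak colimit is divisible (not just that its image is divisible) and that it has exactly rank $r$, with no collapse. Rank preservation follows from injectivity of $\Phi_N$, which holds because the diagonal entries are nonzero. Divisibility I would get directly from the identification $\colim\wK=\wK(\colim)$ established after \cref{dfn:weak_strong}. A second, milder point is making sure the cofinal subsystem's maps are indeed the $\Phi_N$ computed for $\Z^d\to N\Z^d$ after rescaling $m\Z^d\cong\Z^d$. This rescaling changes $\Theta$ to $m^2\Theta$, but \cref{prop:higher_NC_torus_symm_break_matrix} holds for arbitrary $\Theta$, so it applies.
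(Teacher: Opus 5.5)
Your proposal is correct and follows essentially the same route as the paper: reduce via \cref{thm:main_computation_theorem}, cofinality of $\{m\Z^d\}$ and standardization of the cocycle, then show the weak colimit is a torsion-free divisible group of the right rank using the triangular matrices of \cref{prop:higher_NC_torus_symm_break_matrix}, and obtain the extra $\Z$ from \cref{prop:strong.basis}. You are more explicit than the paper on three correct points: the parity count ($2^{d-1}-1$ versus $2^{d-1}$), the source of divisibility (identifying the colimit of the weak subgroups with the weak colimit), and the splitting of the final extension.
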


\begin{proof}
By \cref{thm:main_computation_theorem} and \cref{lem:cofinal_subseteq_R}, we have
\begin{equation*}
\K_i(\Cst(X,\mathcal{H}))^{\mathfrak{Z}^d}\simeq
\colim_{G\in\mathfrak{Z}^d}\K_i(\Cstr(G,\sigma))\simeq
\colim_{m\in\N}\K_i(\Cstr(m\Z^d,\sigma)).
\end{equation*}
By \cref{prop:cohomologous_symm_breaking} and \cref{lem:Zd_standard_cocycle}, we may replace $\sigma$ with a cohomologous $\sigma_\Theta$ of the standardized form \eqref{eq:standard_2_cocycle}.

By \cref{prop:higher_NC_torus_symm_break_matrix}, the symmetry breaking maps $\phi_{n\Z^d\setminus m\Z^d}$ on the weak subgroups are represented by $(2^d-1)\times (2^d-1)$ upper-triangular integer matries $\Phi_{n,m}$ with non-zero diagonals. Note that $\Phi_{n,m}$ may be regarded as elements of \mbox{${\rm GL}(2^d-1,\Q)$}, so they are injective. Hence, the weak colimit \( W \),
\[
W\defeq \colim_{m\in\N}{\rm wK}_*(\Cstr(\Z^d,\sigma_\Theta)),
\]
is torsion-free. So \( W \) is a torsion-free divisible abelian group, whence a \( \Q
\)-vector space.
We claim that \( W \) has $\Q$-dimension \( 2^{d}-1 \). To this end, note that
\[
{\rm wK}_*(\Cstr(m\Z^d,\sigma_\Theta))\otimes_\Z\Q\simeq\Q^{2^d-1},\qquad m\in\N,
\]
and that
\[ 
  \Phi_{n,m}\otimes_\Z\Q\colon
  {\rm wK}_*(\Cstr(m\Z^d,\sigma_\Theta))\otimes_\Z\Q\to {\rm
  wK}_*(\Cstr(n\Z^d,\sigma_\Theta))\otimes_\Z\Q,\qquad \text{for \( m\mid n
  \),} 
\]
are isomorphisms of \( \Q \)-vector spaces, \( \Q^{2^d-1}\to\Q^{2^d-1} \). So the colimit of these rationalized maps is isomorphic to $\Q^{2^d-1}$.
In general, taking the colimit of a diagram of abelian groups commutes with
tensor products (see \cref{lem:tensor_product_colimit_Ab}). 
Thus we have
\begin{align*} 
\Q^{2^d-1}&\simeq\colim({\rm wK}_*\big(\Cstr(m\Z^d,\sigma)\otimes_\Z\Q\big)\\
  &\simeq\big(\colim {\rm wK}_*(\Cstr(m\Z^d,\sigma))\big)\otimes_\Z\Q\\
  &\simeq W\otimes_\Z\Q\simeq W.
\end{align*}
Note that this calculation can be split into the even and odd $\K$-theory parts, because the symmetry breaking maps respect the $\K$-theory degree.

Finally, \cref{prop:strong.basis} implies that the
strong quotient is isomorphic to $\Z$. This accounts for the extra $\Z$
summand in 
\(\colim_{m\in\N}\K_i(\Cstr(m\Z^d,\sigma))\).
\end{proof}

\subsection{Strong/coarse topological invariant of Landau
levels}\label{sec:Landau_level}
Consider the magnetic Laplacian on the Euclidean plane $X\simeq \R^2$, for the \emph{uniform} magnetic 2-form $\mathscr{F}=b\cdot\mathrm{vol}_X, b>0$. In the \emph{symmetric gauge}, the connection 1-form is
\[
\mathscr{A}=\frac{b}{4i}(\bar{z}dz-z d\bar{z}),
\]
in terms of the complex coordinate $z$. The \emph{Landau operator} is the magnetic Laplacian,
\[
H_b=(d-i\mathscr{A})^*(d-i\mathscr{A}).
\]
A brief computation shows that the \emph{twisted Dirac operator} on $X$,  
\begin{equation}\label{eq:twisted_Dirac_operator}
\slashed{D}_b=\begin{pmatrix}0 & (\slashed{D}_b)_-\\ (\slashed{D}_b)_+ & 0 \end{pmatrix}=-2i\begin{pmatrix} 0 & \partial -\frac{b}{4}\bar{z} \\ \bar{\partial}+\frac{b}{4}z & 0\end{pmatrix},
\end{equation}
where $\partial=\frac{1}{2}(\partial_x-i\partial_y),
\bar{\partial}=\frac{1}{2}(\partial_x+i\partial_y)$, squares to
\[
\slashed{D}_b^2=\begin{pmatrix} H_b-b & 0 \\ 0 & H_b+b\end{pmatrix}.
\]
(This is an example of the Schr\"{o}dinger--Lichnerowicz identity.)

As $H_b+b$ is strictly positive, $\ker(\slashed{D}_b)_-=0$, while
\[
\ker(H_b-b)=\ker \slashed{D}_b^2=\ker \slashed{D}_b=\ker (\slashed{D}_b)_+=\overline{\mathrm{span}}\left\{\binom{z\mapsto z^me^{-b|z|^2/4}}{0}\right\}.
\]
This is the eigenspace for the lowest eigenvalue of $H_b$, and is called the \emph{Lowest Landau Level} (LLL) eigenspace. We have just identified it with a twisted Dirac kernel. The interval $(b,3b)$ is a spectral gap of $H_b$. The eigenprojection
\[
P_{{\rm LLL},b}\defeq P_{\ker (H_b-b)}
\]
is obtained by continuous functional calculus of $\slashed{D}_b$. So $P_{{\rm LLL},b}$ lies in the Roe algebra $\Cst(X)$, and its $\K_0$-class $[P_{{\rm LLL},b}]$ is the \emph{coarse Dirac index} \cite{Roe:Index_theory_coarse_geometry} generating $\K_0(\Cst(X))\simeq\Z$. So the $P_{{\rm LLL},b}$ has a non-trivial coarse topological invariant. 

Let $\Gamma$ be a lattice in the translation group. The uniform $\mathscr{F}$ is $\Gamma$-invariant, and as explained in \cref{sec:magnetic_translations}, there is a projective unitary representation of $\Gamma$ on $L^2(X)$ making it a $\sigma$-twisted $X$-$\Gamma$-module, and which commutes with $(d-i\mathscr{A})$, thus with $H_b$. Thus we may consider $[P_{\mathrm{LLL},b}]$ as a class $[P_{\mathrm{LLL},b}]_\Gamma\in \K_0(\Cst(X)^\Gamma)\simeq\Z\oplus \Z$. This also holds for any finite index subgroup $G\subseteq\Gamma$, and we note that $G\simeq\Z^2$.

In \cref{sec:Euclidean.twisted.computation}, we described a basis $\{w_{\{1,2\}}, w_\emptyset\}$ for $\K_0(\Cst(X)^G)\simeq \K_0(\Cstr(\Z^2,\sigma))\simeq\Z\oplus\Z$, and explained that $w_{\{1,2\}}$ generates the weak subgroup while $q_G(w_\emptyset)$ generates the strong quotient. In the proof of \cref{prop:strong.basis}, we saw that the strong-to-coarse comparison map is an isomorphism $\Z\to \Z$. Thus, in the diagram
\begin{equation}\label{eq:K_computation_Landau}
\begin{tikzcd}[column sep=large]
\underbrace{\K_0(\Cst(X)^{G})}_{\mathbb{Z}\oplus \mathbb{Z}}
\ar[r,"q_G"']\ar[rr,"\K_0(\iota_{1\backslash G})",bend left=20] & \underbrace{{\rm sK}_0(\Cst(X)^G)}_{\mathbb{Z}} \ar[r,"\sim","{\rm sK}_0(\iota_{1\backslash G})"'] & \underbrace{\K_0(\Cst(X))}_{\mathbb{Z}}\\
{[P_{\mathrm{LLL},b}]_G}=(r,c)
\ar[r, mapsto] & c
\ar[r, mapsto] &{[P_{\mathrm{LLL},b}]}=\pm 1
\end{tikzcd},
\end{equation}
we must have $c=\pm 1$ for the LLL projection. We stress that this argument
holds independently of the choice of lattice $\Gamma$ and $G\subseteq
\Gamma$.

\begin{remark}\label{rem:physics.Chern.Hall}
If $b\in 2\pi \Z$ (``integral flux condition''), then
$\sigma\colon \Gamma\times\Gamma\to{\rm U}(1)$ will be cohomologically trivial.
The same triviality holds for the restriction \( \sigma|_G\colon G\times
G\to\mathrm{U}(1) \) to any finite-index subgroup \( G\subseteq
\Gamma \).
If $b\in 2\pi \Q$ (``rational flux quantisation''),
then there exists a finite-index subgroup \( G\subseteq \Gamma \) such
that \( \sigma|_{G\times G} \) is cohomologically trivial.
Then we have
\[
\K_0(\Cst(X)^G)\xrightarrow{\sim} \K_0(\Cstr(G))
\underset{\sim}{\xrightarrow{\mathrm{Fourier}}}
\K^0(\widehat{G})\simeq\Z\oplus \Z.
\]
On the right side, the reduced part
$\widetilde{\K}^0(\widehat{G})\simeq\Z$ of the topological $\K$-group
corresponds to $c$ in \eqref{eq:K_computation_Landau}. 

In physics, this $c$
is called the ``Chern number'', introduced in
\cite{TKNN:Quantized_Hall_conductance}. While it is often stated that $c=\pm 1$
for a Landau level projection, explicit calculations that verify this do
not seem to be readily available. 
That $c\neq 0$ was  mentioned in 
\cite{Dubrovin-Novikov:Ground_states}.
More recently, a computation that $c=\pm 1$, in
Landau gauge (i.e.~standard form \eqref{eq:standard_2_cocycle} for $\sigma$) for a unit flux square lattice, was carried out in
\cite{dNittis-Gomi-Moscolari:Geometry_non-abelian_Landau_levels}*{Corollary
3.3}. 
Another computation using the Atiyah--Singer family index
theorem can be found in
\cite{Bunke-Ludewig:Breaking_symmetries}*{Section 7, Proposition 7.1(3)}.
We have provided a different and more general argument,
\eqref{eq:K_computation_Landau}, using the coarse geometric perspective
together with the strong-to-coarse comparison map.
\end{remark}

\section{Conclusion and outlook}
We have seen that whenever a $\Gamma$-equivariant index makes sense in $\K_*(\Cst(X,\mathcal{H})^\Gamma)$, we automatically have an entire diagram of symmetry breaking maps
\[
\K_*(\iota_{H\backslash G})\colon \K_*(\Cst(X,\mathcal{H})^G)\to
\K_*(\Cst(X,\mathcal{H})^H),
\quad G,H\in\mathfrak{S},\ G\prec H.
\]
By considering the whole symmetry breaking diagram instead of focusing only on a single $G$, we discover the canonical weak/strong dichotomy for equivariant indices at each $G\in\mathfrak{S}$. We also learn that in the typical situation where $\K_*(\Cst(X,\mathcal{H}))$ is a reduced group, only strong equivariant indices can survive when regarded as non-equivariant coarse indices.

For abelian $\Gamma\simeq\Z^d$, we explained in
\cref{thm:existence_strong_phase} and
\cref{rmk:untwisted_real_abelian_strong_to_coarse}
that the strong-to-coarse comparison maps
\eqref{eq:strong.to.coarse.comparison.maps} are isomorphisms in both the
real and complex twisted cases. In general, it may not be an isomorphism,
or even a surjection (see
\cref{sec:crystallographic.generalization}). 
Then we must pass to the strong symmetry breaking colimit.
In \cref{thm:existence_strong_phase} and \cref{cor:existence.strong.colimit}, we showed that for
finitely-generated, (virtually) torsion-free and nilpotent groups, the
strong(-\emph{colimit})-to-coarse comparison maps in complex $\K$-theory are
surjective. It would be useful to extend the class of groups for which
this surjectivity conclusion holds, and also to understand when it holds in the real $\K$-theory setting (besides the abelian case explained in \cref{rmk:untwisted_real_abelian_strong_to_coarse}).

Finally, let us mention that we have been using the directed poset
$\mathfrak{S}$ of \emph{all} finite index subgroups of $\Gamma$. In
principle, we could also consider some smaller directed poset
$\mathfrak{S}^\prime$, and define the weak/strong dichotomy with respect to
the $\mathfrak{S}^\prime$-shaped symmetry breaking diagram. For example,
this could be relevant when considering symmetry breaking for a
crystallographic group, in a manner which maintains the presence of the
point group $F$.

\appendix
\section{Magnetic translations}
\subsection{Magnetic translation operators}\label{sec:magnetic_translations}
Let $X$ be a manifold with an action of a group $\Gamma$ by diffeomorphisms. 
Assume that
\begin{enumerate}
\item $X$ is connected and simply-connected;
\item $\mathcal{L}\to X$ is a trivializable Hermitian line bundle;
\item There is a connection $\nabla$ on $\mathcal{L}$ whose curvature is $\Gamma$-invariant.
\end{enumerate}

Picking a trivialization $\mathcal{L}\simeq X\times\C$, we identify 
$\nabla$ with
$d-i\mathscr{A}$ for some globally-defined connection 1-form
$\mathscr{A}\in\Omega^1(X)$. (We follow the physics convention and drop the
$i$ factor in $\mathscr{A}$.) Write
$\mathscr{F}=d\mathscr{A}\in\Omega^2(X)$ for the curvature 2-form. Its
$\Gamma$-invariance implies that
\[
d(\mathscr{A}-g^*\mathscr{A})=d\mathscr{A}-g^*d\mathscr{A}=\mathscr{F}-g^*\mathscr{F}=0,\qquad g\in \Gamma.
\]
As $X$ is simply-connected, the closed 1-form $\mathscr{A}-g^*\mathscr{A}$ admits an antiderivative $I_g\in\Omega^0(X)$. For example, we can take
\begin{equation}\label{eq:magnetic_antiderivative}
I_g(x)\defeq I_g^{(\mathscr{A})}(x)=\int_\mathcal{O}^x(\mathscr{A}-g^*\mathscr{A}),\qquad g\in \Gamma,\;x\in X,
\end{equation}
where $\mathcal{O}\in X$ is some choice of origin, and the integral can be taken over any smooth path joining $\mathcal{O}$ to $x$. 

The \emph{magnetic translation operator by $g\in \Gamma$} is defined to be
\begin{equation}\label{eq:magnetic.trans}
U_{g}\defeq U_{g}^{(\mathscr{A})}=e^{iI_{g^{-1}}}\cdot (g^{-1})^*,\qquad g\in \Gamma,
\end{equation}
acting on smooth functions $X\to\C$. The operators $U_g$ are readily
checked to commute with $d-i\mathscr{A}$. (The appearance of $g^{-1}$ in
formula \eqref{eq:magnetic.trans} makes it a \emph{left} action.) The
magnetic translations \eqref{eq:magnetic.trans} satisfy
\begin{equation}\label{eq:magnetic.translation.projective}
U^{(\mathscr{A})}_{g}U^{(\mathscr{A})}_{h}=\sigma(g,h)U^{(\mathscr{A})}_{gh},\qquad g, h\in \Gamma,
\end{equation}
where
\begin{equation}\label{eq:magnetic_cocycle}
\begin{aligned}
\sigma(g^{-1},h^{-1})&=\exp\left(i(I_{g}+g^*I_{h}-I_{hg})\right)\\
&=\exp\left(\int_{\mathcal{O}}^x(\mathscr{A}-g^*\mathscr{A})+\int_{\mathcal{O}}^{g\cdot x}(\mathscr{A}-h^*\mathscr{A})-\int_{\mathcal{O}}^x(\mathscr{A}-(hg)^*\mathscr{A})\right)\\
&=\exp\left(i\int_{\mathcal{O}}^{g\cdot\mathcal{O}} (\mathscr{A}-h^*\mathscr{A})\right)
\end{aligned}
\end{equation}
are \emph{$x$-independent} phases. Therefore, $g\mapsto
U_g^{(\mathscr{A})}$ defines a \emph{projective} unitary representation of
$\Gamma$, with $2$-cocycle $\sigma$ given by
\eqref{eq:magnetic_cocycle}.

\begin{definition}\label{dfn:magnetic.2.cocycle}
A 2-cocycle $\sigma\colon \Gamma\times\Gamma\to{\rm U}(1)$ is said to be a \emph{magnetic 2-cocycle for $(X,\Gamma)$}, if it is cohomologous to the 2-cocycle \eqref{eq:magnetic.translation.projective} for some magnetic translation operators $\{U^{(\mathscr{A})}_g\}_{g\in\Gamma}$ as constructed in \eqref{eq:magnetic.trans} above.
\end{definition}
\begin{remark}
If $X$ is Riemmanian and $\Gamma$ acts by isometries, then the magnetic translations $U_g$ are \emph{unitary} operators on the $L^2$-sections of $\mathcal{L}\to X$.
\end{remark}

Let us discuss the effect of various choices made in the construction of
magnetic translations. First, there is a phase ambiguity in each $U_g$ due
to the choice of origin $\mathcal{O}$ in formula
\eqref{eq:magnetic_antiderivative}. This does not affect the cohomology
class of $\sigma$.  Next, a global gauge transformation (changing the
trivialization $\mathcal{L}\simeq X\times \C$) means that we apply the
operator $V$ of multiplication by some smooth function $X\to {\rm U}(1)$.
As $X$ is simply-connected, we can write this function as $\exp(i\Lambda)$
for some smooth $\Lambda\colon X\to\mathbb{R}$. The connection 1-form is
transformed to $\mathscr{A}^\prime=\mathscr{A}+d\Lambda$. As in
\eqref{eq:magnetic_antiderivative} and \eqref{eq:magnetic.trans},
write 
\begin{equation}\label{eq:gauge_transformed_antiderivative}
I^\prime_g(x)=\int_\mathcal{O}^x(\mathscr{A}^\prime-g^*\mathscr{A}^\prime)=I_g(x)+(\Lambda-g^*\Lambda)(x)-(\Lambda-g^*\Lambda)(\mathcal{O}).
\end{equation}
Then the modified magnetic translation operators
\[
U_g^\prime\defeq 
U_g^{(\mathscr{A}^\prime)}=e^{iI^\prime_{g^{-1}}}\cdot (g^{-1})^*
\]
are related to the old ones as follows,
\begin{align*}
    \Ad_V(U_{g^{-1}})=e^{i\Lambda}U_{g^{-1}}e^{-i\Lambda}
    &=e^{i\Lambda}e^{iI_g}e^{-ig^*\Lambda}g^*\\
    &=e^{i(\Lambda-g^*\Lambda)(\mathcal{O})}e^{iI_g^\prime(x)}g^* & (\mathrm{Eq.}\,\eqref{eq:gauge_transformed_antiderivative})\\
    &=e^{i(\Lambda-g^*\Lambda)(\mathcal{O})}U^\prime_{g^{-1}},
\end{align*}
i.e.,
\[
\Ad_V(U_g)=\omega(g)\cdot U^\prime_{g},\qquad g\in \Gamma,
\]
for some phases $\omega(g)=e^{i(\Lambda-(g^{-1})^*\Lambda)(\mathcal{O})}$. This verifies that the 2-cocycles for $U_g$ and for $U_g^\prime$ are cohomologous.

\subsection{Standarized 2-cocycles in abelian case}
If $\Gamma$ is a (discrete) group, a $2$-cocycle $\sigma\colon \Gamma\times\Gamma\to\mathrm{U}(1)$ can be viewed as a way to specify a ``twisted'' group law on $\mathrm{U}(1)\times \Gamma$, namely,
\begin{equation}\label{eq:twisted_group_composition}
(\lambda_1,g_1)\cdot(\lambda_2,g_2)=(\lambda_1\lambda_2 \sigma(g_1,g_2), g_1g_2),\qquad \lambda_1,\lambda_2\in \mathrm{U}(1),\, g_1, g_2\in \Gamma.
\end{equation}
The group with composition law \eqref{eq:twisted_group_composition} is denoted $\mathrm{U}(1)\times_\sigma \Gamma$, and we view it as a central extension of $\Gamma$ by ${\rm U}(1)$, i.e., a short exact sequence
\begin{equation}\label{eq:central_extension_sequence}
\begin{tikzcd}
1\ar[r] &\mathrm{U}(1)\ar[r] &\mathrm{U}(1)\times_\sigma \Gamma \ar[r] &\Gamma \ar[r] & 1.
\end{tikzcd}
\end{equation}
We say that the central extensions $\mathrm{U}(1)\times_\sigma \Gamma$ and
$\mathrm{U}(1)\times_{\sigma^\prime} \Gamma$ are isomorphic if there is an
isomorphism $\varphi\colon \mathrm{U}(1)\times_\sigma \Gamma\to \mathrm{U}(1)\times_{\sigma^\prime} \Gamma$ such that the diagram
\begin{equation}\label{eq:CE_cohom_cocycles}
\begin{tikzcd}
& & \mathrm{U}(1)\times_\sigma\Gamma\ar[dr]\ar[dd,"\sim","\varphi"'] & & \\
1\ar[r] & \mathrm{U}(1)\ar[ur]\ar[dr] & &\Gamma\ar[r] & 1 \\
& & \mathrm{U}(1)\times_{\sigma^\prime} \Gamma\ar[ru] & &
\end{tikzcd}
\end{equation}
commutes. This occurs iff $\sigma$ and $\sigma^\prime$ are cohomologous (\cref{dfn:cohomologous_cocycles}).

\begin{lemma}\label{lem:Zd_standard_cocycle}
Let $\sigma\colon \Z^d\times\Z^d\to\mathrm{U}(1)$ be a 2-cocycle. There is
a unique real antisymmetric $d\times d$ matrix $\Theta$ such that $\sigma$
is cohomologous to
\begin{equation}\label{eq:standard_2_cocycle}
\sigma_\Theta\colon (x,y)\mapsto \exp\left(2\pi i \sum_{1\leq j<k \leq d} \Theta_{jk}x_jy_k\right),\qquad x,y\in\Z^d.
\end{equation}
\end{lemma}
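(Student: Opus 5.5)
The proof has two parts: existence of $\Theta$, via the standard classification of $H^2(\Z^d,\mathrm{U}(1))$, and uniqueness, via the commutator invariant of $\sigma$.

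\emph{Uniqueness.} Attach to any 2-cocycle $\sigma$ on the abelian group $\Z^d$ its antisymmetrized commutator form
\[
\beta_\sigma(x,y)\defeq \sigma(x,y)\overline{\sigma(y,x)},\qquad x,y\in\Z^d.
\]
Equivalently, $\beta_\sigma(x,y)$ is the scalar in $U_xU_yU_x^*U_y^*=\beta_\sigma(x,y)1$ for any $\sigma$-projective representation. Three facts about $\beta_\sigma$ are needed:
\begin{enumerate}
\item It is unchanged under $\sigma\mapsto\sigma\cdot\frac{\omega(x)\omega(y)}{\omega(x+y)}$, since the correction factor is symmetric in $x,y$ because the group is abelian.
\item It is bimultiplicative. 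This follows either from the cocycle identity \eqref{eq:2_cocycle_condition} or from the commutator identity in the central extension $\mathrm{U}(1)\times_\sigma\Z^d$.
\item It is alternating, i.e.\ $\beta_\sigma(x,x)=1$.
\end{enumerate}
For $\sigma_\Theta$ one computes $\beta_{\sigma_\Theta}(e_j,e_k)=\exp(2\pi i\Theta_{jk})$ for $j<k$. So $\sigma_\Theta\sim\sigma_{\Theta'}$ forces $\Theta_{jk}\equiv\Theta'_{jk}\pmod{1}$. Since $\sigma_\Theta$ itself depends only on $\Theta_{jk}$ modulo $\Z$, uniqueness should be read with the entries $\Theta_{jk}$ ($j<k$) taken in $\R/\Z$, or normalised to lie in $[0,1)$, the lower triangle then being determined by antisymmetry. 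I would state this normalisation explicitly, since literally as real numbers the matrix is unique only modulo integer entries.

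\emph{Existence.} The key step is to show that a 2-cocycle $\sigma$ with $\beta_\sigma\equiv1$ is a coboundary. Given that, for arbitrary $\sigma$ choose $\Theta_{jk}\in[0,1)$ with $\exp(2\pi i\Theta_{jk})=\beta_\sigma(e_j,e_k)$ for $j<k$. Then $\sigma\overline{\sigma_\Theta}$ is a 2-cocycle whose $\beta$ is trivial on basis pairs, hence trivial everywhere by bimultiplicativity and the alternating property. It is therefore cohomologous to $1$.

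To prove the key step, note that $\beta_\sigma\equiv1$ means the central extension $E=\mathrm{U}(1)\times_\sigma\Z^d$ is abelian. Choose lifts $\tilde e_j=(1,e_j)\in E$; they commute, so $x\mapsto \tilde e_1^{x_1}\cdots\tilde e_d^{x_d}$ is a group homomorphism $\Z^d\to E$ splitting the projection. A splitting of the extension \eqref{eq:central_extension_sequence} gives an isomorphism of central extensions with the trivial one, and by the criterion \eqref{eq:CE_cohom_cocycles} this means $\sigma\sim1$. Concretely, the splitting has the form $x\mapsto(\omega(x)^{-1},x)$, and $\omega$ implements the coboundary.

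The main obstacle is only bookkeeping: verifying bimultiplicativity of $\beta_\sigma$ cleanly. I would do this in the group $E$ using $[ab,c]=a[b,c]a^{-1}[a,c]$ together with centrality of $\mathrm{U}(1)$, rather than manipulating the cocycle identity directly.
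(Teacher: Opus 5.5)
Your proof is correct and follows essentially the paper's argument: the antisymmetrization is a coboundary-invariant bicharacter, and a cocycle with trivial antisymmetrization gives an abelian central extension, which splits, so it is a coboundary. The one difference is that you split the extension using freeness of $\Z^d$ (explicit commuting lifts of the $e_j$), whereas the paper uses injectivity of the divisible group $\mathrm{U}(1)$; your caveat on uniqueness is also right, since $\sigma_\Theta$ only sees $\Theta_{jk}$ modulo $\Z$, so the paper's criterion (cohomologous iff equal antisymmetrizations) gives uniqueness of $\Theta$ only up to integer entries, not as a real matrix.
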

\begin{proof}
This result can be found in
\cite{Backhouse-Bradley:Projective_representations_space_groups}*{Theorem
4.1}. As mentioned in \cite{Hannabuss-Mathai-Thiang:T-duality_NC}, it also
follows from \cite{Kleppner:Multipliers_abelian_groups}*{Lemma 7.2} and
references therein, and we provide a full account here for convenience.

For any (discrete) group $\Gamma$, the antisymmetrization of a 2-cocycle
$\sigma\colon \Gamma\times\Gamma\to\mathrm{U}(1)$ is denoted
\begin{equation}\label{eq:AS_cocycle}
\sigma^{(2)}(x,y)\defeq \frac{\sigma(x,y)}{\sigma(y,x)},\qquad x,y\in\Gamma.
\end{equation}
If $\sigma$ is cohomologous to $1$, then the formulae \eqref{eq:cohomologous_cocycles} and \eqref{eq:AS_cocycle} show that $\sigma^{(2)}=1$. 

We claim that the converse also holds when $\Gamma$ is abelian. In this case, the antisymmetrization $\sigma^{(2)}$ is easily checked to be a bicharacter, i.e., a homomorphism with respect to each argument.
It follows that $\sigma^{(2)}$ is also a 2-cocycle.  Now suppose $\sigma$
has $\sigma^{(2)}=1$. Then $\sigma$ is symmetric, implying that
$\mathrm{U}(1)\times_\sigma \Gamma$ is abelian (see formula
\eqref{eq:twisted_group_composition}). So
\eqref{eq:central_extension_sequence} is a short exact sequence in the
category of abelian groups.  As $\mathrm{U}(1)$ is a divisible group, it is
an injective object in this category, meaning that
\eqref{eq:central_extension_sequence} admits a splitting homomorphism
$s\colon \Gamma\to \mathrm{U}(1)\times_\sigma \Gamma$. Then
$\mathrm{U}(1)\times_\sigma \Gamma$ is isomorphic to the direct product
extension $\mathrm{U}(1)\times \Gamma$ with trivial 2-cocycle, i.e.,
$\sigma$ is cohomologous to $1$.

We have shown, for abelian $\Gamma$, that a 2-cocycle $\sigma$ is cohomologous to $1$ iff $\sigma^{(2)}=1$. Equivalently, two 2-cocycles $\sigma,\sigma^\prime$ on an abelian group $\Gamma$ are cohomologous iff their antisymmetrizations $\sigma^{(2)},\sigma^{\prime(2)}$ coincide. 

Now specialize to the case of $\Gamma=\Z^d$. Given an arbitrary 2-cocycle $\sigma$ on $\Z^d$, we saw that $\sigma^{(2)}$ is an antisymmetric bicharacter. By considering the behaviour on the standard basis of $\Z^d$, we must have
\[
\sigma^{(2)}(x,y)=\exp\big(2\pi i \sum_{j,k=1}^d \Theta_{jk}x_jy_k\big),
\]
for some real antisymmetric matrix $\Theta=(\Theta_{jk})_{j,k=1,\ldots,d}$. For this matrix $\Theta$, consider $\sigma_\Theta$ given in formula \eqref{eq:standard_2_cocycle}. It is a bicharacter (thus it is indeed a 2-cocycle), and its antisymmetrization is also
\begin{align*}
\sigma_\Theta^{(2)}(x,y)&=\exp\big(2\pi i \sum_{1\leq j<k\leq d}\Theta_{jk}x_j y_k\big)\exp\big(-2\pi i \sum_{1\leq k<j\leq d}\Theta_{kj}y_k x_j\big)\\
&=\exp\big(2\pi i \sum_{j,k=1}^d\Theta_{jk}x_j y_k\big). & (\Theta_{jk}=-\Theta_{kj})
\end{align*}
As $\sigma$ and $\sigma_\Theta$ have the same antisymmetrizations, they are cohomologous.
\end{proof}

\subsection{Magnetic translations in Euclidean space}\label{sec:Euclidean.magnetic.translations}
Let $\Gamma\simeq \Z^d$ be a rank-$d$ lattice of translations of $d$-dimensional Euclidean space $X$.
Let $\Theta$ be a real $d\times d$ antisymmetric matrix, and consider the connection 1-form
\begin{equation}\label{eq:standard.connection.form}
\mathscr{A}=-2\pi \sum_{1\leq j<k\leq d} \Theta_{jk} x_kdx_j,
\end{equation}
where $x_j, j=1,\ldots,d$ are linear coordinates for $X$ adapted to a basis $\{e_j\}_{j=1,\ldots,d}$ for $\Gamma\simeq\Z^d$. (So $e_j$ increases the $x_j$ coordinate by $1$). The curvature 2-form is
\[
\mathscr{F}=d\mathscr{A}=2\pi \sum_{1\leq j<k\leq d} \Theta_{jk}dx_j\wedge dx_k,
\]
and is clearly $\Gamma$-invariant.
The magnetic translations operators for this $\mathscr{A}$ have a 2-cocycle, denoted $\sigma_\mathscr{A}$, whose formula is given by \eqref{eq:magnetic_cocycle}. Explicitly,
\begin{align*}
\sigma_\mathscr{A}(e_j,e_k)&=\exp\left(i\int_\mathcal{O}^{-e_j\cdot\mathcal{O}}(\mathscr{A}-(-e_k)^*\mathscr{A}\right)\\
&=
\begin{cases}\exp\left(-2\pi i\Theta_{jk}\int_\mathcal{O}^{-e_j\cdot\mathcal{O}} dx_j\right)=\exp(2\pi i\Theta_{jk}), & 1\leq j<k\leq d,\\
1, & \mathrm{otherwise}.
\end{cases}\\
&=\sigma_\Theta(e_j,e_k),
\end{align*}
where $\sigma_\Theta$ is the standardized 2-cocycle of \eqref{eq:standard_2_cocycle}. This shows that $\sigma_\Theta$ is a magnetic 2-cocycle in the sense of \cref{dfn:magnetic.2.cocycle}.

By \cref{lem:Zd_standard_cocycle}, any 2-cocycle $\sigma$ on $\Z^d$ is cohomologous to some standardized $\sigma_\Theta$, thus any $\sigma$ is a magnetic 2-cocycle.

\section{Categorical terminology}
\label{app:categorical_terminology}

The terminology of directed systems and their
colimits varies in the literature (see
\cref{sec:remarks_categorical_terminology}). In this appendix, we
establish the terminology adopted throughout this paper.

\subsection{Diagrams and their colimits}
Let \( (I,\prec) \) be a poset, i.e.~a set \( I \)
with a partial order \( \prec \). It can be viewed as a category, whose
objects are elements in \( I \); and for every pair of objects
\( (a,b) \) in \( I \), there is a unique morphism \( a\to b \) iff
\( a\prec b \). 

An \( I \)-shaped diagram in a category \(
\mathfrak{C} \) is a functor \( F\colon I\to\mathfrak{C} \).   
Set
\[ 
  c_i\defeq F(i),\,i\in I;\quad 
  \phi_{j\backslash i}\defeq F(i\to j),\,i\prec j,
\]
then an \( I \)-shaped diagram \( F \) can
be concretely described by the following data:
\begin{itemize}
\item a collection of objects \( (c_i)_{i\in I} \) in
\( \mathfrak{C} \) labelled by \( I \);
\item a collection of morphisms
\[ 
  \set*{\phi_{j\backslash i}\colon c_i\to c_j\,:\,i,j\in I,\,i\prec
  j},
\]
such that \( \phi_{k\backslash  j}\circ\phi_{j\backslash
i}=\phi_{k\backslash i} \) holds if \( i\prec j\prec k \). 
\end{itemize}
We call \( \phi_{j\backslash i} \) the \emph{connecting
morphisms} of the diagram. 
The notation $\phi_{j\backslash i}$ is motivated by the coset construction
of our symmetry breaking morphisms $\phi_{H\backslash G}$, see
\eqref{eq:phi_HG}, when a symmetry group $G$ is broken to a subgroup $H$.
When the connecting morphisms are clear from the context, then we may also
denote an \( I \)-shaped diagram by its underlying objects \( (c_i)_{i\in
I} \). 

\begin{definition}
Let \( F\colon I\to\mathfrak{C} \) be an \( I
\)-shaped diagram in \( \mathfrak{C} \). 
The \emph{colimit} of \( F \)  
is an object \( \colim F \) in \( \mathfrak{C} \),
together with universal morphisms
\[ 
  \phi_{\infty\backslash i}\colon c_i\to \colim F,\quad \text{for each}\ i\in I
\]
that is characterised by the following universal property: if \(
c \) is an object in \( \mathfrak{C} \) and \( \psi_i\colon c_i\to c \) are
morphisms satisfying
\[ 
  \psi_{i}=\psi_{j}\circ\phi_{j\backslash i},\quad \text{for each \( i\prec j \),} 
\]
then there is a unique morphism \(
h\colon \colim F\to c \) such that 
\( \psi_i=h\circ\phi_{\infty\backslash i} \) 
for each \( i\in I \), see     
\eqref{fig:colimit}.
\end{definition}

\begin{figure}[h!]
\begin{tikzcd}[column sep=1.5cm]
c_i \arrow[rr, "\phi_{j\backslash i}"] \arrow[rd, "\phi_{\infty\backslash i}"'] 
\arrow[rdd, "\psi_i"', bend right] && 
c_j \arrow[ld, "\phi_{\infty\backslash j}"] \arrow[ldd, "\psi_j", bend
left] \\
& \colim F \arrow[d, dashed, "\exists !"', "h"] & \\ 
& c &
\end{tikzcd}
\caption{Universal property of the colimit}
\end{figure}\label{fig:colimit}
In the concrete description of
\( F \) by objects and morphisms in \( \mathfrak{C} \),
we also write \( \colim_{i\in I}F(i) \) for \( \colim F \).
If a diagram \( F\colon I\to\mathfrak{C} \) has a colimit, then the colimit
is unique up to canonical isomorphism, due to the universal property
\eqref{fig:colimit}.

Let \( F,G\colon I\to \mathfrak{C} \) be diagrams with the same shape poset
\( I \). A \emph{natural transformation} \( \eta\colon F\Rightarrow G \) is
a collection of morphisms \( \{\eta_{i}\colon F(i)\to
G(i)\}_{i\in I} \), such that
the following diagram commutes for every \( i\prec j \):
\[ \begin{tikzcd}
F(i) \arrow[r, "F(i\to j)"] \arrow[d, "\eta_i"] 
& F(j) \arrow[d, "\eta_j"] \\
G(i) \arrow[r, "G(i\to j)"] & G(j).
\end{tikzcd} \]
If $F,G$ have colimits, then a natural transformation $\eta\colon F\Rightarrow G$
induces a unique morphism between their colimits.
A natural transformation \( \eta\colon F\Rightarrow G \) is a \emph{natural
isomorphism} if \( \eta_i \) is an isomorphism in \( \mathfrak{C} \) for
each \( i\in I \). 

For diagrams with \emph{directed} posets
as their shapes \( I \), their colimits may be computed by passing to some
suitable subdiagram.

\begin{definition}\label{def:direct_colimit}
\begin{enumerate}
\item 
A poset \( (I,\prec) \) is called \emph{directed}, if for
every pair of elements \( a,b\in I \), there exists \( c\in
I \) such that \( a\prec c \) and \( b\prec c \).
\item 
A \emph{directed system} in a category \( \mathfrak{C} \) is an \(
I \)-shaped diagram \( F\colon I\to\mathfrak{C} \) 
for a directed poset \( I \).
\item 
A \emph{direct colimit} is the colimit of a directed system.
\end{enumerate}
\end{definition}

\begin{definition}[\cite{MacLane:Categories_for_the_working_mathematician}*{Section IX.3}]\label{dfn:cofinal}
Let \( (I,\prec) \) be a directed poset. A subset \(
J\subseteq I \) is called \emph{cofinal}, if for
every \( i\in I \), there exists \( j\in J \) such that \( i\prec j \).   
\end{definition}

If \( J \) is a cofinal subset of a directed poset \(
(I,\prec) \), then the restriction of \( \prec \) to \(
J \) turns it into a directed poset as well. The inclusion of \(
J \) into \( I \) is a functor when $(I,\prec)$ and $(J,\prec)$ are viewed
as categories. Given an \( I \)-shaped directed system
\( F\colon I\to\mathfrak{C} \), then it restricts to a \(
J \)-shaped directed system defined by   
\[ 
  F|_{J}\colon J\hookrightarrow I\xrightarrow{F}\mathfrak{C}.
\]
If the colimits of both \( F\colon I\to\mathfrak{C} \) and \(
F|_J\colon J\to\mathfrak{C} \) exist, then there is a
canonical map between the direct colimits:
\begin{equation}\label{eq:cofinal_canonical_map}
h\colon \colim F|_J\to\colim F.
\end{equation}
by their universal properties.

\begin{lemma}[\cite{MacLane:Categories_for_the_working_mathematician}*{Section
IX.3, Theorem 1}]\label{lem:cofinal_isomorphism}
Let \( J\subseteq I \) be a cofinal subset of a
directed poset \( (I,\prec) \). Let \( F\colon
I\to\mathfrak{C} \) be an \( I \)-shaped diagram
whose direct colimit exists. Then the direct colimit
\[ 
  \colim F|_{J}
\]
also exists, and the canonical map \( h \) in
\eqref{eq:cofinal_canonical_map} is an isomorphism.
\end{lemma}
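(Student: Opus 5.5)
The plan is to show directly that $\colim F$, equipped with the restricted universal morphisms $\phi_{\infty\backslash j}$ for $j\in J$, already satisfies the universal property of a colimit of $F|_J$. Once this is done, $\colim F|_J$ exists. The canonical map $h$ of \eqref{eq:cofinal_canonical_map} is then the unique map compatible with the two families of universal morphisms, which is the identity under this identification, so it is an isomorphism.

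\emph{Preliminary: $J$ is directed.} Given $j,j'\in J$, directedness of $I$ gives $k\in I$ with $j\prec k$ and $j'\prec k$. Cofinality gives $j''\in J$ with $k\prec j''$, so $j''$ is an upper bound of $j,j'$ in $J$. Certainly the family $\{\phi_{\infty\backslash j}\}_{j\in J}$ is a cocone over $F|_J$, since it is a cocone over all of $F$.

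\emph{Existence of the factorization.} Let $\psi_j\colon c_j\to c$, for $j\in J$, be a cocone over $F|_J$. I extend it to a cocone over $F$. For $i\in I$, choose $j\in J$ with $i\prec j$ and set $\psi_i\defeq\psi_j\circ\phi_{j\backslash i}$.

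This is independent of the choice of $j$. If $j'$ is another choice, pick $j''\in J$ above both $j$ and $j'$. Then
\[
\psi_j\circ\phi_{j\backslash i}=\psi_{j''}\circ\phi_{j''\backslash j}\circ\phi_{j\backslash i}=\psi_{j''}\circ\phi_{j''\backslash i},
\]
and the same computation applies to $j'$. For $i\in J$ we may take $j=i$, so the new definition agrees with the original $\psi_i$.

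The extended family is a cocone over $F$. For $i\prec i'$, choose $j\in J$ above $i'$, which is then also above $i$. Then
\[
\psi_{i'}\circ\phi_{i'\backslash i}=\psi_j\circ\phi_{j\backslash i'}\circ\phi_{i'\backslash i}=\psi_j\circ\phi_{j\backslash i}=\psi_i.
\]
The universal property of $\colim F$ now yields $h\colon\colim F\to c$ with $\psi_i=h\circ\phi_{\infty\backslash i}$ for all $i\in I$, in particular for all $j\in J$.

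\emph{Uniqueness.} Suppose $h'\colon\colim F\to c$ satisfies $\psi_j=h'\circ\phi_{\infty\backslash j}$ for all $j\in J$. For any $i\in I$, with $j\in J$ above $i$,
\[
h'\circ\phi_{\infty\backslash i}=h'\circ\phi_{\infty\backslash j}\circ\phi_{j\backslash i}=\psi_j\circ\phi_{j\backslash i}=\psi_i.
\]
So $h'$ factors the extended cocone over all of $I$, and the uniqueness in the universal property of $\colim F$ gives $h'=h$.

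The only step needing care is the well-definedness of the extension $\psi_i$, which is where directedness of $J$ (inherited via cofinality) is used. The rest is bookkeeping.
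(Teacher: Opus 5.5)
Your proof is correct and complete, and it is essentially the standard argument behind the result the paper cites without giving a proof (Mac~Lane's theorem on final functors): restricting cocones, together with your extension $\psi_i\defeq\psi_j\circ\phi_{j\backslash i}$, gives a bijection between cocones over $F$ and cocones over $F|_J$, and directedness of $J$ replaces connectedness of the comma categories. Because this correspondence is a bijection, your argument gives existence in either direction, including the one the lemma asserts: if $\colim F$ exists then $\colim F|_J$ exists, and the canonical map $h$ is an isomorphism.
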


\subsection{Colimits of \Cst-algebras and abelian groups}
By a directed system of \Cst-algebras, we shall mean a directed system in
the category \( \catCst \) whose objects are
\Cst-algebras; and morphisms are \st-homomorphisms between them. 
The category \( \catCst \) 
is closed under taking (small) colimits. In particular, for every poset 
\( I \), the colimit of the diagram \( F\colon I\to\catCst \)
exists.

If \( F \) is a directed system of
\Cst-algebras with their connecting morphisms being 
injective \st-homomorphisms, then
the direct colimit can be described as the closure of their union.

\begin{lemma}[\cite{Blackadar:K-theory}*{Section
3.3}]\label{lem:direct_colimit_Cst-subalgebras}
Let \( (A_i)_{i\in I} \) be a directed system of \Cst-algebras, 
such that every \( A_i \) is a \Cst-subalgebra of a fixed
\Cst-algebra \( B \); and that the connecting morphisms are given by
inclusions \( A_i\hookrightarrow A_j \) for \( i\prec j \). Then
\[ 
  \colim_{i\in I}A_i\simeq
  \overline{\bigcup_{i\in I}A_i}
\]
where the closure is taken inside \( B \). The universal morphisms \(
\phi_{i}\colon A_i\to\colim_{i\in I}A_i \) are given by inclusion maps as
well.
\end{lemma}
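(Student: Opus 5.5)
The claim is that the $\Cst$-algebra $\colim_{i\in I} A_i$ is realized as $A\defeq\overline{\bigcup_{i\in I}A_i}\subseteq B$. I will check this directly against the universal property in $\catCst$.

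\emph{Step 1: $A$ is a $\Cst$-algebra with a compatible cocone.} Because $I$ is directed and the connecting maps are inclusions, any two elements of $\bigcup_i A_i$ lie in a common $A_k$. Hence the union is a $\ast$-subalgebra of $B$, and its norm closure $A$ is a $\Cst$-subalgebra of $B$. The inclusions $\phi_i\colon A_i\hookrightarrow A$ satisfy $\phi_j\circ\iota_{j\backslash i}=\phi_i$ trivially, so they form a cocone.

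\emph{Step 2: existence of the factorization.} Let $\psi_i\colon A_i\to C$ be a compatible family of $\ast$-homomorphisms. Define $h_0$ on $\bigcup_i A_i$ by $h_0(a)=\psi_i(a)$ for $a\in A_i$. Compatibility gives well-definedness: if $a\in A_i\cap A_j$, choose $k$ with $i,j\prec k$; then $\psi_i(a)=\psi_k(a)=\psi_j(a)$. By the same directedness argument, $h_0$ is a $\ast$-homomorphism. Since $\ast$-homomorphisms between $\Cst$-algebras are contractive, $\|h_0(a)\|\le\|a\|_B$. Therefore $h_0$ extends uniquely by continuity to a $\ast$-homomorphism $h\colon A\to C$, and $h\circ\phi_i=\psi_i$.

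\emph{Step 3: uniqueness.} Any $h'$ with $h'\circ\phi_i=\psi_i$ agrees with $h$ on the dense subset $\bigcup_i A_i$. Being continuous, $h'$ therefore equals $h$.

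By uniqueness of colimits up to canonical isomorphism, $A\simeq\colim_{i\in I}A_i$, with the universal morphisms being the inclusions. The only point needing care is Step 2: one must use the norm of $B$ on the union and the automatic contractivity of $\ast$-homomorphisms, so that the extension to the closure exists. Beyond that there is no real obstacle.
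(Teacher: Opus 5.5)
Your proof is correct. Directedness makes $\bigcup_i A_i$ a $\ast$-subalgebra and makes $h_0$ well defined. Because each $A_i$ carries the restricted norm of $B$, $h_0$ is contractive for the norm of $B$. Extension by continuity together with density then gives existence and uniqueness of the factorization.

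The paper gives no proof of its own; it only cites Blackadar. There the inductive limit is constructed for an arbitrary directed system: take the algebraic colimit, equip it with the seminorm $\|a\|=\lim_j\|\phi_{j\backslash i}(a)\|$, quotient by the null elements and complete. When the connecting maps are inclusions inside $B$, this seminorm is constant in $j$ and equals $\|a\|_B$, so the completion is $\overline{\bigcup_i A_i}$.

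Your route instead checks the universal property directly on the concrete candidate. This is more self-contained, since it never needs the general existence construction and uses only uniqueness of colimits up to canonical isomorphism, which is exactly what the statement asserts. The cited construction has the advantage of also covering non-injective connecting maps, where no ambient $B$ is available.
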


For each \( n \), the K-theory group \( \K_n \) is a functor from \(
\catCst \) to the category of abelian groups \( \catAb \).    
It is well-known that \( \K_n \)
commutes with direct colimits. That is, let
\( (A_i)_{i\in I} \) be a directed system of \Cst-algebras.
Then
\[ 
  \colim_{i\in I}\K_n(A_i)\simeq\K_n(\colim_{i\in I}A_i)
\]
for the K-theory functors \( \K_n \). We note that this ``continuity''
property of $\K_n$ may not hold for a
general colimit whose shape is not a directed poset.

\begin{remark}
The categorical statements about \Cst-algebras and K-theory 
hold for real \Cst-algebras and KO-theory, see
\cite{Boersema-Schochet:Real_K-theory}.
\end{remark}

Now let \( \catAb \) be the category whose objects are abelian groups and
morphisms are group homomorphisms. 

\begin{lemma}\label{lem:concrete_colimit_limit_in_Ab}
Let \( (G_i)_{i\in I} \) be a diagram of abelian groups, whose connecting
morphisms are given by \( \phi_{j\backslash i}\colon G_i\to G_j \). 
Then its colimit \( \colim_{i\in I}G_i \) is isomorphic to
\[ 
  \bigoplus_{i\in I}G_i/R,
\]
where \( R\leq\bigoplus_{i\in I}G_i \) is the subgroup generated by all
elements of the form
\[ 
  g_j-\phi_{j\backslash i}(g_i),\quad i\prec j.
\]
\end{lemma}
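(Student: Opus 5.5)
The plan is to construct the explicit quotient $Q\defeq\bigoplus_{i\in I}G_i/R$ and check directly that it satisfies the universal property of the colimit; uniqueness of colimits up to canonical isomorphism then gives the claimed isomorphism. For each $i\in I$ write $\epsilon_i\colon G_i\to\bigoplus_{j\in I}G_j$ for the canonical inclusion, and define $\phi_{\infty\backslash i}\defeq \pi\circ\epsilon_i\colon G_i\to Q$, where $\pi$ is the quotient map. Identifying $g_j$ with $\epsilon_j(g_j)$ as in the statement, the generators of $R$ are $\epsilon_j(g_j)-\epsilon_j(\phi_{j\backslash i}(g_i))$ in one reading. In the intended reading they are $\epsilon_i(g_i)-\epsilon_j(\phi_{j\backslash i}(g_i))$, i.e.\ one takes $g_j=\phi_{j\backslash i}(g_i)$ placed in the $j$-th slot and $g_i$ in the $i$-th slot. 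I will adopt this reading, since it is the one making the relations $\phi_{\infty\backslash j}\circ\phi_{j\backslash i}=\phi_{\infty\backslash i}$ hold. These relations follow immediately because the corresponding difference lies in $R$.

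Next comes the universal property. Suppose $c$ is an abelian group with homomorphisms $\psi_i\colon G_i\to c$ satisfying $\psi_i=\psi_j\circ\phi_{j\backslash i}$ for $i\prec j$. By the universal property of the direct sum (the coproduct in $\catAb$), there is a unique homomorphism $\Psi\colon\bigoplus_i G_i\to c$ with $\Psi\circ\epsilon_i=\psi_i$. On a generator of $R$ we get $\Psi(\epsilon_i(g_i)-\epsilon_j(\phi_{j\backslash i}(g_i)))=\psi_i(g_i)-\psi_j(\phi_{j\backslash i}(g_i))=0$. Hence $\Psi$ vanishes on $R$ and descends to $h\colon Q\to c$ with $h\circ\phi_{\infty\backslash i}=\psi_i$.

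For uniqueness, note that any $h'$ with $h'\circ\phi_{\infty\backslash i}=\psi_i$ for all $i$ agrees with $h$ on the images $\pi(\epsilon_i(G_i))$. These images generate $Q$, because the $\epsilon_i(G_i)$ generate the direct sum and $\pi$ is surjective, so $h'=h$.

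There is no real obstacle here. The only point requiring care is fixing the correct convention for the relation generators, namely placing $g_i$ in the $i$-th slot and its image in the $j$-th slot, so that the compatibility of the $\phi_{\infty\backslash i}$ holds. Note also that directedness of $I$ is not needed for this lemma; it becomes relevant only for the concrete description of elements (every element of $Q$ is represented by a single $\epsilon_i(g_i)$), which is not claimed here.
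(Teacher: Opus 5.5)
Your proof is correct. The paper states this lemma without proof as a standard fact, and your argument is the standard one: check the universal property of $\bigoplus_{i\in I}G_i/R$ using the coproduct property of the direct sum, descend to the quotient because $\Psi$ kills the generators of $R$, and get uniqueness because the images $\pi(\epsilon_i(G_i))$ generate.

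Reading the relations as $\epsilon_i(g_i)-\epsilon_j(\phi_{j\backslash i}(g_i))$ is not just a convenience. The literal reading, with both terms in the $j$-th summand, would make $R$ the whole direct sum: take $i=j$ and $g_i=0$. You are also right that directedness of $I$ plays no role here.
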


The following two lemmas are used in the paper.
\cref{lem:tensor_product_colimit_Ab} is due to the fact that a tensor
product of abelian groups 
is a cokernel, hence commutes with any other colimit.
\cref{lem:direct_colimit_Ab_exact} is Grothendieck's Axiom (AB5) for
abelian categories \citelist{\cite{Grothendieck:Tohoku}
\cite{Weibel:Homological_algebra}*{A.4.6}}, which is satisfied by the
category \( \mathfrak{Ab} \) (more generally, the category of modules over
a commutative unital ring \( R \)).

\begin{lemma}\label{lem:tensor_product_colimit_Ab}
Let \( (G_i)_{i\in I} \) be a diagram of abelian groups, and \( H \) be
another abelian group. Then
\[ 
  (\colim_{i\in I}G_i)\otimes_\Z H\simeq\colim_{i\in I}(G_i\otimes_\Z H).
\]
\end{lemma}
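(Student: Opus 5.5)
We derive the isomorphism from the explicit presentation of \cref{lem:concrete_colimit_limit_in_Ab} together with right-exactness of $-\otimes_\Z H$. Write $\bigoplus_{i\in I}G_i \to \colim_{i\in I}G_i\to 0$ for the canonical surjection. Its kernel is the subgroup $R$ generated by the elements $g_j-\phi_{j\backslash i}(g_i)$ with $i\prec j$. This $R$ is itself the image of the map
\[
\partial\colon \bigoplus_{i\prec j}G_i\to\bigoplus_{i\in I}G_i,\qquad (g_i)_{(i\prec j)}\mapsto \iota_j\phi_{j\backslash i}(g_i)-\iota_i(g_i),
\]
where $\iota_i$ denotes the inclusion of the $i$-th summand. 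So $\colim_{i\in I}G_i\simeq\operatorname{coker}\partial$, and we have an exact sequence
\[
\bigoplus_{i\prec j}G_i\xrightarrow{\partial}\bigoplus_{i\in I}G_i\to\colim_{i\in I}G_i\to 0.
\]

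Next we tensor this sequence with $H$. The functor $-\otimes_\Z H$ is right exact, being left adjoint to $\operatorname{Hom}_\Z(H,-)$. It also commutes with arbitrary direct sums, because tensor products distribute over direct sums. The result is an exact sequence
\[
\bigoplus_{i\prec j}(G_i\otimes_\Z H)\xrightarrow{\partial\otimes\id_H}\bigoplus_{i\in I}(G_i\otimes_\Z H)\to\Big(\colim_{i\in I}G_i\Big)\otimes_\Z H\to 0.
\]

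We then identify $\partial\otimes\id_H$ with the analogous map $\partial'$ built from the diagram $(G_i\otimes_\Z H)_{i\in I}$, whose connecting morphisms are $\phi_{j\backslash i}\otimes\id_H$. This holds because the identification of direct sums carries $\iota_j\phi_{j\backslash i}\otimes\id_H$ to $\iota_j(\phi_{j\backslash i}\otimes\id_H)$. Applying \cref{lem:concrete_colimit_limit_in_Ab} to this new diagram then gives
\[
\Big(\colim_{i\in I}G_i\Big)\otimes_\Z H\simeq\operatorname{coker}\partial'\simeq\colim_{i\in I}(G_i\otimes_\Z H).
\]

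Finally we check naturality: the isomorphism should be the canonical map induced by the morphisms $\phi_{\infty\backslash i}\otimes\id_H$. Both sides are quotients of the same group $\bigoplus_{i\in I}(G_i\otimes_\Z H)$ by the same subgroup, so this is immediate. The only point needing care is the identification of $\partial\otimes\id_H$ with $\partial'$, which is routine bookkeeping. An alternative would be to verify the universal property directly using the adjunction $\operatorname{Hom}(A\otimes H,B)\simeq\operatorname{Hom}(A,\operatorname{Hom}(H,B))$. We regard the cokernel argument as the main route; it does not use directedness of $I$ and is not expected to present any real obstacle.
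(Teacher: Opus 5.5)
Your argument is correct, but it takes a somewhat different route from the paper's.

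The paper gives only a one-line justification: a tensor product of abelian groups is a cokernel, hence commutes with any other colimit. One way to make this precise is to present the \emph{fixed} factor $H$ by free groups, $\Z^{\oplus J_1}\to\Z^{\oplus J_0}\to H\to 0$. This gives $A\otimes_\Z H\simeq\operatorname{coker}\bigl(A^{\oplus J_1}\to A^{\oplus J_0}\bigr)$ naturally in $A$, and one then uses the formal fact that direct sums and cokernels commute with colimits.

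You do the reverse. You present the \emph{colimit} as a cokernel of a map between direct sums, using \cref{lem:concrete_colimit_limit_in_Ab}, and push $-\otimes_\Z H$ through it using right exactness and additivity. Your identification of $\partial\otimes\id_H$ with $\partial'$ on simple tensors is exactly the bookkeeping needed.

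Both arguments rest on the same principle: $-\otimes_\Z H$ preserves cokernels and direct sums, i.e.\ it is a left adjoint. Neither uses directedness of $I$.

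Your version is more explicit. It reuses the paper's own presentation lemma and identifies the isomorphism with the canonical map induced by the $\phi_{\infty\backslash i}\otimes\id_H$. The paper's version is shorter and purely formal, but it leaves implicit what ``is a cokernel'' means.
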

As a special case, 
let \( H=\Q \), and let 
\( (G_i)_{i\in I} \) be the K-theory groups \( \K_n(A_i) \)  
of a directed system of
\Cst-algebras \( (A_i)_{i\in I} \). Then \(
G_i\otimes_{\Z}\Q=\K_n(A_i)\otimes_\Z\Q \)
is the rational K-theory groups of \( A_i \) (see
\cite{Cuntz-Meyer-Rosenberg:K-theory}*{Definition 8.1}). Then
\cref{lem:tensor_product_colimit_Ab} implies that rational K-theory also
commutes with direct colimits of \Cst-algebras.

Below, we denote the trivial abelian group by \( 0 \).

\begin{lemma}\label{lem:direct_colimit_Ab_exact}
Let \( (G_i)_{i\in I}, (H_i)_{i\in I} \) and \( (K_i)_{i\in I} \) be
directed systems of abelian groups, such that for every \( i,j\in I \) with \( i\prec j \), 
\[ 
\begin{tikzcd}
0 \arrow[r] & K_i \arrow[r] \arrow[d] & G_i \arrow[r] \arrow[d]  & H_i
\arrow[r] \arrow[d] & 0 \\
0 \arrow[r] & K_j \arrow[r] & G_j \arrow[r] & H_j \arrow[r] & 0 
\end{tikzcd}
\]
is a commutative diagram whose rows are short exact sequences, and vertical
maps are the connecting morphisms of these directed systems. Then there is
a short exact sequence of abelian groups
\[ 
  \begin{tikzcd}
0 \arrow[r] & \colim\limits_{i\in I}K_i \arrow[r] & \colim\limits_{i\in I}G_i \arrow[r] &
\colim\limits_{i\in I} H_i \arrow[r] & 0. 
\end{tikzcd} 
\]
\end{lemma}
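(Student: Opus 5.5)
The statement to prove is \cref{lem:direct_colimit_Ab_exact}: filtered colimits are exact in $\catAb$. I plan to work directly with the concrete model of \cref{lem:concrete_colimit_limit_in_Ab}. For a directed index poset this model simplifies to the standard description. Every element of $\colim_i G_i$ is represented by some $g_i\in G_i$. Moreover, $\phi_{\infty\backslash i}(g_i)=0$ if and only if $\phi_{j\backslash i}(g_i)=0$ for some $j\succ i$.

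I would first verify this simplification from the quotient $\bigoplus G_i/R$. Directedness lets a finite sum in $\bigoplus_i G_i$ be pushed to a single common upper bound, so each class has a single-index representative. For the vanishing criterion, suppose a single-index element lies in $R$. It is then a finite combination of relations $g_j-\phi_{j\backslash i}(g_i)$. Choose an index $k$ dominating all the indices that appear. Applying the maps $\phi_{k\backslash\cdot}$ to everything yields a homomorphism $\bigoplus G_i\to G_k$ on the finitely many relevant summands, and this homomorphism kills every relation. Hence the element becomes $0$ in $G_k$. This bookkeeping step is the only real obstacle, and it is where directedness is essential.

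The maps between the colimits come from the universal property. Commutativity of the squares makes $(K_i\to G_i)$ and $(G_i\to H_i)$ natural transformations, so they induce $\alpha\colon\colim K_i\to\colim G_i$ and $\beta\colon\colim G_i\to\colim H_i$. We have $\beta\circ\alpha=0$, because the composite is induced by the zero transformation. Exactness is then checked on representatives.

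\emph{Surjectivity of $\beta$.} Represent a class by $h_i\in H_i$, lift it to $g_i\in G_i$, and use naturality.

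\emph{Injectivity of $\alpha$.} Suppose $k_i\in K_i$ maps to $0$ in $\colim G_i$. Then its image in $G_i$ dies in some $G_j$. By commutativity, that image in $G_j$ is the image of $\phi_{j\backslash i}(k_i)\in K_j$. Since $K_j\to G_j$ is injective, $\phi_{j\backslash i}(k_i)=0$, so $k_i$ represents $0$ in $\colim K_i$.

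\emph{Exactness in the middle.} Suppose $g_i$ has image in $H_i$ that vanishes in the colimit. Then that image dies in some $H_j$. Consequently $\phi_{j\backslash i}(g_i)$ maps to $0$ in $H_j$, so by exactness of the $j$-th row it comes from some $k_j\in K_j$. Thus the class of $g_i$, which equals the class of $\phi_{j\backslash i}(g_i)$, lies in the image of $\alpha$.
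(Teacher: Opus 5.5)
Your proof is correct, but it takes a different route from the paper, which gives no argument of its own. The paper identifies the lemma as Grothendieck's axiom (AB5), exactness of filtered colimits, and refers to Grothendieck's T\^ohoku paper and to Weibel (A.4.6), noting that $\catAb$ and, more generally, module categories satisfy it.

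You instead work from the presentation in \cref{lem:concrete_colimit_limit_in_Ab} and first extract the standard description of a directed colimit. Every class has a single-index representative. Moreover, $g_i$ represents $0$ if and only if $\phi_{k\backslash i}(g_i)=0$ for some $k$ with $i\prec k$. You obtain the latter correctly by mapping the finitely many relevant summands to a common upper bound $G_k$ via the $\phi_{k\backslash l}$, a map that kills every relation by functoriality. You then check surjectivity, injectivity and middle exactness by short diagram chases, and all of them go through as written.

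The citation is shorter and places the result in its natural generality. Your argument is self-contained, uses only what the appendix already states, and shows exactly where directedness enters.

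One refinement is worth knowing. Surjectivity of $\colim_i G_i\to\colim_i H_i$ and middle exactness hold for colimits of arbitrary shape, since colimits commute with cokernels. Directedness is genuinely needed only for injectivity of $\colim_i K_i\to\colim_i G_i$: for a pushout-shaped diagram, a pointwise injection can induce a non-injective map on colimits. In your route directedness is also used, harmlessly, for the middle step via the eventual-vanishing criterion.
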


\subsection{Remarks on categorical terminology}
\label{sec:remarks_categorical_terminology}
Below we collect a few categorical terminologies in the literature and clarify the relation with our usage. 
\begin{enumerate}
\item In the field of operator algebras, a \emph{directed poset} 
is often called a \emph{directed set}. A
\emph{directed system} is also called an \emph{inductive system}.
In some literature, the shape poset \( (I,\prec) \) of a directed system 
is required to be isomorphic to the totally ordered poset \( (\N,\leq) \).
\item The \emph{direct colimit} is sometimes called a \emph{direct limit} or
\emph{inductive limit}. We remark, however, that this terminology is not
compatible with the modern categorical language of ``limits'' as a distinct
concept from ``colimits''. 
\item One could also consider colimits of $I$-shaped diagrams with $I$
being a more general (small) category. The term ``direct limit'' is
sometimes used as an synonym for a colimit, whether or not $I$ is a
directed poset. Directed colimits in our sense are special instances of
\emph{filtered colimits} (see
\cite{MacLane:Categories_for_the_working_mathematician}*{Section IX.1}).
Taking the filtered colimit of \Cst-algebras commutes with the K-theory
functor, whereas this does not hold for the colimit of a diagram of
\Cst-algebras with more general shape. 
\item The term \emph{cofinal} is referred to as \emph{final} in
\cite{MacLane:Categories_for_the_working_mathematician}*{Section IX.3}. We
prefer to work with ``co'' here, because \cref{lem:cofinal_isomorphism}
shows that restricting a diagram to a cofinal shape poset 
preserves its colimit.
\end{enumerate}

\section{Real \Cst-algebras and KO-theory}
\label{sec:real_Cst-algebras}
We briefly recall the definition of real \Cst-algebras and their KO-theory.
A comprehensive review is \cite{Boersema-Schochet:Real_K-theory}. We
highlight the \( \KO_*(\R) \)-module structure of the KO-theory \(
\KO_*(A) \) of a real \Cst-algebra \( A \), needed for the analysis of
symmetry breaking in \cref{sec:real_d-dim}.

\subsection{Real \Cst-algebras}
A real \Cst-algebra is a norm-closed \st-subalgebra of \(
\Bdd(\mathcal{H}_\R) \) for a real Hilbert space \( \mathcal{H}_\R \). 

The complexification of a real \Cst-algebra \( A\subseteq
\Bdd(\mathcal{H}_\R) \) is the complex \Cst-algebra \( A_\C\defeq
A\otimes_\R\C\subseteq \Bdd(\mathcal{H}_\R\otimes_\R\C) \). Thus 
every real \Cst-algebra \( A \) can be described as a 
real \st-subalgebra of its complexification.

Let \( X \) be a locally compact Hausdorff space, then \( \Co(X)_\R \), the
continuous real-valued functions on \( X \) that vanish at infinity, is a
real \Cst-algebra. More generally, 
an \emph{involutive space} is a space \( X \) equipped with a 
\( \Z/2 \)-action, i.e., a homeomorphism \( \tau\colon X\to X \) such that
\( \tau^2=\id_X \). An involutive space yields a real \Cst-algebra  
\[ 
  \Co(X,\tau)\defeq
  \set*{f\in\Co(X)_\C\,:\,\overline{f(x)}=f(\tau(x))}.
\]
As a special case, \( \Co(X)_\R=\Co(X,\id_X) \) for the trivial involution
\( \id_X\colon X\to X \). The real Gelfand--Naimark Theorem due to Arens
and Kaplansky \cite{Arens_Kaplansky:Topological_representation} says that
every commutative real \Cst-algebra arsies as \( \Co(X,\tau) \) for some
involutive space \( (X,\tau) \). Namely, $X$ is the character space of
the complex commutative $\Cst$-algebra $\Co(X)_{\C}$, and complex
conjugation induces an involution $\tau$ on $X$.

Let \( \I\defeq (-\pi,\pi) \) with involution \( \tau_{\I}(x)=-x \).
Then \( \I \) is \( \Z/2 \)-equivariantly homeomorphic to the real
line \( \R \) with involution \( \tau_\R(x)=-x \).

Given a real \Cst-algebra \( A \), we write
\begin{equation}\label{eq:suspension_desuspension_real}
\begin{aligned} 
\mathrm{S}A\defeq \Co(\I,\id)\otimes_\R A,&\quad
\mathrm{V}A\defeq \Co(\I,\tau_{\I})\otimes_\R A,\\
\mathrm{S}^nA\defeq
\underbrace{\mathrm{S}\mathrm{S}\dots\mathrm{S}}_{\text{\( n \)-fold}}A,
&\quad \mathrm{V}^nA\defeq
\underbrace{\mathrm{V}\mathrm{V}\dots\mathrm{V}}_{\text{\( n \)-fold}}A,
\end{aligned}
\end{equation}
We call the real \Cst-algebra \( \mathrm{S}A \) the \emph{suspension} of
\( A \), and \( \mathrm{V}A \) the \emph{desuspension} of \( A \).
Both \( \mathrm{S}A \) and \( \mathrm{V}A \) have complexification \(
\Co(\R)_\C\otimes_\C A_\C \), which is the usual suspension of a complex
\Cst-algebra \( A \). We note, however, that \( \mathrm{S}A \) and \(
\mathrm{V}A \) are not isomorphic as real \Cst-algebras, and they have
different KO-theory, see \cref{lem:KO-theory_suspension_desuspension}.

\subsection{\KO-theory}
Let \( A \) be a real \Cst-algebra. There is a sequence of abelian groups
\[ 
  \KO_i(A),\quad i\in\Z,
\]
called the KO-theory groups of \( A \). 
If \( A \) is a unital real
\Cst-algebra, then \( \KO_0(A) \) is the Grothendieck group of the monoid
of equivalence classes of projections in \( \Mat_{\infty}(A) \). If \( A \)
is non-unital, then one defines
\[ 
  \KO_0(A)\defeq \ker(\KO_0(q)\colon \KO_0(A^+)\twoheadrightarrow
  \KO_0(\R)),
\]
where \( A^+ \) is the real unitisation of \( A \).

Higher KO-theory groups are defined using suspensions: we define
\[
  \KO_n(A)\defeq \KO_0(\mathrm{S}^nA),
\]
here we recall that \( \mathrm{S}^nA \) is  
the \( n \)-fold suspension of \( A \). 
The real Bott periodicity of KO-theory (see
\citelist{\cite{Bott:Periodicity} \cite{Wood:Banach_algebras_periodicity}})
says that
\[ 
  \KO_i(A)\simeq\KO_{i+8}(A),\quad \text{for every \( i\in \Z \)}. 
\]

The KO-theory of a real \Cst-algebra \( A \) is sometimes more conveniently
described as a \( \KO_*(\R) \)-module. In order to do this,
define the \( \Z \)-graded abelian group  
\[
  \KO_*(A)\defeq \bigoplus_{i\in\Z}\KO_i(A).
\]
The external tensor product of real \Cst-algebras turns \(
\KO_*(\R) \) into a unital, \( \Z \)-graded commutative ring, and \(
\KO_*(A) \) into a graded \( \KO_*(\R) \)-module.  
The exterior product on K-theory induces a graded \( \KO_*(\R)
\)-module structure on \( \KO_*(A) \). Moreover, every \st-homomorphism 
$A\to B$
induces a degree zero module map between the graded modules 
\( \KO_*(A) \to\KO_*(B) \).

The \( \Z \)-graded 
ring \( \KO_*(\R) \) can be described as (see 
\citelist{\cite{Karoubi:K-theory}*{Theorem 5.19}
\cite{Boersema-Schochet:Real_K-theory}*{Theorem 22.2}}):
\begin{equation}\label{eq:KO-ring_of_R}
  \Z[\eta,\alpha,\beta,\beta^{-1}]/(2\eta,\eta^3,\eta\alpha,\alpha^2-4\beta),
\end{equation} 
where
\[ 
  \deg(\eta)=1,\quad\deg(\alpha)=4,\quad\deg(\beta)=8.
\]
In particular, we have
\[ 
  \KO_i(\R)\simeq\begin{cases} 
  \Z, & \text{if}\ i\equiv 0,4\hphantom{,6,7}\mod 8;\\
  \Z/2, & \text{if}\ i\equiv 1,2\hphantom{,6,7}\mod 8;\\
  0, & \text{if}\ i\equiv 3,5,6,7 \mod 8,
  \end{cases}
\]
and a generator of the abelian group \( \KO_i(\R) \) is given by the
product of \( \beta^k \) with the generator of \( \KO_*(\R) \)  
in degree \( i-8k \), where \( k
\) is the unique integer such that \( i-8k\in\{0,1,\dots,7\} \).

\begin{lemma}[\cite{Boersema:Real_Cst-algebras_Kuenneth_formula}*{Corollary
5.7}]\label{lem:KO-theory_suspension_desuspension}
For the suspension/desuspension real \Cst-algebras \(\mathrm{S}A\)
and \(\mathrm{V}A\) as in \eqref{eq:suspension_desuspension_real},
there are isomorphisms of \( \KO_*(\R) \)-modules
\[ 
  \KO_*(\mathrm{S}A)\simeq\KO_{*+1}(A),\quad 
  \KO_*(\mathrm{V}A)\simeq\KO_{*-1}(A).
\]
\end{lemma}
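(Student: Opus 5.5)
The plan is to reduce to the case of a one-dimensional line with a reflection, then induct on $n$ by writing $\mathrm{V}A$ as a fiber of a split exact sequence. First, since $\Co(\I,\tau_\I)$ is commutative, it lies in the pre-bootstrap category. Boersema's Künneth formula (\cref{lem:real_Kuenneth_theorem}) then gives $\KO_*(\mathrm{V}A)\simeq \KO_*(\Co(\I,\tau_\I))\otimes_{\KO_*(\R)}\KO_*(A)$, and similarly for $\mathrm{S}A$. The Künneth isomorphism is given by the external product, which is natural and $\KO_*(\R)$-linear. However, it requires torsion-free complex K-theory of one factor, and $A$ is arbitrary. To avoid that hypothesis I would not use Künneth for general $A$. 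Instead I would work directly with exact sequences and exterior products, reducing the claim to showing that $\KO_*(\Co(\I,\tau_\I))$ and $\KO_*(\Co(\I,\id))$ are free rank-one $\KO_*(\R)$-modules on generators of degree $-1$ and $+1$ respectively. I would then show that the exterior product with this generator is an isomorphism for every $A$.

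The suspension case is essentially the definition: $\KO_n(A)=\KO_0(\mathrm{S}^nA)$, so $\KO_*(\mathrm{S}A)\simeq\KO_{*+1}(A)$. One only checks that this identification agrees with the exterior product by the generator of $\KO_1(\Co(\I,\id))\simeq\KO_0(\R)$, hence is $\KO_*(\R)$-linear.

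For $\mathrm{V}A$, I would use the split extension $0\to\Co(\I,\tau_\I)\to\Cont(\T,\tau_\T)\to\R\to0$ from the proof of \cref{lem:KO_Z_generators}, here tensored with $A$. I would also use the real Toeplitz extension, equivalently Kasparov's Dirac/dual-Dirac elements for $\Cont(\T,\tau_\T)\simeq\Cstr(\Z)_\R$. The Pimsner--Voiculescu sequence for the trivial $\Z$-action on $A$ gives $\KO_*(A\otimes\Cstr(\Z)_\R)\simeq\KO_*(A)\oplus\KO_{*-1}(A)$. Comparing with the split sequence identifies $\KO_*(\mathrm{V}A)$ with the second summand, $\KO_{*-1}(A)$.

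Both the PV boundary map and the splitting are given by Kasparov products with fixed $\KK$-classes, so the resulting isomorphism is $\KO_*(\R)$-linear. The graded-commutativity of exterior products with elements of $\KO_*(\R)$ needs a sign check, but since $\KO_*(\R)$ is central this causes no trouble.

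The main obstacle is pinning down the degree: a class of degree $-1$, rather than $+1$, for the involutive interval. I would settle this by computing with $A=\R$ through Bott periodicity, using $\mathrm{S}\mathrm{V}A\sim_{\KK}A$. Concretely, $\Co(\R^{1,1})$ with a flip on one coordinate is $\KK$-equivalent to $\R$ (Kasparov's $(1,1)$-periodicity). Granting that equivalence, combining it with the suspension case gives $\KO_*(\mathrm{V}A)\simeq\KO_{*-1}(A)$ directly, as $\KO_*(\R)$-modules, and this would be the route I take.
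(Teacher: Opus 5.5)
The paper gives no proof of this lemma; it is quoted from Boersema (Corollary 5.7), so there is no in-paper argument to compare with. The route you finally commit to is the standard one and is correct. You have $\mathrm{S}\mathrm{V}\R=\Co(\I,\id)\otimes_\R\Co(\I,\tau_\I)\cong\Co(\R^{1,1})$, with the flip on one coordinate. Kasparov's $(1,1)$-periodicity makes this $\KK$-equivalent to $\R$, so $\mathrm{S}\mathrm{V}A\sim_{\KK}A$. The tautological shift $\KO_*(\mathrm{S}B)\cong\KO_{*+1}(B)$ with $B=\mathrm{V}A$ then gives $\KO_{*+1}(\mathrm{V}A)\cong\KO_*(A)$.

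Your Pimsner--Voiculescu alternative is also sound, and in effect derives the $(1,1)$ shift from the real Toeplitz extension instead of quoting it. For the trivial action, $\partial$ is surjective with kernel $\iota_*\KO_*(A)$. Since $\ev_{-1}\circ\iota=\id$, $\partial$ restricts to an isomorphism from $\ker(\ev_{-1})_*\cong\KO_*(\mathrm{V}A)$ onto $\KO_{*-1}(A)$. The degree shift comes out automatically because $\partial$ lowers degree by one.

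This reverses the paper's logic in \cref{lem:KO_Z_generators}: there the lemma is used to compute $\KO_*(\Cstr(\Z)_\R)$, whereas PV computes $\KO_*(A\otimes\Cstr(\Z)_\R)\cong\KO_*(A)\oplus\KO_{*-1}(A)$ first and deduces the lemma. The $\KK$ route buys a natural $\KK$-equivalence that iterates to $\mathrm{S}^p\mathrm{V}^q$. The sign worry is vacuous in both routes: every odd-degree element of $\KO_*(\R)$ is of the form $\eta\beta^k$, hence $2$-torsion, so a factor $(-1)^{|\xi|}$ never matters.

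Your degree bookkeeping elsewhere, however, is reversed and must be corrected. $\KO_*(\Co(\I,\tau_\I))=\KO_*(\mathrm{V}\R)$ is free on a generator of degree $+1$: it lives in $\KO_1(\mathrm{V}\R)\cong\KO_0(\R)$ and is represented by the unitary $z\mapsto z$, as in the proof of \cref{lem:KO_Z_generators}. By contrast, $\KO_*(\Co(\I,\id))$ is free on a generator of degree $-1$, in $\KO_{-1}(\mathrm{S}\R)\cong\KO_0(\R)$. Your assertion $\KO_1(\Co(\I,\id))\simeq\KO_0(\R)$ is false, since $\KO_1(\mathrm{S}\R)\cong\KO_2(\R)\cong\Z/2$.

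The reduction in your first paragraph and your ``main obstacle'' sentence both state the opposite assignment. If used, that assignment would yield $\KO_*(\mathrm{V}A)\cong\KO_{*+1}(A)$, the wrong shift; the distinction between degrees $\pm1$ is invisible mod $2$ in complex K-theory but not mod $8$. Your final argument does not rely on these claims, since the degree falls out of the $\KK$-equivalence. The proof therefore stands once they are corrected or deleted, together with the stray ``induct on $n$'', which refers to nothing in the statement.
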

Let \( A \) and \( B \) be real \Cst-algebras and \( f\colon A\to B \) be a
\st-homomorphism. Then \( f \) induces maps
\[
  \KO_i(f)\colon \KO_i(A)\to\KO_i(B)
\]
in KO-theory for each \( i \). These maps \( \KO_i(f) \) are, in
particular, components of a degree-zero
\( \KO_*(\R) \)-module map between the corresponding \(
\KO_*(\R) \)-modules \( \KO_*(A) \) and \( \KO_*(B) \).
Such a description is particularly useful 
  when \( \KO_*(A) \) is a finitely-generated 
  free \( \KO_*(\R) \)-module, in which case, the
induced maps \( \KO_i(f) \)  
are completely determined by their behaviour on
generators.

\begin{remark}\label{rmk:K-ring_of_C}
For a complex \Cst-algebra \( A \), one can also 
describe the \( \Z \)-graded abelian
group \( \K_*(A) \) as a module over the unital, graded commutative ring
\( \K_*(\C) \). The \( \Z \)-graded ring \( \K_*(\C) \) is the ring of Laurent polynomials
\[ 
  \Z[\beta,\beta^{-1}],\quad \deg(\beta)=2,
\]
where multiplication with \( \beta \) encodes the complex Bott periodicity
isomorphisms \( \K_n(A)\simeq\K_{n+2}(A) \). Thus
\( \K_*(A) \) being a \( \K_*(\C) \)-module just says that it is given by a
\( \Z/2 \)-graded abelian group \( \K_0(A)\oplus\K_1(A) \) together with
Bott periodicity.

Similar to the case of real \Cst-algebras,
a \st-homomorphism \( f\colon A\to B \) between complex \Cst-algebras
gives a degree-zero \( \K_*(\C) \)-module map \( \K_*(A)\to\K_*(B) \).
The paragraph above then says that this is the same data as a pair of group
homomorphisms \( \K_0(A)\to\K_0(B) \) and \( \K_1(A)\to\K_1(B) \).
\end{remark}

\section{Assembly maps for contractible nilpotent Lie groups}
\label{sec:coarse_Baum-Connes}

The coarse Baum--Connes conjecture \cite{Higson-Roe:Coarse_Baum-Connes}
says that for a metric space \( X \) satisfying certain properties, 
the \emph{assembly map}
\begin{equation}\label{eq:assembly_map}
  \mu\colon \K_i(X)\to\K_i(\Cst(X))
\end{equation}
is an isomorphism for every \( i \). 
Here \( \K_i(X) \) is the (analytic) K-homology of the
space \( X \), and \( \K_i(\Cst(X)) \) is the K-theory of the Roe algebra
of \( X \).

The assembly map \( \mu \) can be presented in different ways depending on
the model of K-homology. The original formulation uses Paschke duality
\cite{Higson-Roe:Coarse_Baum-Connes}. In
\cite{Yu:Localization_algebras_coarse_Baum-Connes}, the assembly map 
is formulated using the localisation algebra of \( X \). 

The left side of \( \mu \) encodes the topological (local) data of the
space \( X \), whereas the right side encodes the coarse (large-scale)
data. It is known that \( \mu \) 
fails to be an isomorphism for a general metric space
\( X \). Nevertheless, we have the following landmark result due to Yu:

\begin{lemma}[\citelist{\cite{Higson-Roe:Coarse_Baum-Connes}*{Proposition 3.8} \cite{Yu:Finite_asymptotic_dimension}}]
\label{lem:Yu_theorem}
Let $X$ be a metric simplicial complex which is 
uniformly contractible, has bounded coarse geometry, 
and has finite asymptotic dimension. Then the assembly map
\eqref{eq:assembly_map} is an isomorphism for all \( i \). 
\end{lemma}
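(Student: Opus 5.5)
The plan is to reduce to the ``Rips complex'' form of the coarse Baum--Connes conjecture, and then prove that form by Yu's cutting-and-pasting argument, using finite asymptotic dimension. Since this is a cited result, the plan follows the strategy of \cite{Higson-Roe:Coarse_Baum-Connes} and \cite{Yu:Finite_asymptotic_dimension} rather than anything specific to the present paper.

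\textbf{Step 1: reduction.} For $d>0$ let $P_d(X)$ be the Rips complex of a uniformly discrete net in $X$. Because $X$ has bounded geometry, each $P_d(X)$ is finite dimensional and locally finite. Coarse invariance of the Roe algebra gives $\K_i(\Cst(X))\simeq\K_i(\Cst(P_d(X)))$. The assembly maps for the various $P_d(X)$ are compatible, so they give a map
\[
\mu_\infty\colon \colim_d \K_i(P_d(X))\to \K_i(\Cst(X)).
\]
Uniform contractibility of $X$ yields a coarse homotopy inverse to the inclusion of $X$ into $P_d(X)$, valid for every $d$. Hence $\K_i(X)\to\colim_d\K_i(P_d(X))$ is an isomorphism; this is the content of \cite{Higson-Roe:Coarse_Baum-Connes}*{Proposition 3.8}. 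It therefore suffices to show that $\mu_\infty$ is an isomorphism.

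\textbf{Step 2: localization picture.} I would model $\K$-homology by the localization algebra $\Cst_L(P_d(X))$, following \cite{Yu:Localization_algebras_coarse_Baum-Connes}. In this model $\mu$ becomes the evaluation-at-time-zero map
\[
e\colon \K_i(\Cst_L(P_d X))\to\K_i(\Cst(P_d X)).
\]
Both sides make sense for any subspace $Y\subseteq X$, taking $P_d(Y)$ inside $P_d(X)$. For two subspaces $Y_1,Y_2$ that are coarsely excisive, both sides admit Mayer--Vietoris six-term sequences. These are compatible with $e$ after passing to $\colim_d$, where enlarging $d$ absorbs the loss of control at the intersection.

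\textbf{Step 3: induction on asymptotic dimension.} Let $n$ be the asymptotic dimension of $X$. For each $R>0$, cover $X$ by $n+1$ families $\mathcal{U}_0,\dots,\mathcal{U}_n$. Each family consists of $R$-disjoint sets, and all the sets have uniformly bounded diameter. I would prove by induction on $k$ that $e$ is an isomorphism in the colimit for $\bigcup_{j\le k}\bigcup\mathcal{U}_j$. The inductive step applies the Mayer--Vietoris sequences and the five lemma. The intersection term is again a union of $k$ families of $R$-disjoint, uniformly bounded sets, so it falls under the inductive hypothesis, after letting $R$ grow with $d$.

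\textbf{Base case and main obstacle.} The base case, a single $R$-disjoint uniformly bounded family $\{U_\alpha\}$, is where I expect the main difficulty. Once $d$ exceeds the uniform diameter bound, each $P_d(U_\alpha)$ is a simplex. Both sides then reduce to \emph{bounded} products $\prod_\alpha$ of the corresponding algebras for $U_\alpha$, and one needs a uniform version of the contractibility isomorphism. The key point is that the homotopies contracting each simplex have propagation bounded uniformly in $\alpha$, so they assemble into a homotopy inside the product localization and Roe algebras. Bounded geometry ensures the number of points in each $U_\alpha$ is uniformly bounded. I would handle this by working with the ``uniform'' product algebras $\prod^u$ and verifying that $\K$-theory commutes with these uniformly bounded products in this situation. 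Checking that this product identification is compatible with the Mayer--Vietoris boundary maps is the technical heart of the argument.
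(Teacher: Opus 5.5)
The paper gives no argument for this lemma. It only cites \cite{Higson-Roe:Coarse_Baum-Connes}*{Proposition~3.8} for the passage from $\K_i(X)$ to $\colim_d\K_i(P_d(X))$, and \cite{Yu:Finite_asymptotic_dimension} for the finite asymptotic dimension theorem. Your Step~1 is that first reduction and is fine. One small precision: the composites $P_d(X)\to X\to P_d(X)$ only become homotopic to the inclusion after enlarging $d$, which is why the isomorphism is only with the colimit. The problem is in your reconstruction of Yu's theorem in Step~3 and the base case.

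Your inductive statement is that $e$ is an isomorphism in the colimit over $d$ for the fixed subspace $Y_k=\bigcup_{j\le k}\bigcup\mathcal{U}_j$. That is a statement about $Y_k$ at all scales, but a decomposition at parameter $R$ only controls $Y_k$ below scale $R$. The base case already fails, because $Y_0=\bigcup_\alpha U_\alpha$ is not a coarse disjoint union.
\begin{itemize}
\item The colimit necessarily runs over $d>R$, and there $P_d(Y_0)\neq\bigsqcup_\alpha P_d(U_\alpha)$.
\item $\Cst(Y_0)$ contains finite-propagation operators joining different $U_\alpha$.
\item So neither side reduces to a product over $\alpha$.
\end{itemize}
Concretely, take $X=\R$ covered by alternately coloured intervals of length $L>R$. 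Then $Y_0$ is coarsely dense in $\R$, so $\K_1(\Cst(Y_0))\simeq\K_1(\Cst(\R))\simeq\Z$, which is large-scale information invisible to the pieces $\Cst(U_\alpha)\simeq\Cpt$. For $X=\Z^n$ with a periodic cover, each colour class is coarsely equivalent to $\Z^n$, so your base case is as hard as the theorem itself. This scale mismatch, not compatibility of the product identification with boundary maps, is the real obstacle.

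The missing idea is to reverse the order of quantifiers:
\begin{itemize}
\item First fix a class at a given Rips parameter $d$, or an operator of given propagation $r$.
\item Only then choose $R\gg d$.
\item Run the Mayer--Vietoris induction only on small-scale objects that cannot see across the $R$-gaps. Examples are the obstruction groups $\K_*(\Cst_{L,0}(P_d(\cdot)))$ of localization algebras, or Yu's propagation-controlled $\K$-theory.
\item Formulate the inductive hypothesis uniformly for families of subspaces, i.e.\ coarse disjoint unions, rather than for the subspaces $Y_k\subseteq X$.
\end{itemize}
Only in that setting do the block-diagonal uniform products genuinely appear. There, uniform contractibility of the simplices $P_{d'}(U_\alpha)$, for $d'$ beyond the diameter bound, supplies the base case. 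Without this reformulation your five-lemma induction has no valid starting point.
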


Definitions of these properties are given below. The goal of this appendix
is to prove that every contractible nilpotent Lie group \( X \) satisfies
all these properties, whence we may use the coarse Baum--Connes conjecture
to compute \( \K_i(\Cst(X)) \).  

Let \( (X,d) \) be a metric space. 
A subset \( \Lambda\subseteq X \) is called \emph{\( \varepsilon \)-separated} if \( d(x,y)>\varepsilon \)
for every pair of distinct elements \( x,y\in\Lambda \).  
Write \( \Ball(x,r)\defeq \{y\in X\;:\;
d(x,y)<r\} \) the open \( r \)-ball centered at \( x \). 

\begin{definition}[see e.g.~\cite{Higson-Roe:Coarse_Baum-Connes}]
Let \( X \) be a metric space.  
\begin{enumerate}
\item $X$ is \emph{uniformly contractible}, if for any~$r>0$, there
exists~$R>r$, such that the
inclusion~$\Ball(x,r)\hookrightarrow\Ball(x,R)$ is nullhomotopic for
all~$x\in X$.
\item \( X \) has \emph{bounded coarse geometry}, if there exists \(
\varepsilon>0 \), such that the following holds: 
for every \( r>0 \), there exists \( n_r>0
\) such that the intersection of any 
\( \varepsilon \)-separated subset with any \( r \)-ball \( \Ball(x,r) \)
has at most cardinality \( n_r \). 
\item The \emph{asymptotic dimension} of \( X \), denoted by \(
\operatorname{asdim}(X) \), is the smallest integer \(
n\) such that: for any \( r>0 \), there exists a uniformly bounded cover \(
\mathfrak{U}=\{U_i\}_{i\in I}\) of \( X \), such that any \( r \)-ball \(
\Ball(x,r)
\) in \( X \) intersects at most \( (n+1) \)-open sets belonging to \(
\mathfrak{U} \).   
\end{enumerate}
\end{definition}

For the Lemmas below, we recall that a Lie group $X$ can be equipped
with a left-invariant Riemannian metric, whence it is geodesically
complete, thus a proper metric space by the Hopf--Rinow theorem, see
\cite{Jost:Riemannian_geometry}*{Theorem 1.4.8}.

\begin{lemma}\label{lem:uniformly_contractible_Lie_group}
Let \( X \) be a contractible Lie group, equipped with a
left-invariant metric. Then \( X \) is uniformly contractible. 
\end{lemma}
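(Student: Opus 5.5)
The plan is to exploit homogeneity of $X$ under left translations: since left multiplication $L_g\colon y\mapsto gy$ is an isometry of the left-invariant Riemannian metric, it suffices to control contractions of balls centred at the identity $e$, then transport them. Fix a contraction $H\colon X\times[0,1]\to X$ with $H(\cdot,0)=\id_X$ and $H(\cdot,1)\equiv e$; we may assume this after composing with a path from the endpoint to $e$, which is possible because $X$ is path-connected, being contractible.

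Given $r>0$, the closure $\overline{\Ball(e,r)}$ is compact, because $X$ is a proper metric space by Hopf--Rinow (as recalled before the statement). Hence $H(\overline{\Ball(e,r)}\times[0,1])$ is compact, thus bounded, and lies in some $\Ball(e,R)$ with $R>r$. Consequently the inclusion $\Ball(e,r)\hookrightarrow\Ball(e,R)$ is nullhomotopic via the restriction of $H$.

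For an arbitrary $x\in X$, define $H_x(y,t)\defeq xH(x^{-1}y,t)$ for $y\in\Ball(x,r)$. Left invariance of the metric gives $L_{x^{-1}}\Ball(x,r)=\Ball(e,r)$ and $L_x\Ball(e,R)=\Ball(x,R)$, so $H_x$ is a homotopy inside $\Ball(x,R)$ from the inclusion to the constant map at $x$. The same $R$ works for every $x$, which is exactly uniform contractibility.

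The only point needing care is the compactness of closed balls, which is supplied by completeness together with Hopf--Rinow; everything else is formal. No step appears to be a real obstacle.
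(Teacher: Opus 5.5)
Your proof is correct and follows the same route as the paper: left invariance reduces everything to balls centred at $e$, properness (via Hopf--Rinow) makes $\overline{\Ball(e,r)}$ compact, and the compact image $H(\overline{\Ball(e,r)}\times[0,1])$ then lies in some $\Ball(e,R)$. The translated homotopy $H_x(y,t)=xH(x^{-1}y,t)$ and the normalization $H(\cdot,1)\equiv e$ only make explicit details that the paper leaves implicit.
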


\begin{proof}
Let \( e\in X \) be its unit and let \( r>0 \). 
By left invariance, it suffices to show that \( \Ball(e,r) \) is
contractible inside \( \Ball(e,R) \) for some \( R>r \).  

Since \( X \) is contractible, there exists a continuous map \(
H\colon X\times[0,1]\to X \) such that \( H(g,0)=g \) for all \( g\in X \)
and \( H(g,e)=e \). Let \( K_r\defeq \overline{\Ball(x,r)} \). Then \( K_r
\) is compact because the metric of \( X \) is proper. Thus
\[ 
  C_r\defeq H(K_r\times[0,1])\subseteq X
\]
is compact.

A compact set in a metric space is always bounded. Thus there exists \( R>0
\) such that \( C_r\subseteq \Ball(e,R) \). In particular, \( R \) depends
only on \( r \). Therefore \( H \) restricts to a homotopy between the
inclusion \( \Ball(x,r)\hookrightarrow \Ball(x,R) \) and the identity map.
\end{proof}

\begin{lemma}\label{lem:bounded_coarse_geometry_Lie_group}
Let \( X \) be a Lie group equipped with a left invariant
metric. Then \( X \) has bounded coarse geometry. 
\end{lemma}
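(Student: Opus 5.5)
The plan is to exploit left-invariance to reduce everything to a single ball around the identity, and then use local compactness and the Haar/Riemannian volume to bound cardinalities. I take $\varepsilon=1$ (any fixed $\varepsilon>0$ works). Given $r>0$, I need a uniform bound $n_r$ on the cardinality of any $\varepsilon$-separated set $\Lambda$ inside a ball $\Ball(x,r)$.

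First, left translation by $x^{-1}$ is an isometry carrying $\Ball(x,r)$ onto $\Ball(e,r)$ and carrying $\varepsilon$-separated sets to $\varepsilon$-separated sets. So it suffices to bound $\abs{\Lambda}$ for $\varepsilon$-separated $\Lambda\subseteq\Ball(e,r)$. Second, for distinct $y,y'\in\Lambda$ the open balls $\Ball(y,\varepsilon/2)$ and $\Ball(y',\varepsilon/2)$ are disjoint, by the triangle inequality. They are all contained in $\Ball(e,r+\varepsilon/2)$.

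Let $\mu$ be the Riemannian volume of the left-invariant metric; it is a left Haar measure, since left translations are isometries. Then every $\Ball(y,\varepsilon/2)$ has the same volume $v_\varepsilon=\mu(\Ball(e,\varepsilon/2))>0$, which is positive because the ball is a nonempty open set. The closed ball $\overline{\Ball(e,r+\varepsilon/2)}$ is compact because the metric is proper (Hopf--Rinow, as recalled above), so $V_r=\mu(\Ball(e,r+\varepsilon/2))<\infty$. Disjointness then gives
\[
\abs{\Lambda}\,v_\varepsilon\leq V_r,
\]
so $n_r\defeq\lfloor V_r/v_\varepsilon\rfloor$ works.

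The only point needing care is the identification of the Riemannian volume with a left-invariant measure assigning finite volume to compact sets and positive volume to open sets. This is standard, since isometries preserve the Riemannian volume form up to sign. No other obstacle is expected.
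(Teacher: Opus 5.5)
Your proof is correct, but it is a packing argument, whereas the paper gives a covering argument. You both begin by using left translations to reduce to balls centred at $e$. From there the paper takes $\varepsilon=1$ and covers the compact set $\overline{\Ball(e,r)}$ by finitely many balls $\Ball(x_1,\tfrac12),\dots,\Ball(x_n,\tfrac12)$. Each such ball contains at most one point of a $1$-separated set, so $n_r=n$ works.

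The paper's argument uses only properness of the metric and the isometric left action; no measure enters. Your route additionally needs a left-invariant measure that is positive on nonempty open sets and finite on bounded sets. You use the Riemannian volume; any left Haar measure would do, given that the metric is proper and induces the manifold topology.

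In exchange you get an explicit bound,
\[
n_r=\Bigl\lfloor \mu\bigl(\Ball(e,r+\varepsilon/2)\bigr)\big/\mu\bigl(\Ball(e,\varepsilon/2)\bigr)\Bigr\rfloor ,
\]
which ties $n_r$ directly to the volume growth of balls, whereas the paper's $n$ is an unspecified covering number. The two approaches are equivalent up to constants, since packing and covering numbers control each other.
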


\begin{proof}
Let \( e\in X \) be its unit. 
We must show that there exists \( \varepsilon>0 \), such that given any \(
r>0 \), then there exists a constant \( n_r
\) depending only on \( r \), such that any \( r \)-ball \( \Ball(x,r) \)
and any \( \varepsilon \)-separated subset 
\( \Lambda \) intersects at at most \( n_r\)
points. In the following we take \( \varepsilon=1 \).   

By left invariance, it suffices to prove for the \( r \)-balls \( \Ball(e,r)
\) centered at \( e \) and for all \( r>0 \). 
Since \( K_r\defeq \overline{\Ball(x,r)} \) is
compact, it is covered by finitely many \( \tfrac{1}{2} \)-balls,
i.e.~there exists \( x_1,\dots,x_n\in X \) such that
\[ 
  K_r\subseteq \bigcup_{i=1}^n\Ball(x_i,\tfrac{1}{2}).
\]
If \( \Lambda \) is \( 1 \)-separated, then the intersection of 
\( \Lambda \) with \( \Ball(x_i,\tfrac{1}{2}) \) has at most one point.
Thus the intersection of \( \Lambda \) with \( K_r \), hence with \(
\Ball(x,r) \), has at most \( n \)
points, where \( n \) depends only on \( r \).     
\end{proof}

\begin{lemma}[\cite{Carlsson-Goldfarb:Homological_coherence}*{Theorem 3.5}]
\label{lem:asdim_nilpotent_Lie_group}
Let \( X \) be a simply connected, 
nilpotent Lie group, equipped a 
left-invariant metric. Then \( \operatorname{asdim} X=\dim X
\).  
\end{lemma}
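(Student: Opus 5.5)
We prove the two inequalities $\operatorname{asdim} X\le \dim X$ and $\operatorname{asdim} X\ge \dim X$ separately. The upper bound goes by induction on $n=\dim X$ along a central series. The lower bound is a coarse-cohomological argument that uses only that $X$ is a uniformly contractible manifold homeomorphic to $\R^n$.

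\emph{Upper bound.} For $n=1$, $X\simeq\R$ carries a left-invariant Riemannian metric. This metric is a constant multiple of the Euclidean one, so $\operatorname{asdim} X=1$.

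For $n\ge 2$, recall that the center of a nontrivial simply connected nilpotent Lie group has positive dimension. Choose a one-parameter subgroup $Z\simeq\R$ inside the center. It is closed and normal, and $Y\defeq X/Z$ is a simply connected nilpotent Lie group of dimension $n-1$. Give $Y$ the quotient metric
\[
d_Y(xZ,x'Z)=\inf_{z,z'\in Z} d(xz,x'z').
\]
This metric is left-invariant and proper, and it is coarsely equivalent to any left-invariant Riemannian metric on $Y$. The projection $p\colon X\to Y$ is $1$-Lipschitz. By induction, $\operatorname{asdim} Y=n-1$.

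We then apply the Bell--Dranishnikov Hurewicz-type theorem for Lipschitz maps:
\[
\operatorname{asdim} X\le \operatorname{asdim} Y+\operatorname{asdim}\{p^{-1}(B(y,R))\}_{y\in Y},
\]
where the last term is the uniform asymptotic dimension of the family of preimages of $R$-balls. By left-invariance, every $p^{-1}(B(y,R))$ is an isometric translate of the single set $p^{-1}(B(e,R))=\overline{B}\cdot Z$ for a bounded set $\overline{B}$. Since $Z$ is central, $\overline{B}\cdot Z=Z\cdot\overline{B}$ lies within bounded Hausdorff distance of $Z$, so it is coarsely equivalent to $Z$ with the induced metric. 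That induced metric is a proper, translation-invariant metric on $\R$, possibly distorted polynomially. Covering $\R$ by long intervals in two alternating families still shows its asymptotic dimension is at most $1$. Hence the family has uniform $\operatorname{asdim}\le 1$, which gives $\operatorname{asdim} X\le n$.

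\emph{Lower bound.} By \cref{lem:uniformly_contractible_Lie_group,lem:bounded_coarse_geometry_Lie_group}, $X$ is uniformly contractible with bounded geometry, and it is diffeomorphic to $\R^n$. For such spaces, Roe's coarse cohomology agrees with compactly supported cohomology, so
\[
HX^n(X)\simeq H^n_c(X)\simeq H^n_c(\R^n)\simeq\Z\neq0.
\]
On the other hand, a space of asymptotic dimension $k$ admits anti-Čech approximations by nerves of dimension at most $k$. Through these approximations, coarse cohomology vanishes above degree $k$. Therefore $\operatorname{asdim} X\ge n$.

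\emph{Main obstacle.} The delicate step is the fibre estimate. We must verify that the distorted central direction, with the subspace metric inherited from $X$, still has asymptotic dimension $1$, and that this holds uniformly over all fibres. We must also check that the hypotheses of the Hurewicz theorem (a Lipschitz map together with uniform control on preimages of bounded sets) are satisfied by the quotient metric. Left-invariance reduces both checks to a single model set, and the argument relies on that reduction.
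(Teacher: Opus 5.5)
The paper gives no proof of this lemma; it is quoted from Carlsson--Goldfarb. So your argument must stand on its own.

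\textbf{Upper bound.} This part works.
\begin{itemize}
\item A nonzero central element of the Lie algebra gives a closed central $Z\simeq\R$.
\item The quotient metric on $Y=X/Z$ is left-invariant and coarsely equivalent to a Riemannian one.
\item $X$ is geodesic, so the Bell--Dranishnikov theorem applies.
\item Each $p^{-1}(B(y,R))$ is a left translate of the $R$-neighbourhood of $Z$. That neighbourhood is Hausdorff-close to $Z$, and $Z$ carries a proper translation-invariant metric $f(|s-t|)$ on $\R$.
\item Choose $M$ with $f(u)>r$ for $|u|\ge M$. Splitting the intervals $[jM,(j+1)M]$ by the parity of $j$ gives asymptotic dimension $\le 1$, uniformly over the family.
\end{itemize}
So $\operatorname{asdim}X\le n$ is fine.

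\textbf{Lower bound.} This part has a genuine gap. You use the principle that asymptotic dimension $k$ forces coarse cohomology to vanish above degree $k$. That principle is false in degree $k+1$, which is exactly the degree you need to exclude $\operatorname{asdim}X=n-1$.

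For a counterexample, take $X=\{m^2:m\in\N\}\subset\R$, which has asymptotic dimension $0$. Only finitely many pairs of distinct points lie within any given distance. Hence every $f\colon X\to\Z$ gives a coarse $1$-cocycle $(x,y)\mapsto f(y)-f(x)$, while the coarse $0$-cochains are only the finitely supported functions. Therefore $HX^1(X)\simeq \Z^X/(\Z+\bigoplus_X\Z)\neq 0$. The groups $H^*_c$ of the nerves form an inverse system, and a $\varprojlim^1$-type contribution can survive in degree $k+1$.

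What you need is a statement about maps, and this is where uniform contractibility must enter.
\begin{itemize}
\item Take a single uniformly bounded cover $\mathcal U$ of multiplicity $\le k+1$ and Lebesgue number $\ge 1$.
\item Use the partition-of-unity map $\phi\colon X\to|\mathcal U|$, which is proper.
\item Build a proper $\psi\colon |\mathcal U|\to X$ skeleton by skeleton using uniform contractibility.
\item Then $\psi\circ\phi$ is at bounded distance from $\mathrm{id}_X$. It is therefore homotopic to $\mathrm{id}_X$ by a homotopy with bounded tracks, hence properly homotopic.
\item Consequently the identity of $H^n_c(X)\simeq\Z$ factors through $H^n_c(|\mathcal U|)=0$ when $k<n$, a contradiction.
\end{itemize}

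Alternatively, work with coarse homology instead. Here $HX_*(X)=\varinjlim_i H^{lf}_*(|\mathcal U_i|)$ does vanish above degree $k$, because direct limits are exact. Meanwhile $HX_n(X)\simeq H^{lf}_n(\R^n)\simeq\Z$ for uniformly contractible $X$ of bounded geometry.
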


\begin{lemma}[\cite{Knapp:Lie_groups_beyond_introduction}*{Theorem 1.104}]\label{lem:nilpotent.is.Rn}
A connected, simply connected nilpotent Lie group is diffeomorphic to \( \R^n \) for some \( n \).
\end{lemma}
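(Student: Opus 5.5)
The statement to prove is \cref{lem:nilpotent.is.Rn}: a connected, simply connected nilpotent Lie group $X$ is diffeomorphic to $\R^n$. The plan is to show that the exponential map $\exp\colon\mathfrak{x}\to X$ from the Lie algebra $\mathfrak{x}\simeq\R^n$ is a global diffeomorphism. The proof is an induction on $\dim X$ using the central series, followed by a simple-connectedness argument to upgrade a covering statement to an isomorphism.

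The first step uses the Lie algebra. Since $\mathfrak{x}$ is nilpotent, the Baker--Campbell--Hausdorff series terminates. So
\[
x*y\defeq x+y+\tfrac12[x,y]+\tfrac1{12}\big([x,[x,y]]-[y,[y,x]]\big)+\cdots
\]
is a finite polynomial expression. It defines a group law on the vector space $\mathfrak{x}$: associativity holds formally as an identity of formal power series, hence as polynomials. With this law, $(\mathfrak{x},*)$ is a simply connected Lie group, being diffeomorphic to $\R^n$, and its Lie algebra is $\mathfrak{x}$. The identity map is its exponential map.

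The second step identifies $X$ with this group. By Lie's correspondence, two simply connected Lie groups with isomorphic Lie algebras are isomorphic. Hence $X\simeq(\mathfrak{x},*)$ as Lie groups, and in particular $X$ is diffeomorphic to $\R^n$. Under this isomorphism, $\exp_X$ corresponds to the identity, so $\exp_X$ is itself a diffeomorphism.

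The main subtlety is justifying that the BCH polynomial group law is associative and smooth. The cleanest route is to pass through a faithful representation and avoid BCH formalities. By Ado's theorem together with Engel's theorem, $\mathfrak{x}$ embeds into strictly upper-triangular matrices $\mathfrak{n}$. On $\mathfrak{n}$, $\exp$ and $\log$ are mutually inverse polynomial maps onto the unipotent group $N$. The image $\exp(\mathfrak{x})$ is then a closed connected subgroup of $N$, since BCH shows it is closed under products. This subgroup is diffeomorphic to $\mathfrak{x}\simeq\R^n$ via the polynomial diffeomorphism $\exp|_{\mathfrak{x}}$, and its Lie algebra is $\mathfrak{x}$.

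The simple-connectedness of $X$ then gives the isomorphism. The integration homomorphism $X\to\exp(\mathfrak{x})$ is a covering map. Its target is simply connected, so it is a covering of a simply connected space by a connected space, hence a bijection, hence a diffeomorphism. I expect the closure-under-products claim to require the most care. I would handle it by noting that the BCH series on nilpotent matrices converges identically and is finite, so it is valid as a matrix identity.
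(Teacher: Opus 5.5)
Your argument is correct, but it differs from what stands behind the statement in the paper. The paper proves nothing here and only cites Knapp, where the result comes out of the structure theory of simply connected solvable groups. The textbook argument is, in essence, an induction on dimension along a chain of codimension-one ideals. It exhibits the group as an iterated semidirect product with $\R$, so that canonical coordinates of the second kind give a diffeomorphism $\R^n\to X$. In the nilpotent case this is upgraded to $\exp$ being a diffeomorphism, using that $d\exp$ is everywhere nonsingular because $\operatorname{ad}$ is nilpotent.

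You instead use the finiteness of the Baker--Campbell--Hausdorff series to write down an explicit global model, either $(\mathfrak{x},*)$ or $\exp(\mathfrak{x})$ inside the unipotent group $N$. You then conclude by integrating the Lie algebra isomorphism out of the simply connected group $X$.

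Your route needs no induction and gives the sharper statement that $\exp_X$ is a global diffeomorphism. The cost is heavier machinery: the BCH theorem including the fact that $\log(e^xe^y)$ is a Lie series, Lie's correspondence, and, for the matrix version, a faithful nilpotent representation. The inductive route is more elementary and more general, since it shows every simply connected solvable Lie group is diffeomorphic to $\R^n$ even when $\exp$ is not a diffeomorphism. Either suffices for how the paper uses the lemma, namely contractibility of $\Gamma_\R$ and $\K_i(\Gamma_\R)\simeq\K_i(\R^d)$.

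Three small repairs:
\begin{itemize}
\item Ado's theorem plus Engel's theorem does not by itself embed $\mathfrak{x}$ into strictly upper-triangular matrices. Ado's representation need not act by nilpotent operators (e.g.\ $t\mapsto t$ is a faithful representation of $\R$ on $\R^1$), so Engel does not apply. You need the refined form of Ado's theorem, in which the nilradical acts nilpotently, or Birkhoff's embedding theorem for nilpotent Lie algebras.
\item The matrix detour is optional anyway. Your second and third paragraphs already form a complete proof once you note one fact. Formal associativity of the BCH law in the free Lie algebra on three generators descends to the free $k$-step nilpotent Lie algebra, and hence to $\mathfrak{x}$.
\item Your opening announces an induction on $\dim X$ via the central series that never appears in the argument, so drop it.
\end{itemize}
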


Combining
\cref{lem:Yu_theorem,lem:uniformly_contractible_Lie_group,lem:bounded_coarse_geometry_Lie_group,lem:asdim_nilpotent_Lie_group,lem:nilpotent.is.Rn},
we have the following

\begin{corollary}\label{cor:contractible_nilpotent_coarse_Baum-Connes}
Let \( X \) be a connected, simply-connected nilpotent Lie group,
equipped with a left-invariant metric. Then the
assembly map \( \mu\colon \K_i(X)\to\K_i(\Cst(X)) \) of
\eqref{eq:assembly_map} is an isomorphism for
every \( i \). 
\end{corollary}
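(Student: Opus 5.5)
The statement is Corollary~\ref{cor:contractible_nilpotent_coarse_Baum-Connes}: for a connected, simply connected nilpotent Lie group $X$ with a left-invariant metric, the assembly map is an isomorphism. My plan is to verify the three hypotheses of Yu's theorem (Lemma~\ref{lem:Yu_theorem}) one at a time, using the lemmas just established.

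\textbf{Step 1: the hypotheses.}
\begin{enumerate}
\item By Lemma~\ref{lem:nilpotent.is.Rn}, $X$ is diffeomorphic to $\R^n$, so it is contractible.
\item Lemma~\ref{lem:uniformly_contractible_Lie_group} then gives uniform contractibility for the left-invariant metric.
\item Lemma~\ref{lem:bounded_coarse_geometry_Lie_group} gives bounded coarse geometry.
\item Lemma~\ref{lem:asdim_nilpotent_Lie_group} gives $\operatorname{asdim} X=\dim X<\infty$.
\end{enumerate}
Properness of $X$ comes from completeness via Hopf--Rinow, as noted before these lemmas.

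\textbf{Step 2: the simplicial structure (main obstacle).} Lemma~\ref{lem:Yu_theorem} is stated for metric simplicial complexes, while $X$ is a Riemannian manifold. I would handle this in one of two ways.

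The first option is to use the version of Yu's result for proper metric spaces of bounded geometry and finite asymptotic dimension. Here one replaces $\K_*(X)$ by the coarse K-homology $\colim_r \K_*(P_r(\Lambda))$ of Rips complexes on a $1$-separated net $\Lambda\subseteq X$. One then uses uniform contractibility to identify this colimit with $\K_*(X)$; this identification is Higson--Roe, Proposition~3.8, which is already cited in the lemma.

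The second option is to triangulate $X$ directly. A smooth manifold admits a smooth triangulation. Pulling the triangulation back along left translations by a cocompact lattice $\Gamma$ gives a $\Gamma$-invariant triangulation. By the Mal'cev theory recalled earlier, such a $\Gamma$ exists when the Lie algebra has rational structure constants. The simplicial metric of this triangulation is then quasi-isometric, hence coarsely and topologically equivalent, to the Riemannian metric.

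I expect this identification to be the main subtlety, since a general nilpotent Lie group need not admit a lattice. I would therefore prefer the first route, which is lattice-free. The remaining point in that route is that K-homology and Roe-algebra K-theory are invariant under the coarse equivalence $X\simeq\Lambda$ and the homotopy equivalences $P_r(\Lambda)\simeq X$ for large $r$.

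\textbf{Step 3: conclusion.} With the hypotheses in place, Yu's theorem gives that $\mu\colon \K_i(X)\to\K_i(\Cst(X))$ is an isomorphism for every $i$. Since $X\cong\R^n$, in particular $\K_i(X)\cong\K_i(\R^n)$. This is the form in which the corollary is used in the proof of Theorem~\ref{thm:existence_strong_phase}.
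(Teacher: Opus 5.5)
Your proposal is correct and follows the paper's proof. That proof consists exactly of checking the hypotheses of \cref{lem:Yu_theorem} via \cref{lem:nilpotent.is.Rn}, \cref{lem:uniformly_contractible_Lie_group}, \cref{lem:bounded_coarse_geometry_Lie_group} and \cref{lem:asdim_nilpotent_Lie_group}, as in your Steps~1 and~3.

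Your Step~2 supplies a detail the paper passes over silently, namely that \cref{lem:Yu_theorem} is stated for metric simplicial complexes, and the lattice-free Rips-complex route is the right way to handle it. One small correction: uniform contractibility gives $\K_*(X)\cong\colim_r\K_*(P_r(\Lambda))$ only at the level of the colimit, since the composite $P_r(\Lambda)\to X\to P_R(\Lambda)$ is properly homotopic to the inclusion only for $R\gg r$. It does not give homotopy equivalences $P_r(\Lambda)\simeq X$ at a fixed scale.
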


\bibliography{references} 
\end{document}